\newif\ifarxiv\arxivtrue%
\renewcommand\paragraph{\@startsection{paragraph}{4}{\z@}{1.6ex \@plus1ex \@minus.2ex}{-.5em}{\normalfont\normalsize\itshape}}\makeatother 
\def\@citecolor{blue}%
\def\@urlcolor{blue}%
\def\@linkcolor{blue}%
\def\orcidID#1{\href{http://orcid.org/#1}{\protect\raisebox{-1.25pt}{\protect\includegraphics{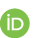}}}}
\tikzset{
}
\ttfamily\color{black},
\ttfamily\color{red},
\ttfamily\color{blue},
\apptocmd{\sloppy}{\hbadness10000\relax}{}{}
\title{A Complete Inference System for Skip-free Guarded Kleene Algebra with Tests}
\titlerunning{Completeness for skip-free \texorpdfstring{\GKAT}{GKAT}}
\author{Tobias Kapp\'e\inst{1,2}(\Envelope)\orcidID{0000-0002-6068-880X} \and Todd Schmid\inst{3}\orcidID{0000-0002-9838-2363} \and Alexandra Silva\inst{4}\orcidID{0000-0001-5014-9784}} 
\institute{
{Open University of the Netherlands \email{tobias.kappe@ou.nl}}
\and{ILLC, University of Amsterdam, NL}
\and{University College London, London, UK}
\and{Cornell University, Ithaca, NY, USA}}
\begin{document}

\newcommand{\At}{\mathsf{At}}

\newcommand{\KAT}{\textsf{KAT}\xspace}
\newcommand{\GKAT}{\textsf{GKAT}\xspace}

\newcommand{\BExp}{\mathsf{BExp}}
\newcommand{\GExp}{\mathsf{GExp}}
\newcommand{\SExp}{\mathsf{StExp}}
\newcommand{\Det}{\mathsf{Det}}

\newcommand{\Id}{\operatorname{Id}}
\newcommand{\Set}{\mathbf{Set}}
\newcommand{\Coalg}{\operatorname{Coalg}}

\renewcommand{\P}{\mathcal{P}}
\renewcommand{\L}{\mathcal{L}}
\newcommand{\GT}{\mathsf{GT}}

\newcommand{\bisim}{\mathrel{\raisebox{0.1em}{\(\underline{\leftrightarrow}\)}}}

\newcommand{\grph}{\mathsf{grph}}
\newcommand{\func}{\mathsf{func}}
\newcommand{\gtr}{\operatorname{gtr}}
\newcommand{\rtg}{\operatorname{rtg}}

\newcommand{\id}{\operatorname{id}}

\newcommand{\sem}[1]{\left\lceil\!\!\left\lfloor#1\right\rceil\!\!\right\rfloor}
\newcommand{\semrel}[1]{\llbracket#1\rrbracket}
\newcommand{\tr}[1]{\mathrel{\raisebox{-0.2em}{\(\xrightarrow{#1}\)}}}

\newcommand{\test}[1]{\textsf{\color{blue}#1}}
\newcommand{\action}[1]{\textsf{\color{orange!95!black}#1}}
\newcommand{\constant}[1]{\textsf{\color{gray}#1}}
\newcommand{\method}[1]{\textsf{\color{green!60!black}#1}}

\newcommand{\floor}[1]{\lfloor#1\rfloor}
\newcommand{\while}[1]{{\mathbin{^{(#1)}}}}

\newcommand{\embed}{\mathsf{embed}}

\newcommand{\eo}{\mathrel{\raisebox{-2pt}{\color{blue}\(\to_{\mathsf{e}}\)}}}
\newcommand{\bo}{\mathrel{\raisebox{-2pt}{\(\to_{\mathsf{b}}\)}}}
\newcommand{\diredge}{\mathrel{\curvearrowright}}
\newcommand{\etr}[1]{\mathrel{\raisebox{-.2em}{\color{blue}\(\xrightarrow{{\color{black}#1}}_{\mathsf{e}}\)}}}
\newcommand{\btr}[1]{\mathrel{\raisebox{-.2em}{\(\xrightarrow{#1}_{\mathsf{b}}\)}}}
\newcommand{\N}{\mathbb{N}}
\newcommand{\incl}{\operatorname{incl}}
\newcommand{\ofMil}{\mathsf{ofMil}}

\makeatletter
\newcommand{\bigplus}{%
  \DOTSB\mathop{\mathpalette\mattos@bigplus\relax}\slimits@%
}
\newcommand\mattos@bigplus[2]{%
  \vcenter{\hbox{%
    \sbox\z@{$#1\sum$}%
    \resizebox{!}{0.9\dimexpr\ht\z@+\dp\z@}{\raisebox{\depth}{$\m@th#1+$}}%
  }}%
  \vphantom{\sum}%
}
\makeatother

\maketitle

\begin{abstract}
Guarded Kleene Algebra with Tests (GKAT) is a fragment of Kleene Algebra with Tests (KAT) that was recently introduced to reason efficiently about imperative programs. In contrast to KAT,  GKAT does not have an algebraic axiomatization, but relies on an analogue of Salomaa's axiomatization of Kleene Algebra. In this paper, we present an algebraic axiomatization and prove  two completeness results for a large fragment of GKAT consisting of \emph{skip-free programs}.
\end{abstract}

\begin{quote}
    {\bfseries \color{red} Note.} The original edition of this paper contained a gap in the completeness proof (which did not appear in the main text).  A complete description of the gap and its correction can be found in \cref{app:changes}.
\end{quote}

\section{Introduction}
%

    Kleene algebra with tests (\KAT)~\cite{katintro} is a logic for reasoning about semantics and equivalence of simple imperative programs.
    It extends Kleene Algebra (\textsf{KA}) with Boolean control flow, which enables encoding of conditionals and while loops.

    \KAT has been applied to verification tasks.
    For example, it was used in proof-carrying Java programs~\cite{DBLP:journals/entcs/KotK05}, in compiler optimization~\cite{DBLP:conf/cl/KozenP00}, and file system verification~\cite{DBLP:conf/pldi/ChajedTKZ19}.
    More recently, \KAT was used for reasoning about packet-switched networks, serving as a core to {\sffamily NetKAT}~\cite{netkat} and Probabilistic {\sffamily NetKAT}~\cite{probnetkat,cantorscott}.

    The success of \KAT in networking is partly due to its dual nature: it can be used to both specify and verify network properties.
    Moreover, the implementations of {\sffamily NetKAT} and {\sffamily ProbNetKAT} were surprisingly competitive with state-of-the-art tools~\cite{netkat-decision,mcnetkat}.
    Part of the surprise with the efficiency of these implementations is that the decision problem for equivalence in both \KAT and {\sffamily NetKAT} is PSPACE-complete~\cite{katcompleteness,netkat}.
    Further investigations~\cite{gkatpopl} revealed that the tasks performed in {\sffamily NetKAT} only make use of a fragment of \KAT. 
    It turns out that the difficulty of deciding equivalence in \KAT can largely be attributed to the non-deterministic nature of \KAT programs.
    If one restricts to \KAT programs that operate deterministically with respect to Boolean control flow, the associated decision problem is almost linear.
    This fragment of \KAT was first identified in~\cite{kozentseng} and further explored as \emph{guarded Kleene algebra with tests} (\GKAT)~\cite{gkatpopl}.

    The study in~\cite{gkatpopl} proved that the decision problem for \GKAT programs is almost linear, and proposed an axiomatization of equivalence.
    However, the axiomatization suffered from a serious drawback: it included a powerful \emph{uniqueness of solutions axiom} (\textsf{UA}), which greatly encumbers algebraic reasoning in practice.
    In order to use (\textsf{UA}) to show that a pair of programs are equivalent, one needs to find a system of equations satisfied by both.
    Even more worryingly, the axiomatization contained a fixed-point axiom with a side condition reminiscent of Salomaa's axiomatization for regular expressions.
    This axiom is known to be non-algebraic, and thus impairs the use of the axiomatic reasoning in context (as substitution of atomic programs is not sound anymore).
    The authors of~\cite{gkatpopl} left as open questions whether (\textsf{UA}) can be derived from the other \GKAT axioms and whether the non-algebraic side condition can be removed.
    Despite the attention \GKAT has received in recent literature~\cite{gkaticalp,gkatlearning,processesparametrised}, these questions remain open.

    \smallskip
    In the present work, we offer a partial answer to the questions posed in~\cite{gkatpopl}.
    We show that proving the validity of an equivalence in \GKAT does not require (\textsf{UA}) if the pair of programs in question are of a particular form, what we call \emph{skip-free}.
    This fragment of \GKAT is expressive enough to capture a large class of programs, and it also provides a better basis for algebraic reasoning: we show that the side condition of the fixed-point axiom can be removed.
    Our inspiration to look at this fragment came from recent work by Grabmayer and Fokkink on the axiomatization of \emph{one-free star expressions modulo bisimulation}~\cite{onefreeregexlics,regexlics}, an important stepping stone to solve a decades-open problem posed by Milner~\cite{milner}.



    In a nutshell, our contribution is to identify a large fragment of \GKAT, what we call the \emph{skip-free fragment}, that admits an algebraic axiomatization.
    We axiomatize both bisimilarity and language semantics and provide two completeness proofs.
    The first proves completeness of skip-free \GKAT \emph{modulo bisimulation}~\cite{gkaticalp}, via a reduction to completeness of Grabmayer and Fokkink's system~\cite{onefreeregexlics}.
    The second proves completeness of skip-free \GKAT w.r.t.\ language semantics via a reduction to skip-free \GKAT modulo bisimulation.
    We also show that equivalence proofs of skip-free \GKAT expressions (for both semantics) embed in full \GKAT. 

    The next section contains an introduction to \GKAT and an overview of the open problems we tackle in the technical sections of the paper.


\section{Overview}\label{sec:overview}
    In this section we provide an overview of our results. We start with a motivating example of two imperative programs to discuss program equivalence as a verification technology. We then show how \GKAT can be used to solve this problem and explore the open questions that we tackle in this paper.

    \paragraph{\bf Equivalence for Verification. }
    In the game \emph{Fizz! Buzz!}~\cite{fizzbuzz}, players sit in a circle taking turns counting up from one.
    Instead of saying any number that is a multiple of \(3\), players must say ``fizz'', and multiples of \(5\) are replaced with ``buzz''.
    If the number is a multiple both \(3\) and \(5\), the player must say ``fizz buzz''.

    \begin{figure}[t]
        \begin{center}
            \sffamily\scriptsize
            \fbox{\parbox{0.41\textwidth}{
                def \method{fizzbuzz1} = \hfill{\rmfamily (i)}\\
                \hspace*{1em}\action{\(n := 1\)}; \\
                \hspace*{1em}while \test{\(n \leq 100\)} do \\
                \hspace*{2em}if \test{\(3|n\)} then \\
                \hspace*{3em}if \test{not \(5|n\)} then \\
                \hspace*{4em}\action{print \textit{fizz}}; 
               \action{\(n\)\texttt{++}}; \\
                \hspace*{3em}else \\
                \hspace*{4em}\action{print \textit{fizzbuzz}}; 
               \action{\(n\)\texttt{++}}; \\
                \hspace*{2em}else if \test{\(5|n\)} then \\
                \hspace*{3em}\action{print \textit{buzz}}; 
                \action{\(n\)\texttt{++}}; \\
                \hspace*{2em}else \\
                \hspace*{3em}\action{print \(n\)}; 
               \action{\(n\texttt{++}\)}; \\
                \hspace*{1em}\action{print \textit{done!}};
            }}
            \quad
            \fbox{\parbox{0.41\textwidth}{
                def \method{fizzbuzz2} = \hfill {\rmfamily (ii)}\\
                \hspace*{1em}\action{\(n:= 1\)}; \\
                \hspace*{1em}while \test{\(n \leq 100\)} do \\
                \hspace*{2em}if \test{\(5 | n\) and \(3 | n\)} then \\
                \hspace*{3em}\action{print \textit{fizzbuzz}}; \\
                \hspace*{2em}else if \test{\(3 | n\)} then \\
                \hspace*{3em}\action{print \textit{fizz}}; \\
                \hspace*{2em}else if \test{\(5 | n\)} then \\
                \hspace*{3em}\action{print \textit{buzz}}; \\
                \hspace*{2em}else \\
                \hspace*{3em}\action{print \(n\)}; \\
                \hspace*{2em}\action{\(n\texttt{++}\)}; \\
                \hspace*{1em}\action{print \textit{done!}};
            }}
        \end{center}
        \caption{\label{fig:example}
            Two possible specifications of the ideal \emph{Fizz! Buzz!} player.
        }
    \end{figure}

    Imagine you are asked in a job interview to write a program that prints out the first \(100\) rounds of a perfect game of \emph{Fizz! Buzz!}.
    You write the function \method{fizzbuzz1} as given in \Cref{fig:example}(i). 
    Thinking about the interview later that day, you look up a solution, and you find \method{fizzbuzz2}, depicted in \Cref{fig:example}(ii). 
    You suspect that \method{fizzbuzz2} should do the same thing as \method{fizzbuzz1}, and after thinking it over for a few minutes, you realize your program could be transformed into the reference solution by a series of transformations that do not change its semantics:
    \begin{enumerate}
        \item
        Place the common action \action{\(n\)\texttt{++}} at the end of the loop.
        \item
        Replace \test{not \(5|n\)} with \test{\(5|n\)} and swap \action{print \textit{fizz}} with \action{print \textit{fizzbuzz}}.
        \item
        Merge the nested branches of \test{\(3|n\)} and \test{\(5|n\)} into one.
    \end{enumerate}

    Feeling somewhat more reassured, you ponder the three steps above.
    It seems like their validity is independent of the actual tests and actions performed by the code; for example, swapping the branches of an \textsf{if~-~then~-~else~-} block while negating the test should be valid under \emph{any} circumstances.
    This raises the question: is there a family of primitive transformations that can be used to derive valid ways of rearranging imperative programs?
    Furthermore, is there an algorithm to decide whether two programs are equivalent under these laws?

    \paragraph{\bf Enter \GKAT.}
    Guarded Kleene Algebra with Tests (\GKAT)~\cite{gkatpopl} has been proposed as a way of answering the questions above.
    Expressions in the language of \GKAT model skeletons of imperative programs, where the exact meaning of tests and actions is abstracted.
    The laws of \GKAT correspond to program transformations that are valid regardless of the semantics of tests and actions.

    Formally, \GKAT expressions are captured by a two-level grammar, generated by a finite set of tests $T$ and a finite set of actions $\Sigma$, as follows:
    \begin{align*}
        \BExp \ni b, c &::= 0 \mid 1 \mid t \in T \mid b \vee c \mid b \wedge c \mid \overline{b} \\
        \GExp \ni e, f &::= p \in \Sigma \mid b \mid e +_b f \mid e \cdot f \mid e^{(b)}
    \end{align*}
    \(\BExp\) is the set of \emph{Boolean expressions}, built from $0$ (\test{false}), $1$ (\test{true}), and primitive tests from $T$, and composed using $\vee$ (\test{or}), $\wedge$ (\test{and}) and $\overline{\phantom{a}}$ (\test{not}).
    \(\GExp\) is the set of \GKAT \emph{expressions}, built from tests (\textsf{assert} statements) and primitive actions $p \in \Sigma$.
    Here, $e +_b f$ is a condensed way of writing \textsf{`if~\test{b}~then~\action{e}~else~\action{f}'}, and $e^{(b)}$ is shorthand for \textsf{`while~\test{b}~do~\action{e}'}; the operator $\cdot$ models sequential composition.
    By convention, the sequence operator $\cdot$ takes precedence over the operator $+_b$.

    \begin{example}%
        \label{example:gkat-syntax-example}
        Abbreviating statements of the form \(\action{print \textsl{foo}}\) by simply writing \(\action{\textsl{foo}}\),
        \Cref{fig:example}(i) can be rendered as the \GKAT expression 
        \begin{equation}%
        \label{example-compressed-left}
        (\action{\(n := 1\)}) \cdot
            \left(
                \begin{array}{c}
                    ( \action{\textsl{fizz}} \cdot \action{\(n\)\texttt{++}} +_{\test{\(\overline{5|n}\)}} \action{\textsl{fizzbuzz}} \cdot \action{\(n\)\texttt{++}}) +_{\test{\(3|n\)}} {} \\
                    (\action{\textsl{buzz}} \cdot \action{\(n\)\texttt{++}} +_{\test{\(5|n\)}} \action{\(n\)} \cdot \action{\(n\)\texttt{++}})
                \end{array}
            \right)^{(\test{\(n \leq 100\)})}
            \cdot \action{\textsl{done!}}
        \end{equation}
        Similarly, the program in \Cref{fig:example}(ii) gives the \GKAT expression 
        \begin{equation}%
        \label{example-compressed-right}
            (\action{\(n := 1\)}) \cdot ( ( \action{\textsl{fizzbuzz}} +_{\test{\(5|n \wedge 3|n\)}} (\action{\textsl{fizz}} +_{\test{\(3|n\)}} (\action{\textsl{buzz}} +_{\test{\(5|n\)}} \action{\(n\)}))) \cdot \action{\(n\)\texttt{++}} )^{(\test{\(n \leq 100\)})} \cdot \action{\textsl{done!}}
        \end{equation}
    \end{example}

    \paragraph{\bf Semantics.}
    A moment ago, we stated that \GKAT equivalences are intended to witness program equivalence, regardless of how primitive tests and actions are interpreted.
    We make this more precise by recalling the \emph{relational} semantics of \GKAT programs~\cite{gkatpopl}.\footnote{A probabilistic semantics in terms of sub-Markov kernels is also possible~\cite{gkatpopl}.}
    The intuition behind this semantics is that if the possible states of the machine being programmed are modelled by some set $S$, then tests are predicates on $S$ (comprised of all states where the test succeeds), and actions are relations on $S$ (encoding the changes in state affected by the action).

    \begin{definition}[\cite{gkatpopl}]%
    \label{def:relational interpretation}
    A \emph{(relational) interpretation} is a triple $\sigma = (S, \mathsf{eval}, \mathsf{sat})$ where $S$ is a set, $\mathsf{eval}: \Sigma \to \mathcal{P}(S \times S)$ and $\mathsf{sat}: T \to \mathcal{P}(S)$.
    Each interpretation \(\sigma\) gives rise to a semantics $\semrel{-}_\sigma: \GExp \to \mathcal{P}(S \times S)$, as follows:
    \begin{align*}
    \semrel{0}_\sigma &= \emptyset
        & \semrel{\overline{a}}_\sigma &= \semrel{1}_\sigma \setminus \semrel{a}_\sigma \\
    \semrel{1}_\sigma &= \{ (s, s) : s \in S \}
        & \semrel{p}_\sigma &= \mathsf{eval}(p) \\
    \semrel{t}_\sigma &= \{ (s, s) : s \in \mathsf{sat}(t) \}
        & \semrel{e +_b f}_\sigma &= \semrel{b}_\sigma \circ \semrel{e}_\sigma \cup \semrel{\overline{b}}_\sigma \circ \semrel{f}_\sigma \\
    \semrel{b \wedge c}_\sigma &= \semrel{b}_\sigma \cap \semrel{c}_\sigma
        & \semrel{e \cdot f}_\sigma &= \semrel{e}_\sigma \circ \semrel{f}_\sigma \\
    \semrel{b \vee c}_\sigma &= \semrel{b}_\sigma \cup \semrel{c}_\sigma
        & \semrel{e^{(b)}}_\sigma &= {(\semrel{b}_\sigma \circ \semrel{e}_\sigma)}^* \circ \semrel{\overline{b}}_\sigma
    \end{align*}
    Here we use \(\circ\) for relation composition and ${}^*$ for reflexive transitive closure.
    \end{definition}

    \begin{remark}
    If $\mathsf{eval}(p)$ is a partial function for every $p \in \Sigma$, then so is $\semrel{e}_\sigma$ for each $e$.
    The above therefore also yields a semantics in terms of partial functions.
    \end{remark}

    The relation $\semrel{e}_\sigma$ contains the possible pairs of start and end states of the program $e$.
    For instance, the input-output relation of $\semrel{e +_b f}$ consists of the pairs in $\semrel{e}_\sigma$ (resp.\ $\semrel{f}_\sigma$) where the start state satisfies $b$ (resp.\ violates $b$).

    \begin{example}
    We could model the states of the machine running \emph{Fizz!\@ Buzz!\@} as pairs $(m, \ell)$, where $m$ is the current value of the counter $n$, and $\ell$ is a list of words printed so far; the accompanying maps $\mathsf{sat}$ and $\mathsf{eval}$ are given by:
    \begin{align*}
        \mathsf{sat}(\test{$k|n$}) &= \{ (m, \ell) \in S : m \equiv 0\ \text{mod}\ k \} \\
        \mathsf{sat}(\test{$n \leq k$}) &= \{ (m, \ell) \in S : m \leq k \} \\ 
        \mathsf{eval}(\action{$n\texttt{++}$}) &= \{ ((m, \ell), (m+1, \ell) : (m, \ell) \in S \} \\
        \mathsf{eval}(\action{$n := k$}) &= \{ ((m, \ell), (k, \ell)) : (m, \ell) \in S) \} \\ 
        \mathsf{eval}(\action{\textsf{\textsl{w}}}) &= \{ ((m, \ell), (m, \ell\textsf{\textsl{w}})) : (m, \ell) \in S \} & \tag{$\action{\textsf{\textsl{w}}} \in \{ \action{\textsf{\textsl{fizz}}}, \action{\textsf{\textsl{buzz}}}, \action{\textsf{\textsl{fizzbuzz}}} \}$} \\ 
        \mathsf{eval}(\action{$n$}) &= \{ ((m, \ell), (m, \ell{}m)) : (m, \ell) \in S \}
    \end{align*}
    For instance, the interpretation of $\action{$n\texttt{++}$}$ connects states of the form $(m, \ell)$ to states of the form $(m + 1, \ell)$---incrementing the counter by one, and leaving the output unchanged.
    Similarly, \action{$\textsf{print}$} statements append the given string to the output.
    \end{example}

    On the one hand, this parameterized semantics shows that programs in the \GKAT syntax can be given a semantics that corresponds to the intended meaning of their actions and tests.
    On the other hand, it allows us to quantify over all possible interpretations, and thus abstract from the meaning of the primitives.

    As it happens, two expressions have the same relational semantics under any interpretation if and only if they have the same \emph{language semantics}~\cite{gkatpopl}, i.e., in terms of languages of \emph{guarded strings} as used in \KAT~\cite{katintro}.
    Since equivalence under the language semantics is efficiently decidable~\cite{gkatpopl}, so is equivalence under all relational interpretations.
    The decision procedure in~\cite{gkatpopl} uses bisimulation and known results from automata theory.
    These techniques are good for mechanization but hide the algebraic structure of programs.
    To expose them, algebraic laws of \GKAT program equivalence were studied.

    \paragraph{\bf Program transformations.} \GKAT programs are (generalized) regular expressions, which are intuitive to reason about and for which many syntactic equivalences are known and explored.
    In~\cite{gkatpopl}, a set of sound axioms $e \equiv f$ such that $\semrel{e}_\sigma = \semrel{f}_\sigma$ for all $\sigma$ was proposed, and it was shown that these can be used to prove a number of useful facts about programs. For instance, the following two equivalences are axioms of \GKAT:\@
    \begin{mathpar}
        e \cdot g +_b f \cdot g \equiv (e +_b f) \cdot g
        \and
        f +_{\overline{b}} e \equiv e +_b f
    \end{mathpar}
    The first axioms says that common code at the end of branches can be factored out, while the second says that the code in branches of a conditional can be swapped, as long as we negate the test.
    Returning to our running example, if we apply the first law to~\eqref{example-compressed-left} three times (once for each guarded choice), we obtain
    \begin{equation}%
    \label{example-compressed-left-transformed}
        (\action{\(n := 1\)}) \cdot
            \left(
                \left(
                \begin{array}{c}
                    ( \action{\textsl{fizzbuzz}} +_{\test{\(5|n\)}} \action{\textsl{fizz}}) +_{\test{\(3|n\)}} {} \\
                    (\action{\textsl{buzz}} +_{\test{\(5|n\)}} \action{\(n\)})
                \end{array}
                \right)
                \cdot \action{\(n\)\texttt{++}}
            \right)^{(\test{\(n \leq 100\)})}
            \cdot \action{\textsl{done!}}
    \end{equation}
    Finally, we can apply $(e +_b f) +_c (g +_b h) \equiv e +_{b \wedge c} (f +_c (g +_b h))$, which is provable from the axioms of \GKAT, to transform~\eqref{example-compressed-left-transformed} into~\eqref{example-compressed-right}.

    Being able to transform one \GKAT program into another using the axioms of \GKAT is useful, but the question arises: do the axioms capture \emph{all} equivalences that hold?
    More specifically, are the axioms of \GKAT powerful enough to prove that $e \equiv f$ whenever $\semrel{e}_\sigma = \semrel{f}_\sigma$ holds for all $\sigma$?

    In~\cite{gkatpopl}, a partial answer to the above question is provided: if we extend the laws of \GKAT with the \emph{uniqueness axiom} (\textsf{UA}), then the resulting set of axioms is sound and complete w.r.t.\ the language semantics.
    The problem with this is that (\textsf{UA}) is not really a single axiom, but rather an \emph{axiom scheme}, which makes both its presentation and application somewhat unwieldy.

    To properly introduce (\textsf{UA}), we need the following notion.
    \begin{definition}
        A \emph{left-affine system} is defined by expressions $e_{11}, \dots, e_{nn} \in \GExp$ and $f_1, \dots, f_n \in \GExp$, along with tests $b_{11}, \dots, b_{nn} \in \BExp$.
        A sequence of expressions $s_1, \dots, s_n \in \GExp$ is said to be a \emph{solution} to this system if
        \[
            s_i \equiv e_{i1} \cdot s_1 +_{b_{i1}} e_{i2} \cdot s_2 +_{b_{i2}} \cdots +_{b_{i(n-1)}} s_{n} \cdot e_{in} +_{b_{in}} f_i \quad (\forall i \leq n)
        \]
        Here, the operations $+_{b_{ij}}$ associate to the right.

        A left-affine system called \emph{guarded} if no \(e_{ij}\) that appears in the system successfully terminates after reading an atomic test.
        In other words, each coefficient denotes a productive program, meaning it must execute some action before successfully terminating---we refer to \Cref{section:lift-equivalences} for more details.
    \end{definition}
    Stated fully, (\textsf{UA}) says that if expressions  $s_1, \dots, s_n$ and $t_1, \dots, t_n$ are solutions to the same guarded left-affine system, then $s_i \equiv t_i$ for $1 \leq i \leq n$.

    On top of the infinitary nature of (\textsf{UA}), the side condition demanding guardedness prevents purely algebraic reasoning: replacing action symbols in a valid \GKAT equation with arbitrary \GKAT expressions might yield an invalid equation!
    The situation is analogous to the \emph{empty word property} used by Salomaa~\cite{salomaa} to axiomatize equivalence of regular expressions.
    The side condition of guardedness appearing in (\textsf{UA}) is inherited from another axiom of \GKAT, the fixed-point axiom, which in essence is the unary version of this axiom scheme and explicitly defines the solution of one guarded left-affine equation as a while loop.
    \[
    g \equiv eg +_b f \Longrightarrow g \equiv e^{(b)} f \qquad \text{if $e$ is guarded}.
    \]


    \begin{remark}
    Part of the problem of the uniqueness axiom is that the case for general $n$ does not seem to follow easily from the case where $n = 1$.
    The problem here is that, unlike the analogous situation for Kleene algebra, there is no general method to transform a left-affine system with $n+1$ unknowns into one with $n$ unknowns~\cite{kozentseng}, even though this is possible in certain cases~\cite{gkatpopl}.
    \end{remark}

   \paragraph{\bf The open questions.}
   We are motivated by two open questions from~\cite{gkatpopl}:
    \begin{itemize}
        \item
        First, can the uniqueness axiom be eliminated?
        The other axioms of \GKAT contain the instantiation of (\textsf{UA}) for $n = 1$, which has so far been sufficient in all handwritten proofs of equivalence that we know.
        Yet (\textsf{UA}) seems to be necessary in both known completeness proofs.

        \item
        Second, can we eliminate the guardedness side condition?
        Kozen~\cite{kozen} showed that Salomaa's axiomatization is subsumed by a set of axioms that together imply existence and uniqueness of \emph{least} solutions to systems of equations, but this approach has not yet borne fruit in \GKAT. 
    \end{itemize}
    \paragraph{\bf This paper.}
    Our main contribution is to show that, in a particular fragment of \GKAT, both questions can be answered in the positive (see \Cref{fig:gkat axioms}).
    \begin{figure}[!t]
        \hspace{-4mm}
        \begin{tabular}{c @{\qquad} c @{\qquad} c}
            \underline{\bf Guarded Union} & \underline{\bf Sequencing} & \underline{\bf Loops} \\[2mm]
            \(\begin{array}{lrl}
                \text{(\textsf{G0})} \phantom{x} & x &= x +_{1} y \\[1mm]
                \text{(\textsf{G1})} & x &= x +_b x \\[1mm]
                \text{(\textsf{G2})} & x +_b y &= y +_{\bar b} x \\[1mm]
                \text{(\textsf{G3})} & \mathrlap{\hspace*{-9mm} x +_b (y +_c z) = } \\
                & \mathrlap{\hspace*{-5mm}(x +_b y) +_{b \vee c} z}
            \end{array}\)&
            \(\begin{array}{lrl}
                \text{(\textsf{G6})} \phantom{x} & 0 x &= 0 \\[1mm]
                \text{(\textdagger)} & x 0 &= 0 \\[1mm]
                \text{(\textsf{G7})} & x (y z) &= (x y) z \\[1mm]
                \text{(\textsf{G8})} & (x +_b y) z &= x z +_b y z
            \end{array}\)&
            \(
            \begin{array}{lrl}
            \text{(\textsf{FP})} & \mathrlap{\hspace*{-17mm} x^{(b)}y =} \\
            & \mathrlap{\hspace*{-14mm}x (x^{(b)}y) +_b y} \\[1mm]
            \text{(\textsf{RSP})} \phantom{x} & \inferrule{z = x z +_b y}{z = x^{(b)}y}
            \end{array}
            \)
        \end{tabular}
        \caption{\label{fig:gkat axioms} Axioms for language semantics skip-free  \GKAT (in addition to Boolean algebra axioms for tests, see \cref{fig:boolean algebra axioms}). If the axiom marked $\dagger$ is omitted, these axiomatize a finer semantics, bisimilarity.
        }
    \end{figure}

   In \Cref{sec:skipfree}, we present what we call the skip-free fragment of \GKAT, consisting of programs that do not contain {\sffamily assert} statements in the body (other than {\sffamily assert \test{false}}); in other words, Boolean statements are restricted to control statements.
   For this fragment, we show that the axiom scheme (\textsf{UA}) can be avoided entirely. In fact, this is true for language semantics (as first introduced in~\cite{gkatpopl}) as well as for the bisimulation semantics of~\cite{gkaticalp}.

    In \Cref{sec:regexp}, we provide a bridge to a recent result in process algebra.
    In the 80s, Milner offered an alternative interpretation of regular expressions~\cite{milner}, as what he called \emph{star behaviours}.
    Based on work of Salomaa from the 1960s~\cite{salomaa}, Milner proposed a sound axiomatization of the \emph{algebra of star behaviours}, but left completeness an open problem.
    After 38 years, it was recently solved by Clemens Grabmayer~\cite{regexlics} following up on his joint work with Wan Fokkink showing that a suitable restriction of Milner's axioms is complete for the \emph{one-free fragment} of regular expressions modulo bisimulation~\cite{onefreeregexlics}.
    We leverage their work with an interesting embedding of skip-free \GKAT into the one-free regular expressions.

    This leads to two completeness results.
    In \Cref{sec:bisim complete}, we start by focusing on the \emph{bisimulation semantics} of the skip-free fragment, and then in \Cref{sec:completeness for skip-free GKAT} expand our argument to its \emph{language semantics}.  More precisely, we first provide a reduction of the completeness of skip-free \GKAT \emph{up to bisimulation} to the completeness of Grabmayer and Fokkink's one-free regular expressions modulo bisimulation~\cite{onefreeregexlics}.
    We then provide a reduction of the completeness of skip-free  \GKAT modulo language semantics to the completeness of skip-free  \GKAT modulo bisimulation via a technique inspired by the \emph{tree pruning} approach of~\cite{gkaticalp}.

    Finally, in \Cref{sec:relation to GKAT}, we connect our semantics of skip-free \GKAT expressions to the established semantics of full \GKAT. 
    We also connect the syntactic proofs between skip-free \GKAT expressions in both our axiomatization and the existing one.
    In conjunction with the results of \cref{sec:bisim complete,sec:completeness for skip-free GKAT}, the results in \cref{sec:relation to GKAT} make a significant step towards answering the question of whether the axioms of \GKAT give a complete description of program equivalence, in the positive.

    \ifarxiv%
    Proofs are deferred to \hyperref[appendices]{the appendices}.
    \else%
    Proofs appear in the full version~\cite{fullversion}.
    \fi%

\section{Introducing Skip-free  \GKAT}\label{sec:skipfree}
    The axiom scheme (\textsf{UA}) can be avoided entirely in a certain fragment of  \GKAT, both for determining bisimilarity and language equivalence.
    In this section, we give a formal description of the expressions in this fragment and their semantics.

    \paragraph*{\bf Skip-free expressions.}
    The fragment of \GKAT in focus is the one that \emph{excludes sub-programs that may accept immediately}, without performing any action.
    Since these programs can be ``skipped'' under certain conditions, we call the fragment that avoids them \emph{skip-free}.
    Among others, it prohibits sub-programs of the form {\sffamily assert \test{b}} for $\test{b} \neq \test{false}$, but also {\sffamily while \test{false} do $p$}, which is equivalent to {\sffamily assert \test{true}}.

    \begin{definition}
        Given a set \(\Sigma\) of \emph{atomic actions}, the set \(\GExp^-\) of \emph{skip-free \GKAT expressions} is given by the grammar
        \[
            \GExp^- \ni e_1, e_2 ::= 0 \mid p \in \Sigma \mid e_1 +_b e_2 \mid e_1\cdot e_2 \mid e_1^{(b)} e_2
        \]
        where \(b\) ranges over the Boolean algebra expressions \(\BExp\).
    \end{definition}

    Unlike full \GKAT, in skip-free \GKAT the loop construct is treated as a binary operation, analogous to Kleene's original star operation~\cite{kleene}, which was also binary.
    This helps us avoid loops of the form $e^{(b)}$, which can be skipped when $b$ does not hold.
    The expression \(e_1 ^{(b)} e_2\) corresponds to \(e_1 ^{(b)} \cdot e_2\) in  \GKAT. 

    \begin{example}\label{eg:example expression}
        Using the same notational shorthand as in \Cref{example:gkat-syntax-example}, the block of code in \Cref{fig:example}(ii) can be cast as the skip-free \GKAT expression 
        \[
            (\action{$n$ := 1}) \cdot ((\action{\textit{fizzbuzz}} +_{\test{$3|n$} \wedge \test{$5|n$}} (\action{\textit{fizz}} +_{\test{$3|n$}} (\action{\textit{buzz}} +_{\test{$5|n$}} \action{$n$}))) \cdot \action{$n$\texttt{++}})^{(\test{$n \leq 100$})} (\action{\textit{done!}}) 
        \]
        Note how we use a skip-free loop of the form $e_1 \while{b} e_2$ instead of the looping construct $e_1^{(b)}$ before concatenating with $e_2$, as was done for \GKAT. 
    \end{example}

    \subsection{Skip-free Semantics}
    There are three natural ways to interpret skip-free \GKAT expressions: as \emph{automata}, as \emph{behaviours}, and as \emph{languages}.\footnote{We will connect these to the relational semantics from \Cref{def:relational interpretation} in \Cref{sec:relation to GKAT}.}
    After a short note on Boolean algebra, we shall begin with the automaton interpretation, also known as the \emph{small-step semantics}, from which the other two can be derived.
    \begin{figure}[!t]
        \begin{gather*}
            x \vee 0 = x
            \qquad
            x \vee \bar x = 1
            \qquad
            x \vee y = y \vee x
            \qquad
            x\vee (y \wedge z) = (x \vee y)\wedge (x \vee z)
            \\
            x \wedge 1 = x
            \qquad
            x \wedge \bar x = 0
            \qquad
            x\wedge y = y \wedge x
            \qquad
            x\wedge (y \vee z) = (x \wedge y)\vee (x \wedge z)
        \end{gather*}
        \caption{\label{fig:boolean algebra axioms} The axioms of Boolean algebra~\cite{huntington}.}
    \end{figure}

    \paragraph{\bf Boolean algebra.}
    To properly present our automata, we need to introduce one more notion.
    Boolean expressions $\BExp$ are a syntax for elements of a \emph{Boolean algebra}, an algebraic structure satisfying the equations in \cref{fig:boolean algebra axioms}.
    When a Boolean algebra is freely generated from a finite set of basic tests ($T$ in the case of $\BExp$), it has a finite set \(\At\) of nonzero minimal elements called \emph{atoms}.
    Atoms are in one-to-one correspondence with sets of tests, and the free Boolean algebra is isomorphic to \(\mathcal P(\At)\), the sets of subsets of \(\At\), equipped with \({\vee} = {\cup}\), \({\wedge} = {\cap}\), and \(\overline{(-)} = \At\setminus (-)\).
    In the context of programming, one can think of an atom as a complete description of the machine state, saying which tests are true and which are false.
    We will denote atoms by the Greek letters $\alpha$ and $\beta$, sometimes with indices.
    Given a Boolean expression $b\in \BExp$ and an atom $\alpha \in \At$ we say that $\alpha$ entails $b$, written $\alpha \leq b$, whenever $\overline\alpha \vee b = 1$, or equivalently $\alpha \vee b = b$.
    \paragraph{\bf Automata.}
    Throughout the paper, we use the notation $\bullet + S$ where $S$ is a set and $\bullet$ is a symbol to denote the disjoint union (coproduct) of $\{ \bullet \}$ and $S$.
    When $X$ and $Y$ are sets, we write $Y^X$ for the set of functions from $X$ to $Y$.

    The small-step semantics of a skip-free  \GKAT expression uses a special type of deterministic automaton.
    A \emph{skip-free automaton} is a pair $(X,h)$, where $X$ is a set of \emph{states} and $h\colon X \to (\bot + \Sigma\times(\checkmark + X))^\At$ is a \emph{transition structure}.
    At every \(x \in X\) and for any $\alpha \in \At$, one of three things can happen:
    \begin{enumerate}
        \item \(h(x)(\alpha) = (p, y)\), which we write as \(x \tr{\alpha \mid p}_h y\), means the state $x$ under $\alpha$ makes a transition to a new state $y$, after performing the action $p$;
        \item  \(h(x)(\alpha) = (p, \checkmark)\), which we write as \(x \tr{\alpha \mid p}_h \checkmark\), means the state $x$ under $\alpha$ successfully terminates with action $p$;
        \item \(h(x)(\alpha) = \bot\), which we write as \(x \downarrow_h \alpha\), means the state $x$ under $\alpha$ terminates with failure.
        Often we will leave these outputs implicit.
    \end{enumerate}
    We often drop the subscript from the notations above when no confusion is likely.

    \begin{definition}[Automaton of expressions]\label{def:autexp}
        We equip the set \(\GExp^-\) of all skip-free  \GKAT expressions with an automaton structure \((\GExp^-, \partial)\) given in \cref{fig:small-step GKAT}, representing step-by-step execution.
        Given \(e \in \GExp^-\),  we denote the set of states reachable from \(e\) by \(\langle e\rangle\) and call this the \emph{small-step semantics} of \(e\).
    \end{definition}

    \begin{figure}[!t]
        \begin{gather*}
            \prftree{}{p \tr{\alpha\mid p} \checkmark}
            \hspace{1.5em}
            \prftree{e_1 \tr{\alpha \mid p} e' \quad \alpha \leq b}{e_1 +_b e_2 \tr{\alpha \mid p} e'}
            \hspace{1.5em}
            \prftree{e_2 \tr{\alpha \mid p} e' \quad \alpha \not\leq b}{e_1 +_b e_2 \tr{\alpha \mid p} e'} \hspace{1.5em}
            \prftree{e_1 \tr{\alpha \mid p} e'}{e_1e_2 \tr{\alpha \mid p} e'e_2} \\
            \prftree{e_1 \tr{\alpha \mid p} \checkmark}{e_1e_2 \tr{\alpha \mid p} e_2}
            \hspace{1.5em}
            \prftree{e_1 \tr{\alpha \mid p} e' \quad \alpha \leq b}{e_1^{(b)}e_2 \tr{\alpha \mid p} e'(e_1^{(b)}e_2)}
            \hspace{1.5em}
            \prftree{e_1 \tr{\alpha \mid p} \checkmark \quad \alpha \leq b}{e_1^{(b)}e_2 \tr{\alpha \mid p} e_1^{(b)}e_2}
            \hspace{1.5em}
            \prftree{e_2 \tr{\alpha \mid p} e' \quad \alpha \not\leq b}{e_1^{(b)}e_2 \tr{\alpha \mid p} e'}
        \end{gather*}
        \caption{\label{fig:small-step GKAT}
        The small-step semantics of skip-free  \GKAT expressions.
        }
    \end{figure}
    The small-step semantics of skip-free  \GKAT expressions is inspired by Brzozowski's \emph{derivatives}~\cite{brzozowski}, which provide an automata-theoretic description of the step-by-step execution of a regular expression.
    Our first lemma tells us that, like regular expressions, skip-free  \GKAT expressions correspond to finite automata.

    \begin{restatable}{lemma}{lemlocalfiniteness}\label{lem:local finiteness}
        For any \(e \in \GExp^-\), \(\langle e\rangle\) has finitely many states.
    \end{restatable}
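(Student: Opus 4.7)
The plan is to find, for each $e \in \GExp^-$, a finite set $F(e) \subseteq \GExp^-$ that contains $e$ and is closed under the one-step transition relation of \cref{fig:small-step GKAT}. Once this is done, $\langle e \rangle \subseteq F(e)$ follows by induction on the length of a derivation path starting at $e$, which immediately gives finiteness of $\langle e \rangle$. This is the same broad strategy used for Brzozowski derivatives of regular expressions.

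I would define $F$ by structural induction, mirroring the shape of the transition rules:
\begin{align*}
F(0) &= \{0\} \\
F(p) &= \{p\} \\
F(e_1 +_b e_2) &= \{e_1 +_b e_2\} \cup F(e_1) \cup F(e_2) \\
F(e_1 \cdot e_2) &= \{e_1 \cdot e_2\} \cup \{e' \cdot e_2 : e' \in F(e_1)\} \cup F(e_2) \\
F(e_1^{(b)} e_2) &= \{e_1^{(b)} e_2\} \cup \{e' \cdot (e_1^{(b)} e_2) : e' \in F(e_1)\} \cup F(e_2)
\end{align*}
That each $F(e)$ is finite is immediate by structural induction, since every clause is a finite union of sets that are finite by induction (or by pairing a finite set with a fixed right-hand context).

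The work is then to verify, again by induction on $e$, two properties: (i) $e \in F(e)$, and (ii) $F(e)$ is closed under the transition relation, i.e.\ whenever $f \in F(e)$ and $f \tr{\alpha\mid p} f'$ with $f' \in \GExp^-$, also $f' \in F(e)$. The base cases $0$ and $p$ are trivial (the only possible successor is $\checkmark$, which is not a state). The case $e_1 +_b e_2$ is immediate from the inductive hypotheses on $e_1$ and $e_2$. For the sequential case, an element of $F(e_1 \cdot e_2)$ is either of the form $e' \cdot e_2$ with $e' \in F(e_1)$ (covering $e_1 \cdot e_2$ itself) or an element of $F(e_2)$; the rules for $\cdot$ either produce $e'' \cdot e_2$ with $e_1 \to e''$ (handled by the inductive hypothesis for $e_1$) or the state $e_2 \in F(e_2) \subseteq F(e_1 \cdot e_2)$, while transitions out of states in $F(e_2)$ are handled directly by the inductive hypothesis for $e_2$.

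The loop case $e_1^{(b)} e_2$ is the slightly delicate one, and is where I expect the only real care. A transition out of $e_1^{(b)} e_2$ either enters $e' \cdot (e_1^{(b)} e_2)$ with $e' \in F(e_1)$, or re-enters $e_1^{(b)} e_2$ itself, or steps into $F(e_2)$; all three options lie in $F(e_1^{(b)} e_2)$. A transition out of a state $e' \cdot (e_1^{(b)} e_2)$ either produces $e'' \cdot (e_1^{(b)} e_2)$ with $e' \to e''$ (and $e'' \in F(e_1)$ by the inductive hypothesis) or returns to $e_1^{(b)} e_2$ when $e' \to \checkmark$; both remain in $F(e_1^{(b)} e_2)$. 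The essential point is that the loop body $e_1^{(b)} e_2$ is reused as a \emph{fixed} right-hand context throughout the execution, so the new states generated inside the loop are all indexed by the already-finite set $F(e_1)$.
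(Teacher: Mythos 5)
Your proposal is correct and follows essentially the same route as the paper: a structural induction in which the states reachable from $e_1 \cdot e_2$ and $e_1^{(b)} e_2$ are recognized as (at worst) elements of the reachable sets of the subexpressions placed in a fixed right-hand context, yielding the bound $|\langle e_1\rangle| + |\langle e_2\rangle|$. Your explicit closed over-approximation $F(e)$ just spells out the closure argument the paper leaves implicit.
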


   \begin{example}\label{eg:example automaton}
        The automaton that arises from the program \method{fizzbuzz2} is below, with \(\test{\(a\)} = \test{\(n \le 100\)}\), \(\test{\(b\)} = \test{\(3|n\)}\), and \(\test{\(c\)} = \test{\(5 | n\)}\).
        The expression \(e\) is the same as in \Cref{eg:example expression}, $e_1$ is the same as $e$ but without the action \action{$n := 0$} in front, and $e_2 = \action{\(n\)\texttt{++}} \cdot e_1$.
        We also adopt the convention of writing $x \tr{b|p} x'$ where $b \in \BExp$ to represent all transitions $x \tr{\alpha|p} x'$ where $\alpha \leq b$.
            \[\begin{tikzpicture}[scale=.7,font=\scriptsize]
                \node[state,minimum size=.64cm] (e) at (0,0) {\(e\)};
                \node[state,minimum size=0cm] (e1) at (3,0) {\(e_1\)};
                \node[state,minimum size=0cm] (e2) at (9,0) {\(e_2\)};
                \node (final) at (3,1.8) {\(\checkmark\)};

                \draw (e) edge[->] node[below] {\(1 \mid \action{$n := 1$}\)} (e1);
                \draw (e1) edge[->, bend left,looseness=0.6] node[above]
                    {\(\begin{aligned}
                        \test{$abc$} &\mid \action{\textsl{fizzbuzz}}, &
                        \test{$ab\bar c$} &\mid \action{\textsl{fizz}}, \\
                        \test{$a\bar b c$} &\mid \action{\textsl{buzz}}, &
                        \test{$a \bar b \bar c$} &\mid \action{$n$}
                    \end{aligned}\)} (e2);
                \draw (e2) edge[->] node[below] {\(1 \mid \action{$n$\texttt{++}}\)} (e1);
                \draw (e1) edge[->] node[left] {\(\bar{\test{$a$}} \mid \action{\textit{done!}}\)} (final);
            \end{tikzpicture}
            \]
    \end{example}


    The automaton interpretation of a skip-free  \GKAT expression (its small-step semantics) provides an intuitive visual depiction of the details of its execution. This is a useful view on the operational semantics of expressions, but sometimes one might want to have a more precise description of the global behaviour of the program.   The remaining two interpretations of skip-free  \GKAT expressions aim to capture two different semantics of expressions: one finer, bisimilarity, that makes a distinction on the branching created by how its states respond to atomic tests, which actions can be performed, and when successful termination and crashes occur; another coarser, language semantics, that assigns a language of traces to each expression capturing all sequences of actions that lead to successful termination. The key difference between these two semantics will be their ability to distinguish programs that crash early in the execution from programs that crash later---this will become evident in the axiomatizations of both semantics. We start by presenting the language semantics as this is the more traditional one associated with \GKAT (and regular) expressions.

     \paragraph{\bf Language semantics.}
    Formally, a \emph{(skip-free) guarded trace} is a nonempty string of the form \(\alpha_1p_1 \cdots \alpha_n p_n\), where each \(\alpha_i \in \At\) and \(p_i \in \Sigma\). Intuitively, each $\alpha_i$ captures the state of program variables needed to execute program action $p_i$ and the execution of each $p_i$ except the last yields a new program state $\alpha_{i+1}$.
    A \emph{skip-free guarded language} is a set of guarded traces.

    Skip-free guarded languages should be thought of as sets of strings denoting successfully terminating computations.

    \begin{definition}[Language acceptance]
        In a skip-free automaton  \((X, h)\) with a state \(x \in X\), the \emph{language accepted by \(x\)} is the skip-free guarded language
        \[
            \L(x, (X, h)) = \{\alpha_1p_1\cdots \alpha_{n}p_n \mid x \tr{\alpha_1 \mid p_1} x_1 \tr{} \cdots \tr{} x_n \tr{\alpha_n \mid p_n} \checkmark\}
        \]
        If \((X, h)\) is clear from context, we will simply write \(\L(x)\) instead of \(\L(x, (X, h))\).
        If \(\L(x) = \L(y)\), we write \(x \sim_\L y\) and say that \(x\) and \(y\) are \emph{language equivalent}.
    \end{definition}

    Each skip-free \GKAT expression is a state in the automaton of expressions (\Cref{def:autexp}) and therefore accepts a language.
    The language accepted by a skip-free \GKAT expression is the set of successful runs of the program it denotes.
    Analogously to $\GKAT$, we can describe this language inductively.

    \begin{restatable}{lemma}{lemlanguagecomputation}\label{lem:language}
       Given an expression $e\in \GExp^-$, the language accepted by $e$ in \((\GExp^-, \partial)\), i.e., $\L(e) = \L(e, (\GExp^-, \partial))$ can be characterized as follows:
        \begin{gather*}
            \L(0) = \emptyset
            \quad
            \L(p) = \{\alpha p \mid \alpha \in \At\}
            \quad
            \L(e_1 +_b e_2) = b\L(e_1) \cup \bar b\L(e_2)
            \\
            \L(e_1\cdot e_2) = \L(e_1)\cdot \L(e_2)
            \quad
            \L(e_1^{(b)}e_2) = \bigcup_{n \in \mathbb N} (b\L(e_1))^n\cdot\bar b\L(e_2)
        \end{gather*}
        Here, we write \(b L = \{\alpha p w \in L \mid \alpha \leq b\}\) and \(L_1 \cdot L_2 = \{ wx : w \in L_1, x \in L_2 \} \), while $L^0 = \{ \epsilon \}$ (where $\epsilon$ denotes the empty word) and $L^{n+1} = L \cdot L^n$.
    \end{restatable}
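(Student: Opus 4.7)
The plan is to prove each equation by structural induction on $e$, unpacking the definition of $\L$ via the transition rules in \cref{fig:small-step GKAT}. The base cases $e = 0$ and $e = p$ are immediate: no rule produces any transition out of $0$, while the only transitions from $p$ are $p \tr{\alpha \mid p} \checkmark$ for every $\alpha \in \At$.

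For $e_1 +_b e_2$, the two inference rules for $+_b$ show that the first step of any accepting run is precisely a first step of $e_1$ if $\alpha \leq b$, and a first step of $e_2$ otherwise. Because the target of this first transition is a derivative of $e_1$ (resp.\ $e_2$) rather than a $+_b$-expression, the remainder of the run is an accepting run in the subautomaton generated by $e_1$ (resp.\ $e_2$). Partitioning accepting traces by their leading atom then yields $b\L(e_1) \cup \bar b\L(e_2)$ directly from the inductive hypotheses.

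For concatenation $e_1 \cdot e_2$, I would first establish an auxiliary \emph{decomposition lemma}: accepting runs from $e_1 \cdot e_2$ are in bijection with pairs of accepting runs from $e_1$ and from $e_2$. The argument is by induction on the length of the run, using that its first transition must arise from one of the two concatenation rules: either a non-terminating step $e_1 \tr{\alpha \mid p} e'$, after which the automaton is at $e' \cdot e_2$ and the inductive hypothesis applies, or a terminating step $e_1 \tr{\alpha \mid p} \checkmark$, which closes off the $e_1$-prefix and leaves a run that is entirely within the subautomaton generated by $e_2$. Composing this with the inductive characterizations of $\L(e_1)$ and $\L(e_2)$ gives the claim.

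The loop case is the anticipated obstacle, but reduces smoothly once the decomposition lemma is in hand. Any accepting run from $e_1^{(b)} e_2$ whose first atom $\alpha$ violates $b$ is handled by the third loop rule, matching the $n=0$ summand $\bar b \L(e_2)$. Otherwise the first transition lands in a state of the shape $e' \cdot (e_1^{(b)} e_2)$, and the decomposition lemma, applied with $e_1^{(b)} e_2$ playing the role of the second factor, splits the remainder of the run into a completion of the current iteration of $e_1$ (contributing a word in $b\L(e_1)$) followed by a fresh accepting run from $e_1^{(b)} e_2$ itself. A further induction on the number of such returns to the loop head then assembles the union $\bigcup_{n \in \mathbb{N}} (b\L(e_1))^n \cdot \bar b\L(e_2)$; the only real subtlety is the bookkeeping tracking how often control re-enters the loop state, which is made routine by the prior decomposition lemma.
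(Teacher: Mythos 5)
Your proof is correct and follows essentially the same route as the paper's: structural induction on $e$, characterizing the shape of accepting runs case by case, with the concatenation decomposition also driving the loop case (the paper phrases the latter in terms of \emph{minimal cycles} through the loop head, which is the same bookkeeping as your induction on the number of returns). The only nit is that for $\alpha \leq b$ the first loop transition may also return directly to $e_1^{(b)}e_2$ (when $e_1 \tr{\alpha \mid p} \checkmark$) rather than landing in a state of the form $e'\cdot(e_1^{(b)}e_2$), but your iteration-count argument absorbs this case.
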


    \cref{lem:language} provides a way of computing the language of an expression $e$ without having to generate the automaton for $e$.

    \paragraph{\bf Bisimulation semantics.}
       Another, finer, notion of equivalence that we can associate with skip-free automata is bisimilarity. 

    \begin{definition}\label{def:g bisim}
        Given skip-free automata \((X, h)\) and \((Y, k)\), a \emph{bisimulation} is a relation \(R \subseteq X \times Y\) such that for any \(x \mathrel{R} y\), \(\alpha \in \At\) and \(p \in \Sigma\):
        \begin{enumerate}
            \item \(x \downarrow \alpha\) if and only if \(y \downarrow \alpha\),
            \item \(x \tr{\alpha \mid p} \checkmark\) if and only if \(y \tr{\alpha \mid p} \checkmark\), and
            \item if \(x \tr{\alpha \mid p} x'\), then there is a \(y' \in Y\) such that \(x'\mathrel{R} y'\) and \(y \tr{\alpha \mid p} y'\).\footnote{Together with the first two constraints, this condition implies that if $y \tr{\alpha \mid p} y'$, then there is an $x' \in X$ such that $x' \mathrel{R} y'$ and $x \tr{\alpha \mid p} x'$.}
        \end{enumerate}
        We call \(x\) and \(y\) \emph{bisimilar} if \(x \mathrel{R} y\) for some bisimulation $R$ and write
        $x \bisim y$.
    \end{definition}

    In a fixed skip-free automaton \((X, h)\), we define ${\bisim} \subseteq X\times X$ as the largest bisimulation, called \emph{bisimilarity}.
    This is an equivalence relation and a bisimulation.\footnote{
        This follows directly from seeing skip-free automata as a special type of coalgebra and the fact that the functor involved preserves weak pullbacks~\cite{coalgebra}.
        In fact, coalgebra has been an indispensable tool in the production of the current paper, guiding us to the correct definitions and simplifying many of the proofs.%
        \label{footnote:coalg}
   }
    The bisimilarity equivalence class of a state is often called its \emph{behaviour}.

    \begin{example}
        In the automaton below, \(x_1\) and \(x_2\) are bisimilar.
        This is witnessed by the bisimulation \(\{(x_1, x_2), (x_2, x_2)\}\).
        \[
        \begin{tikzpicture}[scale=.62,font=\scriptsize]
            \node[state,minimum size=0cm] (x1) at (0,0) {\(x_1\)};
            \node[state,minimum size=0cm] (x2) at (3,0) {\(x_2\)};
            \node (final) at (1.5, -1) {\(\checkmark\)};
            \draw (x1) edge[->] node[above] {\(a \mid p\)} (x2);
            \draw (x1) edge[->] node[below left] {\(\bar a \mid q\)} (final);
            \draw[loop] (x2) edge[->, loop right] node[right] {\(a \mid p\)} (x2);
            \draw (x2) edge[->] node[below right] {\(\bar a \mid q\)} (final);
        \end{tikzpicture}
        \]
    \end{example}

    We can also use bisimulations to witness language equivalence.

    \begin{lemma}\label{lem:finer}
        Let \(e_1, e_2 \in \GExp^-\).
        If \(e_1 \bisim e_2\), then \(\L(e_1) = \L(e_2)\).
    \end{lemma}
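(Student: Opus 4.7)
The plan is to prove the contrapositive-friendly statement: for any bisimulation $R$ between skip-free automata (not just the maximal one on $(\GExp^-,\partial)$), if $x \mathrel{R} y$ then $\L(x) = \L(y)$. The lemma then follows immediately by applying this to the bisimulation witnessing $e_1 \bisim e_2$ inside the automaton of expressions $(\GExp^-,\partial)$.

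The argument is a routine induction on the length of a guarded trace $w = \alpha_1 p_1 \cdots \alpha_n p_n$. To show $\L(x) \subseteq \L(y)$, I would suppose $w \in \L(x)$, so by the definition of $\L$ there is a run $x \tr{\alpha_1 \mid p_1} x_1 \tr{} \cdots \tr{} x_{n-1} \tr{\alpha_n \mid p_n} \checkmark$. In the base case $n = 1$ we have $x \tr{\alpha_1 \mid p_1} \checkmark$, so clause (2) of \cref{def:g bisim} yields $y \tr{\alpha_1 \mid p_1} \checkmark$, hence $w \in \L(y)$. In the inductive step, clause (3) applied to the first transition $x \tr{\alpha_1 \mid p_1} x_1$ produces some $y_1$ with $y \tr{\alpha_1 \mid p_1} y_1$ and $x_1 \mathrel{R} y_1$; by the induction hypothesis applied to the shorter trace $\alpha_2 p_2 \cdots \alpha_n p_n \in \L(x_1)$ we get it lies in $\L(y_1)$, and prepending the first transition gives $w \in \L(y)$. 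Reversing the roles of $x$ and $y$ (using that $R$ is a bisimulation and hence its symmetry, noted in the footnote of \cref{def:g bisim}) yields the opposite inclusion.

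There is no real obstacle here: the three clauses of the bisimulation definition are designed precisely so that transitions, accepting transitions, and failures are matched on both sides, and the inductive step consumes exactly one transition of the run at a time. The only thing to be a little careful about is that the trace is nonempty by definition, so the base case must be handled separately from the inductive one; this is harmless since both clauses (2) and (3) of \cref{def:g bisim} are available.
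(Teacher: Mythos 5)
Your proof is correct: the paper states \cref{lem:finer} without proof, treating it as the standard fact that bisimilar states accept the same language, and your induction on the length of the guarded trace (matching the first transition via clauses (2)/(3) of \cref{def:g bisim}, with the footnoted symmetric transfer property giving the reverse inclusion) is exactly the routine argument being elided. Nothing is missing.
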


    The converse of \cref{lem:finer} is not true.
    Consider, for example, the program \(p^{(1)}q\) that repeats the atomic action \(p \in \Sigma\) indefinitely, never reaching \(q\).
    Since
    \[\L(p^{(1)}q) = \bigcup_{n \in \mathbb N}\L(p)^n\cdot \emptyset = \emptyset = \L(0)\]
    we know that \(p^{(1)}q \sim_\L 0\).
    But \(p^{(1)}q\) and \(0\) are not bisimilar, since \cref{fig:small-step GKAT} tells us that \(p^{(1)}q \tr{\alpha \mid p} p^{(1)}q\) and \(0 \downarrow \alpha\), which together refute \cref{def:g bisim}.1.

    \subsection{Axioms}
    Next, we give an inference system for bisimilarity and language equivalence consisting of equations and equational inference rules.
    The axioms of skip-free  \GKAT are given in \cref{fig:gkat axioms}.
    They include the equation (\(\dagger\)), which says that early deadlock is the same as late deadlock.
    This is \emph{sound} with respect to the language interpretation, meaning that (\(\dagger\)) is true if \(x\) is replaced with a skip-free guarded language, but it is not sound with respect to the bisimulation semantics.
    For example, the expressions \(p \cdot 0\) and \(0\) are not bisimilar for any \(p \in \Sigma\).
    Interestingly, this is the only axiomatic difference between bisimilarity and language equivalence.

%
    \begin{remark}
        The underlying logical structure of our inference systems is equational logic~\cite{birkhoff}, meaning that provable equivalence is an equivalence relation that is preserved by the algebraic operations.
    \end{remark}

    Given expressions \(e_1, e_2 \in \GExp^-\), we write \(e_1 \equiv_\dagger e_2\) and say that \(e_1\) and \(e_2\) are \emph{\(\equiv_\dagger\)-equivalent} if the equation \(e_1 = e_2\) can be derived from the axioms in \cref{fig:gkat axioms} without the axiom marked (\(\dagger\)).
    We write \(e_1 \equiv e_2\) and say that \(e_1\) and \(e_2\) are \emph{\(\equiv\)-equivalent} if \(e_1 = e_2\) can be derived from the whole set of axioms in \cref{fig:gkat axioms}.


    The axioms in \cref{fig:gkat axioms} are sound with respect to the respective semantics they axiomatize.
    The only axiom that is not sound w.r.t.\ bisimilarity is $e\cdot 0\equiv 0$, as this would relate automata with different behaviours ($e$ may permit some action to be performed, and this is observable in the bisimulation).
    \begin{restatable}[Soundness]{theorem}{thmsoundness}\label{thm:soundness}
        For any \(e_1, e_2 \in \GExp^-\),
        \begin{enumerate}
            \item If \(e_1 \equiv_\dagger e_2\), then \(e_1 \bisim e_2\).
            \item If \(e_1 \equiv e_2\), then \(e_1 \sim_\L e_2\).
        \end{enumerate}
    \end{restatable}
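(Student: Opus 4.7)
The plan is to proceed by induction on the length of the derivation of $e_1 = e_2$. Both $\bisim$ and $\sim_\L$ are equivalence relations, and both are congruences for the operations of skip-free \GKAT: bisimilarity because skip-free automata are coalgebras for a functor that preserves weak pullbacks (as noted in the footnote), and language equivalence by direct inspection of the compositional clauses in \cref{lem:language}. Hence the induction reduces to checking the soundness of each axiom in \cref{fig:gkat axioms} and of the inference rule (\textsf{RSP}), with the axiom $(\dagger)$ omitted for claim (1).

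For the equational axioms, soundness with respect to bisimilarity is read off directly from the small-step rules of \cref{fig:small-step GKAT}. For example, (\textsf{G2}) holds because under any $\alpha \leq b$ both $e_1 +_b e_2$ and $e_2 +_{\bar b} e_1$ produce the transition dictated by $e_1$, and under $\alpha \not\leq b$ both produce the transition dictated by $e_2$; the axioms (\textsf{G0})--(\textsf{G3}), (\textsf{G6})--(\textsf{G8}), and (\textsf{FP}) are each handled by exhibiting a small explicit bisimulation obtained by unfolding derivatives. Claim (2) for these axioms then follows via \cref{lem:finer}, and the additional axiom $(\dagger)$ is justified by \cref{lem:language}: $\L(e \cdot 0) = \L(e) \cdot \L(0) = \L(e) \cdot \emptyset = \emptyset = \L(0)$. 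Note that $(\dagger)$ is genuinely unsound for bisimilarity, since $p \cdot 0$ can take a $p$-transition whereas $0$ cannot.

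The main obstacle is the inference rule (\textsf{RSP}). Assuming by induction $z \bisim xz +_b y$ (respectively $z \sim_\L xz +_b y$), one must show $z \bisim x^{(b)} y$ (respectively $z \sim_\L x^{(b)} y$). For the language case, the induction hypothesis unfolds to $\L(z) = b\L(x) \cdot \L(z) \cup \bar b \L(y)$. Because every guarded trace in a skip-free language contains at least one atom--action pair, the set $b\L(x)$ is $\epsilon$-free, and therefore the fixed-point equation $L = b\L(x) \cdot L \cup \bar b \L(y)$ has a unique solution, namely $\bigcup_{n \in \mathbb{N}} (b\L(x))^n \cdot \bar b \L(y)$, which by \cref{lem:language} is $\L(x^{(b)} y)$. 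For the bisimilarity case, one constructs an explicit bisimulation anchored at $(z, x^{(b)} y)$ and closed under matched transitions: by the hypothesis, every transition of $x^{(b)} y$ under an atom $\alpha \leq b$---which by the loop rules of \cref{fig:small-step GKAT} leads either to $e'(x^{(b)} y)$ or back to $x^{(b)} y$---is matched by a transition of $z$ via the intermediate $xz +_b y$, while transitions under $\alpha \not\leq b$ are matched by derivatives of $y$ related to themselves. Skip-freeness of $x$ ensures each unfolding step consumes at least one action, making the construction well-founded; a cleaner packaging uses standard bisimulation-up-to-$\bisim$ reasoning, which is sound because the underlying functor preserves weak pullbacks.
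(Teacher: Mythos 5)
Your overall strategy is essentially the paper's: induct on derivations, verify each axiom of \cref{fig:gkat axioms} directly against the small-step rules of \cref{fig:small-step GKAT}, transfer the $\equiv_\dagger$ axioms to language equivalence via \cref{lem:finer}, justify $(\dagger)$ by $\L(e\cdot 0)=\L(e)\cdot\emptyset=\emptyset$ from \cref{lem:language}, and handle (\textsf{RSP}) for languages by the Arden-style unique-fixed-point argument (valid because skip-free traces are nonempty, so $b\L(x)$ is $\epsilon$-free) and for bisimilarity by a bisimulation construction. All of these key points coincide with the paper's proof.

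The one genuine soft spot is your opening reduction: you assert that $\bisim$ is a congruence for the skip-free operations \emph{because} the signature functor preserves weak pullbacks. Weak pullback preservation only yields that bisimilarity is an equivalence relation and that bisimulations are closed under union, converse and composition (and it is what makes your up-to-$\bisim$ step sound); it says nothing about compatibility with the syntactic operations $+_b$, $\cdot$, $\while{b}$, which depends on the specific shape of the transition rules. That congruence fact is true, but it must be proved, e.g.\ by exhibiting bisimulations such as $\{(e_1\cdot g,\, e_2\cdot g) \mid e_1 \bisim e_2\} \cup {\bisim}$ and analogues for the other operations. The paper avoids a separate congruence lemma by showing that $\equiv_\dagger$ is \emph{itself} a bisimulation on $(\GExp^-,\partial)$: the congruence-rule cases are then checked inside the induction, and in the (\textsf{RSP}) case successor states are matched within $\equiv_\dagger$ using the derivability of $g \equiv_\dagger e\while{b}f$, so no up-to technique is needed. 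With the congruence work supplied (it is exactly the same effort as those cases of the paper's induction), your argument goes through; as a minor remark, no well-foundedness is needed in the bisimilarity case of (\textsf{RSP}) --- what matters is that a skip-free $x$ cannot terminate without an action, so the transitions of $xz+_b y$ under $\alpha\leq b$ are determined by $x$ alone.
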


    We consider the next two results, which are jointly converse to \cref{thm:soundness}, to be the main theorems of this paper.
    They state that the axioms in \cref{fig:gkat axioms} are \emph{complete} for bisimilarity and language equivalence respectively, i.e., they describe a complete set of program transformations for skip-free  \GKAT. 

    \begin{restatable}[Completeness I]{theorem}{restatecompletenessone}\label{thm:completeness I}
        If \(e_1 \bisim e_2\), then \(e_1 \equiv_\dagger e_2\).
    \end{restatable}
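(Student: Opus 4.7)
The plan is to reduce completeness of skip-free \GKAT modulo bisimilarity to Grabmayer and Fokkink's completeness theorem for one-free regular expressions modulo bisimulation. I would first define a syntactic embedding \(\mathsf{embed}: \GExp^- \to \mathsf{1\text{-}free}\) sending a skip-free \GKAT expression to a one-free regular expression over the alphabet \(\At \times \Sigma\) (or some similar product alphabet that records the atom/action pair labelling each transition). Atomic actions \(p\) become sums \(\bigplus_{\alpha \in \At} (\alpha, p)\); guarded choice \(e_1 +_b e_2\) is rendered by keeping, inside the embedding of each branch, only those atom-labelled summands consistent with the guard (i.e.\ filtering \(\mathsf{embed}(e_1)\) by atoms \(\alpha \leq b\) and \(\mathsf{embed}(e_2)\) by atoms \(\alpha \leq \bar b\), then summing); sequencing goes to sequencing, and the binary loop \(e_1^{(b)}e_2\) unfolds into a one-free star via something like \((\mathsf{embed}(e_1) \restriction b)^\star \cdot (\mathsf{embed}(e_2) \restriction \bar b)\). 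The zero expression becomes the one-free zero.

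Next I would show that this embedding is \textbf{bisimulation-faithful}: the small-step semantics of \(e \in \GExp^-\) as defined in \cref{fig:small-step GKAT} corresponds, state-for-state, to the charts associated to \(\mathsf{embed}(e)\) under Grabmayer and Fokkink's operational semantics, with the transition labels \((\alpha, p)\) matching the GKAT transitions \(\tr{\alpha \mid p}\). Concretely, I would exhibit a bisimulation between \(\langle e \rangle\) and the set of reachable states of \(\mathsf{embed}(e)\), from which it follows that \(e_1 \bisim e_2\) in \((\GExp^-, \partial)\) if and only if \(\mathsf{embed}(e_1)\) and \(\mathsf{embed}(e_2)\) are bisimilar one-free regular expressions. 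Invoking Grabmayer and Fokkink's completeness theorem, this yields a proof of \(\mathsf{embed}(e_1) = \mathsf{embed}(e_2)\) in their axiomatic system.

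The last and most demanding step is to transport such a proof back into a derivation of \(e_1 \equiv_\dagger e_2\) in the system of \cref{fig:gkat axioms} (without \((\dagger)\)). I expect this to be the main obstacle, because not every one-free regex is in the image of \(\mathsf{embed}\), and the Grabmayer--Fokkink axioms do not respect the ``well-guarded by atoms'' invariant on the nose. My plan is to exhibit, for each Grabmayer--Fokkink axiom and inference rule, a corresponding derivable rule of skip-free \GKAT, restricted to expressions in the image of \(\mathsf{embed}\); in particular, a one-free-regex sum whose summands partition the atoms must be simulated by the guarded union axioms \((\textsf{G0})\)--\((\textsf{G3})\), and the one-free analogue of the fixed-point axiom must be simulated by \((\textsf{FP})\) and \((\textsf{RSP})\). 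Crucially, I will need a ``normal form'' lemma showing that any intermediate one-free regex occurring in a GF-derivation between two embedded expressions can itself be rewritten (in GF's system) into the image of \(\mathsf{embed}\) of some \GKAT expression, so the back-translation is always defined. Combining the bisimulation-faithful embedding, the GF completeness theorem, and this axiomatic back-translation yields \(e_1 \equiv_\dagger e_2\), completing the proof.
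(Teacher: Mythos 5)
Your proposal follows essentially the same route as the paper: the forward translation is the paper's $\gtr$, the bisimulation-faithfulness claim is \cref{lem:bisim iff bisim} together with \cref{lem:gtr is good stuff}, and the back-translation with its ``normal form'' lemma corresponds exactly to $\rtg$, \cref{lem:rtg is also good stuff}, and \cref{thm:sf_gkat_fix} (the existence of \emph{deterministic proofs}), which the paper establishes by reworking Grabmayer and Fokkink's argument for the class $\Det$. You have correctly identified the genuinely hard step --- keeping intermediate expressions of the GF-derivation inside the deterministic fragment --- which is precisely the point where the original edition of this paper had a gap.
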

    \begin{restatable}[Completeness II]{theorem}{restatecompletenesstwo}\label{thm:completeness II}
        If \(e_1 \sim_\L e_2\), then \(e_1 \equiv e_2\).
    \end{restatable}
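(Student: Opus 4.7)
The plan is to reduce \cref{thm:completeness II} to \cref{thm:completeness I} via a \emph{tree pruning} construction inspired by~\cite{gkaticalp}. The key observation is that the only axiomatic difference between $\equiv$ and $\equiv_\dagger$ is the equation $(\dagger)\colon x\cdot 0 = 0$, and this is precisely the law that lets us syntactically ``kill off'' subprograms that can never reach successful termination. Since two expressions are language equivalent if and only if they agree on accepting runs, and they are bisimilar if and only if they also agree on their non-accepting (dead) branching structure, eliminating the dead parts should collapse the gap between the two semantics.

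\paragraph{\bf Key steps.}
First, I would define a pruning map $e\mapsto e^\circ$ on $\GExp^-$ whose effect on the small-step automaton is to delete every state (or transition target) from which $\checkmark$ is unreachable. Concretely, each subterm whose small-step behaviour cannot reach termination is rewritten to $0$; for example, $e_1\cdot e_2$ becomes $0$ whenever $e_2^\circ = 0$, and $e_1^{(b)} e_2$ is adjusted so that loop iterations that make success impossible are merged into $0$. The first lemma would then show that $e \equiv e^\circ$ for every $e\in\GExp^-$, proved by induction on $e$ using $(\dagger)$ to discard dead sequential tails together with (\textsf{FP})/(\textsf{RSP}) and the guarded union axioms to reshape loops. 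Second, I would show the \emph{pruning preservation} lemma: $\L(e) = \L(e^\circ)$, which is immediate since removing states that cannot reach $\checkmark$ cannot change the accepted language.

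The crucial technical step is then the \emph{pruned coincidence} lemma: if $e_1^\circ$ and $e_2^\circ$ are pruned and $\L(e_1^\circ) = \L(e_2^\circ)$, then $e_1^\circ \bisim e_2^\circ$. This is plausible because in a pruned skip-free automaton, every reachable state has at least one guarded trace to $\checkmark$, so the language of a state uniquely determines its branching structure up to the bisimulation game in \cref{def:g bisim}: if some $(\alpha,p)$-transition existed in $e_1^\circ$ but not in $e_2^\circ$, one could extend it to an accepting trace witnessing a language difference. Once this is established, the proof of the theorem is a three-step chain: from $e_1 \sim_\L e_2$ and pruning preservation, $e_1^\circ \sim_\L e_2^\circ$; by pruned coincidence, $e_1^\circ \bisim e_2^\circ$; by \cref{thm:completeness I}, $e_1^\circ \equiv_\dagger e_2^\circ$, hence $e_1^\circ \equiv e_2^\circ$; and by the first lemma, $e_1 \equiv e_1^\circ \equiv e_2^\circ \equiv e_2$.

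\paragraph{\bf Main obstacle.}
I expect the main difficulty to be the syntactic construction of $e^\circ$ and the axiomatic proof that $e \equiv e^\circ$. Pruning is natural at the level of automata (just delete states that cannot reach $\checkmark$), but skip-free expressions do not admit arbitrary automaton manipulations directly. The hardest case is the loop $e_1^{(b)} e_2$: a loop body may be ``partially dead'' in the sense that some atoms $\alpha \le b$ lead to non-terminating sub-computations inside $e_1$, and one needs to rewrite the loop so that precisely those atoms are redirected to $0$ while preserving the remaining behaviour. Carrying this out equationally, rather than at the automaton level, will likely require decomposing $e_1$ according to which atoms permit a terminating continuation and then reassembling using (\textsf{FP}), (\textsf{RSP}), (\textsf{G3}), and $(\dagger)$. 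Everything else should reduce to routine induction once the loop case is settled.
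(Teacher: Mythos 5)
Your proposal is correct and takes essentially the same route as the paper: prune dead behaviour via a syntactic map (the paper's $\floor{-}$, your $(-)^\circ$), prove $e \equiv \floor{e}$ using $(\dagger)$ together with a ``dead implies $\equiv 0$'' lemma, show that language equivalence forces bisimilarity of the pruned terms, and then conclude with \cref{thm:completeness I}. The only differences are cosmetic: the paper tests deadness of $e_2$ (resp.\ $\overline{b}e_2$) semantically rather than via $e^\circ = 0$, and it organizes your ``pruned coincidence'' step as two lemmas (language equivalence gives bisimilarity in a pruned transition structure $\floor{\partial}$, and the graph of $\floor{-}$ is a bisimulation between $(\GExp^-,\floor{\partial})$ and $(\GExp^-,\partial)$), so the per-atom surgery on loop bodies you anticipate as the main obstacle is not needed---recursive pruning of subterms plus the single check on $\overline{b}e_2$ suffices.
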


    We prove \cref{thm:completeness I} in \cref{sec:bisim complete} by drawing a formal analogy between skip-free \GKAT and a recent study of regular expressions in the context of process algebra~\cite{onefreeregexlics}.
    We include a short overview of the latter in the next section.

    We delay the proof of \cref{thm:completeness II} to \cref{sec:completeness for skip-free GKAT}, which uses a separate technique based on the pruning method introduced in~\cite{gkaticalp}.

\section{One-free Star Expressions}\label{sec:regexp}
    Regular expressions were introduced by Kleene~\cite{kleene} as a syntax for the algebra of regular events.
    Milner offered an alternative interpretation of regular expressions~\cite{milner}, as what he called \emph{star behaviours}.
    Based on work of Salomaa~\cite{salomaa}, Milner proposed a sound axiomatization of the algebra of star behaviours, but left completeness an open problem.
    After nearly 40 years of active research from the process algebra community, a solution was finally found by Grabmayer~\cite{regexlics}.

    A few years before this result, Grabmayer and Fokkink proved that a suitable restriction of Milner's axioms gives a complete inference system for the behaviour interpretation of a fragment of regular expressions, called the \emph{one-free fragment}~\cite{onefreeregexlics}.
    In this section, we give a quick overview of Grabmayer and Fokkink's one-free fragment~\cite{onefreeregexlics}, slightly adapted to use an alphabet that will be suitable to later use in one of the completeness proofs of skip-free \GKAT. 

    \paragraph*{\bf Syntax.}
    In the process algebra literature~\cite{milner,onefreeregexlics,regexlics}, regular expressions generated by a fixed alphabet \(A\) are called \emph{star expressions}, and denote labelled transition systems (LTSs) with labels drawn from \(A\).
    As was mentioned in \cref{sec:skipfree}, skip-free automata can be seen as certain LTSs where the labels are atomic test/atomic action pairs.
    In \cref{sec:bisim complete}, we encode skip-free \GKAT expressions as one-free regular expressions and skip-free automata as LTSs with labels drawn from \(\At \cdot \Sigma\).
    We instantiate the construction from~\cite{onefreeregexlics} of the set of star expressions generated by the label set \(\At \cdot \Sigma\).

    \begin{definition}
        The set \(\SExp\) of \emph{one-free star expressions} is given by
        \[
            \SExp \ni r_1, r_2 ::= 0 \mid \alpha p \in \At \cdot \Sigma \mid r_1 + r_2 \mid r_1r_2 \mid r_1 * r_2
        \]
    \end{definition}

    \paragraph*{\bf Semantics.}
    The semantics of $\SExp$ is now an instance of the labelled transition systems that originally appeared in~\cite{onefreeregexlics}, with atomic test/atomic action pairs as labels and a (synthetic) output state \(\checkmark\) denoting successful termination.

    For the rest of this paper, we call a pair $(S,t)$ a \emph{labelled transition system} when $S$ is a set of states and $t\colon S \to \mathcal P(\At\cdot\Sigma \times (\checkmark + S))$ is a transition structure.
    We write \(x \tr{\alpha p} y\) if \((\alpha p, y) \in t(x)\) and \(x \tr{\alpha p} \checkmark\) if \((\alpha p, \checkmark) \in t(x)\).

    The set \(\SExp\) can be given the structure of a labelled transition system \((\SExp, \tau)\), defined in \cref{fig:small-step stex}.
    If \(r \in \SExp\), we write \(\langle r\rangle\) for the transition system obtained by restricting \(\tau\) to the one-free star expressions reachable from \(r\) and call \(\langle r \rangle\) the \emph{small-step semantics} of \(r\).

    \begin{figure}[!t]
        \begin{gather*}
            \prftree{}{\alpha p \tr{\alpha p} \checkmark}
            \qquad
            \prftree{r_1 \tr{\alpha p} r'}{r_1 + r_2 \tr{\alpha p} r'}
            \qquad
            \prftree{r_2 \tr{\alpha p} r'}{r_1 + r_2 \tr{\alpha p} r'}
            \qquad
            \prftree{r_1 \tr{\alpha p} r'}{r_1r_2 \tr{\alpha p} r'r_2}
            \\
            \prftree{r_1 \tr{\alpha p} \checkmark}{r_1r_2 \tr{\alpha p} r_2}
            \qquad
            \prftree{r_1 \tr{\alpha p} r'}{r_1 * r_2 \tr{\alpha p} r'(r_1 * r_2)}
            \qquad
            \prftree{r_1 \tr{\alpha p} \checkmark}{r_1 * r_2 \tr{\alpha p} r_1 * r_2}
            \qquad
            \prftree{r_2 \tr{\alpha p} x}{r_1 * r_2 \tr{\alpha p} x}
        \end{gather*}
        \caption{\label{fig:small-step stex}The small-step semantics of one-free star expressions.}
    \end{figure}

    The bisimulation interpretation of one-free star expressions is subtler than the bisimulation interpretation of skip-free  \GKAT expressions.
    The issue is that labelled transition systems (LTSs) are \emph{nondeterministic} in general: it is possible for an LTS to have both a \(x \tr{\alpha p} y\) and a \(x \tr{\alpha q} z\) transition for \(p \neq q\) or \(y \neq z\).
    The appropriate notion of bisimilarity for LTSs can be given as follows.

    \begin{definition}\label{def:p bisim}
        Given labelled transition systems \((S, t)\) and \((T, u)\), a \emph{bisimulation} between them is a relation \(R \subseteq S \times T\) s.t.\ for any \(x \mathrel{R} y\) and \(\alpha p \in \At\cdot \Sigma\),
        \begin{enumerate}
            \item \(x \tr{\alpha p} \checkmark\) if and only if \(y \tr{\alpha  p} \checkmark\),
            \item if \(x \tr{\alpha p} x'\), then there exists a $y'$ such that \(x' \mathrel{R} y'\) and \(y \tr{\alpha p} y'\), and
            \item if \(y \tr{\alpha p} y'\), then there exists an $x'$ such that \(x' \mathrel{R} y'\) and \(x \tr{\alpha p} x'\).
        \end{enumerate}
       As before, we denote the largest bisimulation by $\bisim$.
       That is, we call \(x\) and \(y\) \emph{bisimilar} and write \(x \bisim y\) if \(x \mathrel{R} y\) for some bisimulation $R$.
    \end{definition}

    The following closure properties of bisimulations of LTSs are useful later.
    They also imply that bisimilarity is an equivalence relation.
    Like in the skip-free case, the bisimilarity equivalence class of a state is called its \emph{behaviour}.

    \begin{lemma}\label{lem:bisim facts}
        Let \((S, t)\), \((T, u)\), and \((U, v)\) be labelled transition systems.
        Furthermore, let \(R_1, R_2 \subseteq S \times T\) and \(R_3 \subseteq T \times U\) be bisimulations.
        Then \(R_1 ^{op} = \{(y, x) \mid x \mathrel{R_1} y\}\), \(R_1 \cup R_2\) and \(R_1 \circ R_3\) are bisimulations.
    \end{lemma}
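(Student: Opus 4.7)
The plan is to check, for each of the three constructions, that the resulting relation satisfies conditions (1)--(3) of \cref{def:p bisim}. This is routine; the only slightly delicate point is the composition, which requires chaining the bisimulation property of \(R_1\) with that of \(R_3\) through a common intermediate state.

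For \(R_1^{op}\), I would observe that \cref{def:p bisim} is essentially symmetric in the two LTSs: condition (1) is already symmetric, while conditions (2) and (3) are mirror images of each other. Hence if \(R_1\) satisfies (1)--(3) between \((S,t)\) and \((T,u)\), then \(R_1^{op}\) satisfies them between \((T,u)\) and \((S,t)\), essentially by swapping the roles of (2) and (3).

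For \(R_1 \cup R_2\), I would take an arbitrary pair \((x,y) \in R_1 \cup R_2\). By definition of union, \((x,y)\) lies in \(R_i\) for some \(i \in \{1,2\}\). Each clause of \cref{def:p bisim} then follows by directly invoking the corresponding clause for \(R_i\); any witness pair \((x', y')\) produced by \(R_i\) automatically lies in \(R_1 \cup R_2\).

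For \(R_1 \circ R_3\), I would take \((x,z) \in R_1 \circ R_3\) and pick an intermediate \(y \in T\) with \(x \mathrel{R_1} y\) and \(y \mathrel{R_3} z\). Condition (1) follows by transitivity: \(x \tr{\alpha p} \checkmark\) iff \(y \tr{\alpha p} \checkmark\) (by \(R_1\)) iff \(z \tr{\alpha p} \checkmark\) (by \(R_3\)). For condition (2), if \(x \tr{\alpha p} x'\), I would first apply the bisimulation property of \(R_1\) to obtain \(y' \in T\) with \(x' \mathrel{R_1} y'\) and \(y \tr{\alpha p} y'\), then apply the bisimulation property of \(R_3\) to this transition to obtain \(z' \in U\) with \(y' \mathrel{R_3} z'\) and \(z \tr{\alpha p} z'\); the witness \(y'\) then certifies \((x', z') \in R_1 \circ R_3\). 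Condition (3) is symmetric, going via \(R_3\) first to produce an intermediate \(y'\) and then via \(R_1\). Having established all three closure properties, reflexivity of \(\bisim\) follows from the identity being a bisimulation, symmetry from the \({}^{op}\) case, and transitivity from the composition case, so \(\bisim\) is indeed an equivalence relation as claimed in the surrounding text. There is no real obstacle; the argument is a mechanical diagram chase.
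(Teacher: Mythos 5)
Your proof is correct, but it takes a different route from the paper. You verify conditions (1)--(3) of \cref{def:p bisim} directly, by an element-wise diagram chase: swapping clauses (2) and (3) for \(R_1^{op}\), picking the witnessing \(R_i\) for the union, and chaining through an intermediate state \(y \in T\) for the composition \(R_1 \circ R_3\) (using clause (2) of \(R_1\) then of \(R_3\) in one direction, and clause (3) of \(R_3\) then of \(R_1\) in the other). The paper instead does not argue at the level of LTSs at all: it views labelled transition systems as coalgebras for the functor \(P\), defines bisimulations as spans of coalgebra homomorphisms, notes that \(P\) preserves weak pullbacks, and then cites Rutten's general closure theorems for bisimulations of coalgebras for such functors (\cref{lem:general bisim facts}). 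Your approach buys a short, self-contained, elementary argument that requires no categorical background; the paper's approach buys uniformity -- the same citation simultaneously covers skip-free automata and \GKAT automata, and also yields further facts (e.g.\ closure under intersection and that the union of all bisimulations is itself a bisimulation) used elsewhere. Your closing remark that reflexivity, symmetry and transitivity of \(\bisim\) follow is a harmless extra; it matches the paper's claim that the lemma implies bisimilarity is an equivalence relation, provided you also note (as you do) that the identity relation is a bisimulation.
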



    \paragraph*{\bf Axiomatization.}
    We follow~\cite{onefreeregexlics}, where it was shown that the axiomatization given in \cref{fig:stex axioms} is complete with respect to bisimilarity for one-free star expressions.
    Given a pair \(r_1, r_2 \in \SExp\), we write \(r_1 \equiv_* r_2\) and say that \(r_1\) and \(r_2\) are \emph{\(\equiv_*\)-equivalent} if the equation \(r_1 = r_2\) can be derived from the axioms in \cref{fig:stex axioms}.

    \begin{figure}[!t]
        \centering
        \begin{tabular}{c  c  c}
           \underline{\bf Union} & \underline{\bf Sequencing} & \underline{\bf Loops} \\
            \(\begin{aligned}
                x &= x + x \\
                x &= x + 0 \\
                x + y &= y + x \\
                x + (y + z) &= (x + y) + z
            \end{aligned}\)&
            \(\begin{aligned}
                0x &= 0 \\
                x(yz) &= (xy)z \\
                (x + y)z &= xz + yz
            \end{aligned}\)&
            \(\begin{gathered}
                x * y = x(x * y) + y \\
                \prftree{z = xz + y}{z = x * y}
            \end{gathered}\)
        \end{tabular}

        \caption{\label{fig:stex axioms} Axioms for equivalence for one-free star expressions.}
    \end{figure}

    The following result is crucial to the next section, where we prove that the axioms of \(\equiv_\dagger\) are complete with respect to bisimilarity in skip-free  \GKAT. 

    \begin{theorem}[{\cite[Theorem.~7.1]{onefreeregexlics}}]\label{thm:grabmayer-fokkink}
        \(r_1\bisim r_2\) if and only if \(r_1 \equiv_* r_2\).
    \end{theorem}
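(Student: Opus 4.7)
The statement combines soundness ($r_1 \equiv_* r_2 \Rightarrow r_1 \bisim r_2$) and completeness. For soundness, the plan is a routine induction on derivations: for each axiom in \cref{fig:stex axioms}, I would exhibit an explicit bisimulation between the two sides. The semilattice laws for $+$ are witnessed by inspection of \cref{fig:small-step stex}, which shows that the union of transition structures for $r_1 + r_2$ is literally the union of those for $r_1$ and $r_2$. The fixed-point axiom $x * y = x(x * y) + y$ holds because the derivatives of the two sides agree on every atomic transition. For the rule RSP, I would show that $x * y$ is the canonical solution: whenever $z$ satisfies $z = xz + y$, the relation $\{(z', (x * y)') : z' \text{ and } (x*y)' \text{ arise from the same sequence of transitions}\}$ is a bisimulation. \cref{lem:bisim facts} then gives congruence, closing the induction.

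For completeness, the plan is to exploit the finite-state structure of star expressions. Given $r_1 \bisim r_2$ witnessed by a bisimulation $R$, restrict $R$ to reachable pairs $P = \{(s, t) \in R : s \in \langle r_1 \rangle, t \in \langle r_2 \rangle\}$, which is a finite set by the analogue of \cref{lem:local finiteness} for $\SExp$. For each $(s, t) \in P$ introduce a variable $z_{(s,t)}$ and the equation
\[
z_{(s,t)} \;=\; \sum_{s \tr{\alpha p} s',\, t \tr{\alpha p} t'} \alpha p \cdot z_{(s', t')} \;+\; \sum_{s \tr{\alpha p} \checkmark,\, t \tr{\alpha p} \checkmark} \alpha p.
\]
Because $R$ is a bisimulation, the assignments $z_{(s,t)} \mapsto s$ and $z_{(s,t)} \mapsto t$ are both solutions; hence it suffices to prove that any two solutions to the same finite system are provably $\equiv_*$-equal, and in particular that $s \equiv_* t$ for every $(s,t) \in P$. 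Specializing to $(r_1, r_2)$ gives the theorem.

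The main obstacle is the last step: deriving uniqueness of solutions for multi-variable systems from the one-variable rule RSP. In Kleene algebra this is done by Gaussian elimination, iteratively using the unary fixed-point rule to collapse a self-loop on one variable and then substituting its closed form into the remaining equations. Two subtleties threaten this approach in the one-free setting. First, after substitution the new coefficients must remain one-free star expressions, which requires carefully tracking the shape of the iterated sums. Second, and more seriously, the absence of $1$ means certain "degenerate" guarded coefficients that arise during elimination must be handled separately (for example, variables whose only non-self-loop transitions go directly to $\checkmark$). I would attempt induction on $|P|$, at each stage choosing a variable with a self-loop, applying RSP, and substituting. Verifying that the substitution preserves provable equivalence of solutions, and that all reachable coefficient expressions lie in $\SExp$, is where the heavy technical work lies, and is the heart of the Grabmayer--Fokkink argument.
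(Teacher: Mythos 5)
First, note that the paper does not prove this statement itself: it is imported verbatim from Grabmayer and Fokkink, and the only place the paper engages with its proof is \cref{app:proof of fix}, where that proof's architecture (well-layered/LLEE charts, canonical solutions, uniqueness of solutions \emph{for well-layered charts}, and preservation of well-layeredness under homomorphic images and bisimulation collapse) is recounted and adapted. Your soundness half is fine, and your reduction of completeness to ``both restrictions of the bisimulation are solutions of a common finite system'' is a reasonable opening move (it is essentially how solutions enter the Grabmayer--Fokkink proof, via the fundamental theorem and a common homomorphic image).

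The gap is in the final step: provable uniqueness of solutions for an \emph{arbitrary} finite guarded linear system does not follow from (\textsf{RSP}) by Gaussian elimination in this setting, and it is not what Grabmayer and Fokkink prove. After you apply (\textsf{RSP}) to one equation and substitute, the other equations acquire summands of the form \(a_{ki}\bigl(a_{ii} * (\textstyle\sum_{j\neq i} a_{ij}z_j + b_i)\bigr)\); to restore a linear system you must distribute composition over a sum occurring on the \emph{right}, i.e.\ use \(x(y+z) = xy + xz\). That law is unsound for bisimilarity and is deliberately absent from \cref{fig:stex axioms} (only \((x+y)z = xz+yz\) is available), so the elimination loop cannot be closed. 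This is precisely why the problem was hard: Grabmayer and Fokkink establish existence and uniqueness of solutions only for charts carrying a layering witness (the LLEE property), and the technical heart of their argument is not the bookkeeping of one-free coefficients during elimination, but the structural theorem that this loop-layering survives bisimulation collapse (more generally, surjective homomorphic images), after which the collapse of \(\langle r_1\rangle\) and \(\langle r_2\rangle\) can be solved once and the solution pulled back to both expressions. Your ``product'' system on reachable pairs of the bisimulation comes with no such layering guarantee, so the uniqueness claim you need for it is exactly the unproved (and, by the elimination route, unprovable-as-sketched) step; as stated, the completeness half of your proposal does not go through.
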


\section{Completeness for Skip-free Bisimulation  \GKAT}\label{sec:bisim complete}
This section is dedicated to the proof of our first completeness result, \cref{thm:completeness I}, which says that the axioms of \cref{fig:gkat axioms} (excluding $\dagger$) are complete with respect to bisimilarity in skip-free  \GKAT. 
Our proof strategy is a reduction of our completeness claim to the completeness result for $\SExp$ (\cref{thm:grabmayer-fokkink}).

The key objects of interest in the reduction are a pair of translations: one translation turns skip-free  \GKAT expressions into one-free star expressions and maintains bisimilarity, and the other translation turns (certain) one-free star expressions into skip-free  \GKAT expressions and maintains provable bisimilarity.

We first discuss the translation between automata and labelled transition systems, which preserves and reflects bisimilarity.
 We then introduce the syntactic translations and present the completeness proof.

\subsection{Transforming skip-free automata to labelled transition systems}
We can easily transform a skip-free automaton into an LTS by essentially turning \(\tr{\alpha \mid p}\) transitions into \(\tr{\alpha p}\) transitions. This can be formalized, as follows.

\begin{definition}
    Given a set \(X\), we define \(\grph_X : (\bot + \Sigma\times(\checkmark + X))^\At \to \mathcal P(\At\cdot\Sigma \times (\checkmark + X)) \) to be
    \(
        \grph_X(\theta) = \{(\alpha p, x) \mid \theta(\alpha) = (p, x)\}
    \).
    Given a skip-free automaton \((X, h)\), we define \(\grph_*(X, h) = {(X, \grph_X \circ h)}\)
\end{definition}

The function \(\grph_X\) is injective: as its name suggests, \(\grph_X(\theta)\) is essentially the graph of $\theta$ when viewed as a partial function from $\At$ to ${\Sigma \times (\checkmark + X)}$.
This implies that the transformation \(\grph_*\) of skip-free automata into LTSs preserves and reflects bisimilarity.

\begin{lemma}\label{lem:bisim iff bisim}
    Let \(x, y \in X\), and \((X, h)\) be a skip-free automaton.
    Then \(x \bisim y\) in \((X, h)\) if and only if \(x \bisim y\) in \(\grph_*(X, h)\).
\end{lemma}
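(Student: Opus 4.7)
The plan is to prove the two directions separately by showing that any bisimulation on one side is also a bisimulation on the other, sharing the same underlying relation $R \subseteq X \times X$. Because $\grph_*$ does not change the carrier set, it suffices to show that the bisimulation conditions of \cref{def:g bisim} and \cref{def:p bisim} coincide on relations between states of $(X,h)$ and $\grph_*(X,h)$. The crucial observation is the identity, immediate from the definition of $\grph_X$, that for any $z \in \checkmark + X$, $h(x)(\alpha) = (p,z)$ holds in $(X,h)$ iff $(\alpha p, z) \in (\grph_X \circ h)(x)$ in the LTS; equivalently, $x \tr{\alpha \mid p} z$ in $(X,h)$ iff $x \tr{\alpha p} z$ in $\grph_*(X,h)$.

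For the forward direction, suppose $R$ is a skip-free bisimulation on $(X,h)$ and take $x \mathrel{R} y$ with $\alpha p \in \At \cdot \Sigma$. Condition (1) of \cref{def:p bisim} follows directly from condition (2) of \cref{def:g bisim} via the identity above, and condition (2) of \cref{def:p bisim} follows similarly from condition (3) of \cref{def:g bisim}. The symmetric condition (3) of \cref{def:p bisim} is exactly the content of the footnote attached to \cref{def:g bisim}, which is derivable using determinism: if $y \tr{\alpha p} y'$ then $y$ does not fail on $\alpha$ and does not successfully terminate on $\alpha$, so by conditions (1) and (2) of \cref{def:g bisim} neither does $x$, meaning $h(x)(\alpha) = (q, x')$ for some $q$ and $x' \in X$; condition (3) of \cref{def:g bisim} applied to $x \tr{\alpha \mid q} x'$ then forces $q = p$ (by determinism of $h(y)$) and yields $x' \mathrel{R} y'$.

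For the backward direction, suppose $R$ is an LTS bisimulation on $\grph_*(X,h)$ and take $x \mathrel{R} y$ with $\alpha \in \At$. Conditions (2) and (3) of \cref{def:g bisim} translate back from conditions (1) and (2) of \cref{def:p bisim} via the same identity. For condition (1) of \cref{def:g bisim}, note that $x \downarrow_h \alpha$ is equivalent to the statement that there is no $\alpha p$-labelled outgoing LTS-transition from $x$, for any $p$ and any target. Thus if $x \downarrow_h \alpha$ but $y \not\downarrow_h \alpha$, there would be some transition $y \tr{\alpha p} z$; applying condition (3) of \cref{def:p bisim} would yield a matching transition out of $x$, a contradiction, and the symmetric argument handles the other direction.

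Finally, since $\bisim$ is defined in both settings as the union of all bisimulations, the two directions together give $x \bisim y$ in $(X,h)$ iff $x \bisim y$ in $\grph_*(X,h)$. The only non-routine point is handling the deadlock predicate $\downarrow$ correctly, as it has no direct counterpart in LTSs and must be expressed as the absence of outgoing transitions; the appeal to determinism in the forward direction to recover the symmetric clause is what makes the whole correspondence go through cleanly.
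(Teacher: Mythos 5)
Your proof is correct, but it takes a genuinely more elementary route than the paper. The paper disposes of this lemma abstractly: $\grph$ and $\func$ are exhibited as mutually inverse natural transformations between the functors whose coalgebras are skip-free automata and deterministic LTSs, so $\grph_*$ is (one half of) an isomorphism of coalgebra categories, and the general fact that restructuring along a natural transformation preserves bisimilarity (applied in both directions) yields preservation and reflection in one stroke. You instead unfold the two concrete definitions of bisimulation and check that the very same relation $R$ satisfies both sets of clauses, using determinism of $h(y)(\alpha)$ to recover the symmetric clause of \cref{def:p bisim} from the one-sided clause of \cref{def:g bisim}, and expressing $\downarrow$ as the absence of outgoing LTS-transitions to recover clause (1) of \cref{def:g bisim} in the other direction. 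What your approach buys is self-containedness and an explicit treatment of exactly the two points the abstract argument silently absorbs (the deadlock predicate and the symmetric transfer clause); what the paper's approach buys is reusability, since the same two general lemmas also handle $\embed$ and the other functor translations in the paper. One small imprecision in your backward direction: when you rule out $y \tr{\alpha p} z$ against $x \downarrow_h \alpha$, clause (3) of \cref{def:p bisim} only covers the case $z \in X$; the case $z = \checkmark$ is handled by clause (1) instead. Both cases lead to the same contradiction, so nothing breaks, but the appeal should be to clauses (1) \emph{and} (3).
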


Leading up to the proof of \cref{thm:completeness I}, we also need to undo the effect of \(\grph_*\) on skip-free automata with a transformation that takes every LTS of the form \(\grph_*(X, h)\) to its underlying skip-free automaton \((X, h)\).

The LTSs that can be written in the form \(\grph_*(X, h)\) for some skip-free automaton \((X, h)\) can be described as follows.
Call a set \(U \in  \P(\At \cdot \Sigma \times (\checkmark + X))\) \emph{graph-like} if whenever \((\alpha p, x)\in U\) and \((\alpha q, y) \in U\), then \(p = q\) and \(x = y\).
An LTS \((S, t)\) is \emph{deterministic} if \(t(s)\) is graph-like for every \(s \in S\).

\begin{restatable}{lemma}{lemmadeterministiccharacterisation}
    An LTS \((S, t)\) is deterministic if and only if \((S, t) = \grph_*(X, h)\) for some skip-free automaton \((X, h)\).
\end{restatable}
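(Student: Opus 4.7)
\medskip
The plan is to prove both directions by tracing through the definitions of $\grph_X$ and graph-likeness.

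For the easy direction ($\Leftarrow$), assume $(S,t) = \grph_*(X,h)$, so $S = X$ and $t = \grph_X \circ h$. Fix $s \in S$ and suppose $(\alpha p, x), (\alpha q, y) \in t(s) = \grph_X(h(s))$. By definition of $\grph_X$, this means $h(s)(\alpha) = (p,x)$ and $h(s)(\alpha) = (q,y)$. Since $h(s)$ is a \emph{function} from $\At$ to $\bot + \Sigma\times(\checkmark + X)$, we get $(p,x) = (q,y)$, so $t(s)$ is graph-like.

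For the converse ($\Rightarrow$), assume $(S,t)$ is deterministic. I would take $X \coloneqq S$ and construct $h\colon S \to (\bot + \Sigma\times(\checkmark + S))^\At$ by case analysis on each $(s,\alpha)$: if there exists some pair $(p, x) \in \Sigma \times (\checkmark + S)$ such that $(\alpha p, x) \in t(s)$, set $h(s)(\alpha) \coloneqq (p,x)$; otherwise set $h(s)(\alpha) \coloneqq \bot$. The definition is well-posed because graph-likeness of $t(s)$ guarantees that when such a $(p,x)$ exists, it is unique. It then remains to check $\grph_S \circ h = t$, which is a straightforward unfolding: $(\alpha p, x) \in \grph_S(h(s))$ iff $h(s)(\alpha) = (p,x)$ iff (by construction of $h$) $(\alpha p, x) \in t(s)$.

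There is essentially no obstacle here — the statement is a bookkeeping lemma saying that the deterministic LTSs are precisely the image of $\grph_*$, and determinism was defined exactly so that the set $t(s) \subseteq \At\cdot\Sigma \times (\checkmark + S)$ can be re-read as the graph of a partial function $\At \rightharpoonup \Sigma \times (\checkmark + S)$, which is in bijection with an element of $(\bot + \Sigma\times(\checkmark + S))^\At$. The only mild point worth flagging is that one should observe, along the way, that $\grph_X$ is injective (which is stated in the paragraph preceding the lemma), so the $(X,h)$ we recover is uniquely determined, although uniqueness is not part of the statement.
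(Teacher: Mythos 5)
Your proof is correct and follows essentially the same route as the paper: your case-defined $h$ is exactly the map $\func_S$ that the paper constructs as the inverse of $\grph_S$ on graph-like sets, and your verification that $\grph_S \circ h = t$ matches the paper's observation that $\grph_S \circ \func_S = \id$. The only difference is cosmetic: you also spell out the easy direction (that $\grph_*(X,h)$ is deterministic because $h(s)$ is a function), which the paper leaves implicit.
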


\begin{remark}
    As mentioned in \Cref{footnote:coalg}, there is a coalgebraic outlook in many of the technical details in the present paper.
    For the interested reader, \(\grph\) and the map that transforms graph-like relations into functions \(\func\) are actually natural transformations between the functors whose coalgebras correspond to skip-free automata and deterministic LTSs, and are furthermore inverse to one another.
    This implies that \(\grph_*\) and \(\func_*\) witness an isomorphism between the categories of skip-free automata and deterministic LTSs.
\end{remark}

\subsection{Translating Syntax}
We can mimic the transformation of skip-free automata into deterministic labelled transition systems and vice-versa by a pair of syntactic translations going back and forth between skip-free \textsf{GKAT} expressions and certain one-free star expressions. Similar to how only some labelled transition systems can be turned into skip-free automata, only some one-free star expressions have corresponding skip-free \GKAT expressions---the deterministic ones.

The definition of deterministic expressions requires the following notation: given a test \(b \in \BExp\), we define \(b\cdot r\) inductively on \(r \in \SExp\) as follows:
\begin{gather*}
    b\cdot 0 = 0
    \qquad
    b\cdot \alpha p =
    \begin{cases}
        \alpha p & \alpha \leq b\\
        0 & \alpha \not\leq b
    \end{cases}
    \qquad
    b\cdot (r_1 + r_2) = b\cdot r_1 + b\cdot r_2
    \\
    b \cdot (r_1r_2) = (b\cdot r_1)r_2
    \qquad
    b\cdot (r_1 * r_2) = (b\cdot r_1)(r_1*r_2) + b \cdot r_2
\end{gather*}
for any \(\alpha p \in \At\cdot \Sigma\) and \(r_1,r_2 \in \SExp\).


\begin{definition}\label{def:deterministic sexp}
    The set of \emph{deterministic} one-free star expressions is the smallest subset \(\Det\subseteq \SExp\) such that (i) \(0 \in \Det\), (ii) \(\alpha p \in \Det\) for any \(\alpha \in \At\) and \(p \in \Sigma\), (iii) \(r_1r_2 \in \Det\) for any \(r_1,r_2 \in \Det\), and (iv) for any \(r_1, r_2 \in \Det\) such that for some $b \in \BExp$ we have \(r_1 \equiv_* b\cdot r_1\) and \(r_2 \equiv_* \bar b \cdot r_2\), we also have \(r_1 + r_2, r_1 *r_2 \in \Det\).
\end{definition}


\paragraph*{\bf From \(\GExp^-\) to $\Det$.} We can now present the translations of skip-free expressions to deterministic one-free star expressions.
\begin{definition}\label{def:gtr}
    We define the \emph{translation} function \(\gtr : \GExp^- \to \Det\) by
    \begin{gather*}
        \gtr(0) = 0
        \qquad
        \gtr(p) = \sum_{\alpha \in \At} \alpha p
        \qquad
        \gtr(e_1 +_b e_2) = b\cdot \gtr(e_1) + \bar b\cdot \gtr(e_2)
        \\
        \gtr(e_1\cdot e_2) = \gtr(e_1)\gtr(e_2)
        \qquad
        \gtr(e_1^{(b)}e_2) = (b \cdot \gtr(e_1))*(\bar b \cdot \gtr(e_2))
    \end{gather*}
    for any \(b \in \BExp\), \(p \in \Sigma\), \(e_1,e_2 \in \GExp\).
\end{definition}
\begin{remark}
    In \cref{def:gtr}, we make use of a generalized sum \(\sum_{\alpha \in \At}\).
    Technically, this requires we fix an enumeration of \(\At\) ahead of time, say \(\At = \{\alpha_1, \dots, \alpha_{n}\}\), at which point we can define \(\sum_{\alpha \in \At} r_\alpha = r_{\alpha_1} + \cdots + r_{\alpha_n}\).
    Of course, $+$ is commutative and associative up to $\equiv_*$, so the actual ordering of this sum does not matter as far as equivalence is concerned.
\end{remark}

The most prescient feature of this translation is that it respects bisimilarity.

\begin{restatable}{lemma}{lemmagtrahomom}\label{lem:gtr is good stuff}
    The graph of the translation function \(\gtr\) is a bisimulation of labelled transition systems between $\grph_*(\GExp^-, \partial)$ and $(\SExp, \tau)$.
    Consequently, \(e_1 \bisim e_2\) in $\grph_*(\GExp^-, \partial)$ if and only if \(\gtr(e_1) \bisim \gtr(e_2)\) in $(\SExp, \tau)$.
\end{restatable}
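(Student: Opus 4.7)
The plan is to prove the main claim—that $G = \{(e, \gtr(e)) : e \in \GExp^-\}$ is a bisimulation between $\grph_*(\GExp^-, \partial)$ and $(\SExp, \tau)$—by structural induction on $e$, and then to derive the ``consequently'' clause from the closure properties recorded in \cref{lem:bisim facts}. After unfolding $\grph_*$ and \cref{def:p bisim}, the three bisimulation conditions for a pair $(e, \gtr(e)) \in G$ reduce to the following local verification, for every $\alpha p \in \At\cdot\Sigma$: (i)~$e \tr{\alpha \mid p} \checkmark$ in $(\GExp^-,\partial)$ iff $\gtr(e) \tr{\alpha p} \checkmark$ in $(\SExp,\tau)$; (ii)~whenever $e \tr{\alpha \mid p} e'$, one has $\gtr(e) \tr{\alpha p} \gtr(e')$; and (iii)~every transition $\gtr(e) \tr{\alpha p} r'$ arises from some $e \tr{\alpha \mid p} e'$ with $r' = \gtr(e')$.

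The key auxiliary observation, to be established by a separate routine induction on $r \in \SExp$ using the rules of \cref{fig:small-step stex}, is that $b \cdot r \tr{\alpha p} x$ holds iff $r \tr{\alpha p} x$ and $\alpha \leq b$ (for any $x \in \{\checkmark\} \cup \SExp$); intuitively, masking a star expression by a test simply filters its transitions by that test. Given this, the induction on $e$ proceeds through the five clauses of \cref{def:gtr}. The bases $e = 0$ and $e = p$ are immediate from inspection: the only transitions of $\gtr(p) = \sum_\alpha \alpha p$ are $\gtr(p) \tr{\alpha p} \checkmark$, matching the rule for atomic actions in \cref{fig:small-step GKAT}. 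The guarded-union case applies the filtering lemma to both summands of $\gtr(e_1 +_b e_2) = b \cdot \gtr(e_1) + \bar b \cdot \gtr(e_2)$, so the side conditions $\alpha \leq b$ and $\alpha \not\leq b$ in \cref{fig:small-step GKAT} correspond exactly to which summand can fire. The sequential case follows from the inductive hypothesis on $e_1$ together with the fact that $\gtr$ commutes with $\cdot$, so that a continuation $\gtr(e')\gtr(e_2)$ on the $\SExp$ side coincides with $\gtr(e'\cdot e_2)$.

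The loop case $e_1^{(b)}e_2$ is the main obstacle. Here $\gtr(e_1^{(b)}e_2) = (b\cdot\gtr(e_1)) * (\bar b\cdot\gtr(e_2))$, and I must check that each of the four transition rules for $e_1^{(b)} e_2$ in \cref{fig:small-step GKAT} is matched by one of the three rules for $r_1 * r_2$ in \cref{fig:small-step stex}, and vice versa. The two body transitions (to a state $e'\cdot e_1^{(b)}e_2$ or back to $e_1^{(b)}e_2$) come from transitions of $b\cdot\gtr(e_1)$, which the filtering lemma restricts precisely to atoms with $\alpha\leq b$; the exit transitions, to $\checkmark$ or to some $e'$ via $e_2$, come from $\bar b\cdot \gtr(e_2)$, restricted to $\alpha\not\leq b$. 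This matches the side conditions in the skip-free rules on the nose. For the continuations, the definition of $\gtr$ on $\cdot$ yields the identity $\gtr(e' \cdot e_1^{(b)}e_2) = \gtr(e')\bigl((b\cdot\gtr(e_1)) * (\bar b\cdot\gtr(e_2))\bigr)$, so the inductive hypothesis on $e_1$ closes the loop.

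Finally, the ``consequently'' clause is purely formal once $G$ is known to be a bisimulation: if $R$ witnesses $\gtr(e_1) \bisim \gtr(e_2)$ in $(\SExp,\tau)$, then $G \circ R \circ G^{op}$ is a bisimulation on $\grph_*(\GExp^-,\partial)$ by \cref{lem:bisim facts} and contains $(e_1,e_2)$; conversely, $G^{op} \circ S \circ G$ transports a bisimulation $S$ witnessing $e_1 \bisim e_2$ in $\grph_*(\GExp^-,\partial)$ to one witnessing $\gtr(e_1) \bisim \gtr(e_2)$ in $(\SExp,\tau)$.
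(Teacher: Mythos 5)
Your proposal is correct and follows essentially the same route as the paper: the paper proves by induction on $e$ that $\gtr$ is a coalgebra homomorphism (i.e., $\tau \circ \gtr = P(\gtr) \circ \grph_{\GExp^-} \circ \partial$) and then appeals to the general fact that graphs of homomorphisms are bisimulations, which is exactly the local transition-matching you verify directly, case by case on the same small-step rules. Your explicit filtering lemma for $b \cdot r$ is a nice touch—the paper uses that property only implicitly inside the guarded-union and loop cases—but otherwise the induction and case analysis coincide.
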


\paragraph{\bf From \(\Det\) to $\GExp^-$.} 
We would now like to define a \emph{back translation} function \(\rtg : \Det \to \GExp^-\) by induction on its argument.
Looking at \cref{def:deterministic sexp}, one might be tempted to write \(\rtg(b \cdot r_1 + \bar b \cdot r_2) = \rtg(r_1) +_b \rtg(r_2)\), but the fact of the matter is that it is possible for there to be distinct \(b,c \in \BExp\) such that \(b \cdot r_1 + \bar b \cdot r_2 = c \cdot r_1 + \bar c \cdot r_2\), even when $b$ and $c$ have different atoms.

\begin{definition}
    Say that \(r_1, r_2 \in \SExp\) are \emph{separated by} \(b \in \BExp\) if \(r_1 = b\cdot r_1\) and \(r_2 = \bar b \cdot r_2\).
    If such a \(b\) exists we say that \(r_1\) and \(r_2\) are \emph{separated}.
\end{definition}

Another way to define \(\Det\) is to say that \(\Det\) is the smallest subset of \(\SExp\) containing \(0\) and \(\At\cdot\Sigma\) that is closed under sequential composition and closed under unions and stars of separated one-free star expressions.

Suppose \(r_1\) and \(r_2\) are separated by both \(b\) and \(c\).
Then one can prove that \((b \vee c)\cdot r_1 \equiv_* b\cdot r_1 + c\cdot r_1 \equiv_* r_1\) and \(\overline{(b \vee c)}\cdot r_2 = (\bar b \wedge \bar c)\cdot r_2 \equiv_* \bar b\cdot (\bar c \cdot r_2) \equiv_* r_2\), so \(r_1\) and \(r_2\) are separated by \(b \vee c\) as well.
Since there are only finitely many Boolean expressions up to equivalence, there is a maximal (weakest) test \(b(r_1,r_2) \in \BExp\) such that \(r_1\) and \(r_2\) are separated by \(b(r_1, r_2)\).

\begin{definition}\label{def:rtg}
    The \emph{back translation} \(\rtg : \Det \to \GExp^-\) is defined by
    \begin{gather*}
        \rtg(0) = 0
        \qquad
        \rtg(\alpha p) = p+_\alpha 0
        \qquad
        \rtg(r_1 + r_2) = \rtg(r_1) +_{b(r_1, r_2)} \rtg(r_2)
        \\
        \rtg(r_1r_2) = \rtg(r_1)\cdot \rtg(r_2)
        \qquad
        \rtg(r_1*r_2) = \rtg( r_1)^{(b(r_1, r_2))}\rtg(r_2)
    \end{gather*}
    for any \(r_1,r_2 \in \SExp\).
    In the union and star cases, we may use that \(r_1\) and \(r_2\) are separated (by definition of $\Det$), so that \(b(r_1, r_2)\) is well-defined.
\end{definition}

The most important property of \(\rtg\) in the completeness proof is that it preserves provable equivalence.


\begin{restatable}{theorem}{lemconservativity}\label{lem:rtg is also good stuff}
    Let \(r, s \in \Det\).
    If \(r \equiv_* s\), then \(\rtg(r) \equiv_\dagger \rtg(s)\).
\end{restatable}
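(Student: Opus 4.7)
My plan is to proceed by induction on the derivation of $r \equiv_* s$, aided by two auxiliary lemmas. The first is a \emph{first-atom invariance}: let $\operatorname{first}(r) \subseteq \At$ denote the set of atoms on which $r \in \SExp$ has an outgoing transition; by soundness of $\equiv_*$ for bisimilarity together with the fact that $\operatorname{first}$ is a bisimulation invariant, $r \equiv_* s$ implies $\operatorname{first}(r) = \operatorname{first}(s)$. The second is a \emph{separator-insensitivity} lemma: if $b_1$ and $b_2$ both separate a pair $r, s \in \Det$, then $\rtg(r) +_{b_1} \rtg(s) \equiv_\dagger \rtg(r) +_{b_2} \rtg(s)$, with the analogous statement for the loop operator. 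The idea is that any two separators of $r,s$ must agree on $\operatorname{first}(r) \cup \operatorname{first}(s)$ and differ only on atoms where both branches deadlock; this calls for a small deadlock-absorption sub-lemma. A further observation I would need is that $\operatorname{first}(r) = \emptyset$ implies $\rtg(r) \equiv_\dagger 0$, provable by structural induction on $r \in \Det$.

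With these in hand, each $\equiv_*$-axiom translates into its $\equiv_\dagger$-counterpart. Union axioms map to G0--G3 (commutativity uses G2 together with separator-insensitivity, since $\overline{b(r_1, r_2)}$ and $b(r_2, r_1)$ are not Boolean-equal in general); sequencing axioms map to G6--G8 (with $\operatorname{first}(r_1 r_3) = \operatorname{first}(r_1)$ aligning the separators in the distributivity case); the unrolling axiom $x * y = x(x * y) + y$ maps to FP (using $\operatorname{first}(r_1(r_1 * r_2)) = \operatorname{first}(r_1)$); and the RSP rule maps to RSP of $\equiv_\dagger$: the IH yields $\rtg(z) \equiv_\dagger \rtg(x)\rtg(z) +_{b(xz, y)} \rtg(y)$, RSP then gives $\rtg(z) \equiv_\dagger \rtg(x)^{(b(xz, y))}\rtg(y)$, and separator-insensitivity converts this to $\rtg(x)^{(b(x, y))}\rtg(y) = \rtg(x * y)$. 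Congruence rules are discharged via the IH combined with first-atom invariance and separator-insensitivity, which reconciles potential mismatches like $b(r_1, s_1) \neq b(r_2, s_2)$.

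The main obstacles are two-fold. First, establishing separator-insensitivity: two valid separators of the same pair need not be Boolean-equivalent (they may differ on ``dead'' atoms), so direct Boolean reasoning fails; a deadlock-absorption sub-lemma---stating that the Boolean test in a guarded union is $\equiv_\dagger$-invisible on atoms where both branches deadlock---must be established by a small structural induction on the relevant $\GExp^-$-expressions. Second, a derivation of $r \equiv_* s$ between $\Det$-terms can in principle pass through non-$\Det$ intermediates (e.g., $\alpha p + \alpha p \equiv_* \alpha p$ while $\alpha p + \alpha p \notin \Det$), on which $\rtg$ is undefined; this is handled by a proof-normalization argument showing that every derivation between $\Det$-terms admits a form that stays inside $\Det$, since the only $\equiv_*$-axioms that can produce non-$\Det$ intermediates are idempotence-like duplications that can be eliminated using the axioms themselves.
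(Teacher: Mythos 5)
Your overall architecture matches the paper's: a separator-insensitivity lemma (the paper's version states that for \emph{any} test $b$ separating $r$ from $s$, $\rtg(r+s) \equiv_\dagger \rtg(r) +_b \rtg(s)$ and $\rtg(r*s) \equiv_\dagger \rtg(r)^{(b)}\rtg(s)$, proved via an absorption property $\rtg(r) \equiv_\dagger b\rtg(r)$ whenever $r \equiv_* b\cdot r$), followed by an induction over derivations. You have also correctly identified the central obstacle: a derivation of $r \equiv_* s$ between terms of $\Det$ may pass through expressions outside $\Det$, where $\rtg$ is undefined and the induction hypothesis cannot be applied.

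However, your proposed resolution of that obstacle is where the proof breaks. You claim the only way non-$\Det$ intermediates arise is through ``idempotence-like duplications'' that can be locally eliminated. This is false: the transitivity rule admits an \emph{arbitrary} intermediate term $t$ with $r \equiv_* t$ and $t \equiv_* s$, and $t$ need not be related to $r$ or $s$ by any single axiom --- it can be any one-free star expression whatsoever that happens to be provably equal to both, including deeply nondeterministic ones introduced by congruence under contexts. There is no known local proof-normalization that removes such detours. The paper instead proves a standalone theorem (that $r \equiv_* s$ with $r,s \in \Det$ implies the existence of a \emph{deterministic proof}, one whose every expression lies in $\Det$) by a semantic route: it re-runs the entire Grabmayer--Fokkink completeness argument inside the deterministic fragment --- showing $\Det$ is a subcoalgebra, that the fundamental theorem holds with a deterministic derivation, that well-layered deterministic LTSs admit canonical solutions that are deterministic and unique up to deterministic provability, and that homomorphisms preserve deterministic solutions --- and then reconstructs a deterministic proof of $r \equiv_* s$ from the bisimulation collapse of $\langle r\rangle$ and $\langle s\rangle$. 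This occupies an entire appendix of the paper; indeed, the published erratum records that the original version of this very theorem had a gap at precisely the transitivity case, and that no proof-rewriting fix was found. So while your high-level plan is sound, the step you describe as a ``proof-normalization argument'' is a genuine missing piece, and the mechanism you sketch for it would not succeed.
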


An intuitive approach to proving \cref{lem:rtg is also good stuff} proceeds by induction on the derivation of \(r \equiv_* s\).
However, if one-free regular expressions appear in the derivation of \(r \equiv_* s\) that are not deterministic, then the induction hypothesis cannot be applied, because the translation map \(\rtg\) is only defined on \(\Det\).
In other words, the  induction hypothesis must be strengthened, as a proof by induction on derivations can only go through if whenever \(r, s \in \Det\) and \(r \equiv_* s\), there is a derivation of \(r \equiv_* s\) in which only deterministic one-free regular expressions appear.
Such a derivation is what we call a \emph{deterministic proof}.

\begin{definition}
    Given \(r,s \in \Det\), we call a proof of \(r \equiv_* s\) a \emph{deterministic proof} if every expression that appears in the proof is a deterministic one-free regular expression (i.e., is in \(\Det\)).
    We write \(r \equiv_*^{\det} s\) if there is a deterministic proof of \(r \equiv_* s\).
\end{definition}

To proceed with the completeness proof sketched above, we have to show that the induction hypothesis is sound.

%

\begin{restatable}{theorem}{restatedetproofthm}\label{thm:sf_gkat_fix}
    For any \(r,s \in \Det\), \(r \equiv_* s\) if and only if \(r \equiv_*^{\det} s\).
\end{restatable}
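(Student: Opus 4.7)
The direction $r \equiv_*^{\det} s \implies r \equiv_* s$ is immediate, since every deterministic proof is a proof. For the converse, my plan is to determinize any proof of $r \equiv_* s$ between $r, s \in \Det$.

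The main tool is a \emph{determinization} map $D : \SExp \to \Det$, defined by structural induction using the $b \cdot r$ operation of \cref{def:deterministic sexp} to insert separators at each $+$ and $*$ subterm. I would establish two properties:
\begin{enumerate}
\item[(i)] If $r \in \Det$, then $r \equiv_*^{\det} D(r)$, proved by induction on $r$ using that the separators chosen by $D$ are compatible (up to $\equiv_*^{\det}$) with those already witnessing $r \in \Det$.
\item[(ii)] For each single-step derivation $t \equiv_* t'$ (by one axiom instance or one application of the inference rule), $D(t) \equiv_*^{\det} D(t')$---even when $t, t' \notin \Det$.
\end{enumerate}
Combining (i) and (ii), a proof $r = t_0 \equiv_* t_1 \equiv_* \cdots \equiv_* t_n = s$ yields the chain
\[
r \;\equiv_*^{\det}\; D(r) \;=\; D(t_0) \;\equiv_*^{\det}\; D(t_1) \;\equiv_*^{\det}\; \cdots \;\equiv_*^{\det}\; D(t_n) \;=\; D(s) \;\equiv_*^{\det}\; s,
\]
hence $r \equiv_*^{\det} s$.

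Property (ii) is the heart of the argument. Most equational axioms admit short deterministic derivations relying on propagation laws for $b \cdot (-)$---for example $b \cdot (r_1 + r_2) \equiv_*^{\det} b \cdot r_1 + b \cdot r_2$ and $b \cdot (r_1 r_2) \equiv_*^{\det} (b \cdot r_1) r_2$---each of which can itself be verified inductively within $\Det$. The delicate cases are distributivity $(x+y)z = xz + yz$, where the separators chosen for $D(xz)$ and $D(yz)$ must be reconciled with that used in $D((x+y)z)$, and the inference rule $z = xz + y \implies z = x * y$, which calls for a deterministic analogue of uniqueness of fixed points. My plan for (RSP) is to first determinize the premise to obtain $D(z) \equiv_*^{\det} D(x)\cdot D(z) + D(y)$, with the separator inherited from $D(x\cdot z + y)$, and then apply (RSP) within $\Det$ to conclude $D(z) \equiv_*^{\det} D(x) * D(y) \equiv_*^{\det} D(x * y)$.

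The main obstacle, as flagged above, is ensuring coherence of the separators introduced by $D$ under axioms that rearrange structure, and in particular handling (RSP) cleanly. If a direct proof of (ii) for (RSP) proves too intricate, my backup plan is to route through semantics: use \cref{thm:grabmayer-fokkink} to reduce $r \equiv_* s$ to $r \bisim s$, and then prove the sharpened completeness claim ``$r \bisim s \implies r \equiv_*^{\det} s$ for $r, s \in \Det$'' directly, guided by a bisimulation between $r$ and $s$ whose states can be taken to be deterministic expressions throughout.
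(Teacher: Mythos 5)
There is a genuine gap, and it sits exactly at the step you flag as the heart of the argument. Property (ii) for a structurally defined $D$ is not a lemma one can expect to prove---it is essentially the theorem itself, strengthened from $\Det$ to all of $\SExp$. Concretely: for a sum whose summands are not separated, such as $\alpha p + \alpha q$ with $p \neq q$, there is \emph{no} test $b$ for which inserting separators preserves the term's behaviour, so any structural $D$ must discard one of the summands; and whichever syntactic tie-breaking rule you adopt is not stable under the axioms you must then verify in (ii). For instance, the commutativity step $\alpha p + \alpha q \equiv_* \alpha q + \alpha p$ forces $D(\alpha p + \alpha q) \equiv_*^{\det} D(\alpha q + \alpha p)$, while a left-biased (or any structure-driven) choice of separator yields inequivalent results; congruence steps cause the same instability, since replacing a subterm by a provably equal one changes the syntactic data your separator choice depends on. A $D$ satisfying (i) and (ii) does exist abstractly---pick a deterministic representative of each $\equiv_*$-class that has one---but constructing it presupposes precisely the statement being proved, so the plan defers rather than resolves the difficulty. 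The (\textsf{RSP}) case has the same problem in sharper form: (ii) applied to the premise gives $D(z) \equiv_*^{\det} D(xz+y)$, and passing from $D(xz+y)$ to $D(x)\cdot D(z) + D(y)$, with that sum actually separated so that (\textsf{RSP}) may be applied inside $\Det$, requires $D$ to commute with the operations up to $\equiv_*^{\det}$, which fails for any lossy determinization.

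Your backup plan is the direction the paper actually takes, but as stated it omits the content that makes it work. The paper does not merely ``follow a bisimulation'': after using soundness to get $r \bisim s$, it replays Grabmayer and Fokkink's solution machinery entirely inside $\Det$ (\cref{app:proof of fix}). The ingredients are: $\Det$ is closed under derivatives, so $\langle r\rangle, \langle s\rangle$ are deterministic well-layered LTSs (\cref{lem:deterministic_subcoalgebra}); the fundamental theorem admits a deterministic proof, so the inclusions are deterministic solutions (\cref{lem:deterministic_fundamental_theorem}); determinism and well-layeredness are preserved by homomorphic images, so the common image of $\langle r\rangle$ and $\langle s\rangle$ is again a deterministic well-layered LTS, whose canonical solution is deterministic and unique among deterministic solutions up to $\equiv_*^{\det}$ (\cref{thm:deterministic_canonical_solution}); and homomorphisms preserve deterministic solutions (\cref{lem:homoms_preserve_det_solutions}). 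The uniqueness-of-solutions argument on the well-layered image, carried out with separation witnesses at every step, is what converts the semantic fact $r \bisim s$ into a syntactic deterministic derivation; a bisimulation between $r$ and $s$ by itself does not. So to complete your proposal you would need to supply this deterministic reworking of the Grabmayer--Fokkink proof (or an equivalent substitute), which is where the real work of \cref{thm:sf_gkat_fix} lies.
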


The proof of \cref{thm:sf_gkat_fix} requires an in-depth dive into the completeness proof of Grabmayer and Fokkink~\cite{onefreeregexlics}.
This can be found in \cref{app:proof of fix}.
With \cref{thm:sf_gkat_fix} in hand, \cref{lem:rtg is also good stuff} is proven by induction on the deterministic proof of \(r \equiv_*^{\det} s\).


The last fact needed in the proof of completeness is that, up to provable equivalence, every skip-free  \GKAT expression is equivalent to its back-translation.

\begin{restatable}{lemma}{lemprovableretract}\label{lem:provable inverse}
    For any \(e \in \GExp^-\), \(e \equiv_\dagger \rtg(\gtr(e))\).
\end{restatable}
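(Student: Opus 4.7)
The proof proceeds by structural induction on $e \in \GExp^-$. The case $e = 0$ is trivial, and the case $e = e_1 \cdot e_2$ follows directly from $\rtg(\gtr(e_1 e_2)) = \rtg(\gtr(e_1))\cdot\rtg(\gtr(e_2))$, the induction hypothesis, and congruence of $\equiv_\dagger$.

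For the base case $e = p$, I must show $\rtg\!\bigl(\sum_{\alpha\in\At}\alpha p\bigr) \equiv_\dagger p$. Fixing an enumeration $\At = \{\alpha_1,\ldots,\alpha_n\}$, one checks that the maximal test separating $\alpha_i p$ from the tail $\sum_{j>i}\alpha_j p$ is exactly $\alpha_i$, so $\rtg$ produces a nested guarded union of clauses of the form $(p +_{\alpha_i} 0) +_{\alpha_i}(\cdots)$. The central algebraic identity is $(x +_b 0) +_b y \equiv_\dagger x +_b y$, derivable by applying \textsf{G3} backwards to obtain $x +_b (0 +_b y)$, rewriting the inner sum via \textsf{G2} as $x +_b (y +_{\bar b} 0)$, applying \textsf{G3} once more to yield $(x +_b y) +_{b\vee\bar b} 0 = (x +_b y) +_1 0$, and finally eliminating the tautological guard using \textsf{G0}. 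Iterating this collapse and then merging the remaining copies of $p$ via \textsf{G3} together with the idempotence axiom \textsf{G1} reduces the nested sum to $p +_1 0 \equiv_\dagger p$.

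The inductive cases for $+_b$ and $^{(b)}$ hinge on the following \emph{restriction lemma}: for every $e \in \GExp^-$ and every $b \in \BExp$,
\[
\rtg(b \cdot \gtr(e)) \equiv_\dagger e +_b 0,
\]
where $b \cdot r$ is the scalar action on $\SExp$ defined before \Cref{def:deterministic sexp}. I plan to establish this by its own structural induction on $e$, using the recursive description of $b \cdot (-)$ and the axiomatic moves developed above. Granting the restriction lemma, the case $e = e_1 +_b e_2$ unfolds as
\[
\rtg(b \cdot \gtr(e_1)) +_{b'} \rtg(\bar b \cdot \gtr(e_2)) \equiv_\dagger (e_1 +_b 0) +_{b'} (e_2 +_{\bar b} 0),
\]
where $b' = b(b\cdot\gtr(e_1),\bar b\cdot\gtr(e_2))$ is the maximal separating test; since every atom of $b \cdot \gtr(e_1)$ lies below $b$ and every atom of $\bar b \cdot \gtr(e_2)$ lies below $\bar b$, one has $b \leq b'$, and further applications of \textsf{G3}, \textsf{G0}, and the identity from the base case rewrite the right-hand side to $e_1 +_b e_2$.

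The main obstacle is the loop case $e = e_1^{(b)} e_2$, where the restriction lemma yields
\[
\rtg(\gtr(e_1^{(b)} e_2)) \equiv_\dagger (e_1 +_b 0)^{(b'')} (e_2 +_{\bar b} 0)
\]
for the maximal separating test $b''$, with $b \leq b''$. To reduce the right-hand side back to $e_1^{(b)} e_2$, I plan to invoke \textsf{RSP}: using \textsf{FP}, the axiom \textsf{G6} (to get $0\cdot g = 0$), the base-case identity, and its symmetric dual $x +_b (y +_{\bar b} 0) \equiv_\dagger x +_b y$, one verifies that $e_1^{(b)} e_2$ itself satisfies the guarded equation $z = (e_1 +_b 0)\, z +_{b''} (e_2 +_{\bar b} 0)$, and then \textsf{RSP} yields the required provable equality. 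The technical subtlety is bridging the gap between $b$ and $b''$: atoms in $b''\setminus b$ must be shown to make $e_2 +_{\bar b} 0$ crash, which holds because such atoms do not appear in $\gtr(e_2)$ and hence can be provably absorbed into a guard of $0$.
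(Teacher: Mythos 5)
Your overall route matches the paper's---induction on $e$, a base-case collapse for $p$, and an auxiliary commutation property for the Boolean scalar action $b\cdot(-)$---but the way you formulate the auxiliary creates a genuine gap. You state the restriction lemma as $\rtg(b\cdot\gtr(e)) \equiv_\dagger e +_b 0$ and propose to prove it ``by its own structural induction on $e$''. That induction does not close: already in the sequencing case, $b\cdot\gtr(e_1e_2) = (b\cdot\gtr(e_1))\,\gtr(e_2)$, so its back-translation contains $\rtg(\gtr(e_2))$ and you need the \emph{main} claim at $e_2$; worse, in the loop case $e = e_1\while{c}e_2$ the scalar action unfolds the star, $b\cdot\gtr(e) = \bigl(b\cdot(c\cdot\gtr(e_1))\bigr)\,\gtr(e) + b\cdot(\bar c\cdot\gtr(e_2))$, so the back-translation contains the factor $\rtg(\gtr(e))$ for the very same $e$, and turning it into $e$ is exactly the statement being proved---unavailable in a standalone induction. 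The circularity is repairable: since your main-lemma cases invoke the restriction lemma only at proper subterms, a simultaneous induction (main claim at each $e$ before the restriction claim at $e$) is well-founded. The paper avoids the issue differently, by proving the expression-independent statement $\rtg(b\cdot r)\equiv_\dagger b\,\rtg(r)$ for all $r\in\Det$, by induction on the \emph{star expression} $r$; its star case rewrites $\rtg(r(r*s)+s)$ back to $\rtg(r*s)$ via \cref{lem:rtg is also good stuff}, so no appeal to the main lemma is needed.

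A second difference is how the mismatch between the syntactic test and the maximal separating test used in \cref{def:rtg} is bridged. The paper proves \cref{lem:arbitrary seperation} once and for all ($\rtg(r+s)\equiv_\dagger\rtg(r)+_b\rtg(s)$ and $\rtg(r*s)\equiv_\dagger\rtg(r)\while{b}\rtg(s)$ for \emph{any} separating $b$) and then finishes the loop case with $(be)\while{b}(\bar b f)\equiv_\dagger e\while{b}f$ from \cref{lemma:useful-equalities}; you instead redo the loop case via \textsf{RSP} plus a dead-branch argument for atoms below $b''\wedge\bar b$. That can be made to work, but the claim that such atoms ``can be provably absorbed into a guard of $0$'' is not free: you need a lemma of the form $c\,e_2\equiv_\dagger 0$ whenever no initial atom of $\gtr(e_2)$ lies below $c$, obtainable e.g.\ from your restriction lemma at $e_2$ together with \cref{thm:grabmayer-fokkink} and \cref{lem:rtg is also good stuff}, and this should be stated and proved rather than asserted. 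Finally, a small slip in the base case: the maximal test separating $\alpha_i p$ from the tail $\sum_{j>i}\alpha_j p$ is $\alpha_1\vee\cdots\vee\alpha_i$, not $\alpha_i$; the collapse to $p$ still goes through with the correct guards.
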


We are now ready to prove \cref{thm:completeness I}, that $\equiv_\dagger$ is complete with respect to behavioural equivalence in skip-free  \GKAT. 

\restatecompletenessone*
\begin{proof}
    Let \(e_1,e_2 \in \GExp^-\) be a bisimilar pair of skip-free  \GKAT expressions.
    By \cref{lem:bisim iff bisim}, \(e_1\) and \(e_2\) are bisimilar in \(\grph_*(\GExp^-, \partial)\).
    By \cref{lem:gtr is good stuff}, the translation \(\gtr : \grph_*(\GExp^-, \partial) \to (\SExp, \tau)\) preserves bisimilarity, so \(\gtr(e_1)\) and \(\gtr(e_2)\) are bisimilar in \((\SExp, \tau)\) as well.
    By \cref{thm:grabmayer-fokkink}, \(\gtr(e_1) \equiv_* \gtr(e_2)\).
    Therefore, by \cref{lem:rtg is also good stuff}, \(\rtg(\gtr(e_1)) \equiv_\dagger \rtg(\gtr(e_2))\).
    Finally, by \cref{lem:provable inverse}, we have
    \(
        e_1 \equiv_\dagger \rtg(\gtr(e_1)) \equiv_\dagger \rtg(\gtr(e_2)) \equiv_\dagger e_2
    \).
\end{proof}

\section{Completeness for Skip-free  \GKAT}\label{sec:completeness for skip-free GKAT}

The previous section establishes that \(\equiv_\dagger\)-equivalence coincides with bisimilarity for skip-free \GKAT expressions by reducing the completeness problem of skip-free \GKAT up to bisimilarity to a solved completeness problem, namely that of one-free star expressions up to bisimilarity.
In this section we prove a completeness result for skip-free \GKAT up to language equivalence.
We show this can be achieved by reducing it to the completeness problem of skip-free \GKAT up to bisimilarity, which we just solved in the previous section.

Despite bisimilarity being a less traditional equivalence in the context of Kleene algebra, this reduction simplifies the completeness proof greatly, and justifies the study of bisimilarity in the pursuit of completeness for \GKAT. 

The axiom \(x \cdot 0 = 0\) (which is the only difference between skip-free \GKAT up to language equivalence and skip-free \GKAT up to bisimilarity) indicates that the only semantic difference between bisimilarity and language equivalence in skip-free \GKAT is early termination.
This motivates our reduction to skip-free \GKAT up to bisimilarity below, which involves reducing each skip-free expression to an expression representing only the successfully terminating branches of execution.

Now let us turn to the formal proof of \Cref{thm:completeness II}, which says that if $e, f \in \GExp^-$ are such that $\L(e) = \L(f)$, then $e \equiv f$.
In a nutshell, our strategy is to produce two terms $\floor{e}, \floor{f} \in \GExp^-$ such that $e \equiv \floor{e}$, $f \equiv \floor{f}$ and $\floor{e} \bisim \floor{f}$ in \((\GExp^-, \partial)\).
The latter property tells us that $\floor{e} \equiv_\dagger \floor{f}$ by \Cref{thm:completeness I}, which together with the other properties allows us to conclude $e \equiv f$.
The expression \(\floor{e}\) can be thought of as the \emph{early termination version} of \(e\), obtained by pruning the branches of its execution that cannot end in successful termination.

To properly define the transformation \(\floor{-}\) on expressions, we need the notion of a \emph{dead} state in a skip-free automaton, analogous to a similar notion from~\cite{gkatpopl}.

\begin{definition}
Let $(X, h)$ be a skip-free automaton.
The set $D(X, h)$ is the largest subset of $X$ such for all $x \in D(X, h)$ and $\alpha \in \At$, either $h(x)(\alpha) = \bot$ or $h(x)(\alpha) \in \Sigma \times D(X,h)$.
When $x \in D(X, h)$, $x$ is \emph{dead}; otherwise, it is \emph{live}.
\end{definition}

In the sequel, we say $e \in \GExp^-$ is dead when $e$ is a dead state in $(\GExp^-, \partial)$, i.e., when $e \in D(\GExp^-, \partial)$.
Whether $e$ is dead can be determined by a simple depth-first search, since $e$ can reach only finitely many expressions by $\partial$.
The axioms of skip-free \GKAT can also tell when a skip-free expression is dead.

\begin{restatable}{lemma}{lemdeadequalszero}\label{lemma:dead-equals-zero}%
    Let $e \in \GExp$.
    If $e$ is dead, then $e \equiv 0$.
\end{restatable}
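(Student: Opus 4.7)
The plan is to deduce this lemma from \cref{thm:completeness I} via the observation that a dead expression $e$ is bisimilar to $e \cdot 0$ in the automaton $(\GExp^-, \partial)$, at which point the axiom $(\dagger)$ closes the argument.

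First, I would show that the relation $R = \{(x, x \cdot 0) : x \in \langle e \rangle\}$ is a bisimulation with $(e, e \cdot 0) \in R$. The key observation is that deadness is closed under reachability: since $e$ is dead and $D(\GExp^-, \partial)$ is a greatest fixed point, every $x \in \langle e \rangle$ is also dead. In particular no such $x$ admits a $\checkmark$-transition, so the concatenation rule $e_1 \tr{\alpha \mid p} \checkmark \Rightarrow e_1 e_2 \tr{\alpha \mid p} e_2$ from \cref{fig:small-step GKAT} never fires at $x \cdot 0$. The remaining transitions of $x \cdot 0$ then stand in lock-step with those of $x$: we have $x \cdot 0 \downarrow \alpha$ iff $x \downarrow \alpha$, neither state emits a $\checkmark$-transition, and $x \cdot 0 \tr{\alpha \mid p} x' \cdot 0$ iff $x \tr{\alpha \mid p} x'$. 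The three clauses of \cref{def:g bisim} are then immediate, and in particular $x' \in \langle e \rangle$ ensures $(x', x' \cdot 0) \in R$, so the relation is closed under transitions.

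Next, I would invoke \cref{thm:completeness I} on the bisimilarity $e \bisim e \cdot 0$ to obtain $e \equiv_\dagger e \cdot 0$. Because every axiom defining $\equiv_\dagger$ is also available to $\equiv$, this lifts to $e \equiv e \cdot 0$. Finally, axiom $(\dagger)$ yields $e \cdot 0 \equiv 0$, so $e \equiv 0$ by transitivity.

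The only real technical content is the bisimulation check in the first step, and this is essentially routine once one observes that the deadness of $x$ suppresses the $\checkmark$-branch of the concatenation rule, while reachability preserves deadness. No inductive hypothesis on the structure of $e$ is required, making this argument noticeably cleaner than a direct structural induction, which would struggle on the $+_b$ case (where a dead $e_1 +_b e_2$ need not have either $e_1$ or $e_2$ dead) and on the loop case (which would demand a separate appeal to \textsf{(RSP)}).
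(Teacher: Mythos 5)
Your proof is correct, but it takes a genuinely different route from the paper. The paper proves the lemma by structural induction on $e$, strengthening the statement to ``if $a\L(e) = \emptyset$ then $ae \equiv 0$'' precisely to get past the $+_b$ case you mention (a dead $e_1 +_b e_2$ only forces $(a\wedge b)\L(e_1)$ and $(a\wedge\overline{b})\L(e_2)$ to be empty), and it carries out the derivation purely equationally using the auxiliary identities of \cref{lemma:useful-equalities} together with the characterisation of deadness via language emptiness. You instead observe that deadness is closed under reachability, exhibit the bisimulation $\{(x, x\cdot 0) : x \in \langle e\rangle\}$ (which is sound: dead states never emit $\checkmark$-transitions, so the second sequencing rule of \cref{fig:small-step GKAT} never fires and $x\cdot 0$ mirrors $x$ exactly), and then conclude via \cref{thm:completeness I} and axiom $(\dagger)$. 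This is not circular, since \cref{thm:completeness I} is established independently of this lemma, and it is indeed shorter and sidesteps the guard-strengthened induction. The trade-off is that you invoke the full completeness theorem --- and hence the entire reduction to Grabmayer--Fokkink --- for an elementary fact, whereas the paper's argument is self-contained, purely axiomatic, and yields an explicit equational derivation of $e \equiv 0$ that does not depend on any completeness machinery; both are legitimate given the paper's ordering, where the lemma is only needed after \cref{thm:completeness I} is in place.
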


We are now ready to define $\floor{-}$, the transformation on expressions promised above.
The intuition here is to prune the dead subterms of $e$ by recursive descent; whenever we find a part that will inevitably lead to an expression that is never going to lead to acceptance, we set it to $0$.

\begin{definition}
    Let $e \in \GExp^-$ and $a \in \BExp$.
    In the sequel we use $ae$ as a shorthand for $e +_a 0$.
    We now define $\floor{e}$ inductively, as follows
    \begin{mathpar}
        \floor{0} = 0
        \and
        \floor{p} = p
        \and
        \floor{e_1 +_b e_2} = \floor{e_1} +_b \floor{e_2}
        \and
        \floor{e_1 \cdot e_2} =
            \begin{cases}
            0 & \text{$e_2$ is dead} \\
            \floor{e_1} \cdot \floor{e_2} & \text{otherwise}
            \end{cases}
        \and
        \floor{e_1 \while{b} e_2} =
            \begin{cases}
            0 & \text{$\overline{b}e_2$ is dead} \\
            \floor{e_1} \while{b} \floor{e_2} & \text{otherwise}
            \end{cases}
    \end{mathpar}
\end{definition}

The transformation defined above yields a term that is \(\equiv\)-equivalent to $e$, because $\equiv$ includes the early termination axiom $e \cdot 0 \equiv 0$.
The proof is a simple induction on $e$, using \cref{lemma:dead-equals-zero}. 

\begin{restatable}{lemma}{restateprunepreservesequiv}%
\label{lem:prune-preserves-equiv}
    For any $e \in \GExp^-$, $e \equiv \floor{e}$.
\end{restatable}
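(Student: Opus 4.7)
The plan is to prove $e \equiv \floor{e}$ by induction on the structure of $e \in \GExp^-$. The base cases ($e = 0$ and $e = p$) are immediate from the definition of $\floor{-}$. The guarded union case $e = e_1 +_b e_2$ follows directly by applying the induction hypotheses to $e_1$ and $e_2$ and using that $\equiv$ is a congruence (equational logic, as noted in the remark preceding this lemma).

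For sequential composition $e = e_1 \cdot e_2$, I would split on whether $e_2$ is dead. If $e_2$ is dead, then by \cref{lemma:dead-equals-zero}, $e_2 \equiv 0$, and congruence together with the axiom marked $\dagger$ ($x \cdot 0 = 0$) yields $e \equiv e_1 \cdot 0 \equiv 0 = \floor{e}$. Otherwise, the induction hypotheses and congruence give $e \equiv \floor{e_1} \cdot \floor{e_2} = \floor{e}$ directly.

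The loop case $e = e_1 \while{b} e_2$ is the only one with real content. When $\overline{b} e_2$ is live, we again combine the induction hypotheses with congruence to obtain $e \equiv \floor{e_1} \while{b} \floor{e_2} = \floor{e}$. When $\overline{b} e_2$ (which unfolds to $e_2 +_{\overline{b}} 0$) is dead, I must instead prove $e \equiv 0$. My intended derivation is as follows. By \cref{lemma:dead-equals-zero}, $e_2 +_{\overline{b}} 0 \equiv 0$; by (\textsf{G2}) we can rewrite this as $0 +_b e_2 \equiv 0$. Combined with $e_1 \cdot 0 \equiv 0$ (axiom $\dagger$) and congruence, this gives the identity $e_1 \cdot 0 +_b e_2 \equiv 0$, i.e., $0 \equiv e_1 \cdot 0 +_b e_2$. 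Applying the inference rule (\textsf{RSP}) to this equation yields $0 \equiv e_1 \while{b} e_2$, as required.

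The main obstacle, to the extent there is one, is the loop case: the definition of $\floor{-}$ hinges on the dead/live status of $\overline{b} e_2$ rather than on $e_2$ itself, so one must take care to use (\textsf{G2}) to shuffle the atoms into the right position before invoking (\textsf{RSP}). Everything else is routine congruence reasoning, and crucially, this proof uses the full axiom set $\equiv$ (in particular $\dagger$), consistent with the fact that $\floor{-}$ is only sound for language semantics and not for bisimilarity.
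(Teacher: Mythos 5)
Your proof is correct and follows essentially the same route as the paper's: structural induction on $e$, with the dead/live case split resolved by \cref{lemma:dead-equals-zero}, congruence, and the axiom ($\dagger$). The only (minor) difference is in the dead while-loop case, where the paper chains two facts from \cref{lemma:useful-equalities} ($e\while{b}f \equiv e\while{b}\overline{b}f$ and $e\while{b}0 \equiv 0$) while you apply (\textsf{RSP}) once directly to $0 \equiv e_1\cdot 0 +_b e_2$ --- a slightly more self-contained derivation of the same underlying argument.
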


It remains to show that if $\L(e) = \L(f)$, then $\floor{e}$ and $\floor{f}$ are bisimilar.
To this end, we need to relate the language semantics of $e$ and $f$ to their behaviour.
As a first step, we note that behaviour that never leads to acceptance can be pruned from a skip-free automaton by removing transitions into dead states.

\begin{definition}
    Let $(X, h)$ be a skip-free automaton.
    Define $\floor{h}: X \to GX$ by
    \[
        \floor{h}(x)(\alpha) =
            \begin{cases}
            \bot & h(x)(\alpha) = (p, x'),\ \text{$x'$ is dead} \\
            h(x)(\alpha) & \text{otherwise}
            \end{cases}
    \]
\end{definition}

Moreover, language equivalence of two states in a skip-free automaton implies bisimilarity of those states, but only in the pruned version of that skip-free automaton.
The proof works by showing that the relation on $X$ that connects states with the same language is, in fact, a bisimulation in $(X, \floor{h})$.

\begin{restatable}{lemma}{restateprunebisimilar}%
\label{lemma:prune-bisimilar}
    Let $(X, h)$ be a skip-free automaton and $x,y \in X$.
    We have
    \[
        \L(x, (X, h)) = \L(y, (X, h))
        \implies
        x \bisim y ~\text{in}~(X, \floor{h})
    \]
\end{restatable}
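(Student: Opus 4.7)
My plan is to exhibit the language-equivalence relation as a bisimulation in the pruned automaton. Concretely, define
\[
    R = \{(x, y) \in X \times X : \L(x, (X, h)) = \L(y, (X, h))\},
\]
and show that $R$ is a bisimulation of $(X, \floor{h})$. Two preliminary observations will do most of the heavy lifting. First, a state $z \in X$ is live if and only if $\L(z, (X, h)) \neq \emptyset$: liveness is by definition the ability to reach $\checkmark$, which is exactly the existence of an accepted trace. Second, skip-free automata are \emph{deterministic}, i.e.\ the transition structure picks a unique image in $\bot + \Sigma \times (\checkmark + X)$ for each $\alpha$, and the pruning operation only deletes transitions whose target is a dead state, leaving all accept transitions $z \tr{\alpha \mid p} \checkmark$ untouched.

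Assuming $x \mathrel{R} y$, I would verify the three clauses of \cref{def:g bisim} for the transition structure $\floor{h}$. For the failure clause, $x \downarrow_{\floor{h}} \alpha$ holds iff either $h(x)(\alpha) = \bot$ or $h(x)(\alpha) = (p, x')$ with $x'$ dead; in both cases no trace in $\L(x, (X, h))$ begins with $\alpha$, since $x'$ dead forces $\L(x') = \emptyset$. By $\L(x) = \L(y)$ the same conclusion holds for $y$, so $y \downarrow_{\floor{h}} \alpha$. For the accept clause, $x \tr{\alpha \mid p}_{\floor{h}} \checkmark$ iff $h(x)(\alpha) = (p, \checkmark)$ iff $\alpha p \in \L(x, (X, h))$, and the same equivalence transfers to $y$ via $\L(x) = \L(y)$.

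For the transition clause, suppose $x \tr{\alpha \mid p}_{\floor{h}} x'$, so that $x'$ is live. Pick any $w \in \L(x', (X, h))$; prepending the transition gives $\alpha p w \in \L(x, (X, h)) = \L(y, (X, h))$. By determinism the unique $\alpha$-step out of $y$ must take the form $h(y)(\alpha) = (p, y')$ for some $y' \in X$ with $w \in \L(y', (X, h))$ (it cannot be $(p, \checkmark)$, since $w$ is nonempty), so $y'$ is live and $y \tr{\alpha \mid p}_{\floor{h}} y'$. It remains to check $x' \mathrel{R} y'$: for any $v \in \L(x')$ one has $\alpha p v \in \L(x) = \L(y)$, and by determinism the unique $\alpha$-successor $y'$ of $y$ must satisfy $v \in \L(y')$; the symmetric argument gives the reverse inclusion.

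The one mildly subtle point---which I would flag but not labor over---is the bookkeeping of working with two automata at once: languages are always measured in the original $(X, h)$, while the transitions being matched live in the pruned $(X, \floor{h})$. This is legitimate precisely because pruned transitions lead into dead states, which contribute no accepting traces, so the languages agree in either automaton. The conclusion of the lemma then follows by taking the unique largest bisimulation of $(X, \floor{h})$ to contain $R$.
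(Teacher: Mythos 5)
Your proposal is correct and follows essentially the same route as the paper: the same relation $R$ of language-equivalent state pairs, the same key auxiliary fact that a state is dead iff its language in $(X,h)$ is empty, and the same three-way case analysis on $\floor{h}(x)(\alpha)$.
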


The final intermediate property relates the behaviour of states in the pruned skip-free automaton of expressions to those in the syntactic skip-free automaton.

\begin{restatable}{lemma}{restateprunecommutes}%
\label{lemma:prune-commutes}
    The graph \(\{(e, \floor{e}) \mid e \in \GExp^-\}\) of \(\floor{-}\) is a bisimulation of skip-free automata between $(\GExp^-, \floor{\partial})$ and $(\GExp^-, \partial)$.
\end{restatable}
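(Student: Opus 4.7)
The plan is to prove by structural induction on $e$ that, for every $e \in \GExp^-$, the pair $(e, \floor{e})$ satisfies the three conditions of \cref{def:g bisim}. Concretely, for each $\alpha \in \At$ we compare $\floor{\partial}(e)(\alpha)$ with $\partial(\floor{e})(\alpha)$ and check that either both equal $\bot$, both equal $(p, \checkmark)$ for the same $p$, or both are of the form $(p, e')$ and $(p, \floor{e'})$ respectively, so that successors lie in the graph of $\floor{-}$.

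The cases $e = 0$ and $e = p$ are immediate. The conditional case $e = e_1 +_b e_2$ reduces cleanly to the induction hypotheses on $e_1$ and $e_2$ by splitting on whether $\alpha \leq b$, because both $\floor{-}$ and $\floor{\partial}$ propagate pointwise through $+_b$ by delegating to the relevant branch (and $\floor{e_1 +_b e_2} = \floor{e_1} +_b \floor{e_2}$ always).

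The sequence case $e = e_1 \cdot e_2$ splits on whether $e_2$ is dead. When $e_2$ is live, $\floor{e} = \floor{e_1} \cdot \floor{e_2}$, and the transition rules of \cref{fig:small-step GKAT} together with the IH on $e_1$ and $e_2$ match successor states, using the auxiliary fact that for live $e_2$ any expression $e_1' \cdot e_2$ is live precisely when $e_1'$ is live. This keeps the two sides aligned on successors via the identity $\floor{e_1' \cdot e_2} = \floor{e_1'} \cdot \floor{e_2}$, and successful transitions $e_1 \tr{\alpha\mid p} \checkmark$ are handled by noting that both $e_1 \cdot e_2$ and $\floor{e_1} \cdot \floor{e_2}$ step to $e_2$ and $\floor{e_2}$, which are related by $R$. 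When $e_2$ is dead, $\floor{e} = 0$ has no transitions, so we must show that $e_1 \cdot e_2$ is itself dead; this follows because $\{e_1' \cdot e_2 \mid e_1' \in \GExp^-\} \cup D(\GExp^-, \partial)$ satisfies the closure condition in the definition of dead states, and $D(\GExp^-, \partial)$ is the largest such set.

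The loop case $e = e_1^{(b)} e_2$ is analogous but with an extra layer: we split on whether $\bar{b} e_2$ is dead, and an analogous coinductive argument (on the set $\{e' \cdot e \mid e' \in \GExp^-\} \cup \{e\} \cup D(\GExp^-, \partial)$) shows that $e$ is dead whenever $\bar{b} e_2$ is. When $\bar{b} e_2$ is live, $e$ itself is live, successors of the form $e_1' \cdot e$ arising from unfolding the loop body satisfy $\floor{e_1' \cdot e} = \floor{e_1'} \cdot \floor{e}$, and back-edges to $e$ (from $e_1 \tr{\alpha\mid p}\checkmark$ with $\alpha \leq b$) match back-edges from $\floor{e_1}^{(b)}\floor{e_2}$ to itself. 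I expect the main obstacle to be carefully marshalling these coinductive dead-propagation arguments while tracking the several shapes of successors in the loop case---body steps, exit continuations, and back-edges---and confirming at each step that the resulting successor pairs still lie in the graph of $\floor{-}$.
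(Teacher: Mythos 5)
Your proposal is correct and follows essentially the same route as the paper: a structural induction on $e$ showing that $\floor{-}$ is a coalgebra homomorphism from $(\GExp^-, \floor{\partial})$ to $(\GExp^-, \partial)$ (i.e., $\partial(\floor{e})(\alpha)$ agrees with $\floor{\partial}(e)(\alpha)$ up to applying $\floor{-}$ to successors), with the same case splits on $\alpha \leq b$ and on deadness of $e_2$ and $\overline{b}e_2$. The only notable difference is that you spell out the coinductive justification that deadness propagates to $e_1 \cdot e_2$ and $e_1 \while{b} e_2$, which the paper's proof asserts without elaboration.
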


We now have all the ingredients necessary to prove \Cref{thm:completeness II}.

\restatecompletenesstwo*
\begin{proof}
If \(e_1 \sim_\L e_2\), then by definition \(\L(e_1) = \L(e_2)\).
By \cref{lemma:prune-bisimilar}, \(e_1 \bisim e_2\) in \((\GExp^-, \floor{\partial})\), which by \Cref{lemma:prune-commutes} implies that \(\floor{e_1} \bisim \floor{e_2}\) in \((\GExp^-, \partial)\).
From \Cref{thm:completeness I} we know that \(\floor{e_1} \equiv_\dagger \floor{e_2}\), and therefore \(e_1 \equiv e_2\) by \Cref{lem:prune-preserves-equiv}.
\end{proof}

\section{Relation to  \GKAT}%
\label{sec:relation to GKAT}

So far we have seen the technical development of skip-free \textsf{GKAT} without much reference to the original development of \GKAT as it was presented in~\cite{gkatpopl} and~\cite{gkaticalp}.
In this section, we make the case that the semantics of skip-free \GKAT is merely a simplified version of the semantics of \GKAT, and that the two agree on which expressions are equivalent after embedding skip-free \GKAT into \GKAT. 
More precisely, we identify the bisimulation and language semantics of skip-free \GKAT given in \Cref{sec:skipfree} with instances of the existing bisimulation~\cite{gkaticalp} and language~\cite{gkatpopl} semantics of \GKAT proper.
The main takeaway is that two skip-free \GKAT expressions are equivalent in our semantics precisely when they are equivalent when interpreted as proper \GKAT expressions in the existing semantics.

\subsection{Bisimulation semantics}

To connect the bisimulation semantics of skip-free \GKAT to \GKAT at large, we start by recalling the latter.
To do this, we need to define $\GKAT$ automata.

\begin{definition}
A (\GKAT) \emph{automaton} is a pair $(X, d)$ such that $X$ is a set and $d: X \to (\bot + \checkmark + \Sigma \times X)^\At$ is a function called the \emph{transition function}.
We write \(x \tr{\alpha\mid p} y\) to denote \(d(x)(\alpha) = (p, y)\), \(x \Rightarrow \alpha\) to denote \(d(x)(\alpha) = \checkmark\), and \(x\downarrow\alpha\) if \(d(x)(\alpha) = \bot\).
\end{definition}

Automata can be equipped with their own notion of bisimulation.\footnote{As in previous sections, automata can be studied as coalgebras for a given functor and the notions below are instances of general abstract notions~\cite{gumm,coalgebra}.}

\begin{definition}
Given automata $(X, h)$ and $(Y, k)$, a \emph{bisimulation} between them is a relation $R \subseteq X \times Y$ such that if $x \mathrel{R} y$, $\alpha \in \At$ and $p \in \Sigma$,:
\begin{enumerate}
    \item
    if $h(x)(\alpha) = \bot$, then $k(y)(\alpha) = \bot$; and
    \item
    if $h(x)(\alpha) = \checkmark$, then $k(y)(\alpha) = \checkmark$; and
    \item
    if $h(x)(\alpha) = (p, x')$, then $k(y)(\alpha) = (p, y')$ such that $x' \mathrel{R} y'$.
\end{enumerate}
We call $x$ and \(y\) \emph{bisimilar} and write \(x \bisim y\) if \(x \mathrel{R} y\) for some bisimulation \(R\).
\end{definition}

\begin{remark}
The properties listed above are implications, but it is not hard to show that if all three properties hold for $R$, then so do all of their symmetric counterparts.
For instance, if $k(y)(\alpha) = (p, y')$, then certainly $h(x)(\alpha)$ must be of the form $(q, x')$, which then implies that $q = p$ while $x' \mathrel{R} y'$.
\end{remark}

Two \GKAT expressions are bisimilar when they are bisimilar as states in the \emph{syntactic automaton}~\cite{gkaticalp}, \((\GExp, \delta)\), summarized in \cref{fig:gkat syntactic automaton}.


\begin{figure}[!t]
\begin{gather*}
    \infer{\alpha \le b}
		{b \Rightarrow a}
	\quad
    \infer{\alpha\le b \quad e_1 \Rightarrow \alpha}
		{e_1 +_b e_2 \Rightarrow \alpha}
	\quad
    \infer{\alpha \le \bar b \quad e_2 \Rightarrow \alpha}
		{e_1 +_b e_2 \Rightarrow a}
        \quad
    \infer{\alpha \le b \quad e_1 \tr{\alpha \mid p} e'}
		{e_1 +_b e_2 \tr{\alpha \mid p} e'}
	\quad
	\infer{\alpha \le \bar b \quad e_2 \tr{\alpha \mid p} e'}
		{e_1 +_b e_2 \tr{\alpha \mid p} e'}
    \quad
    \\
    \infer{\ }
        {p \tr{\alpha \mid p} 1}
    \quad
	\infer{e \Rightarrow \alpha \quad e_2 \Rightarrow \alpha}
		{e_1 \cdot e_2 \Rightarrow a}
	\quad
	\infer{e \Rightarrow \alpha \quad f \tr{\alpha \mid p} e'}
		{e_1 \cdot e_2 \tr{\alpha \mid p} e'}
	\quad
	\infer{e \tr{\alpha \mid p} e'}
		{e_1 \cdot e_2 \tr{\alpha \mid p} e' \fatsemi e_2}
	\\
	\infer{\alpha \le b \quad e \tr{\alpha \mid p} e'}
		{e^{(b)} \tr{\alpha \mid p} e' \fatsemi e^{(b)}}
		\quad
	\infer{\alpha \le \bar b}
		{e^{(b)} \Rightarrow a}
\end{gather*}
\caption{\label{fig:gkat syntactic automaton}The transition function $\delta: \GExp \to (\bot + \checkmark + \Sigma \times \GExp)^\At$ defined inductively. Here, $e_1 \fatsemi e_2$ is $e_2$ when $e = 1$ and $e_1 \cdot e_2$ otherwise, \(b \in \mathsf{BExp}\), \(p \in \Sigma\), and \(e,e',e_i \in \GExp\).}
\end{figure}

\begin{remark}
The definition of $\delta$ given above diverges slightly from the definition in~\cite{gkaticalp}.
Fortunately, this does not make a difference in terms of the bisimulation semantics: two expressions are bisimilar in $(\GExp, \delta)$ if and only if they are bisimilar in the original semantics.
\ifarxiv%
We refer to \Cref{appendix:relation to GKAT} for a detailed account.
\else%
The full version~\cite{fullversion} contains a detailed account.
\fi%
\end{remark}

There is a fairly easy way to convert a skip-free automaton into a $\GKAT$ automaton: simply reroute all accepting transitions into a new state $\top$, that accepts immediately, and leave the other transitions the same.

\begin{definition}
Given a skip-free automaton $(X, d)$, we define the automaton $\embed(X, d) = (X + \top, \tilde{d})$, where $\tilde{d}$ is defined by
\[
    \tilde{d}(x)(\alpha) =
        \begin{cases}
        \checkmark & x = \top \\
        (p, \top) & d(x)(\alpha) = (p, \checkmark) \\
        d(x)(\alpha) & \text{otherwise}
        \end{cases}
\]
\end{definition}

We can show that two states are bisimilar in a skip-free automaton if and only if these same states are bisimilar in the corresponding $\GKAT$ automaton.

\begin{restatable}{lemma}{restateembedbisimilar}%
\label{lemma:embed-bisimilar}
Let $(X, d)$ be a skip-free automaton, and let $x, y \in X$.
\[
    x \bisim y~\text{in}~(X, d) \iff x \bisim y~\text{in}~\embed(X, d)
\]
\end{restatable}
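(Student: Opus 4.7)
The plan is to prove the two directions by transferring bisimulations back and forth between the skip-free automaton $(X,d)$ and the enlarged \GKAT automaton $\embed(X,d) = (X + \top,\tilde d)$. The bridge between the two sides is a simple observation about $\top$: by construction, $\tilde d(\top)(\alpha) = \checkmark$ for every $\alpha \in \At$, and no state in $X$ accepts immediately in $\embed(X,d)$ (since $\tilde d(x)(\alpha) \in \bot + \Sigma \times (X + \top)$ for $x \in X$). Consequently, if $R$ is any \GKAT bisimulation on $\embed(X,d)$ and $(\top,z) \in R$, then the bisimulation conditions at every atom force $\tilde d(z)(\alpha) = \checkmark$, which is only possible when $z = \top$. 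The same holds for the symmetric case $(z,\top) \in R$.

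For the direction ($\Rightarrow$), suppose $R \subseteq X \times X$ is a skip-free bisimulation with $x \mathrel{R} y$. Define $R' = R \cup \{(\top,\top)\}$. The pair $(\top,\top)$ trivially satisfies the \GKAT bisimulation conditions because $\tilde d(\top)(\alpha) = \checkmark$ on both sides. For $(x',y') \in R$ and $\alpha \in \At$, I would check the three cases from the definition of $\tilde d$: if $\tilde d(x')(\alpha) = \bot$ then $d(x')(\alpha) = \bot$, so $d(y')(\alpha) = \bot$ by $R$, yielding $\tilde d(y')(\alpha) = \bot$; if $\tilde d(x')(\alpha) = (p,\top)$ then $x' \tr{\alpha \mid p} \checkmark$ in $(X,d)$, so $y' \tr{\alpha \mid p} \checkmark$ and $\tilde d(y')(\alpha) = (p,\top)$, with $(\top,\top) \in R'$; if $\tilde d(x')(\alpha) = (p,x'')$ with $x'' \in X$ then $x' \tr{\alpha \mid p} x''$ in $(X,d)$, so $y' \tr{\alpha \mid p} y''$ with $(x'',y'') \in R \subseteq R'$. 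The case $\tilde d(x')(\alpha) = \checkmark$ does not arise for $x' \in X$.

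For the direction ($\Leftarrow$), suppose $R$ is a \GKAT bisimulation on $\embed(X,d)$ with $x \mathrel{R} y$, and let $R' = R \cap (X \times X)$. Clearly $(x,y) \in R'$ since $x,y \in X$. For $(x',y') \in R'$, I would verify the three skip-free bisimulation conditions by reading off $\tilde d$: failure transfers because $\tilde d(x')(\alpha) = \bot$ iff $d(x')(\alpha) = \bot$ for states in $X$; an accepting transition $x' \tr{\alpha \mid p} \checkmark$ corresponds to $\tilde d(x')(\alpha) = (p,\top)$, and the bisimulation forces $\tilde d(y')(\alpha) = (p,z)$ with $(\top,z) \in R$, so by the bridge observation $z = \top$, giving $y' \tr{\alpha \mid p} \checkmark$; a transition $x' \tr{\alpha \mid p} x''$ with $x'' \in X$ corresponds to $\tilde d(x')(\alpha) = (p,x'')$, and the bisimulation yields $\tilde d(y')(\alpha) = (p,y'')$ with $(x'',y'') \in R$, where again the bridge observation (applied to $x'' \ne \top$) forces $y'' \ne \top$, so $(x'',y'') \in R'$ and $y' \tr{\alpha \mid p} y''$ in $(X,d)$.

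The only subtle step is the bridge observation identifying states $R$-related to $\top$; everything else is a routine case split on the definition of $\tilde d$. Since the two constructions $R \mapsto R \cup \{(\top,\top)\}$ and $R \mapsto R \cap (X \times X)$ preserve the relevant pair $(x,y)$, the equivalence $x \bisim y$ in $(X,d)$ iff $x \bisim y$ in $\embed(X,d)$ follows immediately.
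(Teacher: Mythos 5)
Your proposal is correct and follows essentially the same argument as the paper: the forward direction extends a skip-free bisimulation with the pair $(\top,\top)$, and the backward direction restricts a \GKAT bisimulation to $X \times X$, using exactly the paper's key observation that any state related to $\top$ must itself be $\top$.
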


The syntactic skip-free automaton $(\GExp^-, \partial)$ can of course be converted to a $\GKAT$ automaton in this way.
It turns out that there is a very natural way of correlating this automaton to the syntactic $\GKAT$ automaton $(\GExp, \delta)$.

\begin{restatable}{lemma}{restateembedexpmorphism}%
\label{lemma:embed-exp-morphism}
The relation $\{ (e, e) : e \in \GExp^- \} \cup \{ (\top, 1) \}$ is a bisimulation between $\embed(\GExp^-, \partial)$ and $(\GExp, \delta)$.
\end{restatable}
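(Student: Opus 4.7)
The plan is to verify the three bisimulation clauses for the proposed relation $R = \{(e,e) : e \in \GExp^-\} \cup \{(\top, 1)\}$, handling the singleton pair $(\top, 1)$ separately and the pairs $(e, e)$ by structural induction on $e \in \GExp^-$. For $(\top, 1)$, by the definition of $\embed$ the state $\top$ satisfies $\tilde{\partial}(\top)(\alpha) = \checkmark$ for every atom $\alpha$, while the Boolean rule in \Cref{fig:gkat syntactic automaton} gives $\delta(1)(\alpha) = \checkmark$ since $\alpha \le 1$. Both states accept unconditionally, so the three clauses are immediate.

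For pairs $(e, e)$ with $e \in \GExp^-$, an auxiliary observation is useful: every $e \in \GExp^-$ satisfies $e \not\Rightarrow \alpha$ in $(\GExp, \delta)$. This is a straightforward side induction---the only way for a GKAT expression to accept immediately is through a Boolean subexpression different from $0$, which the skip-free grammar forbids---and it guarantees that acceptance of $e$ via $\top$ on the left can only correspond to acceptance via $1$ on the right (never via some other immediately accepting state). The base cases are then direct: $\tilde{\partial}(0)(\alpha) = \bot = \delta(0)(\alpha)$ since $\alpha \not\le 0$; and $\tilde{\partial}(p)(\alpha) = (p, \top)$ matches $\delta(p)(\alpha) = (p, 1)$ via $(\top, 1) \in R$. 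The inductive cases $e_1 +_b e_2$ and $e_1 \cdot e_2$ follow by a routine case split on whether $\alpha \le b$ (for $+_b$) and on the transition of $e_1$ (for $\cdot$), invoking the IH on the subterms. For sequencing, the ``smart'' operator $\fatsemi$ collapses $1 \fatsemi e_2$ to $e_2$, matching the skip-free successor after an accepting transition of $e_1$, while for a non-accepting transition $e' \fatsemi e_2 = e' \cdot e_2$ since $e' \in \GExp^-$ is never $1$.

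The technically most delicate case, and the main obstacle, is the loop $e = e_1^{(b)} e_2$. Three sub-cases arise. If $\alpha \not\le b$ the loop exits and both sides inherit the transitions of $e_2$ (using $e_1^{(b)} \Rightarrow \alpha$ on the GKAT side to unlock the sequencing rule), so the IH on $e_2$ closes this case. If $\alpha \le b$ and $e_1$ accepts immediately (in skip-free via $\checkmark$, and by IH in $\GExp$ via $1$), both successors collapse back to the loop expression itself, which trivially lies in $R$. The subtle sub-case is $\alpha \le b$ with $e_1 \tr{\alpha \mid p} e'$ for some $e' \in \GExp^-$: on the skip-free side the successor is $e' \cdot (e_1^{(b)} e_2)$, whereas on the GKAT side two applications of the sequencing rule yield $(e' \fatsemi e_1^{(b)}) \fatsemi e_2 = (e' \cdot e_1^{(b)}) \cdot e_2$. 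These two expressions agree only up to associativity of $\cdot$, and they fall in $R$ once one exploits the paper's implicit convention of treating sequential composition associatively at the syntactic level (justified by axiom (\textsf{G7})). With this identification in place, all three bisimulation clauses are verified for every constructor, completing the induction.
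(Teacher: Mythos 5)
Your proposal is correct and takes essentially the same route as the paper: the paper packages the map $e \mapsto e$, $\top \mapsto 1$ as a coalgebra homomorphism and appeals to the graph-of-a-homomorphism lemma, but the underlying content is the same structural induction on $e$ with the same case analysis (your auxiliary fact that no $e \in \GExp^-$ satisfies $e \Rightarrow \alpha$ in $(\GExp, \delta)$ is harmless but not separately needed, since the induction hypothesis already forces $\delta(e)(\alpha)$ to have the shape of $\tilde{\partial}(e)(\alpha)$, which is never $\checkmark$). The associativity wrinkle you isolate in the loop case is real: for $\alpha \leq b$ and $e_1 \tr{\alpha \mid p} e'$ the literal computation in $(\GExp, \delta)$ yields $(e' \cdot e_1^{(b)}) \cdot e_2$ while the skip-free side yields $e' \cdot (e_1^{(b)} e_2)$, and the paper's own proof silently writes these as the same successor. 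One caution: citing (\textsf{G7}) is not the right licence for this step, since (\textsf{G7}) is an axiom of provable equivalence while membership in the syntactically defined relation $R$ requires identical expressions; the honest repair is either the parsing convention that identifies sequential compositions up to re-association (which is what the paper implicitly adopts) or replacing $R$ by its closure under re-association of $\cdot$, which is immediately seen to remain a bisimulation.
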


We now have everything to relate the bisimulation semantics of skip-free \GKAT expressions to the bisimulation semantics of \GKAT expressions at large.

\begin{lemma}%
\label{lemma:relate-bisim-semantics}
Let $e, f \in \GExp^-$.
The following holds:
\[
    e \bisim f~\text{in}~(\GExp^-, \partial)
    \iff
    e \bisim  f~\text{in}~(\GExp, \delta)
\]
\end{lemma}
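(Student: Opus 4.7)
The plan is to chain the two preceding lemmas. \Cref{lemma:embed-bisimilar} immediately reduces the claim to showing that, for $e, f \in \GExp^-$, we have $e \bisim f$ in $\embed(\GExp^-, \partial)$ iff $e \bisim f$ in $(\GExp, \delta)$. This is where \Cref{lemma:embed-exp-morphism} comes in: the relation $R = \{(e, e) : e \in \GExp^-\} \cup \{(\top, 1)\}$ is a bisimulation between $\embed(\GExp^-, \partial)$ and $(\GExp, \delta)$, and it pairs each $e \in \GExp^-$ on the left with the same $e$ on the right. So the task is just to transport bisimulations back and forth across $R$.

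Concretely, for the forward implication, suppose $T$ is a bisimulation on $\embed(\GExp^-, \partial)$ with $e \mathrel{T} f$. Then $R^{op} \circ T \circ R$ is a relation on $(\GExp, \delta)$ and, by the standard closure of bisimulations under converse and composition (the direct analogue of \Cref{lem:bisim facts}, which carries over from LTSs to \GKAT automata via the coalgebraic perspective mentioned in \Cref{footnote:coalg}), it is itself a bisimulation. Since $e \mathrel{R} e$, $e \mathrel{T} f$, and $f \mathrel{R} f$, the pair $(e, f)$ lies in this relation, witnessing $e \bisim f$ in $(\GExp, \delta)$. The converse direction is entirely symmetric: given a bisimulation $T'$ on $(\GExp, \delta)$ with $e \mathrel{T'} f$, the composite $R \circ T' \circ R^{op}$ is a bisimulation on $\embed(\GExp^-, \partial)$ containing $(e, f)$, and \Cref{lemma:embed-bisimilar} brings us back to $(\GExp^-, \partial)$.

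There is no real obstacle here; the proof is just the forward/backward transport across $R$. The only substance is the closure of \GKAT-automaton bisimulations under converse and composition, which is either derived identically to the LTS case of \Cref{lem:bisim facts} or read off from the fact that \GKAT automata are coalgebras for a weak-pullback-preserving functor.
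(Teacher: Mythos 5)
Your proof is correct and follows essentially the same route as the paper: reduce via \Cref{lemma:embed-bisimilar} to $\embed(\GExp^-, \partial)$, then transport bisimilarity across the bisimulation of \Cref{lemma:embed-exp-morphism} using closure of (\GKAT-automaton) bisimulations under converse and composition, which the paper justifies by the same appeal to \cref{lem:general bisim facts}.
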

\begin{proof}
We derive using \Cref{lemma:embed-bisimilar,lemma:embed-exp-morphism}, as follows: since the graph of \(\embed\) is a bisimulation, \(e \bisim f\) in \((\GExp^-, \partial)\) iff \(e \bisim f\) in \(\embed(\GExp^-, \partial)\) if and only if \(e \bisim f\) in \((\GExp, \delta)\).
In the last step, we use the fact that if $R$ is a bisimulation (of automata) between $(X, h)$ and $(Y, k)$, and $S$ is a bisimulation between $(Y, k)$ and $(Z, \ell)$, then $R \circ S$ is a bisimulation between $(X, h)$ and $(Z, \ell)$ (see \cref{lem:general bisim facts}).
\end{proof}

\subsection{Language semantics}

We now recall the language semantics of \GKAT, which is defined in terms of \emph{guarded strings}~\cite{katcompleteness}, i.e., words in the set $\At \cdot (\Sigma \cdot \At)^*$, where atoms and actions alternate.
In \GKAT, successful termination occurs with a trailing associated test, representing the state of the machine at termination.
In an execution of the sequential composition of two programs \(e \cdot f\), the test trailing the execution of \(e\) needs to match up with an input test compatible with \(f\), otherwise the program crashes at the end of executing \(e\).
The following operations on languages of guarded strings record this behaviour by matching the ends of traces on the left with the beginnings of traces on the right.

\begin{definition}
For $L, K \subseteq At \cdot (\Sigma \cdot \At)^*$, define
\(
    L \diamond K = \{ w\alpha{}x : w\alpha \in L, \alpha{}x \in K \}
\) and
\(
    L^{(*)} = \bigcup_{n \in \mathbb{N}} L^{(n)}
\),
where $L^{(n)}$ is defined inductively by setting $L^{(0)} = \At$ and $L^{(n+1)} = L \diamond L^{(n)}$.
\end{definition}

The language semantics of a \GKAT expression is now defined in terms of the composition operators above, as follows.

\begin{definition}
We define $\widehat{\L}: \GExp \to \mathcal{P}(\At \cdot (\Sigma \cdot \At)^*)$ inductively, as follows:
\begin{mathpar}
    \widehat{\L}(b) = \{ \alpha \in \At \mid \alpha \leq b \}
    \and
    \widehat{\L}(p) = \{ \alpha{}p\beta \mid \alpha, \beta \in \At \}
    \and
    \widehat{\L}(e \cdot f) = \widehat{\L}(e) \diamond \widehat{\L}(f)
    \\
    \widehat{\L}(e +_b f) = \widehat{\L}(b) \diamond \widehat{\L}(e) \cup \widehat{\L}(\overline{b}) \diamond \widehat{\L}(f)
    \and
    \widehat{\L}(e^{(b)}) = (\widehat{\L}(b) \diamond \widehat{\L}(e))^{(*)} \diamond \widehat{\L}(\overline{b})
\end{mathpar}
\end{definition}

This semantics is connected to the relational semantics from \Cref{def:relational interpretation}:
\begin{theorem}[\cite{gkatpopl}]
For $e, f \in \GExp$, we have $\widehat{\L}(e) = \widehat{\L}(f)$ if and only if $\semrel{e}_\sigma = \semrel{f}_\sigma$ for all relational interpretations $\sigma$
\end{theorem}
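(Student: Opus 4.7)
The plan is to prove the two implications separately by factoring the relational semantics through the guarded-string semantics. For the forward direction, I would extend \(\semrel{-}_\sigma\) from expressions to individual guarded strings: for \(w = \alpha_0 p_1 \alpha_1 \cdots p_n \alpha_n\), set \(\semrel{w}_\sigma = \semrel{\alpha_0}_\sigma \circ \semrel{p_1}_\sigma \circ \semrel{\alpha_1}_\sigma \circ \cdots \circ \semrel{p_n}_\sigma \circ \semrel{\alpha_n}_\sigma\), treating each atom \(\alpha\) as the partial identity on states satisfying all tests in \(\alpha\) and violating all others. The central lemma is that \(\semrel{e}_\sigma = \bigcup_{w \in \widehat{\L}(e)} \semrel{w}_\sigma\), proved by induction on \(e\). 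The cases for \(0\), tests, and primitive actions are immediate from the definitions. For \(e +_b f\), the filter \(\semrel{b}_\sigma \circ \semrel{e}_\sigma\) in the relational semantics mirrors the atom-prefix selection in \(\widehat{\L}(b) \diamond \widehat{\L}(e)\). The sequencing operator \(\diamond\) is designed so that joining on a shared atom corresponds exactly to relational composition through an intermediate state satisfying that atom. Once this lemma is established, \(\widehat{\L}(e) = \widehat{\L}(f)\) gives \(\semrel{e}_\sigma = \semrel{f}_\sigma\) immediately.

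For the reverse direction, I would argue by contraposition using a canonical guarded-string interpretation. Take \(S = \At \cdot (\Sigma \cdot \At)^*\), and set \(\mathsf{sat}(t) = \{ w\alpha \in S : \alpha \leq t \}\) and \(\mathsf{eval}(p) = \{(w\alpha,\, w \alpha p \beta) : w\alpha \in S,\, \beta \in \At\}\). The key auxiliary statement, proved by induction on \(e\), is that for any \(u\alpha \in S\) and any guarded string \(\alpha v\) starting with \(\alpha\), we have \((u\alpha,\, u\alpha \cdot v) \in \semrel{e}_\sigma\) if and only if \(\alpha v \in \widehat{\L}(e)\), where \(u\alpha \cdot v\) denotes gluing along the shared atom \(\alpha\). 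Given this, any \(w = \alpha v\) in the symmetric difference \(\widehat{\L}(e) \,\triangle\, \widehat{\L}(f)\) provides a pair \((\alpha, w)\) separating \(\semrel{e}_\sigma\) from \(\semrel{f}_\sigma\).

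The main obstacle, in both directions, is the loop case. In the forward lemma, one must show \(({\semrel{b}_\sigma \circ \semrel{e}_\sigma})^{*} \circ \semrel{\overline{b}}_\sigma\) equals \(\bigcup_{n} \semrel{(\widehat{\L}(b) \diamond \widehat{\L}(e))^{(n)} \diamond \widehat{\L}(\overline{b})}_\sigma\); this requires a secondary induction on the number of iterations together with a careful alignment of trailing and leading atoms across \(\diamond\). In the canonical-model direction, one must verify that iterated concatenation of guarded strings in the trace model corresponds precisely to the reflexive transitive closure. Once these bookkeeping details about atom-matching are handled, the remaining cases follow routine structural induction, matching directly the inductive clauses defining \(\diamond\) and \(L^{(*)}\).
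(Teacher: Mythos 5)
The paper does not prove this statement itself---it is cited verbatim from~\cite{gkatpopl}---so there is no in-paper proof to compare against; your argument is, however, correct and follows the same two-step strategy as that source: factor every relational interpretation through the guarded-string semantics via the lemma $\semrel{e}_\sigma = \bigcup_{w \in \widehat{\L}(e)} \semrel{w}_\sigma$, and then realize the guarded-string model itself as a relational interpretation (states are guarded strings, actions append $p\beta$) to separate any two language-inequivalent expressions. The bookkeeping points you flag do go through as described: atoms denote pairwise-disjoint subidentities whose union is $\semrel{1}_\sigma$, so $\diamond$ corresponds exactly to relational composition, and the loop case reduces to matching $L^{(n)}$ with the $n$-fold composite (with $L^{(0)} = \At$ matching the identity relation).
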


Moreover, since skip-free \GKAT expressions are also \GKAT expressions, this means that we now have two language interpretations of the former, given by $\widehat{\L}$ and $\L$.
Fortunately, one can easily be expressed in terms of the other.

\begin{restatable}{lemma}{restatelanguagerecover}%
\label{lemma:language-recover}
For $e \in \GExp^-$, it holds that $\widehat{\L}(e) = \L(e) \cdot \At$.
\end{restatable}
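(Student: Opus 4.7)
The proof proceeds by structural induction on $e \in \GExp^-$, where an expression of the form $e_1^{(b)} e_2$ is interpreted as $e_1^{(b)} \cdot e_2$ when passed to $\widehat{\L}$. The base cases are immediate from the definitions: $\widehat{\L}(0) = \{\alpha \in \At : \alpha \leq 0\} = \emptyset = \L(0) \cdot \At$, and $\widehat{\L}(p) = \{\alpha p \beta : \alpha, \beta \in \At\}$, which coincides with $\L(p) \cdot \At = \{\alpha p : \alpha \in \At\} \cdot \At$.

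For the inductive steps, the key is to establish two elementary identities about $\diamond$ whose right-hand argument has the form $L \cdot \At$ for a skip-free guarded language $L$:
\begin{enumerate}
    \item[(i)] $\widehat{\L}(b) \diamond (L \cdot \At) = (bL) \cdot \At$ for any $b \in \BExp$;
    \item[(ii)] $(L_1 \cdot \At) \diamond (L_2 \cdot \At) = (L_1 \cdot L_2) \cdot \At$ for any skip-free $L_1, L_2$.
\end{enumerate}
Identity (i) holds because matching a single-atom word in $\widehat{\L}(b)$ against a word of $L \cdot \At$ forces that word to begin with an atom satisfying $b$. Identity (ii) holds because any word $\alpha x \in L_2 \cdot \At$ must begin with the (unique) leading atom of some $y \in L_2$, so the match point of the diamond always falls exactly on the boundary between a word of $L_1$ and a word of $L_2$. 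With these in hand, the cases $e_1 +_b e_2$ and $e_1 \cdot e_2$ follow immediately from the inductive hypothesis and \cref{lem:language}; for example, $\widehat{\L}(e_1 \cdot e_2) = \widehat{\L}(e_1) \diamond \widehat{\L}(e_2) = (\L(e_1) \cdot \At) \diamond (\L(e_2) \cdot \At) = \L(e_1 \cdot e_2) \cdot \At$ by (ii), and the $+_b$ case splits into two pieces, each handled by (i).

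The main obstacle is the loop case $e_1^{(b)} e_2$. Writing $L_i = \L(e_i)$ and $M = \widehat{\L}(b) \diamond \widehat{\L}(e_1)$, the inductive hypothesis together with (i) give $M = (bL_1) \cdot \At$. A secondary induction on $n$ then shows $M^{(n)} = (bL_1)^n \cdot \At$: the base uses $M^{(0)} = \At$ and $(bL_1)^0 \cdot \At = \{\epsilon\} \cdot \At = \At$, and the step applies (ii) to $M \diamond M^{(n)} = (bL_1 \cdot \At) \diamond ((bL_1)^n \cdot \At) = (bL_1)^{n+1} \cdot \At$. Taking the union gives $M^{(*)} = \bigcup_n (bL_1)^n \cdot \At$. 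Applying (i) with $\bar b$ in place of $b$ and then (ii) with $L_2 \cdot \At$ collapses $M^{(*)} \diamond \widehat{\L}(\bar b) \diamond \widehat{\L}(e_2)$ to $\bigcup_n (bL_1)^n \cdot (\bar b L_2) \cdot \At$, which by \cref{lem:language} is precisely $\L(e_1^{(b)} e_2) \cdot \At$. The only genuine subtlety lies in the interaction between the skip-free concatenation $\cdot$ and the diamond $\diamond$, and identities (i) and (ii) settle this once and for all.
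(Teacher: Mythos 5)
Your proof is correct and follows essentially the same route as the paper's: structural induction on $e$, with the auxiliary identity $\widehat{\L}(b) \diamond (L \cdot \At) = (bL)\cdot\At$ doing the main work and the loop case handled by unfolding $(-)^{(*)}$ into the union of powers. The only difference is presentational: you state identity (ii) and the secondary induction on $n$ explicitly, whereas the paper absorbs these into its displayed computation.
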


As an easy consequence of the above, we find that the two semantics must identify the same skip-free \GKAT-expressions.

\begin{lemma}
For $e, f \in \GExp^-$, we have $\L(e) = \L(f)$ iff $\widehat{\L}(e) = \widehat{\L}(f)$.
\end{lemma}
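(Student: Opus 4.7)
The plan is to reduce to the immediately preceding \Cref{lemma:language-recover}, which states $\widehat{\L}(e) = \L(e) \cdot \At$ for every $e \in \GExp^-$. Given this, the forward implication is immediate: if $\L(e) = \L(f)$, then $\L(e) \cdot \At = \L(f) \cdot \At$, so $\widehat{\L}(e) = \widehat{\L}(f)$.

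The converse direction requires one more observation: the operation $L \mapsto L \cdot \At$ is injective on subsets of skip-free guarded traces. Concretely, if $L, K \subseteq \At \cdot (\Sigma \cdot \At)^* \cdot \Sigma$ satisfy $L \cdot \At = K \cdot \At$, then $L = K$. The intuition is that appending an atom should be ``reversible'' by stripping the final symbol, which works precisely because skip-free guarded traces terminate with an action, so in any factorization $w\beta = u\gamma$ with $w, u$ skip-free traces and $\beta, \gamma \in \At$, the last symbols $\beta$ and $\gamma$ must coincide (both are the unique trailing atom), forcing $w = u$.

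To apply this to our situation, note that the free Boolean algebra generated by $T$ has $2^{|T|} \geq 1$ atoms, so $\At$ is nonempty; pick any $\beta \in \At$. Then for any $w \in \L(e)$, the string $w\beta$ lies in $\L(e) \cdot \At = \widehat{\L}(e) = \widehat{\L}(f) = \L(f) \cdot \At$, so by the injectivity argument $w \in \L(f)$. The symmetric argument yields $\L(e) = \L(f)$. I do not expect any serious obstacle; the only subtlety is verifying that the boundary between $w$ and the appended atom in a word of $L \cdot \At$ is unambiguous, which is a direct length/position check using the definition of skip-free guarded traces.
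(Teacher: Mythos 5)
Your proposal is correct and follows exactly the route the paper intends: the paper derives this lemma as an immediate consequence of \cref{lemma:language-recover}, leaving the converse direction implicit, and your injectivity argument (every skip-free guarded trace ends in an action, so the trailing atom in a word of $\L(e)\cdot\At$ is unambiguous and can be stripped) is precisely the detail that makes that step go through.
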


By \Cref{thm:completeness II}, these properties imply that $\equiv$ also axiomatizes relational equivalence of skip-free GKAT-expressions, as a result.

\begin{corollary}
Let $e, f \in \GExp^-$, we have $e \equiv f$ if and only if $\semrel{e}_\sigma = \semrel{f}_\sigma$ for all relational interpretations $\sigma$.
\end{corollary}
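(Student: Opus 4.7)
The plan is to chain together the three available results that bridge the four notions of equivalence at stake: provable equivalence $\equiv$, skip-free language equivalence $\sim_\L$, full-GKAT language equivalence via $\widehat{\L}$, and relational equivalence under all interpretations $\sigma$.

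First, I would use the soundness direction of \Cref{thm:soundness}(2) together with the completeness result \Cref{thm:completeness II} to obtain the biconditional
\[
    e \equiv f \iff \L(e) = \L(f).
\]
This is already half of the work, and it is the substantive part — the remaining steps merely transport the characterization from $\L$ to $\semrel{-}_\sigma$ through $\widehat{\L}$.

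Next, I would apply \Cref{lemma:language-recover}, which states $\widehat{\L}(e) = \L(e)\cdot\At$ for skip-free $e$. Since appending every atom on the right is an injective operation on languages of skip-free guarded traces (given a trace $w \in \widehat{\L}(e)$, stripping the final atom recovers the unique corresponding element of $\L(e)$), we get
\[
    \L(e) = \L(f) \iff \widehat{\L}(e) = \widehat{\L}(f),
\]
which is exactly the lemma just preceding the corollary. Finally, invoking the cited theorem from \cite{gkatpopl} — namely $\widehat{\L}(e) = \widehat{\L}(f)$ iff $\semrel{e}_\sigma = \semrel{f}_\sigma$ for every relational interpretation $\sigma$ — closes the chain.

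There is no real obstacle here: this is a pure chaining argument and all the heavy lifting has been done in earlier sections. The only place requiring a sentence of care is the passage from $\L$ to $\widehat{\L}$, where one must note that the map $L \mapsto L\cdot\At$ is injective on subsets of $\At\cdot(\Sigma\cdot\At)^*$ that arise as $\widehat{\L}(e)$ for a skip-free expression, so that equality after appending $\At$ reflects back to equality before. Everything else is a direct quotation of results already established.
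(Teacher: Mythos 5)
Your proof is correct and matches the paper's (implicit) argument exactly: chain $e \equiv f \iff \L(e) = \L(f)$ (soundness plus Completeness II) with the lemma that $\L(e) = \L(f) \iff \widehat{\L}(e) = \widehat{\L}(f)$ and the cited theorem relating $\widehat{\L}$ to relational equivalence. Your added remark on the injectivity of $L \mapsto L \cdot \At$ is a correct (and slightly more careful) justification of the intermediate lemma the paper states without proof.
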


\subsection{Equivalences}%
\label{section:lift-equivalences}
Finally, we relate equivalences as proved for skip-free \GKAT expressions to those provable for \GKAT expressions, showing that proofs of equivalence for skip-free \GKAT expressions can be replayed in the larger calculus, without (\textsf{UA}).

\begin{figure}[!t]
    \centering
    \begin{tabular}{c  c  c}
        \underline{\bf Guarded Union} &  \underline{\bf Sequencing} &  \underline{\bf Loops} \\
        \(\begin{aligned}
            x &= x +_b x \\
            x +_b y &= y +_{\bar b} x \\
            x +_b (y +_c z) &= (x +_b y) +_{b \vee c} z \\
            x +_b y &=  bx +_b y \\
            (x +_b y)z &= xz +_b yz
        \end{aligned}\)&
        \(\begin{aligned}
            x(yz) &= (xy)z \\
            0x &= 0 \\
            x0 &\stackrel{(\dagger)}{=} 0 \\
            1x &= x \\
            x1 &= x
        \end{aligned}\)&
        \(
        \begin{gathered}
            \begin{aligned}
            xx^{(b)} +_b 1 &= x^{(b)} \\
            (x +_a 1)^{(b)} &= (ax)^{(b)}
            \end{aligned} \\[2mm]
            \inferrule{z = xz +_b y \quad   E(x) = 0}{z = x^{(b)}y}
        \end{gathered}
        \)
    \end{tabular}
    \caption{
        Axioms for language semantics \GKAT (without the Boolean algebra axioms for tests).
        The function $E: \GExp \to \BExp$ is defined below.
        If the axiom marked ($\dagger$) is omitted, the above is conjectured to axiomatize bisimilarity.
    }%
    \label{fig:gkat-full-axioms}
\end{figure}

The axioms of \GKAT as presented in~\cite{gkatpopl,gkaticalp} are provided in \Cref{fig:gkat-full-axioms}.
We write $e \approx_\dagger f$ when $e = f$ is derivable from the axioms in \Cref{fig:gkat-full-axioms} with the exception of ($\dagger$), and $e \approx f$ when $e = f$ is derivable from the full set.

The last axiom of \GKAT is not really a single axiom, but rather an \emph{axiom scheme}, parameterized by the function $E: \GExp \to \BExp$ defined as follows:
\begin{gather*}
E(b) = b
\qquad
E(p) = 0
\qquad
E(e +_b f) = (b \wedge E(e)) \vee (\overline{b} \wedge E(f))
\\
E(e \cdot f) = E(e) \wedge E(f)
\qquad
E(e^{(b)}) = \overline{b}
\end{gather*}
The function $E$ models the analogue of Salomaa's \emph{empty word property}~\cite{salomaa}: we say $e$ is \emph{guarded} when $E(b)$ is equivalent to $0$ by to the laws of Boolean algebra.
Notice that as \GKAT expressions, skip-free \GKAT expressions are always guarded.

Since skip-free \GKAT expressions are also \GKAT expressions, we have four notions of equivalence for \GKAT expressions: as skip-free expressions or \GKAT expressions in general, either with or without ($\dagger$).
These are related as follows.

\begin{theorem}
    Let $e, f \in \GExp^-$.
    Then (1) $e \approx_\dagger f$ if and only if $e \equiv_\dagger f$, and (2) $e \approx f$ if and only if $e \equiv f$.
\end{theorem}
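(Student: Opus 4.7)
The plan is to prove each biconditional by two independent implications. For the forward inclusions $\equiv_\dagger \subseteq \approx_\dagger$ and $\equiv \subseteq \approx$, I would re-derive each axiom of \Cref{fig:gkat axioms} inside the larger calculus of \Cref{fig:gkat-full-axioms}. For the reverse inclusions, I would invoke soundness of the full-$\GKAT$ axioms together with \Cref{thm:completeness I,thm:completeness II}, routing the equivalence through a semantic detour and back.

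Most axioms carry over verbatim: G1, G2, G3, G6, G7, G8 and $(\dagger)$ all appear as such in the full system. Axiom (FP) follows by right-multiplying $x^{(b)} = xx^{(b)} +_b 1$ by $y$, distributing via $(x +_b y)z = xz +_b yz$, and simplifying with $1\cdot y = y$. The RSP rule of skip-free $\GKAT$ has no side condition, but every $e \in \GExp^-$ satisfies $E(e) = 0$ (easy structural induction, since $\GExp^-$ excludes tests and bare loops), so the guardedness premise of the full-$\GKAT$ RSP rule is automatic and the rule applies unrestrictedly to skip-free inputs. The only genuinely interesting case is G0, namely $x = x +_1 y$: using swap ($x +_b y = y +_{\bar b} x$) and filter ($x +_b y = b x +_b y$) together with $0\cdot x = 0$, one computes $x +_1 y = y +_0 x = 0\cdot y +_0 x = 0 +_0 x$ and similarly $x = x +_0 x = 0\cdot x +_0 x = 0 +_0 x$, giving $x = x +_1 y$.

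For the reverse direction, suppose first $e \approx_\dagger f$ with $e, f \in \GExp^-$. Soundness of the full-$\GKAT$ axioms without $(\dagger)$ with respect to bisimilarity in $(\GExp, \delta)$ (as established in~\cite{gkaticalp}) gives $e \bisim f$ in $(\GExp, \delta)$. \Cref{lemma:relate-bisim-semantics} transports this to $e \bisim f$ in $(\GExp^-, \partial)$, and \Cref{thm:completeness I} then yields $e \equiv_\dagger f$. Analogously, if $e \approx f$, then by soundness of the full axiomatization with respect to the guarded-string language semantics~\cite{gkatpopl} we have $\widehat{\L}(e) = \widehat{\L}(f)$; by \Cref{lemma:language-recover} this rewrites to $\L(e) \cdot \At = \L(f) \cdot \At$. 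Because every skip-free guarded trace ends in an action, stripping the trailing atom is injective, so $\L(e) = \L(f)$, and \Cref{thm:completeness II} concludes $e \equiv f$.

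The main obstacle I expect is bookkeeping rather than invention: I need the soundness of $\approx_\dagger$ and $\approx$ to be stated with respect to the variants of $(\GExp, \delta)$ and $\widehat{\L}$ used in the present paper. For bisimilarity this requires the remark that the two formulations of $\delta$ (the one here versus the one in~\cite{gkaticalp}) identify the same bisimulation classes; for languages the step from $\L \cdot \At$ back to $\L$ needs a short argument that skip-free traces are uniquely recoverable from their padded form. With these minor points handled, the theorem is essentially the payoff of the completeness work in \Cref{sec:bisim complete,sec:completeness for skip-free GKAT} combined with the semantic bridges \Cref{lemma:relate-bisim-semantics,lemma:language-recover}.
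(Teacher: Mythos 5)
Your proposal is correct and follows essentially the same route as the paper: one direction by replaying skip-free derivations in the full calculus (with the key observation that $E(e) = 0$ for all $e \in \GExp^-$, so the guardedness side condition of the full \textsf{RSP} rule is automatic), and the other by soundness of the full axioms followed by \Cref{lemma:relate-bisim-semantics} resp.\ \Cref{lemma:language-recover} and then \Cref{thm:completeness I} resp.\ \Cref{thm:completeness II}. You simply spell out details (the derivations of G0 and FP, and the recoverability of $\L$ from $\L \cdot \At$) that the paper leaves implicit.
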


\begin{proof}
    For the forward direction of (1), we note that if $e \approx_\dagger f$, then $e \bisim f$ in \((\GExp, \delta)\) by \Cref{thm:soundness}.
    By \Cref{lemma:relate-bisim-semantics}, $e \bisim f$ in \((\GExp^-, \delta)\) and therefore $e \equiv_\dagger f$ by \Cref{thm:completeness I}.
    Conversely, note that any proof of $e = f$ by the axioms of \Cref{fig:gkat axioms} can be replayed using the rules from \Cref{fig:gkat-full-axioms}.
    In particular, the guardedness condition required for the last skip-free \GKAT axiom using the last \GKAT axiom is always true, because $E(g) \approx_\dagger 0$ for any $g \in \GExp^-$.

    The proof of the second claim is similar, but uses \Cref{thm:completeness I} instead.
\end{proof}

\section{Related Work}

This paper fits into a larger research program focused on understanding the logical and algebraic content of programming.
Kleene's paper introducing the algebra of regular languages~\cite{kleene} was a foundational contribution to this research program, containing an algebraic account of mechanical programming and some of its sound equational laws.
The paper also contained an interesting completeness problem: give a complete description of the equations satisfied by the algebra of regular languages.
Salomaa was the first to provide a sound and complete axiomatization of language equivalence for regular expressions~\cite{salomaa}.

The axiomatization in op.\ cit.\ included an inference rule with a side condition that prevented it from being \emph{algebraic} in the sense that the validity of an equation is not preserved when substituting letters for arbitrary regular expressions.
Nevertheless, this inspired axiomatizations of several variations and extensions of Kleene algebra~\cite{ska,gkatpopl,processesparametrised}, as well as Milner's axiomatization of the algebra of star behaviours~\cite{milner}.
The side condition introduced by Salomaa is often called the \emph{empty word property}, an early version of a concept from process theory called \emph{guardedness}\footnote{This is a different use of the word ``guarded'' than in ``guarded Kleene algebra with tests''. In the context of process theory, a recursive specification is guarded if every of its function calls occurs within the scope of an operation.} that is also fundamental to the theory of iteration~\cite{bloomesik}.

Our axiomatization of skip-free \GKAT \emph{is} algebraic due to the lack of a guardedness side-condition (it is an equational \emph{Horn theory}~\cite{horntheory}).
This is particularly desirable because it allows for an abundance of other models of the axioms.
Kozen proposed an algebraic axiomatization of Kleene algebra that is sound and complete for language equivalence~\cite{kozen}, which has become the basis for a number of axiomatizations of other Kleene algebra variants~\cite{netkat-decision,kao,ckao,pocka} including Kleene algebra with tests~\cite{katintro}.
\textsf{KAT} also has a plethora of relational models, which are desirable for reasons we hinted at in~\cref{sec:overview}.

\textsf{GKAT} is a fragment of \textsf{KAT} that was first identified in~\cite{kozentseng}.
It was later given a sound and complete axiomatization in~\cite{gkatpopl}, although the axiomatization is neither algebraic nor finite (it includes (\textsf{UA}), an axiom scheme that stands for infinitely many axioms).
It was later shown that dropping \(x\cdot 0 = 0\) (called (S3) in~\cite{gkatpopl}) from this axiomatization gives a sound and complete axiomatization of bisimilarity~\cite{gkaticalp}.
The inspiration for our pruning technique is also in~\cite{gkaticalp}, where a reduction of the language equivalence case to the bisimilarity case is discussed.

Despite the existence of an algebraic axiomatization of language equivalence in \textsf{KAT}, \textsf{GKAT} has resisted algebraic axiomatization so far.
Skip-free \textsf{GKAT} happens to be a fragment of \textsf{GKAT} in which every expression is guarded, thus eliminating the need for the side condition in \cref{fig:gkat-full-axioms} and allowing for an algebraic axiomatization.
An inequational axiomatization resembling that of \textsf{KAT} might be gleaned from the recent preprint~\cite{orderedprocesses}, but we have not investigated this carefully.
The \GKAT axioms for bisimilarity of ground terms can also likely be obtained from the small-step semantics of \GKAT using~\cite{aceto94complete,aceto11gsos,aceto11preg}, but unfortunately this does not appear to help with the larger completeness problem.

The idea of reducing one completeness problem in Kleene algebra to another is common in Kleene algebra; for instance, it is behind the completeness proof of \KAT~\cite{katcompleteness}.
Cohen also reduced \emph{weak Kleene algebra} as an axiomatization of star expressions up to simulation to \emph{monodic trees}~\cite{ernie}, whose completeness was conjectured by Takai and Furusawa~\cite{monodic}.
Grabmayer's solution to the completeness problem of regular expressions modulo bisimulation~\cite{regexlics} can also be seen as a reduction to the one-free case~\cite{onefreeregexlics}, since his \emph{crystallization} procedure produces an automaton that can be solved using the technique found in op.\ cit.
Other instances of reductions include~\cite{cohen-1994,netkat,hypotheses,pocka,kao,cka,cka-precursor,ka-top,kaequations}.
Recent work has started to study reductions and their compositionality properties~\cite{hypotheses,ckao,kah-tools}.

\section{Discussion}

We continue the study of efficient fragments of Kleene Algebra with Tests (\KAT) initiated in~\cite{gkatpopl}, where the authors introduced Guarded Kleene Algebra with Tests (\GKAT) and provided an efficient decision procedure for equivalence. They also proposed a candidate axiomatization, but left open two questions.
\begin{itemize}
\item
The first question concerned the existence of an algebraic axiomatization, which is an axiomatization that is closed under substitution---i.e., where one can prove properties about a certain program $p$ and then use $p$ as a variable in the context of a larger program, being able to substitute as needed.
This is essential to enable compositional analysis.
\item
The second question left open in~\cite{gkatpopl} was whether an axiomatization that did not require an axiom scheme was possible.
Having a completeness proof that does not require an axiom scheme to reason about mutually dependent loops is again essential for scalability: we should be able to axiomatize single loops and generalize this behaviour to multiple, potentially nested, loops.
\end{itemize}
In this paper, we identified a large fragment of \GKAT, which we call \emph{skip-free \GKAT} ($\GKAT^-$), that can be axiomatized algebraically without relying on an axiom scheme.
We show how the axiomatization works well for two types of equivalence: bisimilarity and language equivalence, by proving completeness results for both semantics. Having the two semantics is interesting from a verification point of view as it gives access to different levels of precision when analyzing program behaviour, but also enables a layered approach to the completeness proofs.

We provide a reduction of the completeness proof for language semantics to the one for bisimilarity. Moreover, the latter is connected to a recently solved~\cite{regexlics} problem proposed by Milner. This approach enabled two things: it breaks down the completeness proofs and reuses some of the techniques while also highlighting the exact difference between the two equivalences (captured by the axiom $e\cdot 0\equiv 0$ which does not hold for bisimilarity). We also showed that proofs of equivalence in skip-free \GKAT transfer without any loss to proofs of equivalence in \GKAT. 

There are several directions for future work. The bridge between process algebra and Kleene algebra has not been exploited to its full potential. The fact that we could reuse results by Grabmayer and Fokkink~\cite{regexlics,onefreeregexlics} was a major step towards completeness. An independent proof would have been much more complex and very likely required the development of technical tools resembling those in~\cite{regexlics,onefreeregexlics}. We hope the results in this paper can be taken further and more results can be exchanged between the two communities to solve open problems.

The completeness problem for full \GKAT remains open, but our completeness results for skip-free \GKAT are encouraging.
We believe they show a path towards studying whether an algebraic axiomatization can be devised or a negative result can be proved.
A first step in exploring a completeness result would be to try extending Grabmayer's completeness result~\cite{regexlics} to a setting with output variables---this is a non-trivial exploration, but we are hopeful that it will yield new tools for completeness.
As mentioned in the introduction, \textsf{NetKAT}~\cite{netkat} (and its probabilistic variants~\cite{probnetkat,cantorscott}) have been one of the most successful extensions of \KAT. We believe the step from skip-free \GKAT to a skip-free guarded version of \textsf{NetKAT} is also a worthwhile exploration. Following~\cite{kmt}, we hope to be able to explore these extensions in a modular and parametric way. 

\subsubsection*{Acknowledgements}{A.~Silva and T.~Schmid were partially funded by ERC grant Autoprobe (grant agreement 101002697). T.~Kapp\'{e} was supported by the EU’s Horizon 2020 research and innovation programme under Marie Sk\l{}odowska-Curie grant agreement No. 101027412 (VERLAN).}

\bibliographystyle{splncs04}
\bibliography{main.bib}

\begin{thebibliography}{10}
\providecommand{\url}[1]{\texttt{#1}}
\providecommand{\urlprefix}{URL }
\providecommand{\doi}[1]{https://doi.org/#1}

\bibitem{aceto94complete}
Aceto, L.: Deriving complete inference systems for a class of {GSOS} languages generation regular behaviours. In: CONCUR. pp. 449--464 (1994). \doi{10.1007/978-3-540-48654-1_33}

\bibitem{aceto11gsos}
Aceto, L., Caltais, G., Goriac, E., Ing{\'{o}}lfsd{\'{o}}ttir, A.: Axiomatizing {GSOS} with predicates. In: SOS. pp. 1--15 (2011). \doi{10.4204/EPTCS.62.1}

\bibitem{aceto11preg}
Aceto, L., Caltais, G., Goriac, E., Ing{\'{o}}lfsd{\'{o}}ttir, A.: {PREG} axiomatizer - {A} ground bisimilarity checker for {GSOS} with predicates. In: CALCO. pp. 378--385 (2011). \doi{10.1007/978-3-642-22944-2_27}

\bibitem{netkat}
Anderson, C.J., Foster, N., Guha, A., Jeannin, J.B., Kozen, D., Schlesinger, C., Walker, D.: {NetKAT}: semantic foundations for networks. In: POPL. pp. 113--126 (2014). \doi{10.1145/2535838.2535862}

\bibitem{awodey}
Awodey, S.: Category theory. Oxford university press (2010)

\bibitem{birkhoff}
Birkhoff, G.: On the structure of abstract algebras. Mathematical Proceedings of the Cambridge Philosophical Society  \textbf{31}(4),  433–454 (1935). \doi{10.1017/S0305004100013463}

\bibitem{bloomesik}
Bloom, S.L., {\'{E}}sik, Z.: Iteration Theories - The Equational Logic of Iterative Processes. {EATCS} Monographs on Theoretical Computer Science, Springer (1993). \doi{10.1007/978-3-642-78034-9}

\bibitem{brzozowski}
Brzozowski, J.A.: Derivatives of regular expressions. J. {ACM}  \textbf{11}(4),  481--494 (1964). \doi{10.1145/321239.321249}

\bibitem{DBLP:conf/pldi/ChajedTKZ19}
Chajed, T., Tassarotti, J., Kaashoek, M.F., Zeldovich, N.: Argosy: verifying layered storage systems with recovery refinement. In: PLDI. pp. 1054--1068 (2019). \doi{10.1145/3314221.3314585}

\bibitem{cohen-1994}
Cohen, E.: Hypotheses in {K}leene algebra. Tech. rep., Bellcore (1994)

\bibitem{ernie}
Cohen, E.: Weak {K}leene algebra is sound and (possibly) complete for simulation (2009). \doi{10.48550/arXiv.0910.1028}

\bibitem{hypotheses}
Doumane, A., Kuperberg, D., Pous, D., Pradic, P.: Kleene algebra with hypotheses. In: FOSSACS. pp. 207--223 (2019). \doi{10.1007/978-3-030-17127-8_12}

\bibitem{probnetkat}
Foster, N., Kozen, D., Mamouras, K., Reitblatt, M., Silva, A.: Probabilistic {NetKAT}. In: ESOP. pp. 282--309 (2016). \doi{10.1007/978-3-662-49498-1_12}

\bibitem{netkat-decision}
Foster, N., Kozen, D., Milano, M., Silva, A., Thompson, L.: A coalgebraic decision procedure for {NetKAT}. In: POPL. pp. 343--355 (2015). \doi{10.1145/2676726.2677011}

\bibitem{regexlics}
Grabmayer, C.: Milner's proof system for regular expressions modulo bisimilarity is complete: Crystallization: Near-collapsing process graph interpretations of regular expressions. In: LICS. pp. 34:1--34:13 (2022). \doi{10.1145/3531130.3532430}

\bibitem{onefreeregexlics}
Grabmayer, C., Fokkink, W.J.: A complete proof system for 1-free regular expressions modulo bisimilarity. In: LICS. pp. 465--478 (2020). \doi{10.1145/3373718.3394744}

\bibitem{kmt}
Greenberg, M., Beckett, R., Campbell, E.H.: Kleene algebra modulo theories: a framework for concrete {KAT}s. In: PLDI. pp. 594--608 (2022). \doi{10.1145/3519939.3523722}

\bibitem{gumm}
Gumm, H.P.: Functors for coalgebras. Algebra Universalis  \textbf{45} (11 1998). \doi{10.1007/s00012-001-8156-x}

\bibitem{gummelements}
Gumm, H.P.: Elements of the general theory of coalgebras. LUATCS'99, Rand Afrikaans University, Johannesburg  (1999)

\bibitem{huntington}
Huntington, E.V.: Sets of independent postulates for the algebra of logic. Transactions of the American Mathematical Society  \textbf{5}(3),  288--309 (1904). \doi{10.1090/S0002-9947-1904-1500675-4}

\bibitem{kao}
Kapp{\'{e}}, T., Brunet, P., Rot, J., Silva, A., Wagemaker, J., Zanasi, F.: Kleene algebra with observations. In: CONCUR. pp. 41:1--41:16 (2019). \doi{10.4230/LIPIcs.CONCUR.2019.41}

\bibitem{ckao}
Kapp{\'{e}}, T., Brunet, P., Silva, A., Wagemaker, J., Zanasi, F.: Concurrent {K}leene algebra with observations: From hypotheses to completeness. In: FOSSACS. pp. 381--400 (2020). \doi{10.1007/978-3-030-45231-5_20}

\bibitem{cka}
Kapp\'{e}, T., Brunet, P., Silva, A., Zanasi, F.: Concurrent {K}leene algebra: Free model and completeness. In: ESOP. pp. 856--882 (2018). \doi{10.1007/978-3-319-89884-1_30}

\bibitem{kleene}
Kleene, S.C.: Representation of events in nerve nets and finite automata. Automata studies  \textbf{34},  3--41 (1956)

\bibitem{DBLP:journals/entcs/KotK05}
Kot, L., Kozen, D.: Kleene algebra and bytecode verification. Electron. Notes Theor. Comput. Sci.  \textbf{141}(1),  221--236 (2005). \doi{10.1016/j.entcs.2005.02.028}

\bibitem{kozen}
Kozen, D.: A completeness theorem for {K}leene algebras and the algebra of regular events. Inf. Comput.  \textbf{110}(2),  366--390 (1994). \doi{10.1006/inco.1994.1037}

\bibitem{katintro}
Kozen, D.: Kleene algebra with tests and commutativity conditions. In: TACAS. pp. 14--33 (1996). \doi{10.1007/3-540-61042-1_35}

\bibitem{kaequations}
Kozen, D., Mamouras, K.: Kleene algebra with equations. In: ICALP. pp. 280--292 (2014). \doi{10.1007/978-3-662-43951-7_24}

\bibitem{DBLP:conf/cl/KozenP00}
Kozen, D., Patron, M.: Certification of compiler optimizations using {K}leene algebra with tests. In: CL. pp. 568--582 (2000). \doi{10.1007/3-540-44957-4_38}

\bibitem{katcompleteness}
Kozen, D., Smith, F.: Kleene algebra with tests: Completeness and decidability. In: CSL. pp. 244--259 (1996). \doi{10.1007/3-540-63172-0_43}

\bibitem{kozentseng}
Kozen, D., Tseng, W.D.: The {B}{\"{o}}hm-{J}acopini theorem is false, propositionally. In: MPC. pp. 177--192 (2008). \doi{10.1007/978-3-540-70594-9_11}

\bibitem{cka-precursor}
Laurence, M.R., Struth, G.: Completeness theorems for pomset languages and concurrent {K}leene algebras (2017). \doi{10.48550/arXiv.1705.05896}

\bibitem{horntheory}
Makowsky, J.A.: Why {H}orn formulas matter in computer science: Initial structures and generic examples. J. Comput. Syst. Sci.  \textbf{34}(2/3),  266--292 (1987). \doi{10.1016/0022-0000(87)90027-4}

\bibitem{milner}
Milner, R.: A complete inference system for a class of regular behaviours. J. Comput. Syst. Sci.  \textbf{28}(3),  439--466 (1984). \doi{10.1016/0022-0000(84)90023-0}

\bibitem{kah-tools}
Pous, D., Rot, J., Wagemaker, J.: On tools for completeness of {K}leene algebra with hypotheses. In: RAMICS. pp. 378--395 (2021). \doi{10.1007/978-3-030-88701-8_23}

\bibitem{ka-top}
Pous, D., Wagemaker, J.: Completeness theorems for {K}leene algebra with top. In: CONCUR. pp. 26:1--26:18 (2022). \doi{10.4230/LIPIcs.CONCUR.2022.26}

\bibitem{fizzbuzz}
Rees, J.: Fizz Buzz: 101 Spoken Numeracy Games. Learning Development Aids (2002)

\bibitem{coalgebra}
Rutten, J.J.M.M.: Universal coalgebra: a theory of systems. Theor. Comput. Sci.  \textbf{249}(1),  3--80 (2000). \doi{10.1016/S0304-3975(00)00056-6}

\bibitem{salomaa}
Salomaa, A.: Two complete axiom systems for the algebra of regular events. J. {ACM}  \textbf{13}(1),  158--169 (1966). \doi{10.1145/321312.321326}

\bibitem{orderedprocesses}
Schmid, T.: A (co)algebraic framework for ordered processes (2022). \doi{10.48550/arXiv.2209.00634}

\bibitem{gkaticalp}
Schmid, T., Kapp{\'{e}}, T., Kozen, D., Silva, A.: Guarded {K}leene algebra with tests: Coequations, coinduction, and completeness. In: ICALP. pp. 142:1--142:14 (2021). \doi{10.4230/LIPIcs.ICALP.2021.142}

\bibitem{starmfps}
Schmid, T., Rot, J., Silva, A.: On star expressions and coalgebraic completeness theorems. In: MFPS. pp. 242--259 (2021). \doi{10.4204/EPTCS.351.15}

\bibitem{processesparametrised}
Schmid, T., Rozowski, W., Silva, A., Rot, J.: Processes parametrised by an algebraic theory. In: ICALP. pp. 132:1--132:20 (2022). \doi{10.4230/LIPIcs.ICALP.2022.132}

\bibitem{gkatpopl}
Smolka, S., Foster, N., Hsu, J., Kapp{\'{e}}, T., Kozen, D., Silva, A.: Guarded {K}leene algebra with tests: verification of uninterpreted programs in nearly linear time. In: POPL. pp. 61:1--61:28 (2020). \doi{10.1145/3371129}

\bibitem{cantorscott}
Smolka, S., Kumar, P., Foster, N., Kozen, D., Silva, A.: Cantor meets {S}cott: semantic foundations for probabilistic networks. In: POPL. pp. 557--571 (2017). \doi{10.1145/3009837.3009843}

\bibitem{mcnetkat}
Smolka, S., Kumar, P., Kahn, D.M., Foster, N., Hsu, J., Kozen, D., Silva, A.: Scalable verification of probabilistic networks. In: PLDI. pp. 190--203 (2019). \doi{10.1145/3314221.3314639}

\bibitem{monodic}
Takai, T., Furusawa, H.: Monodic tree {K}leene algebra. In: RelMICS/AKA. pp. 402--416 (2006). \doi{10.1007/11828563_27}

\bibitem{ska}
Wagemaker, J., Bonsangue, M.M., Kapp{\'{e}}, T., Rot, J., Silva, A.: Completeness and incompleteness of synchronous {K}leene algebra. In: MPC. pp. 385--413 (2019). \doi{10.1007/978-3-030-33636-3_14}

\bibitem{pocka}
Wagemaker, J., Brunet, P., Docherty, S., Kapp{\'{e}}, T., Rot, J., Silva, A.: Partially observable concurrent {K}leene algebra. In: CONCUR. pp. 20:1--20:22 (2020). \doi{10.4230/LIPIcs.CONCUR.2020.20}

\bibitem{gkatlearning}
Zetzsche, S., Silva, A., Sammartino, M.: Guarded {K}leene algebra with tests: Automata learning (2022). \doi{10.48550/arXiv.2204.14153}

\end{thebibliography}

\ifarxiv%
\clearpage

\appendix
\allowdisplaybreaks%

\section{Coalgebra}%
\label{appendices}%
\label{app:coalgebra}

In the main text of the paper, we avoided using the language of universal coalgebra~\cite{coalgebra} in the presentation to not deter from the main concepts which can be described concretely. We have however used coalgebra in our development and as mentioned in \Cref{footnote:coalg} the concrete definitions in the main text are instances of abstract notions. This is helpful in simplifying proofs and so, in the appendix, we will present the proofs of the results using coalgebra.

In this first section of the appendix, we present the coalgebraic results relevant to our studies of skip-free \GKAT, full \GKAT, and one-free star expressions as they appear in the paper.

Coalgebra makes heavy use of the language of category theory, which we assume the reader is somewhat familiar with (see~\cite{awodey} for an introduction).
For our purposes, we only need the category \(\Set\) of sets and functions, so when we refer to a \emph{functor} \(F\), we really just mean a functor \(F : \Set \to \Set\).

\begin{definition}
	Given a functor \(F\), an \emph{\(F\)-coalgebra} is a pair \((X, h)\) consisting of a set \(X\) of \emph{states} and a \emph{transition} function \(h : X \to FX\).
	An \emph{\(F\)-coalgebra homomorphism} \(f : (X, h) \to (Y, k)\) consists of a function \(f : X \to Y\) such that \(k \circ f = F(f) \circ h\), i.e.\ the diagram below commutes.
	\[\begin{tikzcd}
		X \ar[r, "f"] \ar[d, "h"] & Y \ar[d, "k"] \\
		FX \ar[r, "F(f)"] & FY
	\end{tikzcd}
	\]
	The functor \(F\) is the \emph{coalgebraic signature} of an \(F\)-coalgebra \((X, h)\).
\end{definition}

Many models of computation can be captured as \(F\)-coalgebras for some \(F\), and (universal) coalgebra provides a framework for studying them all at once~\cite{coalgebra}.
Everything that we refer to in the main text as an \emph{automaton} (of some sort) is also an \(F\)-coalgebra for some \(F\).

\begin{example}\label{tab:coalgebraic signatures}
	The coalgebraic signature of skip-free \GKAT is the functor \(G\), where for any set \(X\), any \(f : X \to Y\), any \(\theta \in GX\), and any atomic test \(\alpha \in \At\),
	\begin{gather*}
		GX = (\bot + \Sigma\times(\checkmark + X))^\At
		\qquad
		G(f)(\theta)(\alpha) = \begin{cases}
			(p, f(x)) & \theta(\alpha) = (p, x) \\
			(p, \checkmark) &\theta(\alpha) = (p, \checkmark) \\
			\bot &\theta(\alpha) = \bot
		\end{cases}
	\end{gather*}
	The coalgebraic signature of one-free star behaviours is \(P\)~\cite{starmfps}, where for any set \(X\) and any \(U \subseteq (\At \cdot \Sigma) \times (\checkmark + X)\),
	\begin{gather*}
		PX = \mathcal P((\At \cdot \Sigma)\times(\checkmark + X))
		\\
		P(f)(U) = \{(p, \checkmark) \mid (p, \checkmark) \in U\} \cup \{(p, f(x)) \mid (p, x) \in U\}
	\end{gather*}
	We denote the coalgebraic signature of \GKAT with \(H\), where
	\begin{gather*}
		HX = (\bot + \checkmark + \Sigma\times X)^\At
		\qquad
		H(f)(\psi)(\alpha) = \begin{cases}
			(p, f(x)) & \psi(\alpha) = (p, x) \\
			\checkmark &\psi(\alpha) = \checkmark \\
			\bot &\psi(\alpha) = \bot
		\end{cases}
	\end{gather*}
\end{example}

Since multiple kinds of automata appear in the paper, results from coalgebra help us avoid duplicating proofs.
For example, all the notions of bisimulation that appear in this paper are instances of a general notion of bisimulation suggested by universal coalgebra.

\begin{definition}\label{def:general bisim}
	A \emph{bisimulation} between \(F\)-coalgebras \((X, h)\) and \((Y, k)\) is a relation \(R \subseteq X \times Y\) with an \(F\)-coalgebra structure \((R, j)\) such that the projection maps \(\pi_1 : R \to X\) and \(\pi_2 : R \to Y\) are coalgebra homomorphisms~\cite{coalgebra}.
	\[\begin{tikzcd}
		X \ar[d, "h"]
		& R \ar[d, "j"] \ar[l, "{\pi_1}"'] \ar[r, "{\pi_2}"]
		& Y \ar[d,"k"] \\
		FX
		& FR \ar[l, "{F(\pi_1)}"'] \ar[r, "{F(\pi_2)}"]
		& FY
	\end{tikzcd}
	\]
\end{definition}

\begin{lemma}
	\cref{def:g bisim} is a restatement of \cref{def:general bisim} in the case of \(F = G\), \cref{def:p bisim} is a restatement of \cref{def:general bisim} in the case of \(F = P\), and the definition of a bisimulation for \GKAT automata is a restatement of \cref{def:general bisim} in the case of \(F = H\).
\end{lemma}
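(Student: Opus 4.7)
The plan is to verify each of the three equivalences by unfolding the coalgebraic condition ``\(\pi_1\) and \(\pi_2\) are coalgebra homomorphisms'' to a more concrete condition on the transition structure \(j : R \to FR\), and then match this against the clauses of the concrete definitions. The uniform observation is that asking the two projections to be coalgebra homomorphisms amounts to requiring that for every \((x,y) \in R\) there is some \(j(x,y) \in FR\) with \(F(\pi_1)(j(x,y)) = h(x)\) and \(F(\pi_2)(j(x,y)) = k(y)\). So the problem reduces to showing, for each pair, that such a joint witness in \(FR\) exists if and only if the concrete bisimulation clauses hold.

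For \(F = G\) and \(F = H\), which are both of the form \(FX = (\cdots + \Sigma \times (\cdots + X))^{\At}\), the projection condition is pointwise in \(\alpha \in \At\): \(j(x,y)(\alpha)\) must sit over \(h(x)(\alpha)\) and \(k(y)(\alpha)\) simultaneously. The three possible shapes of an output (failure, immediate termination with an action, successor with an action) correspond exactly to the three clauses of \cref{def:g bisim} (and its \GKAT analogue). In the forward direction, project \(j(x,y)(\alpha)\) through \(\pi_1\) and \(\pi_2\) to recover each clause. In the backward direction, given the clauses, I would define \(j(x,y)(\alpha)\) by cases: \(\bot\) if \(x \downarrow \alpha\) (equivalently \(y \downarrow \alpha\)); \((p, \checkmark)\) if \(x \tr{\alpha \mid p} \checkmark\); and \((p, (x', y'))\) if \(x \tr{\alpha \mid p} x'\), where \(y'\) is chosen by clause 3 so that \(y \tr{\alpha \mid p} y'\) and \(x' \mathrel{R} y'\).

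For \(F = P\), the witness \(j(x,y) \subseteq \At\cdot\Sigma \times (\checkmark + R)\) must satisfy the \emph{set-theoretic} equalities \(P(\pi_1)(j(x,y)) = t(x)\) and \(P(\pi_2)(j(x,y)) = u(y)\). Given a concrete bisimulation \(R\), I would set
\[
    j(x,y) = \{(\alpha p, \checkmark) : x \tr{\alpha p} \checkmark\} \cup \{(\alpha p, (x', y')) : x \tr{\alpha p} x',\ y \tr{\alpha p} y',\ x' \mathrel{R} y'\},
\]
and use clauses 2 and 3 of \cref{def:p bisim} to check that the two projections recover \(t(x)\) and \(u(y)\) respectively. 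Conversely, if such a \(j\) exists, unfolding the projection equalities yields all three clauses back.

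The main obstacle is the \(F = P\) case, where the projection equalities are not pointwise and one must argue surjectivity of \(P(\pi_1)\) and \(P(\pi_2)\) onto the entire sets \(t(x)\) and \(u(y)\); this is precisely where both clauses 2 and 3 of \cref{def:p bisim} are needed. The cases \(F = G\) and \(F = H\) are essentially bookkeeping once the pointwise correspondence is observed.
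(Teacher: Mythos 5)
Your proposal is correct and matches the paper's own argument: the paper likewise establishes the equivalence by constructing the mediating coalgebra structure \(j : R \to FR\) by cases on the matched transitions (it spells this out for \(F = G\) and cites the literature for \(F = P\) and \(F = H\)). Your explicit treatment of the \(P\) case, noting that the projection equalities there are set-valued rather than pointwise and that both matching clauses are needed for surjectivity of the projections onto \(t(x)\) and \(u(y)\), is a sound elaboration of the same idea.
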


\begin{proof}
	We briefly sketch the case for $F = G$.
	The other two cases have been covered elsewhere~\cite{starmfps,gkatpopl}.
	Suppose \((X, h)\), \((Y, k)\) are \(G\)-coalgebras and \(R \subseteq X \times Y\).
	Then the function \(j : R \to GR\) defined
	\[
	j(x, y)(\alpha) = \begin{cases}
		(p, (x', y')) & h(x)(\alpha) = (p, x'),~k(y)(\alpha)=(p, y'),~\text{and}~x' \mathrel{R} y' \\
		(p, \checkmark) & h(x)(\alpha) = (p, \checkmark)~\text{and}~k(y)(\alpha)=(p, \checkmark) \\
		\bot & \text{otherwise}
	\end{cases}
	\]
	defines a bisimulation of \(G\)-coalgebras iff \emph{1.}-\emph{3.} of \cref{def:g bisim} are met.
\end{proof}

\(F\)-coalgebras and \(F\)-coalgebra homomorphisms form a category \(\Coalg(F)\), the properties of which can vary wildly depending on the properties of \(F\).
We follow~\cite{gumm} in describing the desired structure of \(\Coalg(G)\), \(\Coalg(P)\), and \(\Coalg(H)\) based on the coalgebraic signature.

\begin{definition}

	Let \(I\) be a set and \(D = \{f_i \colon X_i \to Y \mid i \in I\}\) be a set of functions to a set \(Y\).
	A \emph{weak pullback} of \(D\) is a set \(P\) and a set of functions \(\{p_i \colon P \to X_i \mid i \in I\}\) from \(P\) such that the following two conditions are met:
	\begin{enumerate}
		\item For all \(i, j \in I\), we have \(f_i \circ p_i = f_j \circ p_j\).
		\item Suppose \(\{q_i \colon Q \to P \mid i \in I\}\) is another set of functions such that for all \(i,j \in I\), \(f_i \circ q_i = f_j \circ q_j\).
		Then there is a function \(\ell \colon Q \to P\) such that for all \(i \in I\), \(q_i = p_i \circ \ell\).
	\end{enumerate}
	Note that \(\ell\) is not necessarily unique.\footnote{This is often called a \emph{generalized} weak pullback, and the typical notion of weak pullback is \emph{binary}, i.e., specifically the \(I = \{1, 2\}\) case. Note that there are functors that preserve binary weak pullbacks that do not preserve generalized weak pullbacks~\cite{gumm}.}

	\paragraph{Weak pullback preservation.} A functor \(F\) is said to \emph{preserve weak pullbacks} if \(\{F(p_i) \colon FP \to FX_i \mid i \in I\}\) is a weak pullback of \(\{F(p_i) \colon FX_i \to FY \mid i \in I\}\) whenever \(\{p_i \colon P \to X_i \mid i \in I\}\) is a weak pullback of \(\{f_i \colon X_i \to Y\}\).
	In the binary case, weak pullback preservation looks like this:
	\[
		\begin{tikzcd}
			Q \ar[dr, dashed, "\ell"] \ar[drr, "q_1", bend left] \ar[ddr, "q_2", bend right] & \\
			& P \ar[r, "p_1"] \ar[d, "p_2"] & X \ar[d, "f_1"] \\
			& Y \ar[r, "f_2"] & Y
		\end{tikzcd}
		\quad\Rightarrow\quad
		\begin{tikzcd}
			Q \ar[dr, dashed, "\ell"] \ar[drr, "q_1", bend left] \ar[ddr, "q_2", bend right] & \\
			& FP \ar[r, "F(p_1)"] \ar[d, "F(p_2)"] & FX \ar[d, "F(f_1)"] \\
			& FY \ar[r, "F(f_2)"] & FY
		\end{tikzcd}
	\]


\end{definition}

\begin{lemma}
	\(G\), \(P\), and \(H\) are weak pullback preserving functors.
\end{lemma}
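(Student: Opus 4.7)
The plan is to decompose each of $G$, $P$, and $H$ into well-understood building blocks and appeal to the fact that weak pullback preservation is closed under composition. Concretely, I would first recall (or establish by short direct arguments) that the following functors on $\Set$ preserve weak pullbacks: the identity functor and any constant functor; the binary product and binary coproduct bifunctors (hence product and coproduct with a fixed set); the exponential $(-)^A$ for any fixed set $A$; and the covariant powerset functor $\mathcal{P}$. In addition, if $F_1$ and $F_2$ preserve weak pullbacks, then so does $F_2 \circ F_1$, since weak pullback preservation is really just a diagrammatic property stable under functor composition.

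Next, I would observe that each of the three functors in the statement is built from these primitives:
\[
    G \;\cong\; (-)^{\At} \;\circ\; \bigl(\bot + \Sigma\times(\checkmark + -)\bigr),
    \qquad
    H \;\cong\; (-)^{\At} \;\circ\; \bigl(\bot + \checkmark + \Sigma\times -\bigr),
\]
\[
    P \;\cong\; \mathcal{P} \;\circ\; \bigl((\At\cdot\Sigma) \times (\checkmark + -)\bigr),
\]
where in each case the bracketed functor is itself a composition of constant coproducts, constant products, and the identity. Chaining the closure properties then yields the result for all three.

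The main obstacle, and the only step where nontrivial work is hidden, is the preservation of weak pullbacks by $\mathcal{P}$: given a (weak) pullback square in $\Set$, one must, for each pair $(S,T)$ of subsets whose direct images along the cospan agree, exhibit an $R \subseteq P$ whose projections recover $S$ and $T$. This is classical but needs a genuine choice of witnesses; I would cite the standard reference (e.g.\ Gumm~\cite{gumm}) rather than redo the argument. The other ingredients---that constants, identity, products, coproducts, and exponentials by a fixed set preserve weak pullbacks---are straightforward diagram chases and can be dispatched in a sentence each.

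Finally, I would remark on the binary versus generalized distinction already flagged in the footnote to the definition: the applications of this lemma in the paper are to \emph{bisimulations}, which are binary relations, so binary weak pullback preservation is all that is required. For $G$ and $H$ (whose set-theoretic ingredients are only constants, products, coproducts, and constant exponentials) the generalized version in fact also holds, since those basic functors preserve arbitrary limits up to the usual weakening; only the $\mathcal{P}$-component of $P$ would need to be examined with more care if the generalized statement were ever needed, but this is not required here.
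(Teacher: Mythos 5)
Your proposal is essentially the paper's own argument: the paper simply observes that $G$, $P$, and $H$ are Kripke polynomial functors---built from constants, identity, products, coproducts, exponentiation by a fixed set, and powerset---and cites the standard reference (Gumm) for the fact that such functors preserve weak pullbacks, which is exactly your decomposition-plus-closure-under-composition plan with the powerset case outsourced to the literature. One caveat: the paper's definition of weak pullback preservation, stated just before this lemma, is the \emph{generalized} (wide) version, and it is used in that generality later (e.g.\ closure of bisimulations under arbitrary intersections and the existence of smallest subcoalgebras), so your closing remark that the generalized case for the $\mathcal{P}$-component of $P$ "is not required here" is not quite licensed; you should instead note that the cited result covers the powerset functor for generalized weak pullbacks as well, which it does.
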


\begin{proof}
	These are all examples of \emph{Kripke polynomial functors}, which are known to preserve weak pullbacks~\cite{gummelements}.
\end{proof}

Functors that preserve weak pullbacks are the subject of the seminal work of Rutten~\cite{coalgebra}.
Bisimulations are particularly well-behaved in this setting.

\begin{lemma}\label{lem:general bisim facts}
	Let \(F\) be a functor, and \((X, h)\) and \((Y, k)\) be \(F\)-coalgebras.
	\begin{enumerate}
		\item If \(R_i \subseteq X \times Y\) is a bisimulation for each \(i \in I\), then \(\bigcup_{i \in I} R_i\) is a bisimulation.
		\item If \(R \subseteq X \times Y\) is a bisimulation, then so is its reverse \(R^{op}\).
		\item Given a third \(F\)-coalgebra \((Z, l)\), if \(F\) preserves pullbacks and \(R \subseteq X \times Y\) and \(Q \subseteq Y \times Z\) are bisimulations, then the relational composition \(Q \circ R\) is also a bisimulation.
		\item If \(F\) preserves weak pullbacks and \(R_i \subseteq X \times Y\) is a bisimulation for each \(i \in I\), then so is their intersection \(\bigcap_{i \in I} R_i\).
	\end{enumerate}
\end{lemma}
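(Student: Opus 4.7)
The plan is to handle each item by exhibiting an $F$-coalgebra structure on the claimed bisimulation and verifying, via functoriality together with the hypothesised weak-pullback or pullback preservation, that the projections to the factors are coalgebra homomorphisms. For item 1, given bisimulations $R_i$ with witnessing structures $j_i : R_i \to FR_i$ and inclusions $\iota_i : R_i \hookrightarrow \bigcup_j R_j$, I would pick for each $(x,y) \in \bigcup_i R_i$ some index $i(x,y)$ with $(x,y) \in R_{i(x,y)}$ and set $j(x,y) = F(\iota_{i(x,y)})(j_{i(x,y)}(x,y))$. Since $\pi_1 \circ \iota_i$ equals the projection of $R_i$, which is a coalgebra homomorphism into $(X, h)$, functoriality yields $F(\pi_1)(j(x,y)) = h(x)$, and similarly for $\pi_2$. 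For item 2, the swap bijection $\sigma : R \to R^{op}$ lets me define $j^{op} = F(\sigma) \circ j \circ \sigma^{-1}$; the projections of $R^{op}$ are $\pi_2^R \circ \sigma^{-1}$ and $\pi_1^R \circ \sigma^{-1}$, so their being coalgebra homomorphisms follows from the corresponding fact for $R$.

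For item 3, I would form the pullback $P = \{(r, u) \in R \times Q : \pi_2^R(r) = \pi_1^Q(u)\}$ in $\Set$, with projections $\rho_R : P \to R$ and $\rho_Q : P \to Q$. Pullback preservation by $F$ supplies, via the universal property applied to the cone $(j_R \circ \rho_R, j_Q \circ \rho_Q)$ over $F(\pi_2^R)$ and $F(\pi_1^Q)$, a structure map $j_P : P \to FP$ making $\rho_R$ and $\rho_Q$ coalgebra homomorphisms. There is a natural surjection $\mu : P \twoheadrightarrow Q \circ R$ sending $(r, u) \mapsto (\pi_1^R(r), \pi_2^Q(u))$. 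Picking a section $s : Q \circ R \to P$ by choice, I set $j_{Q \circ R} = F(\mu) \circ j_P \circ s$. The key identities $\pi_1^{Q \circ R} \circ \mu = \pi_1^R \circ \rho_R$ and $\pi_2^{Q \circ R} \circ \mu = \pi_2^Q \circ \rho_Q$---each a composite of coalgebra homomorphisms---then allow the computation, for any $(x, z) \in Q \circ R$, that $F(\pi_1^{Q \circ R})(j_{Q \circ R}(x, z)) = F(\pi_1^R \circ \rho_R)(j_P(s(x,z))) = h(x)$, and analogously for the $Z$-side, so $Q \circ R$ is indeed a bisimulation.

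Item 4 is handled analogously: the intersection $\bigcap_i R_i$ may be realised as a wide pullback of the inclusions $R_i \hookrightarrow X \times Y$, and weak-pullback preservation by $F$ supplies a structure map on the intersection, the projections then being coalgebra homomorphisms by the universal property (one iterates the binary case if the hypothesis is only preservation of binary weak pullbacks). The main obstacle is item 3, where one must be careful to extract a coalgebra structure on $Q \circ R$ itself rather than on the auxiliary pullback $P$; the section-based definition together with the identities $\pi^{Q \circ R} \circ \mu = \pi \circ \rho$ is exactly what makes this possible, and although $j_{Q \circ R}$ depends on the choice of section, the projections of $Q \circ R$ are coalgebra homomorphisms regardless of that choice.
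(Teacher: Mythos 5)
Your treatment of items 1--3 is correct and is essentially the standard argument (the paper itself does not spell these out but simply cites Rutten's Theorems 5.2, 5.4, 5.5 and 6.4): the chosen-index structure map for unions, conjugation by the swap bijection for converses, and, for composition, the pullback \(P\) of \(\pi_2^R\) against \(\pi_1^Q\) together with a chosen section of the surjection \(P \to Q \circ R\). Your remark that the projections of \(Q \circ R\) are homomorphisms irrespective of the chosen section is exactly the right point to flag.

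Item 4, however, contains a genuine gap. You propose to realise \(\bigcap_i R_i\) as the wide pullback of the inclusions \(m_i \colon R_i \hookrightarrow X \times Y\) and to apply weak-pullback preservation to that diagram. The obstruction is that the apex \(X \times Y\) carries no \(F\)-coalgebra structure and \(F\) does not preserve products. To invoke (weak) preservation of this wide pullback you must first exhibit a cone, i.e.\ maps \(g_i \colon \bigcap_j R_j \to FR_i\) with all composites \(F(m_i) \circ g_i\) into \(F(X \times Y)\) equal; the natural candidates \(g_i = j_i \circ \iota_i\) (with \(j_i\) the witnessing structure on \(R_i\) and \(\iota_i\) the inclusion) are only known to agree after postcomposing with \(F(\pi_1)\) and \(F(\pi_2)\), and since \(\langle F(\pi_1), F(\pi_2)\rangle \colon F(X \times Y) \to FX \times FY\) is in general not injective, the cone condition does not follow. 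The correct proof routes the limit through the coalgebras themselves (equivalently, argues via the relation lifting \(\bar{F}\) and its behaviour on intersections), and this is precisely where \emph{generalized} weak pullback preservation enters. Relatedly, your parenthetical suggestion to iterate the binary case cannot recover infinite intersections, and the paper's own footnote to the definition of weak pullbacks warns that preservation of binary weak pullbacks does not imply preservation of generalized ones; the hypothesis of item 4 is the generalized notion for exactly this reason.
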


\begin{proof}
	See~\cite[Theorems 5.2, 5.4, 5.5 and 6.4]{coalgebra}.
\end{proof}

Together with~\cite[Theorem 5.4]{coalgebra}, this subsumes \cref{lem:bisim facts}.
The first statement, in particular, implies that the union of all bisimulations \({\bisim} \subseteq X \times Y\) is always a bisimulation between \(F\)-coalgebras.
The first three statements together imply that if \(F\) preserves weak pullbacks, then \(\bisim\) is also an equivalence relation.

In the main text of the paper, we also talk about graphs of certain functions being bisimulations.
See, for example, \cref{lem:gtr is good stuff,lemma:prune-commutes}.
This is talk of coalgebra homomorphisms in disguise.

\begin{lemma}[{\cite[Theorem~2.5]{coalgebra}}]\label{lem:same as homom}
	Let \(F\) be a functor that preserves weak pullbacks.
	If \((X, h)\) and \((Y, k)\) are \(F\)-coalgebras and \(f : X \to Y\) is any function, then \(f\) is an \(F\)-coalgebra homomorphism if and only if
	\[
	\operatorname{Grph}(f) = \{(x, f(x)) \mid x \in X\}
	\]
	is a bisimulation.
\end{lemma}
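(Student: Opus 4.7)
The plan is to exploit the key observation that the first projection $\pi_1 \colon \operatorname{Grph}(f) \to X$ is a bijection, with inverse $x \mapsto (x, f(x))$, and that $\pi_2 = f \circ \pi_1$ holds by definition. This bijection lets us transport the coalgebra structure between $X$ and $\operatorname{Grph}(f)$, reducing both directions to functoriality and a diagram chase. Notably, weak pullback preservation is not actually required for this particular lemma; it is a standing hypothesis of Rutten's framework and is invoked here for uniformity with the surrounding statements.

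For the forward direction, assume $f$ is a coalgebra homomorphism, so that $F(f) \circ h = k \circ f$. Define a candidate transition map on the graph by
\[
  j \;=\; F(\pi_1^{-1}) \circ h \circ \pi_1 \;\colon\; \operatorname{Grph}(f) \to F(\operatorname{Grph}(f)).
\]
Functoriality gives $F(\pi_1) \circ j = F(\pi_1 \circ \pi_1^{-1}) \circ h \circ \pi_1 = h \circ \pi_1$, so $\pi_1$ is a homomorphism. For $\pi_2$, using $\pi_2 \circ \pi_1^{-1} = f$, compute
\[
  F(\pi_2) \circ j \;=\; F(\pi_2 \circ \pi_1^{-1}) \circ h \circ \pi_1 \;=\; F(f) \circ h \circ \pi_1 \;=\; k \circ f \circ \pi_1 \;=\; k \circ \pi_2,
\]
where the third equality uses the assumption that $f$ is a homomorphism. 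Hence $\operatorname{Grph}(f)$ is a bisimulation with coalgebra structure $j$.

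For the backward direction, assume $\operatorname{Grph}(f)$ carries a coalgebra structure $j$ making both projections homomorphisms, so that $F(\pi_1) \circ j = h \circ \pi_1$ and $F(\pi_2) \circ j = k \circ \pi_2$. Since $\pi_1$ is a bijection, we can rearrange the first equation as $j = F(\pi_1^{-1}) \circ h \circ \pi_1$. Substituting into the second and using $\pi_2 = f \circ \pi_1$,
\[
  k \circ f \circ \pi_1 \;=\; k \circ \pi_2 \;=\; F(\pi_2) \circ j \;=\; F(f \circ \pi_1) \circ F(\pi_1^{-1}) \circ h \circ \pi_1 \;=\; F(f) \circ h \circ \pi_1.
\]
Cancelling the bijection $\pi_1$ on the right yields $k \circ f = F(f) \circ h$, i.e., $f$ is a coalgebra homomorphism.

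The only mild subtlety is bookkeeping: one must be careful that the transition map chosen in the forward direction is genuinely well-defined (which it is, because $\pi_1$ has a two-sided inverse) and that the definition of bisimulation used here (existence of \emph{some} coalgebra structure making the projections homomorphisms) is in force. No induction, no use of weak pullback preservation, and no appeal to any other coalgebraic machinery is required---the proof is essentially a two-step diagram chase made possible by the bijectivity of $\pi_1$.
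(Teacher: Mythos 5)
Your proof is correct and is essentially the standard argument for this fact; the paper gives no proof of its own, simply citing Rutten's Theorem~2.5, whose proof proceeds exactly as yours does by transporting the coalgebra structure of $X$ along the bijection $\pi_1 \colon \operatorname{Grph}(f) \to X$. Your side remark is also accurate: weak pullback preservation plays no role in either direction, and that hypothesis appears in the statement only because it is a standing assumption in the surrounding coalgebraic development.
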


The small-step semantics of expressions can also be given in coalgebraic terms.
Recall that the small-step semantics \(\langle e\rangle\) of a skip-free \GKAT expression is the skip-free automaton obtained by restricting transitions to the set of states reachable from \(e\).
This is otherwise known as the \emph{subcoalgebra generated by \(e\) in \((\GExp^-, \partial)\)}.
A subset \(U \subseteq X\) of an \(F\)-coalgebra \((X, h)\) is said to be \emph{closed (under transitions)} if its diagonal \(\Delta_U = \{(x, x) \mid x \in U\}\) is a bisimulation.
Equivalently, there is a (necessarily unique) \(F\)-coalgebra structure \((U, h')\) on \(U\) s.t.\ the inclusion \(U \hookrightarrow X\) is a homomorphism, i.e.\ \((U, h')\) is a \emph{subcoalgebra} of \((X, h)\).

The subcoalgebra \(\langle e\rangle\) generated by \(e\) in \((\GExp^-, \partial)\) is the smallest subcoalgebra containing \(e\).
Here, \(\langle e\rangle\) is defined explicitly, but in general, if \(F\) preserves weak pullbacks, every state of an \(F\)-coalgebra is contained in a smallest closed subset.\footnote{This follows directly from \cref{lem:general bisim facts} item 4.}




\paragraph*{Restructuring functors.}
The last bit of coalgebra that we make use of in the main text is naturality.
Recall that a \emph{natural transformation} between functors \(\eta : F_1 \Rightarrow F_2\) is a family of functions \(\{\eta_X : F_1X \to F_2X\}\) indexed by (the class of) all sets and such that \(F_2(f) \circ \eta_X = \eta_Y \circ F_1(f)\) for any \(f : X \to Y\).

\begin{definition}
	Let \(\eta : F_1 \Rightarrow F_2\) be a natural transformation.
	We define the \emph{restructuring} functor \(\eta_* : \Coalg(F_1) \to \Coalg(F_2)\) by setting \(\eta_*(X, h) = (X, \eta_X \circ h)\) and \(\eta_*(f) = f\) for \(f \colon (X, h) \to (Y, k)\).
\end{definition}

Restructuring functors preserve bisimilarity.

\begin{lemma}[{\cite[Theorem~15.1]{coalgebra}}]
	Let \((X, h)\) and \((Y, k)\) be \(F_1\)-coalgebras and \(\eta : F_1 \Rightarrow F_2\).
	If $(X, h), x \bisim (Y, k), y$, then $\eta_*(X, h), x \bisim \eta_*(Y, k), y$.
\end{lemma}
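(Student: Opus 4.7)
The plan is to start from a bisimulation $R \subseteq X \times Y$ witnessing $x \bisim y$ in the $F_1$-coalgebras $(X,h)$ and $(Y,k)$, and then show that the \emph{same} relation $R$ (together with a slightly restructured coalgebra structure on it) is a bisimulation between $\eta_*(X,h)$ and $\eta_*(Y,k)$ as $F_2$-coalgebras. Since $x \mathrel{R} y$ will still hold, this yields $x \bisim y$ in the restructured coalgebras.

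By \cref{def:general bisim}, the bisimulation $R$ comes equipped with an $F_1$-coalgebra structure $j : R \to F_1 R$ such that the projections $\pi_1 : R \to X$ and $\pi_2 : R \to Y$ are $F_1$-coalgebra homomorphisms, i.e., $F_1(\pi_1) \circ j = h \circ \pi_1$ and $F_1(\pi_2) \circ j = k \circ \pi_2$. The key step is to define the corresponding $F_2$-coalgebra structure $j' : R \to F_2 R$ on $R$ by postcomposition with the natural transformation, namely $j' \mathrel{:=} \eta_R \circ j$. What must then be checked is that $\pi_1$ and $\pi_2$ become homomorphisms of $F_2$-coalgebras from $(R, j')$ into $\eta_*(X,h) = (X, \eta_X \circ h)$ and $\eta_*(Y,k) = (Y, \eta_Y \circ k)$, respectively.

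This check is a short diagram chase using naturality of $\eta$: for $i \in \{1,2\}$, we have
\[
F_2(\pi_i) \circ j' = F_2(\pi_i) \circ \eta_R \circ j = \eta_{(-)} \circ F_1(\pi_i) \circ j = \eta_{(-)} \circ (\text{codomain transition}) \circ \pi_i,
\]
where the middle equality is the naturality square for $\eta$ applied to $\pi_i$, and the last equality uses that $\pi_i$ is an $F_1$-coalgebra homomorphism. This gives exactly $F_2(\pi_i) \circ j' = (\eta_{(-)} \circ (\text{transition})) \circ \pi_i$, which is the defining condition for $\pi_i$ to be an $F_2$-coalgebra homomorphism into $\eta_*(-)$.

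There is no real obstacle here; the only subtlety is making sure that bisimulations are being used in the coalgebraic sense of \cref{def:general bisim} (a span of coalgebra homomorphisms via some structure on $R$) rather than the concrete pointwise formulation, so that the proof is entirely a matter of pasting the naturality square of $\eta$ onto the two legs of the bisimulation span. Once that is set up, the conclusion that $x \bisim y$ in $\eta_*(X,h)$ and $\eta_*(Y,k)$ is immediate from the existence of the bisimulation $(R, j')$ containing the pair $(x,y)$.
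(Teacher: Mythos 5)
Your proposal is correct and is essentially identical to the paper's own argument: the paper also takes the witnessing span $(R,j)$, restructures it to $(R,\eta_R\circ j)$, and concludes by naturality that the projections remain $F_2$-coalgebra homomorphisms. Your explicit diagram chase just spells out the ``by naturality'' step that the paper leaves condensed.
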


Here we use the notation \((X,h), x \bisim (Y, k), y\) to emphasize the two coalgebras involved in the relation \(x \bisim y\).

\begin{proof}
	Let \((R,j)\) be a coalgebra structure on \(R\) such that \(\pi_1,\pi_2\) are \(F_1\)-coalgebra homomorphisms.
	By naturality, \(\eta_*(R, j)\) is a coalgebra structure on \(R\) such that \(\pi_1, \pi_2\) are \(F_2\)-coalgebra homomorphisms.
\end{proof}

A natural transformation \(\eta : F_1 \Rightarrow F_2\) is a \emph{natural isomorphism} if there is another natural transformation \(\kappa : F_2 \Rightarrow F_1\) such that \(\eta_X \circ \kappa_X = \id_{F_2X}\) and \(\kappa_X \circ \eta_X = \id_{F_1X}\) for all \(X\).

\begin{lemma}
	Suppose \(\eta \colon F_1 \Rightarrow F_2\) is a natural isomorphism.
	Then \[\eta_* \colon \Coalg(F_1) \to \Coalg(F_2)\] is a \emph{concrete} isomorphism of categories, i.e.\ if \(U_i : \Coalg(F_i) \to \Set\) is the forgetful functor \(U_i(X, h) = X\), then \(U_2 \circ \eta_* = U_1\).
\end{lemma}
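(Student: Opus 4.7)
The plan is to verify that $\eta_*$ is a well-defined functor, exhibit its inverse using the natural transformation $\kappa$ that inverts $\eta$, and then read off concreteness directly from the definition of $\eta_*$ on objects.

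First I would check that $\eta_*$ is functorial. On objects it sends an $F_1$-coalgebra $(X, h)$ to the pair $(X, \eta_X \circ h)$, which is indeed an $F_2$-coalgebra by typing. On morphisms, given an $F_1$-coalgebra homomorphism $f \colon (X, h) \to (Y, k)$, I must verify that the same function $f \colon X \to Y$ is an $F_2$-coalgebra homomorphism between $\eta_*(X, h)$ and $\eta_*(Y, k)$. This reduces to the equation $F_2(f) \circ (\eta_X \circ h) = (\eta_Y \circ k) \circ f$, which follows by inserting naturality of $\eta$ (giving $F_2(f) \circ \eta_X = \eta_Y \circ F_1(f)$) and then using that $f$ is already an $F_1$-homomorphism ($F_1(f) \circ h = k \circ f$). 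Identities and composition are preserved because $\eta_*$ acts as the identity on underlying functions.

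Next I would show $\eta_*$ is a (strict) isomorphism of categories. Since $\eta$ is a natural isomorphism with inverse $\kappa \colon F_2 \Rightarrow F_1$, the same argument gives a functor $\kappa_* \colon \Coalg(F_2) \to \Coalg(F_1)$. On an object $(X, h)$ in $\Coalg(F_1)$, we compute
\[
\kappa_* \circ \eta_* (X, h) = \kappa_*(X, \eta_X \circ h) = (X, \kappa_X \circ \eta_X \circ h) = (X, \id_{F_1 X} \circ h) = (X, h),
\]
using $\kappa_X \circ \eta_X = \id_{F_1 X}$. The symmetric calculation gives $\eta_* \circ \kappa_* = \id_{\Coalg(F_2)}$ on objects. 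On morphisms both composites act as the identity because each of $\eta_*$ and $\kappa_*$ does. Hence $\kappa_*$ is a two-sided inverse, and $\eta_*$ is an isomorphism of categories (strictly, not merely an equivalence).

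Finally, concreteness is immediate: by definition $U_2(\eta_*(X, h)) = U_2(X, \eta_X \circ h) = X = U_1(X, h)$, and on a morphism $f$, $U_2(\eta_*(f)) = U_2(f) = f = U_1(f)$, so $U_2 \circ \eta_* = U_1$ as functors. There is no real obstacle here; the argument is essentially bookkeeping, with the single nontrivial input being the naturality square used to show that $\eta_*$ acts well on coalgebra homomorphisms.
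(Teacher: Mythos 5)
Your proof is correct and is exactly the standard argument: functoriality of $\eta_*$ via the naturality square, the inverse functor $\kappa_*$ induced by the inverse natural transformation, and concreteness read off from the definition. The paper states this lemma without proof as a routine fact, and your write-up supplies precisely the bookkeeping it implicitly relies on.
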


Define \(P_{\det}(X) = \{U \subseteq \checkmark + (\At \cdot \Sigma) \times X \mid U \text{ is graph-like}\}\) for any set \(X\).
Note that \(P_{\det}\) is a subfunctor of \(P\), which in this context means that $P_{\det}(X) \subseteq P(X)$ for all sets $X$.
In \cref{sec:bisim complete}, we introduced two natural transformations, \(\grph : G \Rightarrow P_{\det}\) and \(\func : P_{\det} \Rightarrow G\), and we indicated that they were inverse to one another.
This means that the category of \(G\)-coalgebras (the skip-free automata) is isomorphic to the category of \(P_{\det}\)-coalgebras (the deterministic LTSs).
This observation is the basis of \cref{thm:completeness I}, the completeness theorem of \cref{sec:bisim complete}.

In \cref{sec:completeness for skip-free GKAT}, we introduced a different kind of functor \(\embed : \Coalg(G) \to \Coalg(H)\), only this one was not quite concrete (it adds a state).
This functor is
\[
\embed(X, h) = (X + \top, \tilde h)
\]
where
\[
\tilde{d}(x)(\alpha) =
\begin{cases}
	\checkmark & x = \top \\
	(p, \top) & d(x)(\alpha) = (p, \checkmark) \\
	d(x)(\alpha) & \text{otherwise}
\end{cases}
\]
However, as we saw in \cref{lemma:embed-bisimilar}, \(\embed\) does preserve and reflect bisimilarity.

\section{Proofs for \cref{sec:skipfree}}

\lemlocalfiniteness*

\begin{proof}
	By induction on \(e\).
	The automaton \(\langle 0\rangle\) has a single state with no transitions and \(\langle p \rangle\) consists of a single state that accepts all of \(\At\) after \(p\).
	Write \(|\langle e\rangle|\) for the number of expressions reachable from \(e\).
	For the inductive step, simply notice that \(|\langle e_1 +_b e_2\rangle|\), \(|\langle e_1e_2\rangle|\), and \(|\langle e_1^{(b)}e_2\rangle|\) are bounded above by \(|\langle e_1\rangle| + |\langle e_2\rangle|\).
\end{proof}

\lemlanguagecomputation*

\begin{proof}
    The proof proceeds by induction on $e$.
    In the base, since \(\langle 0\rangle\) consists of a single state with no outgoing transitions, clearly \(\L(0) = \emptyset\).
	Similarly, the only outgoing transitions of each \(p \in \Sigma\) are \(p \tr{\alpha \mid p} \checkmark\), so \(\L(p) = \{\alpha p \mid \alpha \in \At\}\).

    For the inductive step, first note that every successfully terminating path out of \(e_1 +_b e_2\) is of the form
	\(
	e_1 +_b e_2 \tr{\alpha p} g_1 \tr{\alpha_1 \mid p_1} \cdots \tr{} g_n \tr{\alpha_n \mid p_n} \checkmark
	\)
	where either \(\alpha \le b\) and
	\(
	e_1\tr{\alpha \mid p} g_1 \tr{\alpha_1 \mid p_1} \cdots \tr{} g_n \tr{\alpha_n \mid p_n} \checkmark
	\)
	or \(\alpha \le \bar b\) and
	\(
	e_2\tr{\alpha \mid p} g_1 \tr{\alpha_1 \mid p_1} \cdots \tr{} g_n \tr{\alpha_n \mid p_n} \checkmark
	\).
    This property implies the equality for $\L(e_1 +_b e_2)$.
    Furthermore, every successfully terminating path out of \(e_1e_2\) is of the form
	\(
	e_1e_2 \tr{\alpha_1\mid p_1} g_1e_2 \tr{} \cdots \tr{\alpha_n \mid p_n} e_2 \tr{\beta_1\mid q_1} \cdots \tr{\beta_m \mid q_m} \checkmark
	\)
	where
	\(
	e_1\tr{\alpha_1 \mid p_1} g_1 \tr{\alpha_1 \mid p_1} \cdots \tr{} g_n \tr{\alpha_n \mid p_n} \checkmark
	\)
	and
	\(
	e_2\tr{\beta_1 \mid q_1} h_1 \tr{\beta_2 \mid q_2} \cdots \tr{} h_n \tr{\beta_m \mid q_m} \checkmark
	\).
    This proves the equality for $\L(e_1 e_2)$.

	Lastly, we consider \(e_1^{(b)} e_2\).
	Call a cycle in a \(G\)-coalgebra \emph{minimal} if every state appears at most once in the cycle.
	Note that every cycle is a composition of minimal cycles.
	Minimal cycles containing \(e_1^{(b)} e_2\) are of the form
	\(
	e_1 ^{(b)} e_2 \tr{\alpha p} g_1(e_1 ^{(b)} e_2) \tr{\alpha_1 \mid p_1} \cdots \tr{\alpha_n\mid p_n} e_1 ^{(b)} e_2
	\)
	where \(\alpha \le b\) and
	\(
	e_1\tr{\alpha \mid p} g_1 \tr{\alpha_1 \mid p_1} \cdots \tr{} g_n \tr{\alpha_n \mid p_n} \checkmark
	\).
	Successfully terminating paths from \(e_1^{(b)}e_2\) that do not contain cycles are of the form
	\(
	e_1^{(b)}e_2 \tr{\beta \mid q} h_1 \tr{\beta_1 \mid q_1} \cdots \tr{} h_n \tr{\beta_m \mid q_m} \checkmark
	\)
	where
	\(
	e_2 \tr{\beta \mid q} h_1 \tr{\beta_1 \mid q_1} \cdots \tr{} h_n \tr{\beta_m \mid q_m} \checkmark
	\).
	Putting these together, a successfully terminating path from \(e_1^{(b)}e_2\) is a composition of minimal cycles followed by a successfully terminating path coming from \(e_2\).
	It follows that the guarded traces accepted by \(e_1^{(b)}e_2\) are those of the form \(w_1\cdots w_n u\), where each \(w_i \in \L(e_1)\) and starts with an atomic test below \(b\) and \(u \in \L(e_2)\) and starts with an atomic test below \(\bar b\).
	In symbols, \(\L(e_1^{(b)}e_2) = \bigcup_{n \in \mathbb N} (b \L(e_1))^n\cdot \bar b \L(e_2)\).
\end{proof}

\thmsoundness*

\begin{proof}
	We begin by sketching the proof that \(\equiv_\dagger\) is a bisimulation on \((\GExp^-, \partial)\).
	Let \(e_1 \equiv_\dagger e_2\).
	We show that \((e_1, e_2)\) satisfies \emph{1.}-\emph{3.} of \cref{def:g bisim} by induction on the proof of \(e_1 = e_2\).
	In the base, we need to consider the equational axioms (including those of equational logic).
	\emph{1.}-\emph{3.} of \cref{def:g bisim} are clearly reflexive and symmetric, so it suffices to consider the equational axioms listed in \cref{fig:gkat axioms}.
	\begin{itemize}
		\item Consider \((e, e +_b e)\).
		If \(e \tr{\alpha \mid p} e'\), then since either \(\alpha \le b\) or \(\alpha \le \bar b\) for each atomic test \(\alpha\), \(e +_b e \tr{\alpha \mid p} e'\), and vice versa.
		Since \(\equiv_\dagger\) is an equivalence relation, \(e' \equiv_\dagger e'\).
		Similarly, \(e \downarrow \alpha\) if and only if \(e +_b e\downarrow \alpha\) and \(e \tr{\alpha \mid p} \checkmark\) if and only if \(e +_b e \tr{\alpha \mid b} \checkmark\).
		\item In the \((e, e +_1 f)\), \((e +_b f, f +_{\bar b} e)\), \((e +_b (f +_c g), (e +_b f) +_{b\vee c} g)\), \((0, 0e)\), \(((e +_b f)g, eg +_b fg)\), and \((e^{(b)}f, e\cdot (e^{(b)}f) +_{b} f)\) cases, the two expressions have all the same outgoing transitions. See the previous case.
		\item In the \((e\cdot (f\cdot g), (e\cdot f)\cdot g)\) case, let \(\alpha \in \At\).
		There are a few subcases:
		\begin{itemize}
			\item[(i)] Since \(e \downarrow \alpha\) if and only if \(e\cdot f \downarrow \alpha\), it follows that \(e \cdot (f \cdot g) \downarrow \alpha\) if and only if \((e\cdot f) \cdot g \downarrow \alpha\). That is, \emph{1.} in \cref{def:g bisim} is satisfied by this \(\alpha\).
			\item[(ii)] If \(e \tr{\alpha \mid p}\checkmark\), then \(e \cdot f \tr{\alpha \mid p} f\). Therefore, \(e\cdot (f \cdot g) \tr{\alpha\mid p} f \cdot g\) and \((e \cdot f)\cdot g \tr{\alpha \mid p} f \cdot g\).
            This establishes \emph{2.} of \cref{def:g bisim}.
			\item[(iii)] If \(e \tr{\alpha \mid p} e'\), then \(e\cdot (f\cdot g) \tr{\alpha \mid p} e'\cdot (f\cdot g)\). We also know that \(e\cdot f \tr{\alpha \mid p} e'\cdot f\), so \((e\cdot f)\cdot g\tr{\alpha \mid p} (e'\cdot f)\cdot g\). Since \(e'\cdot (f\cdot g) \equiv_\dagger (e'\cdot f)\cdot g\), in conjunction with (ii) we see that \emph{3.} of \cref{def:g bisim} is satisfied for this \(\alpha\).
		\end{itemize}
	\end{itemize}
	In the induction step, we consider the Horn rules.
	We skip the transitivity case, since the conditions in \cref{def:g bisim} are clearly transitive.
	\begin{itemize}
		\item We are going to show that the congruence rules preserve the properties in \cref{def:g bisim}.
		Suppose \(e_1 \equiv_\dagger e_2\) and \((e_1, e_2)\) satisfies the conditions of \cref{def:g bisim}.
		\begin{itemize}
			\item It suffices to check \(\alpha \le b\) in the \((e_1 +_b g, e_2 +_b g)\) subcase. These expressions have the same pairs of outgoing transitions out of \((e_1, e_2)\) for \(\alpha \le b\). The induction hypothesis concludes this case.
			\item In the \((e_1\cdot g, e_2\cdot g)\) subcase, \(e_1 \downarrow \alpha\) if and only if \(e_2 \downarrow \alpha\) by the induction hypothesis, and \(e_i \cdot g \downarrow \alpha\) if and only if \(e_i \downarrow \alpha\). Similarly, \(e_1 \tr{\alpha \mid p} \checkmark\) if and only if \(e_2 \tr{\alpha \mid p} \checkmark\), so \(e_1\cdot g \tr{\alpha \mid p} g\) if and only if \(e_2 \cdot g \tr{\alpha \mid p} g\). If \(e_1' \equiv_\dagger e_2'\), then \(e_1 \tr{\alpha \mid p} e_1'\) if and only if \(e_2 \tr{\alpha \mid p} e_2'\). It follows that \(e_1\cdot g \tr{\alpha \mid p} e_1'\cdot g\) if and only if \(e_2\cdot g \tr{\alpha \mid p} e_2'\cdot g\).
			\item The case \((g\cdot e_1, g \cdot e_2)\) is easy because the \((g, g)\) case was covered above.
			\item Consider \((e_1^{(b)}g, e_2^{(b)}g)\).
			We only cover the \(\alpha \le b\) case here because \(e_1^{(b)}g\) and \(e_2^{(b)}g\) have the same outgoing transitions for \(\alpha \le \bar b\). The condition \emph{1.} of \cref{def:g bisim} follows from the fact that \(e_i ^{(b)} g \downarrow \alpha\) if and only if \(e_i \downarrow \alpha\). Similarly, \(e_i ^{(b)} g \tr{\alpha \mid p} e_i ^{(b)} g\) if and only if \(e_i \tr{\alpha \mid p} \checkmark\) for \(i=1,2\). Finally, if \(e_i \tr{\alpha \mid p} e_i'\) for \(i=1,2\) and \(e_1' \equiv_\dagger e_2'\), then \(e_1'\cdot (e_1 ^{(b)} g) \equiv_\dagger e_2'\cdot (e_2 ^{(b)} g)\) and \(e_i ^{(b)} g \tr{\alpha\mid p} e_i' \cdot (e_i ^{(b)} g)\) for \(i = 1,2\).
            \item Consider $(g^{(b)}e_1, g^{(b)}e_2)$.
            If $\alpha \leq b$, then the transitions out of $g^{(b)}e_1$ and $g^{(b)}e_2$ must either both reject, or go to expressions again related by $\equiv_\dagger$.
            Otherwise, if $\alpha \leq b$, then the $\alpha$-transitions of $g^{(b)}e_1$ and $g^{(b)}e_2$ are determined by $e_1$ and $e_2$ respectively, which are equivalent by assumption; the claim then follows.
		\end{itemize}
		\item Let \(g \equiv_\dagger eg +_b f\) and assume for an induction hypothesis that \((g, eg +_b f)\) satisfies \emph{1.}-\emph{3.} in \cref{def:g bisim}.
		Then \(g \downarrow \alpha\) if and only if either \(\alpha \le b\) and \(e \downarrow \alpha\) or \(\alpha \le \bar b\) and \(f \downarrow \alpha\).
		Since the latter crashes are the same as those of \(e^{(b)}f \downarrow \alpha\), we have verified \emph{1.} in \cref{def:g bisim}.

		We also know that \(g \tr{\alpha \mid p} \checkmark\) if and only if \(\alpha \le \bar b\) and \(f \tr{\alpha \mid p} \checkmark\).
		The latter conditions are equivalent to \(e^{(b)}f \tr{\alpha \mid p} \checkmark\), thus satisfying \emph{2.} in \cref{def:g bisim}.

        For \emph{3.} from \cref{def:g bisim}, let \(g \tr{\alpha \mid p} g'\).
        We should show that \(e^{(b)}f\) takes a similarly labeled transition to an expression equivalent to \(g'\).
        By induction, \(eg +_b f \tr{\alpha \mid p} e'\) such that \(g' \equiv_\dagger e'\).
        This gives us three cases to consider.
		\begin{itemize}
			\item
            If \(\alpha \le b\) and \(e \tr{\alpha \mid p} e''\), then \(e' = e''g\), and thus \(g' \equiv_\dagger e'' g\).
            Since \(g \equiv_\dagger e^{(b)}f\), it follows that \(g' \equiv_\dagger e''(e^{(b)}f)\).
            Given that \(e^{(b)} f \tr{\alpha \mid p} e''(e^{(b)}f)\), this was exactly what we needed to prove.
			\item
            If \(\alpha \le b\) and \(e \tr{\alpha \mid p} \checkmark\), then \(e' = g\).
			But then \(g' \equiv_\dagger g \equiv_\dagger e^{(b)}f\), which suffices since \(e^{(b)}f \tr{\alpha \mid p} e^{(b)}f\) in this case.
			\item
            If \(\alpha \le \bar b\) and \(f \tr{\alpha \mid p} f'\), then \(e' = f'\).
            But then \(g' \equiv_\dagger f'\), which suffices because \(e^{(b)}f \tr{\alpha \mid p} f'\) in this case.
		\end{itemize}
	\end{itemize}

	Now we verify that \(\equiv\) is sound with respect to language equivalence.
	Again, we proceed to show that \(e_1 \equiv e_2\) implies \(e_1 \sim_\L e_2\) by induction on the proof of \(e_1 = e_2\) from the rules in \cref{fig:gkat axioms}.
	We have already seen that if \(e_1 \equiv_\dagger e_2\) implies \(e_1 \bisim e_2\), so by \cref{lem:finer} we know that \(e_1 \equiv_\dagger e_2\) implies \(e_1 \sim_\L e_2\).
	This handles most of the base case.
	Reflexivity, symmetry, and transitivity are handled by the fact that \(\sim_\L\) is the kernel of \(\L\).
	This leaves us with \(e\cdot 0 \sim_\L 0\), which can be seen from \cref{lem:language}:
	\[
	\L(e\cdot 0) = \L(e) \cdot \L(0) = \L(e) \cdot \emptyset = \emptyset = \L(0)
	\]
	In the inductive step, we need to consider the transitivity, congruence, and Horn rules.
	Congruence is a consequence of \cref{lem:language}.
	Transitivity follows from the fact that \(\sim_\L\) is the kernel of \(\L\).
	Soundness of the Horn rule \(z = xz +_b y \implies z = x^{(b)}y\) follows from \cref{lem:language} and the fact that for any languages \(U, L, W \subseteq (A \cdot \Sigma)^+\),
	\[
	L = UL \cup W \implies L = U^*W  \tag{\(\heartsuit\)}
	\]
	Indeed, suppose \(\L(g) = {\L(e\cdot g +_b f)}\).
	Then because \(\L(e\cdot g +_b f) = b\L(e)\cdot \L(g) \cup \bar b\L(f)\), \(\L(g) = (b\L(e))^*\cdot \bar b\L(f) = \L(e^{(b)} f)\) by (\(\heartsuit\)).
\end{proof}

We end this section of the appendix by stating a number of provable equivalences that are useful in other sections.
The following lemma is necessary in our proof of \Cref{lem:prune-preserves-equiv}, for example.
Recall that for $e \in \GExp^-$ and $b \in \BExp$, we use $be$ as a shorthand for $e +_b 0$.

\begin{lemma}%
	\label{lemma:useful-equalities}
	Let $e, f \in \GExp^-$ and $b, c \in \BExp$.
	The following hold:
	\begin{mathpar}
		e +_b \overline{b} f \equiv_\dagger e +_b f
		\and
		be +_b f \equiv_\dagger e +_b f
		\and
		(b \wedge c)e \equiv_\dagger b(ce)
		\and
		b(e +_c f) \equiv_\dagger be +_c bf
		\and
		b(e +_c f) \equiv_\dagger b(e +_{b \wedge c} f)
		\and
		b(e \cdot f) \equiv_\dagger (be) \cdot f
		\and
		e \while{b} 0 \equiv 0
		\and
		e \while{b} f \equiv_\dagger e \while{b} (\overline{b}f)
		\and
		(be) \while{b} f \equiv_\dagger e\while{b} f
	\end{mathpar}
\end{lemma}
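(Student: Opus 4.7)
The plan is to establish each of the nine equivalences in turn, reusing earlier identities as lemmas. All nine derivations are purely equational, and the axiom $(\dagger)$ is invoked only for $e \while{b} 0 \equiv 0$, where it is indispensable. Throughout I freely unfold the shorthand $be$ as $e +_b 0$.

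The first three identities are warmups. For $e +_b \overline{b} f \equiv_\dagger e +_b f$, one unfolds the right branch and applies \textsf{(G3)} to obtain $(e +_b f) +_{b \vee \bar b} 0 = (e +_b f) +_1 0$, then finishes with \textsf{(G0)}. For $be +_b f \equiv_\dagger e +_b f$, I will first prove a slightly more general \emph{merging lemma} $(e_1 +_b e_2) +_b f \equiv_\dagger e_1 +_b f$, via the sequence \textsf{(G2)}, \textsf{(G3)}, \textsf{(G0)}, \textsf{(G2)}; this lemma will be reused in the distributivity and loop items. For $(b \wedge c)e \equiv_\dagger b(ce)$, I rewrite $(e +_c 0) +_b 0$ as $0 +_{\bar b}(0 +_{\bar c} e)$ by two uses of \textsf{(G2)}, apply \textsf{(G3)} to get $(0 +_{\bar b} 0) +_{\bar b \vee \bar c} e$, collapse the left summand via \textsf{(G1)}, and close with \textsf{(G2)}.

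The main obstacle is the distributivity identity $b(e +_c f) \equiv_\dagger be +_c bf$. The plan is to normalize both sides. A direct \textsf{(G2)}-\textsf{(G3)}-\textsf{(G2)} sequence brings the left-hand side to $(e +_b 0) +_{\bar b \vee c} f$. For the right-hand side, \textsf{(G2)} on the inner right branch followed by \textsf{(G3)} yields $((e +_b 0) +_c 0) +_{c \vee \bar b} f$, and the third identity then collapses $(e +_b 0) +_c 0$ to $e +_{b \wedge c} 0$. To reconcile $(e +_{b \wedge c} 0) +_{\bar b \vee c} f$ with $(e +_b 0) +_{\bar b \vee c} f$, I apply the third identity once more in reverse, rewriting $e +_{b \wedge c} 0$ as $(e +_b 0) +_{\bar b \vee c} 0$ using the Boolean identity $b \wedge (\bar b \vee c) = b \wedge c$, and then invoke the merging lemma (with guard $\bar b \vee c$) to collapse the resulting nested conditional. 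The variant $b(e +_c f) \equiv_\dagger b(e +_{b \wedge c} f)$ follows by running the same normalization on both sides, since $\bar b \vee (b \wedge c) = \bar b \vee c$ in Boolean algebra. Finally, $b(e \cdot f) \equiv_\dagger (be) \cdot f$ is immediate from \textsf{(G8)} and \textsf{(G6)}.

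The three loop identities are handled uniformly using \textsf{(FP)} and \textsf{(RSP)}. For $e \while{b} 0 \equiv 0$, the axiom $(\dagger)$ gives $e \cdot 0 \equiv 0$ and \textsf{(G1)} gives $0 +_b 0 \equiv 0$, so $0$ solves the equation $z \equiv e \cdot z +_b 0$, and \textsf{(RSP)} yields $e \while{b} 0 \equiv 0$. For $e \while{b} f \equiv_\dagger e \while{b} (\overline{b}f)$, I apply \textsf{(FP)} to expand $e \while{b} f$ into $e \cdot (e \while{b} f) +_b f$ and use the first identity to rewrite the right branch as $\overline{b}f$, so that $e \while{b} f$ satisfies the fixed-point equation of $e \while{b} (\overline{b}f)$; \textsf{(RSP)} closes the case. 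For $(be) \while{b} f \equiv_\dagger e \while{b} f$, I use the sixth identity to rewrite $(be) \cdot (e \while{b} f)$ as $(e \cdot (e \while{b} f)) +_b 0$, apply the merging lemma to reduce $((e \cdot (e \while{b} f)) +_b 0) +_b f$ to $e \cdot (e \while{b} f) +_b f$, which equals $e \while{b} f$ by \textsf{(FP)}, and conclude with \textsf{(RSP)}.
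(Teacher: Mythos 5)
Your proposal is correct, and all nine derivations go through as described; I checked in particular that the merging lemma $(e_1 +_b e_2) +_b f \equiv_\dagger e_1 +_b f$ follows from (\textsf{G2}), (\textsf{G3}), Boolean algebra and (\textsf{G0}), and that both sides of the distributivity law do normalize to $(e +_b 0) +_{\bar b \vee c} f$. The overall strategy matches the paper's: each identity is derived equationally from the axioms of \cref{fig:gkat axioms}, earlier items are reused as lemmas, ($\dagger$) appears only in $e \while{b} 0 \equiv 0$, and the loop identities are dispatched with (\textsf{FP}) and (\textsf{RSP}) (you apply (\textsf{RSP}) in the opposite direction to the paper for the last two items, which is immaterial). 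Where you genuinely diverge is in items 2--5: the paper derives $be +_b f \equiv_\dagger e +_b f$ from item 1 via two uses of (\textsf{G2}), and proves $b(e +_c f) \equiv_\dagger be +_c bf$ by a longer rewriting chain that splits $b$ as $(b \vee \overline{c}) \wedge (b \vee c)$ and pushes the guards through one at a time; you instead isolate a reusable merging lemma and reduce both sides of the distributivity law (and of item 5) to a common normal form. Your route buys a shorter, more systematic argument for items 4 and 5 --- in particular item 5 becomes a one-line Boolean computation $\bar b \vee (b \wedge c) = \bar b \vee c$ on the normal forms --- at the cost of one extra auxiliary lemma; the paper's version stays closer to a raw axiom-by-axiom rewrite. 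Either is a complete proof.
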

\begin{proof}
	We will also refer to the equivalence \((x +_b y) +_c z \equiv_\dagger x +_{b\wedge c} (y +_c z)\) below as (\textsf{G2},\textsf{G3}), as it can be derived as follows:
	\[
		(x +_b y) +_c z
		\stackrel{\text{\tiny(\textsf{G2})}}{\equiv_\dagger} z +_{\bar c} (y +_{\bar b} x)
		\stackrel{\text{\tiny(\textsf{G3})}}{\equiv_\dagger}  (z +_{\bar c} y) +_{\bar c \vee \bar b} x
		\stackrel{\text{\tiny(BA)}}{\equiv_\dagger}  (z +_{\bar c} y) +_{\overline{b\wedge c}} x
		\stackrel{\text{\tiny(\textsf{G2})}}{\equiv_\dagger}  x +_{b\wedge c} (y +_{c} z)
	\]
	We prove the desired equalities in order of appearance.
	\begin{itemize}
		\item
		For $e +_b \overline{b}f \equiv e +_b f$, we derive:
		\begin{align*}
			e +_b \overline{b}f
			&= e +_b (f +_{\overline{b}} 0) \\
			&\equiv_\dagger (e +_b f) +_{b \vee \overline{b}} 0 \tag{\textsf{G3}}\\
			&\equiv_\dagger (e +_b f) +_1 0 \tag{BA}\\
			&\equiv_\dagger e +_b f \tag{G0}
			\intertext{
				\item
				For $be +_b f \equiv e +_b f$, we derive:
			}
			be +_b f
			&\equiv_\dagger f +_{\overline{b}} be \tag{\textsf{G2}}\\
			&\equiv_\dagger f +_{\overline{b}} e \tag{prev.}\\
			&\equiv_\dagger e +_b f \tag{\textsf{G2}}
			\intertext{
                \item
				For $(b \wedge c)e \equiv b(ce)$, we derive:
			}
			(b \wedge c)e
			&\equiv_\dagger (c \wedge b)e \tag{BA}\\
			&= e +_{c \wedge b} 0 \\
			&\equiv_\dagger e +_{c \wedge b} (0 +_b 0) \tag{\textsf{G1}}\\
			&\equiv_\dagger (e +_c 0) +_b 0 \tag{\textsf{G3}}\\
			&= b(ce)
			\intertext{
                \item
				For $b(e +_c f) \equiv be +_c bf$, we derive:
			}
			b(e +_c f)
			&\equiv_\dagger ((b \vee \overline{c}) \wedge (b \vee c))(e +_c f) \tag{BA}\\
			&\equiv_\dagger (b \vee \overline{c})((b \vee c)(e +_c f)) \tag{prev.}\\
			&= (b \vee \overline{c})((e +_c f) +_{b \vee c} 0) \\
			&\equiv_\dagger (b \vee \overline{c})(e +_c (f +_b 0)) \tag{\textsf{G2},\textsf{G3}}\\
			&= (b \vee \overline{c})(e +_c bf) \\
			&= (e +_c bf) +_{b \vee \overline{c}} 0 \\
			&\equiv_\dagger (bf +_{\overline{c}} e) +_{b \vee \overline{c}} 0 \tag{\textsf{G2}}\\
			&\equiv_\dagger bf +_{\overline{c}} (e +_b 0) \tag{\textsf{G2},\textsf{G3}} \\
			&= bf +_{\overline{c}} be \\
			&\equiv_\dagger be +_{c} bf \tag{\textsf{G2}}
		\end{align*}

		\item For \(b(e +_c f) \equiv_\dagger b(e +_{b \wedge c} f)\),
		\begin{align*}
			b(e +_c f)
			&= (e +_c f) +_b 0 \\
			&\equiv_\dagger e +_{b\wedge c} (f +_b 0) \tag{\textsf{G2},\textsf{G3}}\\
			&\equiv_\dagger (b\wedge c)e +_{b\wedge c} bf \tag{prev.}\\
			&\equiv_\dagger b(ce) +_{b\wedge c} bf \tag{prev.}\\
			&\equiv_\dagger b(ce +_{b\wedge c} f) \tag{prev.}\\
			&\equiv_\dagger b((b\wedge c)e +_{b\wedge c} f) \tag{prev., BA}\\
			&\equiv_\dagger b(e +_{b\wedge c} f) \tag{prev.}
		\end{align*}

		\item
		For $b(e \cdot f) \equiv (be) \cdot f$, we derive
		\[
		b(e \cdot f)
		= e \cdot f +_b 0
		\stackrel{\text{\tiny(\textsf{G6})}}{\equiv_\dagger} e \cdot f +_b 0 \cdot f
		\stackrel{\text{\tiny(\textsf{G8})}}{\equiv_\dagger} (e +_b 0) \cdot f
		= (be) \cdot f
		\]

		\item
		For $e \while{b} 0 \equiv 0$, it suffices to prove the following, by (\textsf{RSP}):
		\[
		0
		\stackrel{\text{\tiny(\textsf{G1})}}{\equiv} 0 +_b 0
		\stackrel{\text{\tiny(\(\dagger\))}}{\equiv} e \cdot 0 +_b 0
		\]

		\item
		For $e \while{b} f \equiv e \while{b} \overline{b}f$, it suffices to prove the following, by (\textsf{RSP}):
		\[
		e \while{b} \overline{b}f
		\stackrel{\text{\tiny(\textsf{FP})}}{\equiv_\dagger} e \cdot e \while{b} \overline{b}f +_b \overline{b}f
		\stackrel{\text{\tiny(prev.)}}{\equiv_\dagger} e \cdot e \while{b} \overline{b}f +_b f
		\]

		\item
		For \((be) \while b f \equiv e \while b f\),
		\begin{align*}
			(be) \while b f
			&\equiv_\dagger (be)\cdot (be)\while b f +_b f \tag{\textsf{FP}}\\
			&\equiv_\dagger b(e\cdot (be)\while b f) +_b f \tag{prev.}\\
			&\equiv_\dagger e\cdot (be)\while b f +_b f \tag{prev.} \\
			&\equiv_\dagger e \while b f \tag{\textsf{RSP}}
		\end{align*}
	\end{itemize}
\end{proof}

\section{Proofs for \cref{sec:bisim complete}}

\lemmadeterministiccharacterisation*

\begin{proof}
	For any set \(X\), let \(P_{\det}X = \{U \in \P(\At \cdot \Sigma \times (\checkmark + X)) \mid U~\text{is graph-like}\}\) and define \(\func_X : P_{\det}X \to (\bot + \Sigma \times (\checkmark + X))^\At\) by
 \[
	 \func_X(U)(\alpha) =
        \begin{cases}
        (p, x) & (\alpha p, x) \in U \\
        \bot & \text{otherwise}
        \end{cases}
 \]
 Let \((S, t)\) be a deterministic LTS\@. Then \(\func_*(S, t) = (S, \func_S \circ t)\) is a skip-free automaton such that \(\grph_*\func_*(S, t) = (S, t)\), because \(\grph_S\circ \func_S = \id_{P_{\det} (S)}\).
\end{proof}

\lemmagtrahomom*

\begin{proof}
	Recall that for any function \(f \colon X \to Y\), \(P(f)\colon P(X) \to P(Y)\) is given by
	\begin{gather*}
		P(f)(\{(\alpha_i p_i, x_i) \mid i \in I\} \cup \{(\alpha_j p_j, \checkmark) \mid j \in J\})
		\\=
		\{(\alpha_i p_i, f(x_i)) \mid i \in I\}
		\cup \{(\alpha_j p_j, \checkmark) \mid j \in J\}
	\end{gather*}
	for any index sets \(I\) and \(J\).
    By \Cref{lem:same as homom} it suffices to show that $\gtr$ is a $P$-coalgebra homomorphism from \(\grph_*(\GExp^-, \partial)\) to $(\SExp, \tau)$, i.e., that \(\tau \circ \gtr(e) = P(\gtr) \circ \grph_{\GExp^-} \circ \partial(e)\).
    We prove this by induction on \(e \in \GExp^-\).
	\begin{itemize}
		\item Since \(\gtr(0) = 0\), \[
			\tau \circ \gtr(0) = \tau(0) = \emptyset = P(\gtr)\circ \grph_{\GExp^-} \circ \partial(0)
		\]
		\item Now let \(p \in \Sigma\).
		\begin{align*}
			\tau \circ \gtr(p)
			&= \tau\Big(\sum_{\alpha \in \At} \alpha p\Big) \\
			&= \{(\alpha p, \checkmark) \mid \alpha \in \At\} \\
			&= P(\gtr) (\{(\alpha p, \checkmark) \mid \alpha \in \At\}) \\
			&= P(\gtr) \circ \grph_{\GExp^-} \circ \partial(p)
		\end{align*}
		since \(\grph_{\GExp^-} \circ \partial(p) = \{(\alpha p, \checkmark) \mid \alpha \in \At\}\).
	\end{itemize}
	This covers the base case.
	For the induction step, assume that \(\tau\circ\gtr(e_i) = P(\gtr)\circ\grph_{\GExp^-}\circ\partial(e_i)\) for \(i = 1, 2\).
	\begin{itemize}
		\item Consider \(e_1 +_b e_2\).
		Observe that
		\begin{equation}
			\label{graph eq 1}
			\begin{aligned}
				\grph_{\GExp^-}\circ\partial(e_1 +_b e_2)
				&= \{(\alpha p, \xi) \mid \alpha \le b, e_1 \tr{\alpha \mid p} \xi\} \\&\hspace{4em}\cup
				\{(\alpha p, \xi) \mid \alpha \le \bar b, e_2 \tr{\alpha \mid p} \xi\}
			\end{aligned}
		\end{equation}
		Now, \(\gtr(e_1 +_b e_2) = b \cdot \gtr(e_1) + \bar b \cdot \gtr(e_2)\), and by the induction hypothesis, \(\gtr(e_i) \tr{\alpha p} r \in \SExp\) if and only if there is a \(g \in \GExp^-\) such that \(e_i \tr{\alpha \mid p} g\) and \(r = \gtr(g)\).
		This implies that \(b \cdot \gtr(e_1) \tr{\alpha p} r\) if and only if \(\alpha \le b\) and \(\gtr(e_i) \tr{\alpha p} \gtr(g)\) where \(e_i \tr{\alpha \mid p} g\).
		Therefore,
		\begin{align*}
			&\tau\circ\gtr(e_1 +_b e_2) \\
			&= \tau(b \cdot \gtr(e_1) + \bar b \cdot \gtr(e_2)) \\
			&= \{(\alpha p, \gtr(g)) \mid \alpha \le b, e_1 \tr{\alpha\mid p} g\} \cup \{(\alpha p, \checkmark) \mid \alpha \le b, e_1 \tr{\alpha \mid p} \checkmark\} \tag{IH}\\
			&\hspace*{4em}\cup \{(\alpha p, \gtr(g)) \mid \alpha \le \bar b, e_2 \tr{\alpha \mid p} g\} \cup \{(\alpha p, \checkmark) \mid \alpha \le \bar b, e_2 \tr{\alpha \mid p} \checkmark\} \\
			&= P(\gtr)(\{(\alpha p, g) \mid \alpha \le b, e_1 \tr{\alpha\mid p} g\} \cup \{(\alpha p, \checkmark) \mid \alpha \le b, e_1 \tr{\alpha \mid p} \checkmark\} \\
			&\hspace*{4em}\cup \{(\alpha p, g) \mid \alpha \le \bar b, e_2 \tr{\alpha \mid p} g\} \cup \{(\alpha p, \checkmark) \mid \alpha \le \bar b, e_2 \tr{\alpha \mid p} \checkmark\}) \\
			&= P(\gtr) \circ \grph_{\GExp^-} \circ \partial(e_1 +_b e_2) \tag{by~\eqref{graph eq 1}}
		\end{align*}

		\item Now consider \(e_1e_2\).
		Recall \(\gtr(e_1e_2) = \gtr(e_1)\gtr(e_2)\).
		If \(\gtr(e_1)\gtr(e_2) \tr{\alpha p} r \in \SExp\), then either \(r = r' \gtr(e_2)\) such that \(\gtr(e_1) \tr{\alpha p} r'\), or \(\gtr(e_1) \tr{\alpha p} \checkmark\) and \(r = \gtr(e_2)\).
		By the induction hypothesis, either there is a \(g \in \GExp^-\) such that \(e_1 \tr{\alpha \mid p} g\) and \(r' = \gtr(g)\), or \(e_1 \tr{\alpha \mid p} \checkmark\).
		Therefore,
		\begin{align*}
			&\tau \circ \gtr(e_1e_2) \\
			&= \tau(\gtr(e_1)\gtr(e_2)) \\
			&= \{(\alpha p, r \gtr(e_2)) \mid \gtr(e_1) \tr{\alpha p} r\} \cup \{(\alpha p, \gtr(e_2)) \mid \gtr(e_1) \tr{\alpha p} \checkmark\} \\
			&= \{(\alpha p, \gtr(g) \gtr(e_2)) \mid e_1 \tr{\alpha \mid p} g\} \cup \{(\alpha p, \gtr(e_2)) \mid e_1 \tr{\alpha \mid p} \checkmark\} \tag{IH} \\
			&= P(\gtr)(\{(\alpha p, g e_2) \mid e_1 \tr{\alpha \mid p} g\} \cup \{(\alpha p, e_2) \mid e_1 \tr{\alpha \mid p} \checkmark\}) \\
			&= P(\gtr)\circ\grph_{\GExp^-}\circ\partial(e_1e_2)
		\end{align*}

		\item Finally, consider \(e_1\while{b}e_2\).
		Since \(\gtr(e_1\while{b}e_2) = (b\cdot \gtr(e_1))*(\bar b \cdot \gtr(e_2))\), if \(\gtr(e_1\while{b}e_2) \tr{\alpha p} r \in \SExp\), there are three possibilities: either (1) \(\alpha \le b\) and \(\gtr(e_1) \tr{\alpha p} \checkmark\) and \(\gtr(e_1\while{b}e_2)\tr{\alpha p} \gtr(e_1\while{b}e_2)\); (2) \(\alpha \le b\) and \(\gtr(e_1) \tr{\alpha p} s\) and \(\gtr(e_1\while{b}e_2) \tr{\alpha p} s(b \cdot \gtr(e_1))*(\bar b \cdot \gtr(e_2))\); or (3) \(\alpha \le \bar b\) and \(\gtr(e_2) \tr{\alpha p} r\).
		In case (1), by the induction hypothesis, \(e_1 \tr{\alpha \mid p} \checkmark\).
		In case (2), by the induction hypothesis, there is a \(g \in \GExp^-\) such that \(e_1 \tr{\alpha \mid p} g\) and \(s = \gtr(g)\).
		This would make \(r = \gtr(g(e_1\while{b}e_2))\).
		In case (3), by the induction hypothesis, there is a \(g \in \GExp^-\) such that \(e_2 \tr{\alpha \mid p} g\) and \(r = \gtr(g)\).
		Therefore,
		\begin{align*}
			\tau \circ \gtr(e_1\while{b}e_2)
			&= \tau((b \cdot \gtr(e_1))*(\bar b \cdot \gtr(e_2))) \\
			&= \{(\alpha p, \gtr(e_1 \while b e_2)) \mid \alpha\le b, \gtr(e_1) \tr{\alpha p} \checkmark\} \\
			&\hspace*{4em}\cup \{(\alpha p, r\gtr(e_1 \while b e_2)) \mid \alpha\le b, \gtr(e_1) \tr{\alpha p} r\} \\
			&\hspace*{4em}\cup \{(\alpha p, r) \mid \alpha\le \bar b, \gtr(e_2) \tr{\alpha p} r\} \\
			&= \{(\alpha p, \gtr(e_1 \while b e_2)) \mid \alpha\le b, e_1 \tr{\alpha \mid p} \checkmark\}\tag{IH} \\
			&\hspace*{4em}\cup \{(\alpha p, \gtr(g)\gtr(e_1 \while b e_2)) \mid \alpha\le b, e_1 \tr{\alpha \mid p} g\} \\
			&\hspace*{4em}\cup \{(\alpha p, \gtr(g)) \mid \alpha\le \bar b, e_2 \tr{\alpha \mid p} g\} \\
			&= \{(\alpha p, \gtr(e_1 \while b e_2)) \mid \alpha\le b, e_1 \tr{\alpha \mid p} \checkmark\} \\
			&\hspace*{4em}\cup \{(\alpha p, \gtr(g(e_1 \while b e_2))) \mid \alpha\le b, e_1 \tr{\alpha \mid p} g\} \\
			&\hspace*{4em}\cup \{(\alpha p, \gtr(g)) \mid \alpha\le \bar b, e_2 \tr{\alpha \mid p} g\} \\
			&= P(\gtr)( \{(\alpha p, e_1 \while b e_2) \mid \alpha\le b, e_1 \tr{\alpha \mid p} \checkmark\}\\
			&\hspace*{4em}\cup \{(\alpha p, g(e_1 \while b e_2)) \mid \alpha\le b, e_1 \tr{\alpha \mid p} g\} \\
			&\hspace*{4em}\cup \{(\alpha p, g) \mid \alpha\le \bar b, e_2 \tr{\alpha \mid p} g\}) \\
			&= P(\gtr) \circ \grph_{\GExp^-} \circ \partial(e_1\while b e_2)
		\end{align*}
	\end{itemize}
	This concludes the proof.
\end{proof}

Given \(r, s \in \SExp\), write \(r \perp_b s\) if \(b\) is a test separating \(r\) from \(s\).

\begin{lemma}\label{lem:arbitrary seperation}
	Let \(b\) be any test.
	If \(r \perp_b s\), then \(\rtg(r + s) \equiv_\dagger \rtg(r) +_b \rtg(s)\) and \(\rtg(r * s) \equiv_\dagger \rtg(r) ^{(b)} \rtg(s)\).
\end{lemma}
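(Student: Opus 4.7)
The plan reduces the claim to showing that two separating tests for $r$ and $s$ yield $\equiv_\dagger$-equivalent expressions on the GKAT side. By construction, $b(r,s)$ is a separating test, and any other separating test (such as $b$) satisfies $b \leq b(r,s)$ by maximality, so it suffices to prove that $\rtg(r) +_b \rtg(s) \equiv_\dagger \rtg(r) +_{b(r,s)} \rtg(s)$ and the analogous statement for the loop, whenever both $b$ and $b(r,s)$ separate $r$ from $s$.

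First I establish a restriction-preservation lemma: for any $r \in \Det$ and test $b$ satisfying $b \cdot r \equiv_* r$, we have $\rtg(r) \equiv_\dagger b \cdot \rtg(r)$ in GKAT, where $b \cdot e$ abbreviates $e +_b 0$. The proof is by induction on the $\Det$-construction. The cases $r = 0$ and $r = \alpha p$ (necessarily with $\alpha \leq b$) follow by direct computation. For $r = r_1 r_2$, one uses \cref{thm:grabmayer-fokkink} together with determinism to pass from $(b r_1) r_2 \equiv_* r_1 r_2$ to $b r_1 \equiv_* r_1$, apply the inductive hypothesis to $r_1$, and conclude via the identity $b(xy) \equiv_\dagger (bx)y$ from \cref{lemma:useful-equalities}. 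The case $r = r_1 + r_2$ (with $r_1, r_2$ separated) proceeds analogously, exploiting $b(x +_c y) \equiv_\dagger bx +_c by$. The case $r = r_1 * r_2$ combines these: once the restriction is propagated to $r_1$ and $r_2$, one unfolds $\rtg(r_1)^{(c)}\rtg(r_2)$ via (\textsf{FP}), distributes $b$ through the resulting conditional, and refolds via (\textsf{RSP}).

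Next, I prove a test-invariance lemma: if $e_1, e_2 \in \GExp^-$ satisfy $e_1 \equiv_\dagger b e_1 \equiv_\dagger b' e_1$ and $e_2 \equiv_\dagger \bar b e_2 \equiv_\dagger \bar{b'} e_2$, then both $e_1 +_b e_2 \equiv_\dagger e_1 +_{b'} e_2$ and $e_1^{(b)} e_2 \equiv_\dagger e_1^{(b')} e_2$. Setting $c = b \wedge b'$ and $d = b \vee b'$, the hypotheses strengthen to $e_1 \equiv_\dagger c e_1$ and $e_2 \equiv_\dagger \bar d e_2$. Then, using the derived law $(x +_b y) +_c z \equiv_\dagger x +_{b \wedge c}(y +_c z)$ (a consequence of (\textsf{G2}) and (\textsf{G3})), along with $b(x +_c y) \equiv_\dagger bx +_c by$ and $b \cdot 0 \equiv_\dagger 0$ from \cref{lemma:useful-equalities}, a direct calculation shows $c e_1 +_b \bar d e_2 \equiv_\dagger e_1 +_c \bar d e_2$; the symmetric calculation with $b'$ in place of $b$ yields the same right-hand side, settling the conditional case. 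For the loop, let $g = e_1^{(b)} e_2$. By (\textsf{FP}), $g \equiv_\dagger e_1 g +_b e_2$, and using $b(xy) \equiv_\dagger (bx) y$ the restriction on $e_1$ lifts to $e_1 g$, so the conditional case applies and gives $g \equiv_\dagger e_1 g +_{b'} e_2$; an application of (\textsf{RSP}) then yields $g \equiv_\dagger e_1^{(b')} e_2$.

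The claim now follows directly: since both $b$ and $b(r,s)$ separate $r$ and $s$, the restriction lemma supplies $\rtg(r) \equiv_\dagger b \rtg(r) \equiv_\dagger b(r,s) \rtg(r)$ and $\rtg(s) \equiv_\dagger \bar b \rtg(s) \equiv_\dagger \overline{b(r,s)} \rtg(s)$, and the test-invariance lemma converts the former's defining test $b(r,s)$ to $b$ in both the conditional and the loop, giving $\rtg(r + s) = \rtg(r) +_{b(r,s)} \rtg(s) \equiv_\dagger \rtg(r) +_b \rtg(s)$ and $\rtg(r * s) = \rtg(r)^{(b(r,s))} \rtg(s) \equiv_\dagger \rtg(r)^{(b)} \rtg(s)$. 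The main obstacle is the $r_1 * r_2$ case of the restriction lemma, where descending from a restriction on a compound expression to restrictions on its sub-expressions requires appealing to \cref{thm:grabmayer-fokkink} and the unique decomposition of deterministic LTSs, an argument also needed in simpler form for the sequencing case.
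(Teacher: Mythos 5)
Your proof is correct and takes essentially the same route as the paper: the bulk of the work is the identical restriction-preservation property (if \(r \equiv_* b\cdot r\) then \(\rtg(r) \equiv_\dagger b\rtg(r)\)), proved by induction on \(\Det\) with the same delicate sequencing and star cases (the paper handles the cancellation step via transition-level claims and the fundamental theorem rather than your appeal to completeness and determinism, but both are sound). The only difference is the endgame: the paper uses maximality of \(b(r,s)\) (so \(b \leq b(r,s)\)) for a short direct calculation, whereas you prove a slightly more general test-invariance lemma for any two separating tests --- a cosmetic variation that also works.
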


\begin{proof}
	We are going to begin by showing that \(\rtg(r) \equiv_\dagger b\rtg(r)\) when \(r \equiv_* b\cdot r\), by induction on \(r \in \Det\).
	\begin{itemize}
		\item
		In the first base case, \(r = 0\).
		Of course, \(b \cdot 0 = 0\) for any \(b \in \BExp\).
		By assumption, \(b\rtg(r) = b0 \equiv_\dagger 0 = \rtg(0)\).

		\item
		For the second base case, consider $\alpha p$ for some \(\alpha \in \At\) and \(p \in \Sigma\).
		If \(\alpha \le \bar b\), then \(b\cdot \alpha p = 0 \not\equiv_* \alpha p\), so we can rule out this possibility by soundness.
		Otherwise, \(\alpha p \equiv_* b\cdot \alpha p\) and
		\begin{align*}
			b\rtg(\alpha p)
			&= b(p +_\alpha  0) \\
			&\equiv_\dagger bp +_\alpha  b0 \tag{\cref{lemma:useful-equalities}}\\
			&\equiv_\dagger p+_\alpha b0 \tag{\cref{lemma:useful-equalities}} \\
			&\equiv_\dagger p+_\alpha 0 \tag{\textsf{G1}} \\
			&= \rtg(\alpha p)
		\end{align*}

		\item
		For the first inductive case, let \(r_1 \perp_d r_2\) and \(b\cdot (r_1 + r_2) \equiv_* r_1 + r_2\).
		Then
		\begin{align*}
			(b\wedge d)\cdot (r_1 + r_2)
			&\equiv_* b \cdot (d \cdot r_1 + d \cdot r_2) \\
			&\equiv_* b \cdot (d \cdot r_1 + d \cdot (\overline{d} \cdot r_2)) \\
			&\equiv_* b \cdot (d \cdot r_1 + (d \wedge \overline{d}) \cdot r_2) \\
			&\equiv_* b \cdot (d \cdot r_1 + 0 \cdot r_2) \\
			&\equiv_* b \cdot (d \cdot r_1 + 0) \\
			&\equiv_* b \cdot (d \cdot r_1) \\
			&\equiv_* b \cdot r_1 \\
			&\equiv_* r_1
		\end{align*}
		and similarly \(r_2 \equiv_* (b \wedge \bar d)\cdot (r_1 + r_2)\).
		Hence \(r_i \equiv_* b \cdot r_i\) for \(i\in\{1,2\}\), so
		\begin{align*}
			b\rtg(r_1 + r_2)
			&\equiv_\dagger b(\rtg(r_1) +_{b(r_1, r_2)} \rtg(r_2)) \\
			&\equiv_\dagger b\rtg(r_1) +_{b(r_1,r_2)} b\rtg(r_2) \tag{\cref{lemma:useful-equalities}} \\
			&\equiv_\dagger \rtg(r_1) +_{b(r_1, r_2)} \rtg(r_2) \tag{ind.~hyp.} \\
			&= \rtg(r_1 + r_2)
		\end{align*}

		\item For the sequential composition case, if \(r_1r_2 \equiv_*b\cdot (r_1 r_2)\), then we first argue that \(r_1 \equiv_* b\cdot r_1\).
		To see this, we first make two claims.
        \begin{itemize}
            \item
            \(r_1 \tr{\alpha \mid p} \checkmark\) if and only if \(b \cdot r_1 \tr{\alpha \mid p} \checkmark\).
            For the direction from left to right, note that if \(r_1 \tr{\alpha \mid p} \checkmark\), then \(r_1r_2 \tr{\alpha \mid p} r_2\).
            But then, by the assumption that \(r_1r_2 \equiv_* b \cdot (r_1r_2) = (b \cdot r_1)r_2\) and soundness, we have that \((b \cdot r_1)r_2 \tr{\alpha \mid p} r_2'\) for some \(r_2'\).
            If \(r_2' = r_2\) and \(b \cdot r_1 \tr{\alpha \mid p} \checkmark\), then we are done.
            Otherwise, if \(r_2' = r_1' \cdot r_2\) such that \(b \cdot r_1 \tr{\alpha \mid p} r_1'\), then a simple inductive proof shows that \(r_1 \tr{\alpha \mid p} r_1'\) as well --- we can therefore rule out this case.
            Conversely, if \(b \cdot r_1 \tr{\alpha \mid p} \checkmark\), then \(r_1 \tr{\alpha \mid p} \checkmark\), by another induction on \(r_1\).
            \item
            If \(r_1 \tr{\alpha \mid p} r_1'\), then \(b \cdot r_1 \tr{\alpha \mid p} r_1''\) such that \(r_1' \equiv_* r_1''\); conversely, if \(b \cdot r_1 \tr{\alpha \mid p} r_1'\), then \(r_1 \tr{\alpha\mid p} r_1'\).
            The first property follows by an argument similar to the previous claim; the second can be shown by induction on \(r_1\).
        \end{itemize}

		We can now use the \emph{fundamental theorem} for one-free regular expressions~\cite[Proposition 2.9]{regexlics}, which states that for any \(r \in \SExp\),
		\begin{equation}\label{eq:fundamental theorem}
			r \equiv_* \sum_{r \tr{\alpha p} \checkmark} \alpha p + \sum_{r \tr{\alpha p} s} \alpha p~s
		\end{equation}
		This allows us to derive that
		\begin{align*}
			r_1
			&\equiv_* \sum_{r_1 \tr{\alpha p} \checkmark} \alpha p + \sum_{r_1 \tr{\alpha p} s} \alpha p s \tag{fund.~thm.} \\
			&\equiv_* \sum_{b\cdot r_1 \tr{\alpha p} \checkmark} \alpha p + \sum_{b\cdot r_1 \tr{\alpha p} s} \alpha p s \tag{observations above} \\
			&\equiv_* b\cdot r_1 \tag{fund.~thm.}
		\end{align*}
		Hence, \(r_1 \equiv_*b \cdot r_1\).
		This in turn allows us to deduce
		\begin{align*}
			\rtg(r_1r_2)
			= \rtg(r_1)\rtg(r_2)
			&\equiv_\dagger (b\rtg(r_1))\rtg(r_2) \tag{ind.~hyp.}\\
			&\equiv_\dagger b(\rtg(r_1)\rtg(r_2)) \tag{\cref{lemma:useful-equalities}} \\
			&= b\rtg(r_1r_2)
		\end{align*}

		\item For the star case, assume \(b \cdot (r_1*r_2) \equiv_* r_1 * r_2\) for \(i = 1,2\) and that \(r_1\) and \(r_2\) are separated by \(d\).
		Notice that this entails
		\[
			r_1 (r_1*r_2) + r_2 \equiv_* r_1 * r_2 \equiv_* b\cdot (r_1 * r_2) \equiv_* b \cdot (r_1 (r_1*r_2) + r_2)
		\]
		It follows that \(b \cdot r_2 \equiv_* r_2\) and \(b \cdot (r_1(r_1 * r_2)) \equiv_* r_1(r_1*r_2)\), like in the sum case.
		It then follows that \(b \cdot r_1 \equiv_* r_1\), by the same reasoning in the sequential composition case.
		The induction hypothesis tells us \(\rtg(r_i) \equiv_\dagger b\rtg(r_i)\) for \(i \in \{1,2\}\).
		So,
		\begin{align*}
			\rtg(r_1*r_2)
			&\equiv_\dagger \rtg(r_1)\rtg(r_1 * r_2) +_{b(r_1,r_2)} \rtg(r_2) \\
			&\equiv_\dagger (b\rtg(r_1))\rtg(r_1 * r_2) +_{b(r_1,r_2)} b\rtg(r_2) \tag{ind.~hyp.}\\
			&\equiv_\dagger b(\rtg(r_1)\rtg(r_1 * r_2)) +_{b(r_1,r_2)} b\rtg(r_2) \tag{\cref{lemma:useful-equalities}}\\
			&\equiv_\dagger b(\rtg(r_1)\rtg(r_1 * r_2) +_{b(r_1,r_2)} \rtg(r_2)) \tag{\cref{lemma:useful-equalities}}\\
			&\equiv_\dagger b\rtg(r_1 * r_2)
		\end{align*}
	\end{itemize}
	This concludes the proof of the intermediate property, that \(\rtg(r) \equiv_\dagger b\rtg(r)\) when \(r \equiv_* b\cdot r\).

	Now let \(c = b(r, s)\), the maximal test separating \(r\) from \(s\).
	The key insight here is that \(b \le c\), so that \(x +_c y \equiv_{\dagger} (x +_b z) +_c y\).
	With this observation in hand, we can further calculate
	\begin{align*}
		\rtg(r + s)
		&= \rtg(r) +_c \rtg(s)  \\
		&\equiv_\dagger b\rtg(r) +_c \rtg(s) \tag{inter.~prop.} \\
		&= (\rtg(r) +_b 0) +_c \rtg(s)\\
		&\equiv_\dagger \rtg(r) +_b (0 +_c \rtg(s)) \tag{\textsf{G3} and $b \leq c$}\\ 
		&\equiv_\dagger \rtg(r) +_b \bar c\rtg(s) \tag{\textsf{G2}}\\
		&\equiv_\dagger \rtg(r) +_b \rtg(s) \tag{inter.~prop.}
	\end{align*}
	as well as
	\begin{align*}
		\rtg(r * s)
		&= \rtg(r)^{(c)} \rtg(s)  \\
		&\equiv_\dagger \rtg(r)\rtg(r)^{(c)}\rtg(s) +_c \rtg(s)  \tag{\textsf{FP}}\\
		&\equiv_\dagger (b\rtg(r))(\rtg(r)^{(c)} \rtg(s)) +_c \rtg(s) \tag{inter.~prop.}\\
		&\equiv_\dagger b(\rtg(r)\rtg(r)^{(c)} \rtg(s)) +_c \rtg(s) \tag{\cref{lemma:useful-equalities}}\\
		&= (\rtg(r)\rtg(r)^{(c)} \rtg(s) +_b 0) +_c \rtg(s) \\
		&\equiv_\dagger \rtg(r)(\rtg(r)^{(c)} \rtg(s)) +_b (0 +_c \rtg(s))\tag{\textsf{G3} and $b \leq c$}\\ 
		&\equiv_\dagger \rtg(r)(\rtg(r)^{(c)} \rtg(s)) +_b \bar c\rtg(s) \tag{\textsf{G2}}\\
		&\equiv_\dagger \rtg(r)(\rtg(r)^{(c)} \rtg(s)) +_b \rtg(s) \tag{inter.~prop.}\\
		&= \rtg(r)(\rtg(r * s)) +_b \rtg(s)\\
		&\equiv_\dagger \rtg(r) \while b \rtg(s) \tag{\textsf{RSP}}
	\end{align*}
	This concludes the proof overall.
\end{proof}

The next result is the key step to proving \cref{lem:rtg is also good stuff}, which establishes the most important property of \(\rtg\) used in the completeness proof for skip-free bisimulation \(\GKAT\).

\restatedetproofthm*

The proof of this theorem is the subject of \cref{app:proof of fix}.

\lemconservativity*

\begin{proof}
	Let \(r, s \in \Det\), and suppose \(r \equiv_* s\).
	Then, by \cref{thm:sf_gkat_fix}, \(r \equiv_*^{\det} s\).
	We proceed by induction on the deterministic derivation of \(r \equiv_*^{\det} s\).
	In the base case, we verify the one-free star behaviour axioms directly.
	\begin{description}
		\item[(Idem.)] Every \(r \in \Det\) such that \(r + r \in \Det\) satisfies \(r \equiv_* 0\), for if $r \perp_b r$ then
        \[
            r
                \equiv_* b \cdot (\overline{b} \cdot r)
                \equiv_* (b \wedge \overline{b}) \cdot r
                \equiv_* 0 \cdot r
                \equiv_* 0
        \]
        so by (G0) we find
		\(\rtg(r + r)
		= \rtg(r) +_1 \rtg(r)
		\equiv_\dagger \rtg(r)\).

		\item[(\(+\)-Zero)] The maximal test separating \(r\) from \(0\) is \(1\), so \(\rtg(r + 0) = \rtg(r) +_1 \rtg(0) \equiv_\dagger \rtg(r)\) by (G0).

		\item[(Comm.)] If \(r \perp_b s\), then
		\(
		\rtg(r + s)
		\equiv_\dagger \rtg(r) +_b \rtg(s)
		\equiv_\dagger \rtg(s) +_{\bar b} \rtg(r)
		\equiv_\dagger \rtg(s + r)
		\)
		by (\textsf{G2}) and \cref{lem:arbitrary seperation} (because \(s \perp_{\bar b} r\)).

		\item[(\(+\)-Asso.)] Letting \(r \perp_b s\) and \((r + s) \perp_c t\), we have \(r \perp_{b\wedge c} t\) and \(s \perp_c t\) as well.
		Using \cref{lem:arbitrary seperation}, we derive
		\begin{align*}
			\rtg((r + s) + t)
			&\equiv_\dagger (\rtg(r) +_b \rtg(s)) +_c \rtg(t) \\
			&\equiv_\dagger \rtg(r) +_{b\wedge c} (\rtg(s) +_c \rtg(t))  \tag{\cref{lemma:useful-equalities}}\\
			&\equiv_\dagger \rtg(r) +_{b\wedge c} \rtg(s + t) \\
			&\equiv_\dagger \rtg(r + (s + t))
		\end{align*}

		\item[(L-Zero)] In the left zero case we find that by (\textsf{G6}),
		\[ \rtg(0r) = \rtg(0)\rtg(r) = 0\rtg(r) \equiv_\dagger 0 = \rtg(0)\]

		\item[(\(\cdot\)-Asso.)] No further considerations in the associativity case:
		\begin{align*}
			\rtg(r(st))
			&= \rtg(r)\rtg(st)   \\
			&= \rtg(r)(\rtg(s)\rtg(t))   \\
			&\equiv_\dagger (\rtg(r)\rtg(s))\rtg(t) \tag{G7}\\
			&= \rtg((rs)t)
		\end{align*}

		\item[(L-Dist.)] If \(r \perp_b s\), then \(rt\perp_b st\), so
		\begin{align*}
			\rtg((r + s)t)
			&= \rtg(r + s)\rtg(t)  \\
			&\equiv_\dagger (\rtg(r) +_b \rtg(s))\rtg(t) \tag{\cref{lem:arbitrary seperation}}\\
			&\equiv_\dagger \rtg(r)\rtg(t) +_b \rtg(s)\rtg(t) \tag{\textsf{G8}} \\
			&\equiv_\dagger \rtg(rt + st) \tag{\cref{lem:arbitrary seperation}}
		\end{align*}

		\item[(\textsf{FP})] Suppose \(r \perp_b s\).
		Then \(r(r * s) \perp_b s\) as well, so
		\begin{align*}
			\rtg(r * s)
			&\equiv_\dagger \rtg(r) \while{b} \rtg(s) \tag{\cref{lem:arbitrary seperation}}\\
			&\equiv_\dagger \rtg(r) (\rtg(r) \while{b} \rtg(s)) +_b \rtg(s) \\
			&\equiv_\dagger \rtg(r) \rtg(r*s) +_b \rtg(s) \tag{\cref{lem:arbitrary seperation}}\\
			&= \rtg(r(r*s)) +_b \rtg(s) \\
			&\equiv_\dagger \rtg(r(r*s) + s) \tag{\cref{lem:arbitrary seperation}}
		\end{align*}
	\end{description}
	For the inductive step, assume \(r_i \equiv_*^{\det} s_i\) implies \(\rtg(r_i) \equiv_\dagger \rtg(s_i)\) for \(i = 1,2\).
	We consider the \(+\) and \(*\) congruence rules (the \(\cdot\) congruence rule is trivial), recursive specification rule (\textsf{RSP}), symmetry (Sym) and transitivity (Tra).
	\begin{description}
		\item[(\(+\))] Suppose the deterministic proof ends with
		\[
		\prftree{r_1 = s_1 \qquad r_2 = s_2}{r_1 + r_2 = s_1 + s_2}
		\]
		and assume that \(\rtg(r_i) \equiv_\dagger \rtg(s_i)\).
		We are also assuming that \(r_1 + r_2\) and \(s_1 + s_2\) are deterministic, so we must have \(b_r,b_s \in \BExp\) such that \(r_1 \perp_{b_r} r_2\), and \(s_1 \perp_{b_s} s_2\).
		However, since \(r_1 \equiv_* s_1\) and \(r_2 \equiv_* s_2\), \(r_1 \perp_b s_1\) if and only if \(r_2 \perp_b s_2\), so we may as well take \(b_r = b_s = b\).
		We have
		\begin{align*}
			\rtg(r_1 + r_2)
			\equiv_\dagger \rtg(r_1) +_b \rtg(r_2)
			\equiv_\dagger \rtg(s_1) +_b \rtg(s_2)
			\equiv_\dagger \rtg(s_1 + s_2)
		\end{align*}

		\item[(\(*\))] Suppose the deterministic proof ends with
		\[
		\prftree{r_1 = s_1 \qquad r_2 = s_2}{r_1 * r_2 = s_1 * s_2}
		\]
		and assume that \(\rtg(r_i) \equiv_\dagger \rtg(s_i)\), \(i = 1,2\).
		Again, we obtain a common $b \in \BExp$ such that \(r_1 \perp_b r_2\) and \(s_1 \perp_b s_2\).
		Using \cref{lem:arbitrary seperation}, we have
		\begin{align*}
			\rtg(r_1 * r_2)
			\equiv_\dagger \rtg(r_1)\while{b} \rtg(r_2)
			\equiv_\dagger \rtg(s_1)\while{b} \rtg(s_2)
			\equiv_\dagger \rtg(s_1 * s_2)
		\end{align*}

		\item[(\textsf{RSP})] Suppose the deterministic proof ends with the rule
		\(
		t = rt + s \Rightarrow t = r * s
		\)
		and assume that \(\rtg(t) \equiv_\dagger \rtg(rt + s)\).
		Since we are also assuming that \(r*s \in \Det\), there is a \(b \in \BExp\) such that \(r \perp_b s\).
		We also have \(rt \perp_b s\), so
		\(
		\rtg(t) \equiv_\dagger \rtg(r)\rtg(t) +_b \rtg(s)
		\) by \cref{lem:arbitrary seperation}.
		It follows from (\textsf{RSP}) that
		\begin{align*}
			\rtg(t) &\equiv_\dagger \rtg(r)\while{b}\rtg(s) \equiv_\dagger \rtg(r * s)
		\end{align*}

		\item[(Sym.)] Suppose the deterministic proof ends with
		\[
		\prftree{r_1 = r_2}{r_2 = r_1}
		\]
		Then by symmetry of \(\equiv_\dagger\) and the induction hypothesis, \(\rtg(r_2) \equiv_\dagger \rtg(r_1)\).

		\item[(Tra.)] Suppose the deterministic proof ends with
		\[
		\prftree{r_1 = s \qquad s = r_2}{r_1 = r_2}
		\]
		Then by assumption, \(r_1 \equiv_*^{\det} s\) and \(s \equiv_*^{\det} r_2\).
		By the induction hypothesis, \(\rtg(r_1) \equiv_\dagger \rtg(s) \equiv_\dagger \rtg(r_2)\).
		This implies that \(\rtg(r_1) \equiv_\dagger \rtg(r_2)\) by transitivity of \(\equiv_\dagger\).
	\end{description}
\end{proof}


\noindent
We will need the following lemma in the sequel.

\begin{lemma}\label{lem:gkat proofs to regex}
	If \(e \equiv_\dagger f\), then \(\gtr(e) \equiv_* \gtr(f)\).
\end{lemma}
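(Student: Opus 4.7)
The plan is to avoid an induction on derivations entirely and instead chain together three results already established earlier in the paper: soundness of $\equiv_\dagger$ for skip-free bisimulation, the fact that $\gtr$ transports bisimilarity, and the Grabmayer--Fokkink completeness theorem for one-free star expressions.

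Concretely, suppose $e \equiv_\dagger f$. First I would invoke \cref{thm:soundness}(1) to conclude $e \bisim f$ in the syntactic skip-free automaton $(\GExp^-, \partial)$. Then \cref{lem:bisim iff bisim} lifts this to bisimilarity in the corresponding labelled transition system $\grph_*(\GExp^-,\partial)$. Next, by \cref{lem:gtr is good stuff}, the graph of $\gtr$ is itself a bisimulation between $\grph_*(\GExp^-,\partial)$ and $(\SExp,\tau)$, so composing bisimulations (\cref{lem:bisim facts}) yields $\gtr(e) \bisim \gtr(f)$ in $(\SExp,\tau)$. Finally, \cref{thm:grabmayer-fokkink} converts this bisimilarity back into provable equivalence, giving $\gtr(e) \equiv_* \gtr(f)$.

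No new induction or axiom-by-axiom verification is required, since each ingredient has already been established. The only subtlety is to make sure the composition of bisimulations is legitimate; this uses that $P$ preserves weak pullbacks (the signature functor for one-free star expressions), which follows from the fact that $P$ is a Kripke polynomial functor, so \cref{lem:bisim facts} applies. If one wished to give an axiomatic proof instead, the main obstacle would be establishing that the auxiliary operation $b \cdot (-)$ on $\SExp$ preserves $\equiv_*$, which is not a primitive congruence of $\SExp$ and would itself require a nontrivial induction---another reason the semantic route is preferable.
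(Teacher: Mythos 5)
Your proposal is correct, but it takes a genuinely different route from the paper. The paper proves this lemma purely syntactically, by induction on the derivation of \(e \equiv_\dagger f\): each skip-free \GKAT axiom is translated into an \(\equiv_*\)-derivation (e.g.\ using the auxiliary fact \(b\cdot r + c\cdot r \equiv_* (b\vee c)\cdot r\), shown by induction on \(r\)), and the congruence rules and (\textsf{RSP}) are handled in the inductive step. You instead argue semantically: soundness (\cref{thm:soundness}.1) gives \(e \bisim f\) in \((\GExp^-,\partial)\), \cref{lem:bisim iff bisim} transfers this to \(\grph_*(\GExp^-,\partial)\), \cref{lem:gtr is good stuff} (whose ``consequently'' clause already packages the bisimulation-composition step you worry about) gives \(\gtr(e)\bisim\gtr(f)\) in \((\SExp,\tau)\), and \cref{thm:grabmayer-fokkink} converts this back into \(\gtr(e)\equiv_*\gtr(f)\). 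This chain is logically sound and, importantly, not circular: none of the four ingredients you invoke depends on the present lemma. What the two approaches buy is different: your argument is short and avoids all case analysis, but it leans on the full Grabmayer--Fokkink completeness theorem, a deep external result, to establish what is really just a routine translation of axioms; the paper's induction is longer but elementary, stays entirely within equational reasoning, and exhibits an explicit proof transformation from \(\equiv_\dagger\)-derivations to \(\equiv_*\)-derivations rather than merely asserting existence of some \(\equiv_*\)-proof. Your closing remark about the difficulty of showing that \(b\cdot(-)\) respects \(\equiv_*\) slightly overstates the obstacle---the paper discharges exactly this kind of fact with a one-line auxiliary induction---but that is a matter of taste, not a gap.
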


\begin{proof}
	By induction on the proof of \(e \equiv_* f\).
	In the base case, we consider the skip-free \GKAT axioms directly.
	\begin{itemize}
		\item
		For $e +_b e \equiv_\dagger e$, we derive
		\[\gtr(e +_b e) = b\cdot\gtr(e) + \bar b\cdot\gtr(e) \equiv_* 1\cdot \gtr(e) \equiv_* \gtr(e)\]
		The second equivalence follows from the fact that \(b \cdot r + c \cdot r \equiv_* (b \vee c) \cdot r\) for any \(b,c \in \BExp\) and any one-free regular expression \(r\), which can be shown via a straightforward induction on \(r\).
		\item
		For $e +_1 f \equiv_\dagger e$, we derive
		\[\gtr(e +_1 f) \equiv_* 1\cdot \gtr(e) + 0\cdot \gtr(f) \equiv_* \gtr(e)\]
		\item
		For $e +_b f \equiv_\dagger f +_{\bar b} e$, we derive
		\begin{align*}
			\gtr(e +_b f)
			&= b\cdot\gtr(e) + \bar b\cdot\gtr(f) \\
			&\equiv_* \bar b\cdot\gtr(f) + b\cdot\gtr(e) \\
			&= \gtr(f +_{\bar b} e)
			\intertext{
				\item
				For $(e +_b f)g \equiv_\dagger eg +_b fg$, we derive:
			}
			\gtr((e +_b f)g)
			&= (b\cdot\gtr(e) + \bar b\cdot\gtr(f))\gtr(g) \\
			&\equiv_* b\cdot\gtr(e)\gtr(g) + \bar b\cdot\gtr(f)\gtr(g) \\
			&= \gtr(eg +_b fg)
			\intertext{
				\item
				For $e(fg) \equiv_\dagger (ef)g$, we derive:
			}
			\gtr(e(fg))
			&= \gtr(e)(\gtr(f)\gtr(g)) \\
			&\equiv_* (\gtr(e)\gtr(f))\gtr(g) \\
			&= \gtr((ef)g)
			\intertext{
				\item
				For $0 \cdot e \equiv_\dagger 0$, we derive
				\[\gtr(0e) = \gtr(0)\gtr(e) = 0 \cdot \gtr(e) \equiv_* 0\]
				\item $e \while{b}f \equiv_\dagger e(e\while{b}f) +_b f$ requires some calculation:
			}
			\gtr(e^{(b)}f)
			&= (b\cdot\gtr(e)) *(\bar b\cdot\gtr(f))\\
			&\equiv_* (b\cdot\gtr(e))((b\cdot\gtr(e)) *(\bar b\cdot\gtr(f))) + \bar b\cdot\gtr(f) \\
			&\equiv_* b\cdot(\gtr(e)((b\cdot\gtr(e)) *(\bar b\cdot\gtr(f)))) + \bar b\cdot\gtr(f) \\
			&= b\cdot\gtr(e(e ^{(b)} f)) + \bar b\cdot\gtr(f) \\
			&= \gtr(e(e ^{(b)} f) +_b f)
		\end{align*}
	\end{itemize}
	For the inductive step, we need to consider the congruence properties and the Horn rule; the cases for transitivity and symmetry are trivial.
	\begin{itemize}
		\item[(\(+_b\))] Suppose the proof ends with
		\[
		\prftree{e_1 = f_1 \qquad e_2 = f_2}{e_1 +_b f_1 = e_2 +_b f_2}
		\]
		and assume that \(\gtr(e_i) \equiv_* \gtr(f_i)\).
		Then
		\begin{align*}
			\gtr(e_1 +_b f_1)
			&= b\cdot\gtr(e_1) + \bar b\cdot\gtr(f_1) \\
			&\equiv_* b\cdot\gtr(e_2) + \bar b\cdot\gtr(f_2) \\
			&= \gtr(e_2 +_b f_2)
		\end{align*}

		\item[(\(^{(b)}\))] Suppose the proof ends with
		\[
		\prftree{e_1 = f_2 \qquad e_2 = f_2}{e_1 ^{(b)} f_1 = e_2 ^{(b)} f_2}
		\]
		and assume that \(\gtr(e_i) \equiv_* \gtr(f_i)\).
		Then
		\begin{align*}
			\gtr(e_1 ^{(b)} f_1)
			&= b\cdot\gtr(e_1) * \bar b\cdot\gtr(f_1) \\
			&\equiv_* b\cdot\gtr(e_2) * \bar b\cdot\gtr(f_2) \\
			&= \gtr(e_2 ^{(b)} f_2)
		\end{align*}
		\item Suppose the proof ends with
		\[
		\prftree{g = eg +_b f}{g = e ^{(b)} f}
		\]
		and assume that \(\gtr(g) \equiv_* \gtr(eg +_b f)\).
		Then
		\[
		\gtr(g) \equiv_* (b\cdot\gtr(e))\gtr(g) + \bar b\cdot\gtr(f)
		\]
		By the Horn rule in skip-free \textsf{RegEx},
		\[
		\gtr(g) \equiv_* (b\cdot\gtr(e)) * (\bar b\cdot\gtr(f)) = \gtr(e^{(b)} f)
		\]
	\end{itemize}
\end{proof}

We are now ready to prove the key lemma in our first completeness theorem.

\lemprovableretract*

\begin{proof}
	We begin by proving the following intermediary result: for any \(e \in \GExp^-\) and \(b \in \BExp\), \(\rtg(b\cdot r) \equiv_\dagger b\rtg(r) = \rtg(r) +_b 0\) (*).
	We proceed by induction on the one-free regular expression \(r\).
	\begin{itemize}
		\item In the first base case, we have \(\rtg(b\cdot 0) = 0 \equiv_\dagger 0 +_b 0 = b0 = b\rtg(0)\).
		\item In the second base case, if \(\alpha \le b\),
		\begin{align*}
			\rtg(b\cdot \alpha p) &= \rtg(\alpha p)  \\
							&= p +_\alpha 0 \\
							&\equiv_\dagger p +_\alpha (0 +_b 0) \tag{\textsf{G1}}\\
							&\equiv_\dagger (p +_\alpha 0) +_{b \vee \alpha} 0 \tag{\textsf{G3}}\\
							&\equiv_\dagger (p +_\alpha 0) +_b 0 \tag{BA}\\
							&= \rtg(\alpha p) +_b 0 \\
							&= b\rtg(\alpha p)
		\end{align*}
		If \(\alpha \le \bar b\),
		\begin{align*}
			\rtg(b\cdot \alpha p) &= 0 \\
						&\equiv_\dagger b0 \tag{\textsf{G1}}\\
						&\equiv_\dagger b(p +_\alpha 0) \tag{\cref{lemma:useful-equalities}}\\
						&= b\rtg(\alpha p)
		\end{align*}
		\item In the first inductive step, let \(r \perp_c s\).
		Since \(b\cdot r \perp_{c\wedge b} b \cdot s\),
		\begin{align*}
			\rtg(b\cdot (r + s)) &\equiv_\dagger \rtg(b\cdot r + b\cdot s) \\
						&\equiv_\dagger \rtg(b\cdot r) +_{c\wedge b} \rtg(b\cdot s) \tag{\cref{lem:arbitrary seperation}}\\
						&\equiv_\dagger b\rtg(r) +_{c\wedge b} b\rtg(s)\tag{ind.~hyp.}\\
						&\equiv_\dagger b(\rtg(r) +_{c\wedge b} \rtg(s)) \tag{\cref{lemma:useful-equalities}}\\
						&\equiv_\dagger b(\rtg(r) +_{c} \rtg(s)) \tag{\cref{lemma:useful-equalities}}\\
						&\equiv_\dagger b\rtg(r + s) \tag{\cref{lem:arbitrary seperation}}
		\end{align*}
		\item In the second inductive case,
		\begin{align*}
			\rtg(b\cdot (rs)) &= \rtg((b\cdot r)s) \\
						&= \rtg(b\cdot r)\rtg(s) \\
						&\equiv_\dagger (b\rtg(r))\rtg(s) \tag{ind.~hyp.} \\
						&\equiv_\dagger b(\rtg(r)\rtg(s)) \tag{\textsf{G8},\textsf{G6}}  \\
						&= b\rtg(rs)
		\end{align*}
		\item In the final inductive step, let \(r \perp_c s\).
		Since \(b\cdot r \perp_{c\wedge b} b \cdot s\) and $r(r * s) \perp_c s$,
		\begin{align*}
			\rtg(b\cdot (r*s)) &= \rtg((b\cdot r)(r*s) + b\cdot s) \\
						&\equiv_\dagger \rtg(b\cdot r)\rtg(r*s) +_{b\wedge c}  \rtg(b\cdot s) \tag{\cref{lem:arbitrary seperation}} \\
						&\equiv_\dagger b\rtg(r)\rtg(r*s) +_{b\wedge c} b\rtg(s) \tag{ind.~hyp.}\\
						&\equiv_\dagger b(\rtg(r)\rtg(r*s) +_{b \wedge c}  \rtg(s)) \tag{\cref{lemma:useful-equalities}}\\
						&= b(\rtg(r(r*s)) +_{c} \rtg(s)) \tag{BA} \\
						&= b\rtg(r(r*s) + s) \tag{\cref{lem:arbitrary seperation}}\\
						&\equiv_\dagger b\rtg(r*s)  \tag{\cref{lem:rtg is also good stuff}}
		\end{align*}
	\end{itemize}
	This intermediary fact (*) lets us establish the main claim by induction on \(e\).
	\begin{itemize}
		\item In the first base case, \(\rtg(\gtr(0)) = 0\) by definition.
		\item In the second base case, let \(\At = \{\alpha_1, \dots, \alpha_n\}\). Then inductively,
		\begin{align*}
			\rtg(\gtr(p))
			&= \rtg\left(\sum_{\alpha \in \At} \alpha p\right)  \\
			&= \rtg\left(\alpha_1 p + \sum_{\alpha \in \At\setminus \alpha_1} \alpha p\right) \\
			&= p +_{\alpha_1} \rtg\left(\sum_{\alpha \in \At\setminus \alpha_1} \alpha p\right) \\
			&= p +_{\alpha_1} (p +_{\alpha_2} \cdots +_{\alpha_{n-1}} (p +_{\alpha_n} 0)) \\
			&\equiv_\dagger p +_{1} 0 \tag{\textsf{G3}\(\times n\), BA} \\
			&\equiv_\dagger p \tag{G0}
		\end{align*}
		\item
		In the first inductive step,
		\begin{align*}
			\rtg(\gtr(e +_b f))
				&= \rtg(b\cdot \gtr(e) + \bar b\cdot \gtr(f)) \\
				&\equiv_\dagger \rtg(b\cdot \gtr(e)) +_b \rtg(\bar b\cdot \gtr(f)) \tag{\cref{lem:arbitrary seperation}} \\
				&\equiv_\dagger b\rtg(\gtr(e)) +_b \bar b\rtg(\gtr(f)) \tag{*}\\
				&\equiv_\dagger be +_b \bar b f \tag{ind.~hyp.}\\
				&\equiv_\dagger e +_b f \tag{\cref{lemma:useful-equalities}}
		\end{align*}
		\item
		In the inductive step,
		\begin{align*}
			\rtg(\gtr(e f)) &= \rtg(\gtr(e) \gtr(f))\\
						&= \rtg(\gtr(e)) \rtg(\gtr(f))\\
						&\equiv e f \tag{ind.~hyp.}
		\end{align*}
		\item
		In the inductive step,
		\begin{align*}
			&\rtg(\gtr(e^{(b)}f))\\
			&= \rtg((b\cdot \gtr(e))*(\bar b\cdot \gtr(f)))
			\\
			&\equiv_\dagger \rtg(b\cdot \gtr(e))^{(b)}(\rtg(\bar b\cdot \gtr(f))) \tag{\cref{lem:arbitrary seperation}}
			\\
			&\equiv_\dagger (b\rtg(\gtr(e)))^{(b)}(\bar b\rtg(\gtr(f)) ) \tag{*}
			\\
			&\equiv_\dagger (be)^{(b)}(\bar b f) \tag{ind.~hyp.}
			\\
			&\equiv_\dagger e^{(b)}f  \tag{\cref{lemma:useful-equalities}}
		\end{align*}
	\end{itemize}
\end{proof}

\section{Proofs for \Cref{sec:completeness for skip-free GKAT}}%
\label{appendix:completeness for skip-free gkat}

\lemdeadequalszero*
\begin{proof}
	It suffices to prove that if $a \in \BExp$ and $a\L(e) = \emptyset$, then $ae \equiv 0$.
	After all, if this is true and $e$ is dead then $1\L(e) = \L(e) = \emptyset$, and hence $e = 1e \equiv 0$.

	We proceed by induction on $e$.
	In the base, there are two cases:
	\begin{itemize}
		\item
		If $e = p \in \Sigma$, then for all $\alpha \leq a$ it holds that $\alpha{} p \in a\L(e)$.
		We thus find that $a \equiv 0$, and therefore $0p = p +_0 0 \equiv 0 +_1 p \equiv 0$.

		\item
		If $e = 0$, then $b0 = 0 +_b 0 \equiv 0$ immediately.
	\end{itemize}
	For the inductive step, there are three more cases.
	\begin{itemize}
		\item
		If $e = e_1 +_b e_2$ then $(a \wedge b)\L(e_1) = \emptyset$ and $(a \wedge \overline{b})\L(e_2) = \emptyset$.
		By induction, we have $(a \wedge b)e_1 \equiv 0$ and $(a \wedge \overline{b})e_2 \equiv 0$.
		We then derive using \Cref{lemma:useful-equalities}:
		\[
		a(e_1 +_b e_2)
		\equiv ae_1 +_b ae_2
		\equiv bae_1 +_b \overline{b}ae_2
		\equiv (a \wedge b)e_1 +_b (a \wedge \overline{b})e_2
		\equiv 0 +_b 0
		\equiv 0
		\]

		\item
		If $e = e_1 \cdot e_2$, then $a\L(e_1) = \emptyset$ or $\L(e_2) = \emptyset$.
		In the former case $ae_1 \equiv 0$, so by \Cref{lemma:useful-equalities} we find
		\(
		a(e_1 \cdot e_2)
		= (ae_1) \cdot e_2
		\equiv 0 \cdot e_2
		\equiv 0
		\).
		In the latter case $e_2 \equiv 0$, and thus $a(e_1 \cdot e_2) \equiv a(e_1 \cdot 0) \equiv a0 = 0$.

		\item
		If $e = e_1 \while{b} e_2$, then $\overline{b}\L(e_2) = \emptyset$, and so $\overline{b}e_2 \equiv 0$ by induction.
		By \Cref{lemma:useful-equalities}:
		\[
		a(e_1 \while{b} e_2)
		\equiv a(e_1 \while{b} \overline{b}e_2)
		\equiv a(e_1 \while{b} 0)
		\equiv a0
		= 0
		\]
	\end{itemize}
\end{proof}

\restateprunepreservesequiv*
\begin{proof}
	We proceed by induction on $e$.
	In the base, the claim holds immediately, whether $e = 0$ or $e = p \in \Sigma$.
	For the inductive step, there are three cases.
	\begin{itemize}
		\item
		If $e = e_1 +_b e_2$, then $e_1 +_b e_2 \equiv \floor{e_1} +_b \floor{e_2} = \floor{e_1 +_b e_2}$.

		\item
		Suppose $e = e_1 \cdot e_2$.
		If $e_2$ is dead, then $e_2 \equiv 0$ by \Cref{lemma:dead-equals-zero}, and so $e_1 \cdot e_2 \equiv e_1 \cdot 0 \equiv 0 = \floor{e_1 \cdot e_2}$.
		Otherwise $e_1 \cdot e_2 \equiv \floor{e_1} \cdot \floor{e_2} = \floor{e_1 \cdot e_2}$.

		\item
		Suppose $e = e_1 \while{b} e_2$.
		If $\overline{b}e_2$ is dead, then $\overline{b}e_2 \equiv 0$ by \Cref{lemma:dead-equals-zero}, and so $e_1 \while{b} e_2 \equiv e_1 \while{b} \overline{b}e_2 \equiv e_1 \while{b} 0 \equiv 0 = \floor{e_1 \while{b} e_2}$ by \Cref{lemma:useful-equalities}.
		Otherwise, we derive that $e_1 \while{b} e_2 \equiv \floor{e_1} \while{b} \floor{e_2} = \floor{e_1 \while{b} e_2}$.
	\end{itemize}
\end{proof}

\noindent
To prove \Cref{lemma:prune-bisimilar}, we need an auxiliary lemma.

\begin{lemma}%
	\label{lemma:empty-dead}
	Let $(X, h)$ be a $G$-coalgebra and $x \in X$.
	Now $\L(x, (X, h)) = \emptyset$ if and only if $x \in D(X, h)$.
\end{lemma}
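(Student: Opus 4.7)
The plan is to prove both implications directly, using the maximality condition in the definition of $D(X,h)$ for the forward direction and a straightforward induction along an accepting run for the reverse direction.

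For the forward direction, I would consider the set $E = \{y \in X : \L(y, (X,h)) = \emptyset\}$ and show that $E$ itself has the closure property that defines dead states, so that by maximality $E \subseteq D(X,h)$. Concretely, suppose $y \in E$ and fix $\alpha \in \At$. The case $h(y)(\alpha) = \bot$ is already allowed. If $h(y)(\alpha) = (p, \checkmark)$ then $\alpha p \in \L(y)$, contradicting $y \in E$. And if $h(y)(\alpha) = (p, y')$ then for every $w \in \L(y')$ we have $\alpha p w \in \L(y) = \emptyset$, forcing $\L(y') = \emptyset$, i.e.\ $y' \in E$. Hence $E$ satisfies the defining property of dead sets and so $E \subseteq D(X,h)$, giving $\L(x) = \emptyset \Rightarrow x \in D(X,h)$.

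For the reverse direction, suppose $x \in D(X,h)$ but, for a contradiction, that some guarded trace $\alpha_1 p_1 \cdots \alpha_n p_n \in \L(x)$ is witnessed by a run
\[
    x = x_0 \tr{\alpha_1 \mid p_1} x_1 \tr{} \cdots \tr{} x_{n-1} \tr{\alpha_n \mid p_n} \checkmark.
\]
A short induction on $i < n$ shows that each $x_i$ lies in $D(X,h)$: we have $x_0 = x \in D(X,h)$ by assumption, and if $x_i \in D(X,h)$ with $h(x_i)(\alpha_{i+1}) = (p_{i+1}, x_{i+1})$, then the defining property of $D(X,h)$ forces $x_{i+1} \in D(X,h)$. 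In particular $x_{n-1} \in D(X,h)$, but $h(x_{n-1})(\alpha_n) = (p_n, \checkmark)$ violates the condition on dead states (which only permits $\bot$ or transitions into $D(X,h)$, never $\checkmark$). This contradiction yields $\L(x) = \emptyset$.

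Neither direction is expected to be difficult; the only subtlety is making use of the maximality of $D(X,h)$ in the forward direction rather than trying to induct on some notion of structure of $x$ (which a priori does not exist, as $(X,h)$ is an arbitrary, possibly infinite coalgebra). The reverse direction is essentially just unfolding definitions along an accepting run.
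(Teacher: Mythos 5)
Your proof is correct and follows essentially the same route as the paper: the forward direction shows the set of states with empty language is closed under the dead-state condition and invokes maximality of $D(X,h)$, and the reverse direction derives a contradiction by inducting along an accepting run from a dead state, exactly as in the paper's argument.
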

\begin{proof}
	For the direction from left to right, it suffices to show that $D = \{ x \in X : \L(x, (X, h)) = \emptyset \}$ satisfies the rules for dead states.
	To see this, note that if $x \in D$ then certainly the case where $h(x)(\alpha) = (p, \checkmark)$ is excluded, for then $\alpha{}p \in \L(x, (X, h))$.
	Furthermore, if $h(x)(\alpha) = (p, x')$ for some $p \in \Sigma$, then $\L(x', (X, h)) = \emptyset$, for otherwise $L(x, (X, h))$ cannot be empty; hence $x' \in D$.

	For the converse, suppose $w \in \L(x, (X, h))$.
	Then we can find, by induction on $w$, an $x' \in D(X, h)$ with $h(x')(\alpha) = (p, \checkmark)$ for some $p \in \Sigma$ --- a contradiction.
\end{proof}

\restateprunebisimilar*
\begin{proof}
	It suffices to prove that $R = \{ (x, y) : \L(x, (X, h)) = \L(y, (X, h)) \}$ is a bisimulation in $(X, \floor{h})$.
	To see this, suppose that $\L(x, (X, h)) = \L(y, (X, h))$.
	\begin{itemize}
		\item
		If $\floor{h}(x)(\alpha) = \bot$, then $h(x)(\alpha) = \bot$ or $h(x)(\alpha) = (p, y')$ with $x' \in D(X, h)$.
		Therefore there is no word in $\L(x, (X, h)) = \L(x, (Y, h))$ of the form $\alpha{}w$, by \Cref{lemma:empty-dead}.
		Thus $h(y)(\alpha) = \bot$ or $h(y)(\alpha) = (p, y')$ where $\L(y', (X, h)) = \emptyset$, whence $y' \in D(X, h)$ by \Cref{lemma:empty-dead}.
		In either case, $\floor{h}(y)(\alpha) = \bot$.

		\item
		If $\floor{h}(x)(\alpha) = (p, \checkmark)$ for some $p \in \Sigma$, then $\alpha{}p \in \L(x, (X, h))$, and therefore $\alpha{}p \in \L(y, (Y, h))$.
		But then $\floor{h}(y)(\alpha) = (p, \checkmark)$ as well.

		\item
		If $\floor{h}(x)(\alpha) = (p, x') \in \Sigma \times X$, then $x'$ is not dead in $(X, h)$.
        By \Cref{lemma:empty-dead}, there exists some $w$ s.t.\ $\alpha{}pw \in \L(x, (X, h)) = \L(y, (X, h))$.
		We then have $\floor{h}(y)(\alpha) = (p, y')$ for some $y' \in X$.
		A straightforward argument then shows that $\L(x', (X, h)) = \L(y', (X, h))$, and so $x' \mathrel{R} y'$.
	\end{itemize}
\end{proof}

\noindent
The claim about the graph of $\floor{-}: \GExp^- \to \GExp^-$ being a bisimulation of skip-free automata can also be shown by induction on $e$ and exhaustive case analysis, as follows.

\restateprunecommutes*
\begin{proof}
	It suffices to show that \(\partial \circ \floor{-} = G(\floor{-}) \circ \floor{\partial}\)
	We proceed by induction on $e$.
	In the base, $\floor{e} = e$ and $G(\floor{-}) \circ \floor{\partial}(e) = \partial(e)$, and so the claim holds trivially.
	For the inductive step, there are three cases.
	\begin{itemize}
		\item
		Suppose $e = e_1 +_b e_2$.
		Without loss of generality, we assume $\alpha \leq b$.
		\begin{itemize}
			\item
			If $\floor{\partial}(e_1 +_b e_2) = \bot$, then either $\partial(e_1)(\alpha) = \bot$ or $\partial(e_1)(\alpha) = (p, e_1')$ with $e_1'$ dead.
			In either case, $\floor{\partial}(e_1)(\alpha) = \bot$, and so $\partial(\floor{e_1})(\alpha) = \bot$ by induction.
			We then conclude that $\partial(\floor{e_1 +_b e_2})(\alpha) = \partial(\floor{e_1})(\alpha) = \bot$.

			\item
			If $\floor{\partial}(e_1 +_b e_2) = (p, \checkmark)$ for some $p \in \Sigma$, then $\partial(e_1)(\alpha) = (p, \checkmark)$ as well.
			It follows that $\floor{\partial}(e_1)(\alpha) = (p, \checkmark)$, and so $\partial(\floor{e_1})(\alpha) = (p, \checkmark)$ by induction.
			We then conclude that $\partial(\floor{e_1 +_b e_2})(\alpha) = \partial(\floor{e_1})(\alpha) = (p, \checkmark)$.

			\item
			If $\floor{\partial}(e_1 +_b e_2) = (p, e_1')$, then $\partial(e_1)(\alpha) = (p, e_1')$ for some live $e_1' \in \GExp$.
			Hence $\floor{\partial}(e_1)(\alpha) = (p, e_1')$, and so $\partial(\floor{e_1})(\alpha) = (p, \floor{e_1'})$ by induction.
			We then conclude that $\partial(\floor{e_1 +_b e_2})(\alpha) = \partial(\floor{e_1})(\alpha) = (p, \floor{e_1'})$.
		\end{itemize}

		\item
		Suppose $e = e_1 \cdot e_2$.
		If $e_2$ is dead, then so is $e_1 \cdot e_2$; hence
		\[
		\floor{\partial}(e_1 \cdot e_2)(\alpha)
		= \bot
		= \partial(0)(\alpha)
		= \partial(\floor{e_1 \cdot e_2})(\alpha)
		\]
		Otherwise, if $e_2$ is live, then we have three cases to consider.
		\begin{itemize}
			\item
			If $\floor{\partial}(e_1 \cdot e_2)(\alpha) = \bot$, then we have two more subcases to consider.
			\begin{itemize}
				\item
				If $\partial(e_1 \cdot e_2)(\alpha) = \bot$, then $\partial(e_1)(\alpha) = \bot$, meaning $\floor{\partial}(e_1)(\alpha) = \bot$.
				By induction, $\partial(\floor{e_1})(\alpha) = \bot$, and so $\partial(\floor{e_1 \cdot e_2})(\alpha) = \bot$.
				\item
				If $\partial(e_1 \cdot e_2)(\alpha) = (p, e')$ with $e'$ dead, then we can exclude the case where $\partial(e_1)(\alpha) = (p, \checkmark)$, for then $e' = e_2$, which contradicts that $e_2$ is live.
				We then know that $e' = e_1' \cdot e_2$ such that $\partial(e_1)(\alpha) = (p, e_1')$.
				Furthermore, since $e_2$ is live and $e'$ is dead, it must be the case that $e_1'$ is dead.
				We then find that $\floor{\partial}(e_1)(\alpha) = \bot$, and so $\partial(\floor{e_1})(\alpha) = \bot$ by induction.
				We conclude that $\partial(\floor{e_1 \cdot e_2})(\alpha) = \partial(\floor{e_1} \cdot \floor{e_2})(\alpha) = \bot$.
			\end{itemize}

			\item
			If $\floor{\partial}(e_1 \cdot e_2) = (p, \checkmark)$, then $\partial(e_1 \cdot e_2)(\alpha) = (p, \checkmark)$, which is impossible.
			We can therefore exclude this case.

			\item
			If $\floor{\partial}(e_1 \cdot e_2) = (p, e')$, then $\partial(e_1 \cdot e_2)(\alpha) = (p, e')$ with $e'$ live.
			This gives us two more subcases.
			\begin{itemize}
				\item
				If $\partial(e_1)(\alpha) = (p, \checkmark)$ and $e' = e_2$, then $\floor{\partial}(e_1)(\alpha) = (p, \checkmark)$, and so by induction $\partial(\floor{e_1})(\alpha) = (p, \checkmark)$.
				Thus $\partial(\floor{e_1 \cdot e_2})(\alpha) = (p, \floor{e'})$.

				\item
				If $\partial(e_1)(\alpha) = (p, e_1')$ and $e' = e_1' \cdot e_2$, then $e_1'$ must be live.
				We then have $\floor{\partial}(e_1)(\alpha) = (p, e_1')$, and so $\partial(\floor{e_1})(\alpha) = (p, \floor{e_1'})$ by induction.
				We conclude that $\partial(\floor{e_1 \cdot e_2})(\alpha) = (p, \floor{e_1'} \cdot \floor{e_2}) = (p, \floor{e'})$.
			\end{itemize}
		\end{itemize}

		\item
		Suppose $e = e_1 \while{b} e_2$.
		If $\overline{b}e_2$ is dead, then so is $e_1 \while{b} e_2$; hence
		\[
		\floor{\partial}(e_1 \while{b} e_2)(\alpha) = \bot = \partial(0)(\alpha) = \partial(\floor{e_1 \while{b} e_2})(\alpha)
		\]
		Otherwise, if $\overline{b}e_2$ is live, then we first consider the case where $\alpha \leq \overline{b}$.
		\begin{itemize}
			\item
			If $\floor{\partial}(e_1 \while{b} e_2)(\alpha) = \bot$, then $\partial(e_2)(\alpha) = \bot$ or $\partial(e_2)(\alpha) = (p, e_2')$ with $e_2'$ dead.
			In either case, $\floor{\partial}(e_2)(\alpha) = \bot$, and so $\partial(\floor{e_2})(\alpha) = \bot$ by induction.
			We conclude that $\partial(\floor{e_1 \while{b} e_2})(\alpha) = \partial(\floor{e_2})(\alpha) = \bot$.

			\item
			If $\floor{\partial}(e_1 \while{b} e_2)(\alpha) = (p, \checkmark)$, then $\partial(e_2)(\alpha) = (p, \checkmark)$.
			Thus $\floor{\partial}(e_2)(\alpha) = (p, \checkmark)$ and so we have $\partial(\floor{e_2})(\alpha) = (p, \checkmark)$ by induction.
			We then conclude that $\partial(\floor{e_1 \while{b} e_2})(\alpha) = \partial(\floor{e_2})(\alpha) = (p, \checkmark)$.

			\item
			If $\floor{\partial}(e_1 \while{b} e_2)(\alpha) = (p, e_2')$, then $\partial(e_2)(\alpha) = (p, e_2')$ with $e_2'$ live.
			It then follows that $\floor{\partial}(e_2)(\alpha) = (p, e_2')$, and so $\partial(\floor{e_2})(\alpha) = (p, \floor{e_2'})$ by induction.
			We then conclude that $\partial(\floor{e_1 \while{b} e_2})(\alpha) = \partial(\floor{e_2})(\alpha) = (p, \floor{e_2'})$.
		\end{itemize}
		It remains to consider the case where $\alpha \leq b$.
		\begin{itemize}
			\item
			If $\floor{\partial}(e_1 \while{b} e_2)(\alpha) = \bot$, then we have two more subcases to consider.
			\begin{itemize}
				\item
				If $\partial(e_1 \while{b} e_2)(\alpha) = \bot$, then $\partial(e_1)(\alpha) = \bot$ as well, since $\alpha \leq b$.
				This means that $\floor{\partial}(e_1)(\alpha) = \bot$, and so $\partial(\floor{e_1})(\alpha) = \bot$ by induction.
				We conclude that $\partial(\floor{e_1 \while{b} e_2})(\alpha) = \bot$.

				\item
				If $\partial(e_1 \while{b} e_2)(\alpha) = (p, e')$ with $e'$ dead, then since $\alpha \leq b$ we must have that $e' = e_1' \cdot e_1 \while{b} e_2$, with $\partial(e_1)(\alpha) = (p, e_1')$.
				Since $\overline{b}e_2$ is live, so is $e_1 \while{b} e_2$.
				It then follows that $e_1'$ is dead, and so $\floor{\partial}(e_1)(\alpha) = \bot$.
				By induction, $\partial(\floor{e_1})(\alpha) = \bot$, meaning that $\partial(\floor{e_1 \while{b} e_2})(\alpha) = \bot$.
			\end{itemize}

			\item
			If $\floor{\partial}(e_1 \while{b} e_2)(\alpha) = (p, \checkmark)$, then $\partial(e_1 \while{b} e_2)(\alpha) = (p, \checkmark)$, which is impossible when $\alpha \leq b$; we can therefore exclude this case.

			\item
			If $\floor{\partial}(e_1 \while{b} e_2)(\alpha) = (p, e')$, then $\partial(e_1 \while{b} e_2)(\alpha) = (p, e')$ with $e'$ live.
			Since $\alpha \leq b$, we have that $\partial(e_1)(\alpha) = (p, e_1')$ and $e' = e_1' \cdot e_1 \while{b} e_2$.
			Since $e'$ and $e_1 \while{b} e_2$ are live, so is $e_1'$.
			We then find that $\floor{\partial}(e_1)(\alpha) = (p, e_1')$, meaning $\partial(\floor{e_1})(\alpha) = (p, \floor{e_1'})$ by induction.
			We conclude by deriving
			\[
			\partial(\floor{e_1 \while{b} e_2})(\alpha)
			= (p, \floor{e_1'} \cdot \floor{e_1} \while{b} \floor{e_2})
			= (p, \floor{e'})
			\]
		\end{itemize}
	\end{itemize}
\end{proof}

\section{Proofs for \Cref{sec:relation to GKAT}}%
\label{appendix:relation to GKAT}

We start by proving that our tweaked version of the syntactic \GKAT automaton does not make any difference with regard to bisimilarity of expressions.
To make this precise, we first recall the definition of the syntactic \GKAT automaton~\cite{gkaticalp}.
\begin{definition}
	We define $\delta': \GExp \to (\bot + \checkmark + \Sigma \times \GExp)^\At$ in the same way as $\delta$, except for the following two cases:
	\begin{itemize}
		\item
		When $e, f \in \GExp$ such that $\delta(e)(\alpha) = (p, e')$, we set $\delta(e \cdot f)(\alpha) = (p, e' \cdot f)$.
		\item
		When $e \in \GExp$ and $\alpha \leq b$, we set $\delta(e^{(b)})(\alpha) = (p, e' \cdot e^{(b)})$.
	\end{itemize}
	Essentially, $\delta$ and $\delta'$ differ only by the use of $\cdot$ versus $\fatsemi$ in these two cases.
\end{definition}

We can now show that an expression is bisimilar to its representation in the syntactic automaton \GKAT as defined in the literature; this then tells us that two states are bisimilar in $(\GExp, \delta)$ if and only if they are bisimilar in $(\GExp, \delta')$.

\begin{restatable}{lemma}{restatesmartconstructorbisimilar}%
	\label{lemma:smart-constructor-bisimilar}
	Let $e \in \GExp$.
	Now $(\GExp, \delta'), e \bisim (\GExp, \delta), e$.
\end{restatable}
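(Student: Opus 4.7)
The two transition functions $\delta$ and $\delta'$ agree on all cases of the inductive definition except the transition rules for $e_1 \cdot e_2$ and $e^{(b)}$: where $\delta$ uses the smart constructor $\fatsemi$ (which collapses $1 \cdot f$ to $f$), $\delta'$ uses plain sequential composition. The plan is to exhibit a bisimulation that accounts precisely for this difference. Define
\[
    R \;=\; \{(e, e) : e \in \GExp\} \;\cup\; \{(1 \cdot e,\, e) : e \in \GExp\}
\]
viewed as a relation from $(\GExp, \delta')$ on the left to $(\GExp, \delta)$ on the right. Since $(e,e) \in R$, it will follow that $(\GExp, \delta'), e \bisim (\GExp, \delta), e$ as soon as we check that $R$ is a bisimulation.

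The core observation, which I will prove by structural induction on $e$, is the following: for every $e \in \GExp$ and every $\alpha \in \At$, either $\delta'(e)(\alpha) = \delta(e)(\alpha)$, or else $\delta'(e)(\alpha) = (p, 1 \cdot f)$ while $\delta(e)(\alpha) = (p, f)$ for some $p \in \Sigma$ and $f \in \GExp$. In every case, the pair of successors lies in $R$. The base cases ($e$ being $0$, a test, or an action) are immediate because the two transition functions coincide on these. The $+_b$ case dispatches directly to an inductive call on one of its two sub-expressions. The only interesting case is $e_1 \cdot e_2$ with $\delta'(e_1)(\alpha) = (p, e_1')$: then $\delta'(e_1 \cdot e_2)(\alpha) = (p, e_1' \cdot e_2)$ and $\delta(e_1 \cdot e_2)(\alpha) = (p, e_1' \fatsemi e_2)$, and by definition of $\fatsemi$ these successors are either identical (when $e_1' \neq 1$) or of the form $(1 \cdot e_2, e_2)$ (when $e_1' = 1$), both of which belong to $R$. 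The loop case $e^{(b)}$ is handled analogously, since the successor produced by the transition rule has the same shape.

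It remains to verify the bisimulation conditions for the added pairs $(1 \cdot e, e)$. A direct inspection of the rules for $\cdot$ shows that, since $1 \Rightarrow \alpha$ for every atom $\alpha$ and $1$ admits no action transitions, $\delta'(1 \cdot e)(\alpha)$ coincides with $\delta'(e)(\alpha)$ for all $\alpha$. Combined with the condition already established for $(e, e) \in R$, this matches $\delta(e)(\alpha)$ up to replacement of successors by $R$-related ones, giving the bisimulation condition for the pair $(1 \cdot e, e)$. The main obstacle is simply bookkeeping the splitting of the $e_1 \cdot e_2$ case into $e_1' = 1$ versus $e_1' \neq 1$; beyond that, everything is a straightforward case analysis on the inductive definition of $\delta$ and $\delta'$.
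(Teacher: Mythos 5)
There is a genuine gap. Your core inductive invariant---that for every $e$ and $\alpha$, either $\delta'(e)(\alpha) = \delta(e)(\alpha)$ or $\delta'(e)(\alpha) = (p, 1 \cdot f)$ while $\delta(e)(\alpha) = (p, f)$---is false, and consequently your relation $R = \{(e,e)\} \cup \{(1 \cdot e, e)\}$ is not closed under transitions. The problem is that the mismatch created by $\fatsemi$ versus $\cdot$ does not stay at the outermost position of the successor: it gets buried under further sequential compositions. Concretely, take $e = (p \cdot f) \cdot g$ with $f \neq 1$. Then $\delta'(p \cdot f)(\alpha) = (p, 1 \cdot f)$, so $\delta'(e)(\alpha) = (p, (1 \cdot f) \cdot g)$, whereas $\delta(p \cdot f)(\alpha) = (p, 1 \fatsemi f) = (p, f)$ and hence $\delta(e)(\alpha) = (p, f \cdot g)$. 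The successor pair $\bigl((1 \cdot f) \cdot g,\ f \cdot g\bigr)$ is neither a diagonal pair nor of the form $(1 \cdot h, h)$, so it escapes your $R$; the same phenomenon occurs in the loop case. The flaw in your induction is in the $e_1 \cdot e_2$ step: you assume $\delta(e_1)(\alpha)$ returns the \emph{same} successor $e_1'$ as $\delta'(e_1)(\alpha)$, but the induction hypothesis only gives you a successor that is $R$-related to $e_1'$, and $R$-relatedness of $e_1'$ and $e_1''$ does not yield membership of $(e_1' \cdot e_2,\ e_1'' \fatsemi e_2)$ in your $R$.

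The fix is to enlarge the relation so that it is a congruence for right-composition in the appropriate sense: take the smallest relation containing the diagonal and closed under the rule ``if $e \mathrel{R} e'$ and $f \in \GExp$ then $e \cdot f \mathrel{R} e' \fatsemi f$,'' and prove the bisimulation property by induction on the construction of this relation (with a nested structural induction on $e$ for the diagonal pairs). This is exactly what the paper's proof does. Your observation that $\delta'(1 \cdot e)(\alpha) = \delta'(e)(\alpha)$ because $1 \Rightarrow \alpha$ for all $\alpha$ is correct and reappears there as the subcase $e' = 1$ of the closure rule, but on its own it does not suffice.
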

\begin{proof}
	Let $R$ be the smallest relation on $\GExp$ such that all of the following hold:
	\begin{mathpar}
		\inferrule{~}{%
			e \mathrel{R} e
		}
		\and
		\inferrule{%
			e \mathrel{R} e' \\
			f \in \mathbb{E}
		}{%
			e \cdot f \mathrel{R} e' \fatsemi f
		}
	\end{mathpar}
	Let $e, f \in \GExp$ such that $e \mathrel{R} f$.
	We proceed by induction on the construction of $R$.
	In the base, $e \mathrel{R} f$ because $e = f$; we proceed by induction on $e$.
	In the (inner) base, the definitions of $\delta'$ and $\delta$ coincide, and so those cases go through immediately.
	For the (inner) inductive step, we have three more cases.
	\begin{itemize}
		\item
		If $e = e_1 +_b e_2$, then let $\alpha \in \At$ and assume without loss of generality that $\alpha \leq b$.
		In that case, $\delta'(e_1 +_b e_2)(\alpha) = \delta'(e_1)(\alpha)$ and $\delta(e_1 +_b e_2)(\alpha) = \delta(e_1)(\alpha)$.
		The claim then follows by induction.

		\item
		If $e = e_1 \cdot e_2$, then there are three more subcases to consider.
		\begin{itemize}
			\item
			If $\delta'(e_1)(\alpha) = (p, e_1')$, then by induction $\delta(e_1)(\alpha) = (p, e_1'')$ such that $e_1' \mathrel{R} e_1''$.
			Now, $\delta'(e_1 \cdot e_2)(\alpha) = (p, e_1' \cdot e_2)$ and $\delta(e_1 \cdot e_2) = (p, e_1'' \fatsemi e_2)$, while $e_1' \cdot e_2 \mathrel{R} e_1'' \fatsemi e_2$, as desired.

			\item
			If $\delta'(e_1)(\alpha) = \bot$, then $\delta(e_1)(\alpha) = \bot$ by induction, and so $\delta'(e_1 \cdot e_2)(\alpha) = \bot = \delta(e_1 \cdot e_2)(\alpha)$.

			\item
			If $\delta'(e_1)(\alpha) = \checkmark$, then $\delta(e_1)(\alpha) = \checkmark$, so $\delta'(e_1 \cdot e_2)(\alpha) = \delta'(e_2)(\alpha)$ and $\delta(e_1 \cdot e_2)(\alpha) = \delta(e_1)(\alpha)$; the claim then follows by induction.
		\end{itemize}

		\item
		If $e = e_1^{(b)}$, then let $\alpha \in \At$.
		There are three more subcases to consider.
		\begin{itemize}
			\item
			If $\alpha \leq b$ and $\delta'(e_1)(\alpha) = (p, e_1')$, then by induction $\delta(e_1)(\alpha) = (p, e_1'')$ such that $e_1' \mathrel{R} e_1''$.
			We then find that $\delta'(e_1^{(b)})(\alpha) = (p, e_1' \cdot e_1^{(b)})$ and $\delta(e_1^{(b)})(\alpha) = (p, e_1'' \fatsemi e_1^{(b)})$, while $e_1' \cdot e_1^{(b)} \mathrel{R} e_1'' \fatsemi e_1^{(b)}$.
            \item
            If $\alpha \leq b$ and $\delta'(e_1)(\alpha) = \bot$, then $\delta(e_1)(\alpha) = \bot$ by induction, so $\delta'(e_1 \while{b} e_2)(\alpha) = \bot = \delta(e_1 \while{b} e_2)(\alpha)$ as desired.
            \item
            If $\alpha \leq \overline{b}$, then $\delta'(e_1 \while{b} e_2)(\alpha) = \checkmark = \delta(e_1 \while{b} e_2)(\alpha)$.
		\end{itemize}
	\end{itemize}
	For the (outer) inductive step, $e \mathrel{R} f$ because $e = e' \fatsemi g$ and $f = f' \cdot g$ such that $e' \mathrel{R} f'$.
	We distinguish two cases.
	\begin{itemize}
		\item
		If $e' = 1$, then because $e' \mathrel{R} f'$ we have that $\delta'(e')(\alpha) = \checkmark$ for all $\alpha \in \At$, so $\delta(f')(\alpha) = \checkmark$ for all $\alpha \in \At$ by induction.
		But in that case we have that $\delta'(e)(\alpha) = \delta'(g)(\alpha)$, and similarly $\delta(f)(\alpha) = \delta(g)(\alpha)$.
		The claim follows by an argument similar to the (outer) base.

		\item
		Otherwise $e = e' \cdot g$; there are now three cases.
		\begin{itemize}
			\item
			If $\delta'(e')(\alpha) = \bot$, then $\delta(f')(\alpha) = \bot$ by induction, and so $\delta'(e)(\alpha) = \bot = \delta(f)(\alpha)$, as desired.

			\item
			If $\delta'(e')(\alpha) = \checkmark$, then $\delta(f')(\alpha) = \checkmark$ by induction, and so $\delta'(e)(\alpha) = \delta'(g)(\alpha)$ and $\delta(f)(\alpha) = \delta(g)(\alpha)$; the claim then follows by an argument similar to the one in the outer base.

			\item
			If $\delta'(e')(\alpha) = (p, e'')$, then $\delta'(f')(\alpha) = (p, f'')$ with $e'' \mathrel{R} f''$ by induction.
			In that case, $\delta'(e)(\alpha) = (p, e'' \cdot g)$ and $\delta(f)(\alpha) = (p, f'' \fatsemi g)$ while $e'' \cdot g \mathrel{R} f'' \fatsemi g$, as desired.
		\end{itemize}
	\end{itemize}
\end{proof}

\restateembedbisimilar*
\begin{proof}
	For the forward direction, let $R$ be a $G$-bisimulation on $(X, d)$.
	We claim that $R \cup \{ (\top, \top) \}$ is an $F$-bisimulation on $\embed(X,d) = (X+\top, \tilde{d})$.
	To see this, first note that the pair $(\top, \top)$ immediately satisfies the conditions put on an $F$-bisimulation.
	Furthermore, if $x \mathrel{R} y$, then we check the three conditions.
	\begin{itemize}
		\item
		If $\tilde{d}(x)(\alpha) = \bot$, then $d(x)(\alpha) = \bot$ as well, which means that $d(y)(\alpha) = \bot$ and hence $\tilde{d}(y)(\alpha) = \bot$; this condition holds.
		\item
		If $\tilde{d}(x)(\alpha) = \checkmark$, then $x = \top$, but this contradicts that $x \mathrel{R} y$; we can therefore disregard this case.
		\item
		If $\tilde{d}(x)(\alpha) = (p, x')$, then there are two possibilities.
		\begin{itemize}
			\item
			If $x' = \top$ and $d(x)(\alpha) = (p, \checkmark)$, then $d(y)(\alpha) = (p, \checkmark)$ as well.
			But then $\tilde{d}(y)(\alpha) = (p, \top)$.
			Since $\top$ is related to $\top$ by $R \cup \{ (\top, \top) \}$, we are done.

			\item
			If $d(x)(\alpha) = (p, x')$, then $d(y)(\alpha) = (p, y')$ such that $x' \mathrel{R} y'$.
			But then $\tilde{d}(y)(\alpha) = (p, y')$ as well, and we are done.
		\end{itemize}
	\end{itemize}
	For the converse claim, let $R$ be a bisimulation on $\embed(X, d)$.
	First, we note that if $x \in X + \top$ and $x \mathrel{R} \top$ or $\top \mathrel{R} x$, then necessarily $x = \top$ --- after all, this tells us that $\tilde{d}(x)(\alpha) = \checkmark$ for all $\alpha$, and $\top$ is the only element of $X + \top$ that fits that description.
	We now claim that $R' = R \cap X \times X$ is a bisimulation on $(X, d)$.
	To this end, let $x \mathrel{R'} y$; we check the three conditions.
	\begin{itemize}
		\item
		If $d(x)(\alpha) = \bot$, then $\tilde{d}(x)(\alpha) = \bot$ as well, which means $\tilde{d}(x)(\alpha) = \bot$, and hence $d(x)(\alpha) = \bot$; this condition is covered.
		\item
		If $d(x)(\alpha) = (p, \checkmark)$, then $\tilde{d}(x)(\alpha) = (p, \top)$, which means that $\tilde{d}(y)(\alpha) = (p, y')$ such that $\top \mathrel{R} y'$.
		But by the considerations above, this means that $y' = \top$, and thus $d(y)(\alpha) = (p, \checkmark)$ as well.
		\item
		If $d(x)(\alpha) = (p, x')$, then $\tilde{d}(x)(\alpha) = (p, x')$ as well.
		But then $\tilde{d}(y)(\alpha) = (p, y')$ such that $x' \mathrel{R} y'$.
		Now, since $x' \neq \top$, also $y' \neq \top$ by the considerations above, and thus $x' \mathrel{R'} y'$.
	\end{itemize}
\end{proof}

\restateembedexpmorphism*
\begin{proof}
    Let \(h: \GExp^- + \top \to \GExp\) be given by \(h(e) = e\) when \(e \in \GExp^-\), and \(h(\top) = 1\).
    Clearly, the graph of \(h\) is the relation claimed to be a bisimulation.
    By \Cref{lem:same as homom}, it suffices to show that \(h\) is a coalgebra homomorphism.

	For the special case of $\top$, we have that $F(h)(\tilde{\partial}(\top))(\alpha) = \checkmark = \delta(1)(\alpha) = \delta(h(\top))(\alpha)$ for all $\alpha \in \At$.
	It remains to check that $F(h)(\tilde{\partial}(e))(\alpha) = \delta(e)(\alpha)$ for all $e \in \GExp$ and $\alpha \in \At$, which we do by induction on $e$.
    There are two base cases:
	\begin{itemize}
		\item
		If $e = 0$, then $F(h)(\tilde{\partial}(0))(\alpha) = \bot = \delta(0)(\alpha)$.

		\item
		If $e = p$, then $F(h)(\tilde{\partial}(p))(\alpha) = (p, h(\top)) = (p, 1) = \delta(p)(\alpha)$.
    \end{itemize}
    For the inductive step, we distinguish the following cases:
    \begin{itemize}
		\item
		If $e = e_0 +_b e_1$, then there are two cases to consider.
		First, if $\alpha \leq b$, then
		\[
		F(h)(\tilde{\partial}(e_0 +_b e_1))(\alpha)
		= F(h)(\tilde{\partial}(e_0))(\alpha)
		\stackrel{\text{IH}}{=} \delta(e_0)(\alpha)
		= \delta(e_0 +_b e_1)(\alpha)
		\]
		The case where $\alpha \not\leq b$ is similar.

		\item
		If $e = e_0 \cdot e_1$, then there are three cases to consider.
		\begin{itemize}
			\item
			If $\partial(e_0)(\alpha) = (p, e_0')$, then $\tilde{\partial}(e_0)(\alpha) = (p, e_0')$ as well.
			By induction, $\delta(e_0)(\alpha) = (p, e_0')$, and since $\tilde{\partial}(e_0 \cdot e_1)(\alpha) = (p, e_0' \cdot e_1)$, we have
			\[
			F(h)(\tilde{\partial}(e_0 \cdot e_1))(\alpha)
			= (p, e_0' \cdot e_1)
			= \delta(e_0 \cdot e_1)(\alpha)
			\]
			\item
			If $\partial(e_0)(\alpha) = (p, \checkmark)$, then $\tilde{\partial}(e_0)(\alpha) = (p, \top)$.
			By induction, $\delta(e_0)(\alpha) = (p, 1)$, and since $\tilde{\partial}(e_0 \cdot e_1)(\alpha) = (p, e_1)$ we have
			\[
			F(h)(\tilde{\partial}(e_0 \cdot e_1))(\alpha)
			= (p, e_1)
			= (p, 1 \fatsemi e_1)
			= \delta(e_0 \cdot e_1)(\alpha)
			\]
			\item
			If $\partial(e_0)(\alpha) = \bot$, then $\tilde{\partial}(e_0)(\alpha) = \bot$ as well.
			By induction $\delta(e_0)(\alpha) = \bot$, and so
			\[
			F(h)(\tilde{\partial}(e_0 \cdot e_1))(\alpha)
			= \bot
			= \delta(e_0 \cdot e_1)(\alpha)
			\]
		\end{itemize}

		\item
		If $e = e_0 \while{b} e_1$, then there are four cases to consider.
		\begin{itemize}
			\item
			If $\alpha \leq b$ and $\partial(e_0)(\alpha) = (p, e_0')$, then $\tilde{\partial}(e_0)(\alpha) = (p, e_0')$ as well.
			By induction $\delta(e_0)(\alpha) = (p, e_0')$, so
			\[
			F(h)(\tilde{\partial}(e_0 \while{b} e_1))(\alpha)
			= (p, e_0' \cdot e_0 \while{b} e_1)
			= \delta(e_0 \while{b} e_1)(\alpha)
			\]

			\item
			If $\alpha \leq b$ and $\partial(e_0)(\alpha) = (p, \checkmark)$, then $\tilde{\partial}(e_0)(\alpha) = (p, \top)$.
			By induction, $\delta(e_0)(\alpha) = (p, 1)$, and so
			\[
			F(h)(\tilde{\partial}(e_0 \while{b} e_1))(\alpha)
			= (p, e_0 \while{b} e_1)
			= (p, 1 \fatsemi e_0 \while{b} e_1)
			= \delta(e_0 \while{b} e_1)(\alpha)
			\]

			\item
			If $\alpha \leq b$ and $\partial(e_0)(\alpha) = \bot$, then $\tilde{\partial}(e_0)(\alpha) = \bot$.
			By induction $\delta(e_0)(\alpha) = \bot$, and so
			\[
			F(h)(\tilde{\partial}(e_0 \while{b} e_1))(\alpha)
			= \bot
			= \delta(e_0 \while{b} e_1)(\alpha)
			\]

			\item
			If $\alpha \not\leq b$, then $\tilde{\partial}(e_0 \while{b} e_1)(\alpha) = \tilde{\partial}(e_1)(\alpha)$, so we derive as follows:
			\[
			F(h)(\tilde{\partial}(e_0 \while{b} e_1))(\alpha)
			= F(h)(\tilde{\partial}(e_1))(\alpha)
			\stackrel{\text{IH}}{=} \delta(e_1)(\alpha)
			= \delta(e_0 \while{b} e_1)(\alpha)
			\]
		\end{itemize}
	\end{itemize}
\end{proof}

\restatelanguagerecover*
\begin{proof}
	We proceed by induction on $e$.
	In the base, there are two cases.
	\begin{itemize}
		\item
		First, if $e = 0$, then $\L(e) = \emptyset$ by definition, and $\widehat{\L}(e) = \{ \alpha \in \At \mid \alpha \leq 0 \} = \emptyset$, meaning that $\widehat{\L}(e) = \emptyset = \emptyset \cdot \At = \L(e) \cdot \At$.

		\item
		Second, if $e = p$, then $w \in \widehat{\L}(e)$ if and only if $w = \alpha{}p\beta$ for some $\alpha,\beta \in \At$, which is true if and only if $\alpha{}p\beta \in \L(e) \cdot \At$.
	\end{itemize}
	For the inductive step, we first note that for $b \in \BExp$ and $e \in \GExp$, we have $\widehat{\L}(b) \diamond (\L(e) \cdot \At) = b\L(e) \cdot \At$.
	We now distinguish three cases.
	\begin{itemize}
		\item
		If $e = e_1 +_b e_2$, then we derive
		\begin{align*}
			\widehat{\L}(e_1 +_b e_2)
			&= \widehat{\L}(b) \diamond \widehat{\L}(e_1) \cup \widehat{\L}(\overline{b}) \diamond \widehat{\L}(e_2) \\
			&= \widehat{\L}(b) \diamond (\L(e_1) \cdot \At) \cup \widehat{\L}(\overline{b}) \diamond (\L(e_2) \cdot \At) \tag{IH} \\
			&= b\L(e_1) \cdot \At \cup \overline{b}\L(e_2) \cdot \At \\
			&= (b\L(e_1) \cup \overline{b}\L(e_2)) \cdot \At \\
			&= \L(e_1 +_b e_2) \cdot \At \tag{\Cref{lem:language}}
			\intertext{
				\item
				If $e = e_1 \cdot e_2$, then we derive
			}
			\widehat{\L}(e_1 \cdot e_2)
			&= \widehat{\L}(e_1) \diamond \widehat{\L}(e_2) \\
			&= (\L(e_1) \cdot \At) \diamond (\L(e_2) \cdot \At) \tag{IH} \\
			&= \L(e_1) \cdot \L(e_2) \cdot \At \\
			&= \L(e_1 \cdot e_2) \cdot \At \tag{\Cref{lem:language}}
			\intertext{
				\item
				If $e = e_1 \while{b} e_2$, then we derive
			}
			\widehat{\L}(e_1 \while{b} e_2)
			&= (\widehat{\L}(b) \diamond \widehat{\L}(e_1))^{(*)} \diamond \widehat{\L}(\overline{b}) \diamond \widehat{\L}(e_2) \\
			&= (\widehat{\L}(b) \diamond (\L(e_1) \cdot \At))^{(*)} \diamond \widehat{\L}(\overline{b}) \diamond (\L(e_2) \cdot \At) \tag{IH} \\
			&= (b\L(e_1) \cdot \At)^{(*)} \diamond (\overline{b}\L(e_2) \cdot \At) \\
			&= \Bigl(\bigcup_{n \in \mathbb{N}} b\L(e_1)^n \cdot \At\Bigr) \diamond (\overline{b}\L(e_2) \cdot \At) \\
			&= \bigcup_{n \in \mathbb{N}} b\L(e_1)^n \cdot \overline{b}\L(e_2) \cdot \At \\
			&= \L(e_1 \while{b} e_2) \cdot \At \tag{\Cref{lem:language}}
		\end{align*}
	\end{itemize}
\end{proof}

\section{The proof of \cref{thm:sf_gkat_fix}}\label{app:proof of fix}

In order to show that every provable equivalence between deterministic one-free regular expressions is obtainable from a deterministic proof, we need to take a detour through the completeness proof of Grabmayer and Fokkink~\cite{regexlics} for one-free regular expressions modulo bisimilarity.

Grabmayer and Fokkink's completeness proof revolves around a notion of \emph{solution to an operational model}.
The operational models of one-free regular expressions they solve are LTSs with the so-called \emph{LLEE-property}, defined in~\cite{regexlics}.
The following is a slightly simpler but equivalent condition on LTSs called \emph{well-layeredness}, found in~\cite{starmfps}.

\begin{definition}
	Let \((X, \tau)\) be a LTS\@.
	An \emph{entry/body labelling} \((X, \eo, \bo)\) of \((X, \tau)\) is a labelling of each transition \(x \tr{\alpha p}_\tau y\) as either an \emph{entry transition}, written \(x \eo y\), or as a \emph{body transition}, written \(x \bo y\).
	Given an entry/body labelling \((X, \eo, \bo)\) of \((X, \tau)\) and \(x,y \in X\), we write \(x \diredge y\) to denote that there is a path
	\[
		x \eo x_1 \bo x_2 \bo \cdots \bo x_n \bo y
	\]
	such that \(x \notin \{x_1, \dots, x_n,y\}\).
\end{definition}

We use \(x \to y\) to denote that \(x \tr{\alpha p} y\) for some \(\alpha p \in \At\cdot\Sigma\), and use \((-)^*\) and \((-)^+\) to denote reflexive-transitive and transitive closures respectively.

\begin{definition}[see~\cite{starmfps}]\label{def:well-layeredness}
	An LTS \((X, \tau)\) is said to be \emph{well-layered} if it has an entry/body labelling \((X, \eo, \bo)\) that satisfies the following conditions.
	\begin{enumerate}
		\item (\emph{local finiteness}) For any \(x \in X\), \(\langle x \rangle\) is finite.
		\item (\emph{flatness}) For any \(x,y \in X\), if \(x \eo y\), then \(\neg (x \bo y)\).
		\item (\emph{full specification}) For any \(x,y \in X\),
		\begin{enumerate}
			\item \(\neg(x \bo^+ x)\) (there are no body loops), and
			\item if \(x \eo y\) and \(y \neq x\), then \(y \to^+ x\) (every entry transition is a loop entry transition).
		\end{enumerate}
		\item (\emph{layeredness}) The graph \((X, \diredge)\) is acyclic.
		\item (\emph{goto-free}) If \(x \diredge y\), then \(\neg(y \to \checkmark)\).
	\end{enumerate}
	If \((X, \eo, \bo)\) satifies these conditions, we say that \((X, \eo, \bo)\) is a \emph{layering witness} for \((X, \tau)\).
\end{definition}

Grabmayer and Fokkink's completeness proof technique revolves around being able to \emph{solve} certain systems of equations represented by LTSs.
This is captured abstractly by the following notion~\cite[Definition 2.8]{regexlics}.

\begin{definition}
	A \emph{solution} to an LTS \((X, \tau)\) is a function \(\varphi\colon X \to \SExp\) such that for any \(x \in X\),
	\[
		\varphi(x) \equiv_* \sum_{x \tr{\alpha p} \checkmark} \alpha p + \sum_{x \tr{\alpha p} y} \alpha p~\varphi(y)
	\]
\end{definition}

Grabmayer and Fokkink's completeness proof technique can now be summarized as follows.
\begin{description}
	\item[Step 1.] Prove that for any \(r \in \SExp\), \(\langle r \rangle\) is well-layered.
	\item[Step 2.] Prove that every  well-layered LTS admits a unique solution up to \(\equiv_*\).
	That is, if \(\varphi_1,\varphi_2\) are two solutions to the well-layered LTS \((X, \tau)\), then \(\varphi_1(x) \equiv_* \varphi_2(x)\) for all \(x \in X\).
	\item[Step 3.] Prove that if \(h \colon (Y, \tau_Y) \to (X, \tau_X)\) is a coalgebra homomorphism and \(\varphi\) is a solution to \((X, \tau)\), then \(\varphi \circ h\) is a solution to \((Y, \tau)\).
	\item[Step 4.] Show that the bisimulation collapse of a well-layered LTS is well-layered.
\end{description}

The following theorem is a strengthening of the statement in Step 4.
The bisimulation collapse case is originally due to Grabmayer and Fokkink~\cite{regexlics}.
It was observed in~\cite[Theorem 4.1]{starmfps} that a slight modification to Grabmayer and Fokkink's proof establishes the version below.

\begin{theorem}\label{thm:closure under homomorphic images}
	Let \((X, \tau_X)\) be a well-layered LTS and \(h\colon (X, \tau_X) \to (Y, \tau_Y)\) be a surjective coalgebra homomorphism.
	Then \((Y, \tau_Y)\) is also well-layered.
\end{theorem}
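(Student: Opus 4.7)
The plan is to construct a layering witness $(Y, \eo, \bo)$ for $(Y, \tau_Y)$ by transporting the given layering witness of $(X, \tau_X)$ along the surjective coalgebra homomorphism $h$. For each transition $y \tr{\alpha p} y'$ in $Y$, I would label it as a body transition if there exists some lift $x \tr{\alpha p} x'$ in $X$ with $h(x) = y$, $h(x') = y'$, and $x \bo x'$; otherwise, I would label it as an entry transition (which then forces \emph{every} lift to be an entry transition, since at least one lift exists by the coalgebra-homomorphism property). This asymmetric convention makes flatness immediate.

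First I would verify the easier conditions. Local finiteness of $\langle y\rangle$ in $Y$ follows from local finiteness of $\langle x\rangle$ in $X$ for any $x \in h^{-1}(y)$, since $h$ restricts to a surjection from $\langle x\rangle$ onto $\langle y\rangle$. The loop-entry half of full specification is checked by picking any entry lift $x \eo x'$ of an entry transition $y \eo y'$ with $y' \neq y$; since $h(x) = y \neq y' = h(x')$, we have $x \neq x'$, so condition 3(b) of well-layeredness in $X$ supplies a path $x' \to^+ x$ in $X$, whose image $y' \to^+ y$ under $h$ gives the desired loop-entry path in $Y$.

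The main obstacle will be the three conditions whose failure in $Y$ would require lifting a cycle or path back to $X$ while preserving labels: absence of body loops, layeredness, and goto-freeness. For instance, a body cycle $y \bo^+ y$ in $Y$ lifts, via successive applications of the homomorphism property and of our labeling convention, to a sequence $x_0 \bo x_1$, $x_1' \bo x_2$, $\dots$, $x_{n-1}' \bo x_n$ in $X$ where $h(x_0) = h(x_n) = y$ and $h(x_i) = h(x_i')$, but the $x_i$ and $x_i'$ need not coincide, so we do not immediately obtain a body cycle in $X$. Closing this gap is the crux of the argument: it amounts to proving that the layering structure of a well-layered LTS is invariant under the bisimulation $\ker(h) = \{(x_1, x_2) : h(x_1) = h(x_2)\}$ induced by $h$, in the precise sense that states in the same $h$-fibre have matching depth in the acyclic order $\diredge$ and their body/entry outgoing transitions can be matched label-preservingly across the fibre. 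With this invariance in place, the lifted segments can be spliced to yield a genuine body cycle in $X$, contradicting well-layeredness of $X$; analogous bisimulation-invariance lemmas dispatch layeredness and goto-freeness. The ``slight modification'' to the Grabmayer--Fokkink proof is then essentially this invariance lemma, which replaces their reliance on maximality of bisimilarity (for the collapse case) with invariance under an arbitrary bisimulation kernel.
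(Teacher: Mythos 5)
First, a point of reference: the paper does not prove this theorem at all. It is imported from external work (\cite{starmfps}, Theorem~4.1, itself a modification of Grabmayer and Fokkink's bisimulation-collapse argument in \cite{regexlics}), so there is no in-paper proof to compare yours against. Judged on its own terms, your proposal has a genuine gap: the entire difficulty is deferred to an ``invariance lemma'' that you assert rather than prove, and that lemma --- together with the labelling convention it is meant to support --- is in fact false.

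Here is a concrete counterexample. Let \(X = \{x_1, x_2\}\) with exactly the transitions \(x_1 \tr{\alpha p} x_2\) and \(x_2 \tr{\alpha p} x_1\), and take the layering witness \(x_1 \eo x_2\), \(x_2 \bo x_1\); one checks all five conditions of \cref{def:well-layeredness} directly (in particular \(\bo^+ = \{(x_2,x_1)\}\) has no loops, and \(\diredge\; = \{(x_1,x_2)\}\) is acyclic). Since \(x_1 \bisim x_2\), the collapse is a single state \(y\) with \(y \tr{\alpha p} y\), and the quotient map is a surjective homomorphism. Your convention labels this self-loop as a body transition, because the lift \(x_2 \bo x_1\) is one; but then \(y \bo^+ y\), violating condition 3(a). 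The only valid witness on \(Y\) labels the loop as \(y \eo y\), which is \emph{not} obtained by transporting labels along \(h\). The same example refutes the invariance you rely on: \(x_1\) and \(x_2\) lie in the same \(h\)-fibre, yet their unique outgoing transitions carry different entry/body labels, and \(x_1\) has an outgoing \(\diredge\)-edge while \(x_2\) has none, so fibres neither match label-preservingly nor have equal \(\diredge\)-depth. Since your treatment of body loops, layeredness and goto-freeness all rest on splicing lifted segments via that invariance, the argument does not go through. (Your flatness claim is also not immediate: two parallel transitions \(y \tr{\alpha p} y'\) and \(y \tr{\beta q} y'\) could receive different labels under your per-transition convention, and flatness is a per-pair condition.) The actual proofs in the cited works do not transport the labelling at all: they factor the homomorphism into elementary identification steps (``connect-through'' transformations) and reconstruct a fresh layering witness after each step, which is where essentially all of the work lies.
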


Roughly speaking, what we are going to do now is rework Steps 1 through 4 of Grabmayer and Fokkink's completeness proof for deterministic well-layered LTSs and their corresponding notion of \emph{deterministic solution}.

\begin{definition}
	Let \((X,\tau)\) be a deterministic LTS and \(\varphi \colon X \to \Det\).
	We say that \(\varphi\) is a \emph{deterministic solution} if for any \(x \in X\),
	\[
	\varphi(x) \equiv_*^{\det} \sum_{x \tr{\alpha p} \checkmark} \alpha p + \sum_{x \tr{\alpha p} y} \alpha p~\varphi(y)
	\]
	Given two deterministic solutions \(\varphi_1\) and \(\varphi_2\), we write \(\varphi_1 \equiv_*^{\det} \varphi_2\) if for any \(x \in X\), \(\varphi_1(x) \equiv_*^{\det} \varphi_2(x)\).
\end{definition}

The unique solution to a well-layered LTS is given by a formula.

\begin{definition}\label{def:deterministic_canonical_solution}
	Define the two quantities below for any \(x \in X\)
	\begin{align*}
		|x|_{en} &= \max \{n \mid (\exists x_1,\dots,x_n)~x \diredge x_1 \diredge \cdots \diredge x_n\} \\
		|x|_{bo} &= \max \{n \mid (\exists x_1,\dots,x_n)~x \bo x_1 \bo \cdots \bo x_n\}
	\end{align*}
	These are finite because in a layering witness \((X,\eo, \bo)\), \((X,\diredge)\) and \((X, \bo)\) do not contain infinite paths.

	Let \(x,z \in X\), and let
	\begin{equation}
		\label{eq:abc}
		\begin{gathered}
			a = \bigvee\{\alpha \mid x \etr{\alpha p} y,y \in X\}
			\qquad
			b = \bigvee\{\alpha \mid x \tr{\alpha p} x\}
			\\
			c = \bigvee\{\alpha \mid x \tr{\alpha p} \checkmark\}
		\end{gathered}
	\end{equation}
	The \emph{canonical solution} \(\varphi_X\) to \((X,\tau)\) (given the layering witness \((X,\eo, \bo)\)) is defined recursively on \(|x|_{bo}\) as follows:
	\[
		\varphi_X(x)
		= \Big(\sum_{x \etr{\alpha p} x} \alpha p +^b \sum_{\substack{x \etr{\alpha p} y\\x\neq y}} \alpha p~t_X(y,x)\Big) *^a \Big(\sum_{x \tr{\alpha p} \checkmark} \alpha p +^c \sum_{\substack{x \btr{\alpha p} y\\x\neq y}} \alpha p~\varphi_X(y)\Big)
	\]
	where we write \(r +^b s\) and \(r *^b s\) to denote the terms \(r + s\) and \(r * s\) respectively, as well as the statement \(r \perp_b s\).
	Above, the expression \(t_X(y,x)\) is defined for every pair \((y,x)\) such that \(x \diredge y\) by recursion on \((|x|_{en}, |y|_{bo})\) in the lexicographical ordering of \(\N^2\) as follows:
	where
	\begin{equation}\
		\label{eq:a'b'c'}
		\begin{gathered}
		a' = \bigvee \{\alpha\mid y \etr{\alpha p} z, z \in X\}
		\qquad
		b' = \bigvee \{\alpha \mid y \tr{\alpha p} y\}
		\\
		c' = \bigvee \{\alpha \mid y \btr{\alpha p} x\}
		\end{gathered}
	\end{equation}
	we define
	\begin{align*}
		t_X(y, x)
		&= \Big(\sum_{y \etr{\alpha p} y} \alpha p
		+^{b'} \sum_{\substack{y \etr{\alpha p} z\\z\neq y}} \alpha p~t_X(z,y)\Big)
		\\&\hspace{4em}*^{a'} \Big(\sum_{y \btr{\alpha p} x} \alpha p
		+^{c'} \sum_{\substack{y \btr{\alpha p} z\\z\neq x}} \alpha p~t_X(z, x)\Big)
	\end{align*}
\end{definition}

The well-definedness of the formulas appearing in the definition above deserves further explanation.

The expression \(\varphi_X(x)\) is defined by induction on \(|x|_{bo}\).
In the base case of this induction, \(x\) has no outgoing body transitions.
This gives the equivalent formula,
\begin{align*}
	\varphi_X(x)
	&= \Big(\sum_{x \etr{\alpha p} x} \alpha p +^b \sum_{\substack{x \etr{\alpha p} y\\x\neq y}} \alpha p~t_X(y,x)\Big) *^a \Big(\sum_{x \tr{\alpha p} \checkmark} \alpha p\Big)
\end{align*}
The full recursive formula is well-defined (if each \(t_X(y, x)\) is), because if \(x \bo y\), then \(|y|_{bo} < |x|_{bo}\).

Both of these formulas depend on the expression \(t_X(y, x)\), which is further defined by induction on \((|x|_{bo}, |y|_{en})\).
In the base case of this induction, \(|x|_{en} = 1\) and \(|y|_{bo} = 0\), because we require \(x \diredge y\).
So, \(y\) has no outgoing body transitions.
Since \(x \diredge y\) in the expression \(t_X(y, x)\), \(|y|_{en} < |x|_{en} = 1\), so \(|y|_{en} = 0\), and \(y\) has no outgoing loop entry transitions.
This gives the equivalent expression
\begin{align*}
	t_X(y, x)
	&= \Big(\sum_{y \etr{\alpha p} y} \alpha p\Big)*^{a'} 0
\end{align*}
which has no further dependencies.
For well-definedness of \(t_X(y, x)\), we must check well-definedness of the subterms \(t_X(z, y)\) and \(t_X(z,x)\).
The former appears in a context where $y \etr{\alpha p} z$, and so $y \diredge z$; furthermore, $(|y|_{en}, |z|_{bo}) < (|x|_{en}, |y|_{bo})$ because $x \diredge y$, which tells us that $|y|_{en} < |x|_{en}$.
As for the subterm $t_X(z, x)$, note that because $y \btr{\alpha p} z$ we have that $x \diredge z$ and $|z|_{bo} < |y|_{bo}$, meaning that $(|x|_{en}, |z|_{bo}) < (|x|_{en}, |y|_{bo})$.

We are specifically interested in solving deterministic LTSs.
The following terminology will be useful in proofs.

\begin{definition}
	A state \(x\) in a LTS \((X, \tau)\) is \emph{operationally deterministic} if \(\tau(x)\) is graph-like.
	That is, for any \(\xi,\xi' \in \checkmark + X\), \(x \tr{\alpha p} \xi\) and \(x \tr{\alpha q} \xi'\) implies \(p = q\) and \(\xi = \xi'\).
\end{definition}

Thus, an LTS is deterministic precisely when each of its states is operationally deterministic.
Determinism is preserved by homomorphisms.

\begin{lemma}\label{lem:determinism homom}
	If \(x\) is an operationally deterministic state of the LTS \((X,\tau_X)\) and \(h \colon (X,\tau_X) \to (Y,\tau_Y)\) is a homomorphism, then \(h(x)\) is operationally deterministic.
\end{lemma}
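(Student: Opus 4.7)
\medskip
\noindent\textbf{Proof proposal.} The plan is to unfold the definitions and use nothing more than the homomorphism equation for the signature $P$ together with the graph-like hypothesis on $\tau_X(x)$. Recall that $h\colon (X,\tau_X)\to(Y,\tau_Y)$ being a $P$-coalgebra homomorphism means $\tau_Y \circ h = P(h) \circ \tau_X$, where $P(h)$ acts as
\[
    P(h)(U) = \{(\alpha p, \checkmark) \mid (\alpha p, \checkmark) \in U\} \cup \{(\alpha p, h(z)) \mid (\alpha p, z) \in U\}.
\]
In particular, every element of $\tau_Y(h(x))$ arises as the image under this lift of some element of $\tau_X(x)$.

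First I would suppose that $h(x) \tr{\alpha p} \eta$ and $h(x) \tr{\alpha q} \eta'$ in $(Y,\tau_Y)$ for some $\eta,\eta' \in \checkmark + Y$, and aim to show $p = q$ and $\eta = \eta'$. By the homomorphism equation, there exist $\xi,\xi' \in \checkmark + X$ with $x \tr{\alpha p} \xi$ and $x \tr{\alpha q} \xi'$ in $(X,\tau_X)$, and such that $\eta$ equals $\checkmark$ when $\xi = \checkmark$ and $\eta = h(\xi)$ otherwise, and similarly for $\eta'$ and $\xi'$.

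Now I would invoke the hypothesis that $x$ is operationally deterministic, i.e.\ that $\tau_X(x)$ is graph-like: from $x \tr{\alpha p} \xi$ and $x \tr{\alpha q} \xi'$ it follows that $p = q$ and $\xi = \xi'$. Applying the lift of $h$ (which is a function) to $\xi = \xi'$ yields $\eta = \eta'$, closing the argument. I do not anticipate any obstacle here; the statement is essentially the observation that a function sends graph-like subsets of $\At\cdot\Sigma \times (\checkmark + X)$ to graph-like subsets of $\At\cdot\Sigma \times (\checkmark + Y)$, and the homomorphism condition transports this property from $x$ to $h(x)$.
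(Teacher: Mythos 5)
Your proposal is correct and follows essentially the same route as the paper's proof: use the homomorphism equation \(\tau_Y \circ h = P(h) \circ \tau_X\) to pull each transition of \(h(x)\) back to a transition of \(x\), apply operational determinism of \(x\), and push the resulting equality forward along \(h\). The only difference is cosmetic: the paper splits into three cases according to whether the targets are \(\checkmark\) or states, whereas you handle all cases uniformly via the lift of \(h\) to \(\checkmark + X\), which is equally valid.
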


\begin{proof}
	There are three cases to consider.
	\begin{enumerate}
		\item Suppose \(h(x) \tr{\alpha p} \checkmark\) and \(h(x) \tr{\alpha q} \checkmark\).
		Then \((\alpha p, \checkmark),(\alpha q,\checkmark) \in \tau_X(x)\) because \(h\) is a homomorphism.
		It follows that \(p = q\), because \(x\) is operationally deterministic.
		\item If \(h(x) \tr{\alpha p} y_1\) and \(h(x) \tr{\alpha q} y_2\) for some \(y,y' \in Y\), then there are \(x_1,x_2 \in X\) such that \(h(x_1) = y_1\), \(h(x_2) = y_2\), \(x \tr{\alpha p} x_1\), and \(x \tr{\alpha q} x_2\).
		Since \(x\) is operationally deterministic, \(p = q\) and \(x_1 = x_2\).
		Hence, \(h(x_1) = h(x_2)\).
		\item If \(h(x) \tr{\alpha p} \checkmark\) and \(h(x) \tr{\alpha q} y\) for some \(y \in Y\), then \(x \tr{\alpha p} \checkmark\) and \(x \tr{\alpha q} x'\) for some \(x'\) such that \(h(x') = y\).
		This is not possible because \(x\) is operationally deterministic, which would then require \(p = q\) and \(\checkmark = x'\), despite \(X \cap \checkmark = \emptyset\).
	\end{enumerate}
\end{proof}

We are now ready to describe the structure of the proof of \cref{thm:sf_gkat_fix}.
The proof requires the following four properties of deterministic regular expressions, deterministic well-layered LTSs, and their deterministic solutions.
\begin{enumerate}
	\item (\Cref{lem:deterministic_subcoalgebra}) \(\Det\) is a deterministic subcoalgebra of \((\SExp,\ell)\).
	\item (\Cref{lem:deterministic_fundamental_theorem}) For any \(r \in \Det\), the inclusion map \(\incl_{\langle r\rangle} \colon \langle r \rangle \to \SExp\) is a deterministic solution.
	This is equivalent to saying that
	\[
	r \equiv_*^{\det} \sum_{r\tr{\alpha p} \checkmark} \alpha p + \sum_{r\tr{\alpha p} s} \alpha p~s
	\]
	\item (\Cref{thm:deterministic_canonical_solution}) Fix a layering witness \((X,\eo,\bo)\) for a deterministic LTS \((X,\tau)\).
    The following properties hold:
	\begin{enumerate}
		\item[(a)]\(\varphi_X\) is a deterministic solution to \((X,\tau)\). That is, for any \(x \in X\),\[
		\varphi_X(x) \equiv_*^{\det} \sum_{x\tr{\alpha p}\checkmark}  \alpha p + \sum_{x\tr{\alpha p} y} \alpha p~\varphi_X(y)
		\]
		\item[(b)] For any deterministic solution \(\psi\) to \((X,\tau)\) and for all \(x \in X\), \[
		\varphi_X(x) \equiv_*^{\det} \psi(x)
		\]
	\end{enumerate}
	\item (\Cref{lem:homoms_preserve_det_solutions}) Let \(h\colon (X,\tau_X) \to (Y,\tau_Y)\) be a homomorphism between deterministic LTSs and let \(\varphi\colon Y \to \Det\) be a deterministic solution to \((Y,\tau)\).
	Then \(\varphi\circ h\) is a deterministic solution to \((X,\tau_X)\).
\end{enumerate}
We now provide the proof of \cref{thm:sf_gkat_fix}, showing how these four properties collectively imply the theorem, and will then later present the proofs of all above properties.

\begin{proof}[Proof of \Cref{thm:sf_gkat_fix}.]
	Suppose \(r,s \in \Det\).
	By \cref{lem:deterministic_subcoalgebra}, \(\langle r \rangle\) and \(\langle s\rangle\) are deterministic subcoalgebras of \((\SExp, \ell)\).
    Because the latter is well-layered~\cite{starmfps}, and subcoalgebras of well-layered LTSs are again well-layered (this follows easily from the definition), we know that \(\langle r \rangle\) and \(\langle s \rangle\) are also well-layered.

	Now, if \(r \equiv_* s\), then \(r \bisim s\) by soundness.
	By~\cite[Theorem 4.2]{coalgebra}, there is a minimal LTS \((X,\tau)\) and homomorphisms \(\langle r\rangle \xrightarrow{h} (X,\tau) \xleftarrow{k} \langle s\rangle\) such that \(h(r) = k(s)\).
	Since well-layeredness and determinism are preserved by homomorphisms (\cref{thm:closure under homomorphic images,lem:determinism homom}), \((X,\tau)\) is a deterministic well-layered LTS\@.
	It follows from \Cref{thm:deterministic_canonical_solution} that \((X,\tau)\) has a deterministic solution \(\varphi_X\).

	By \Cref{lem:homoms_preserve_det_solutions}, \(\varphi_X\circ h\) and \(\varphi_X\circ k\) are deterministic solutions to \(\langle r\rangle\) and \(\langle s\rangle\) respectively.
	\Cref{lem:deterministic_subcoalgebra,lem:deterministic_fundamental_theorem} tell us that \(\incl_{\langle r\rangle}\) and \(\incl_{\langle s\rangle}\) are also deterministic solutions to \(\langle r\rangle\) and \(\langle s\rangle\) respectively.
	Therefore, by \cref{thm:deterministic_canonical_solution},
	\begin{align*}
		r &= \incl_{\langle r\rangle} (r) \equiv_*^{\det} \varphi_X\circ h(r) \\
		s &= \incl_{\langle s\rangle} (s) \equiv_*^{\det} \varphi_X\circ k(s)
	\end{align*}
	Since \(h(r) = k(s)\), we see from the derivations above that \(r \equiv_*^{\det} s\).
\end{proof}

In the proof of \cref{thm:sf_gkat_fix} we used the four properties outlined above (\cref{lem:deterministic_subcoalgebra,lem:deterministic_fundamental_theorem,thm:deterministic_canonical_solution,lem:homoms_preserve_det_solutions}), as depicted in \cref{fig:dependencies}. We will now prove all the needed properties individually.

\begin{figure}[t]
\begin{center}
	\begin{tikzpicture}[scale=1.1]
		\node[state, rectangle] at (0,0.5) (main) {\cref{thm:sf_gkat_fix}};
		\node[state, rectangle] at (-3.5,-1.2) (24) {\cref{lem:deterministic_subcoalgebra}};
		\node[state, rectangle] at (-1.3,-1.2) (25) {\cref{lem:deterministic_fundamental_theorem}};
		\node[state, rectangle] at (1.3,-1.2) (26) {\cref{thm:deterministic_canonical_solution}};
		\node[state, rectangle] at (4.25,-1.2) (28) {\cref{lem:homoms_preserve_det_solutions}};
		\node[state, rectangle] at (1.3, -2.2) (27) {\cref{lem:deterministic_solution_properties}};
		\node[state, rectangle] at (2.5, 1.5) (12) {\cref{thm:closure under homomorphic images}};
		\node[state, rectangle] at (-2.5, 1.5) (22) {\cref{lem:determinism homom}};

		\draw (main) edge[-stealth, thick] (24);
		\draw (main) edge[-stealth, thick] (25);
		\draw (main) edge[-stealth, thick] (12);
		\draw (main) edge[-stealth, thick] (22);
		\draw (main) edge[-stealth, thick] (26);
		\draw (26) edge[-stealth, thick] (27);
		\draw (main) edge[-stealth, thick] (28);


		\node[fill=white] at (-0.6, -0.4) (0) {\it Property 2};
		\node at (-3.5, -0.4) (0) {\it Property 1};
		\node[fill=white] at (.9, -0.4) (0) {\it Property 3};
		\node at (4.5, -0.4) (0) {\it Property 4};

	\end{tikzpicture}
\end{center}
\caption{Results needed for the proof of \cref{thm:sf_gkat_fix}. An arrow $A \to B$ denotes that \(B\) is used in the proof of \(A\).}\label{fig:dependencies}.
\end{figure}

\paragraph*{Property 1.} Let us start with the first property.

\begin{lemma}\label{lem:deterministic_subcoalgebra}
	\(\Det\) is a deterministic subcoalgebra of \((\SExp,\ell)\).
\end{lemma}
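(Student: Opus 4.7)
The plan is to prove two things by structural induction on the clauses (i)--(iv) defining $\Det$: first, that $\Det$ is closed under the transition relation $\tau$ (every target of a transition out of a state in $\Det$ is again in $\Det$), and second, that each $r \in \Det$ is operationally deterministic (i.e.\ $\tau(r)$ is graph-like in the sense defined before \Cref{lemmadeterministiccharacterisation}). Together these say that $\Det$ with $\tau$ restricted to it is a subcoalgebra of $(\SExp, \tau)$ landing in $P_{\det}$, which is precisely the claim.

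The base cases are immediate: $0$ has no outgoing transitions, and $\alpha p$ has the single transition $\alpha p \tr{\alpha p} \checkmark$, so closure and determinism are trivial. The sequential composition case $r_1 r_2$ is also routine: by the small-step rules in \cref{fig:small-step stex}, any transition out of $r_1 r_2$ has the form $r_1 r_2 \tr{\alpha p} r' r_2$ (from $r_1 \tr{\alpha p} r'$) or $r_1 r_2 \tr{\alpha p} r_2$ (from $r_1 \tr{\alpha p} \checkmark$). The target stays in $\Det$ by the induction hypothesis on $r_1$ and clause (iii), and determinism follows from operational determinism of $r_1$ together with the fact that $r_1$ cannot simultaneously have $r_1 \tr{\alpha p} \checkmark$ and $r_1 \tr{\alpha q} r'$.

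For the sum and star cases, where $r_1, r_2 \in \Det$ are separated by some $b$, the key is a preliminary observation: whenever $s \equiv_* b \cdot s$, every transition $s \tr{\alpha p} \xi$ satisfies $\alpha \leq b$. This follows by a straightforward induction on the construction of $b \cdot s$ from the definition, combined with soundness of $\equiv_*$ with respect to bisimilarity (so transitions of $s$ carry the same labels as those of $b \cdot s$). Applying this to $r_1$ (with $b$) and $r_2$ (with $\overline{b}$), the transitions out of $r_1 + r_2$ partition cleanly according to which summand they come from: the atoms entering $r_1$'s transitions lie under $b$, and those entering $r_2$'s transitions lie under $\overline{b}$. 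Since $b \wedge \overline{b} = 0$, no atom can trigger a transition on both sides, so determinism of $r_1 + r_2$ reduces to the determinism of $r_1$ and $r_2$ given by the induction hypothesis. Closure is equally direct: targets from the $r_1$-side are in $\Det$ by IH, and likewise for $r_2$.

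The star case $r_1 * r_2$ is analogous but has an extra wrinkle: the targets of transitions into the loop are of the form $r' (r_1 * r_2)$ with $r_1 \tr{\alpha p} r'$, and we need these to be in $\Det$. Closure follows from clause (iii) applied to $r' \in \Det$ (by IH) and $r_1 * r_2 \in \Det$ (by construction). The separation argument for determinism is identical to the sum case, again leveraging the observation above to rule out an atom below $b$ also producing an $r_2$-transition. The main obstacle, in both the sum and star inductive steps, is precisely the use of the separation property to exclude atom-overlap between the two summands; this is where the definition of $\Det$ does real work, and where we must invoke soundness of $\equiv_*$ (via the preliminary observation) rather than reasoning purely syntactically.
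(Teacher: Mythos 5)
Your proposal is correct and follows essentially the same route as the paper's proof: a structural induction on the clauses defining $\Det$ establishing simultaneously that transition targets stay in $\Det$ and that each state is operationally deterministic, with the sum and star cases resolved by using the separation test $b$ (via soundness of $\equiv_*$) to show that no atom can trigger transitions from both summands. Your explicit ``preliminary observation'' that $s \equiv_* b \cdot s$ forces all transitions of $s$ to carry atoms below $b$ is exactly the fact the paper invokes, just stated more carefully.
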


\begin{proof}
	We show by induction on \(r \in \Det\) that \(r\) is operationally deterministic and that if \(r \tr{\alpha p} s\), then \(s \in \Det\).
	There are two base cases.
	\begin{itemize}
		\item The deterministic expression \(0\) has no outgoing transitions at all, and so the two properties we are trying to show hold vacuously.
		\item The expression \(\alpha p\) has one outgoing transition, \(\alpha p \tr{\alpha p} \checkmark\), which does not land on another state.
		Every state with at most one outgoing transition is operationally deterministic, so we are done here.
	\end{itemize}
	For the inductive step, suppose \(r_1,r_2 \in \Det\), \(r_1,r_2\) are both operationally deterministic, and \(r_i \tr{\alpha p} s\) implies that \(s \in \Det\) for either \(i=1,2\).
	\begin{itemize}
		\item Suppose \(r_1 \perp_b r_2\) for some \(b \in \BExp\).
		If \(r_1 + r_2 \tr{\alpha p} \xi\) and \(r_1 + r_2 \tr{\alpha q} \xi'\) for \(\alpha \le b\), then \(r_1 \tr{\alpha p} \xi\) and \(r_1 \tr{\alpha q} \xi'\), because \(\bar b\cdot r_2 \bisim 0\).
		By the induction hypothesis, \(p = q\) and \(\xi = \xi'\).
		Also, by the induction hypothesis, if \(\xi = t \in \SExp\), then \(t \in \Det\).
		The case for \(\alpha \le \bar b\) and \(r_2\) is similar.

		\item Now suppose we have \(r_1r_2 \tr{\alpha p} s\) and \(r_1r_2 \tr{\alpha q} s'\) (it is not possible for \(r_1r_2 \to \checkmark\)).
		Then for some \(\xi,\xi' \in \checkmark + \SExp\), \(r_1 \tr{\alpha p} \xi\) and \(r_1 \tr{\alpha q} \xi'\) are two outgoing transitions from \(r_1\).
		By the induction hypothesis, \(r_1\) is operationally deterministic, so \(p = q\) and \(\xi = \xi'\).
		If \(\xi = \checkmark\), then we must have \(r_1r_2 \tr{\alpha p} r_2\).
		Hence, \(s = s' = r_2 \in \Det\), as desired.
		If \(\xi = t \in \SExp\), then \(s = tr_2 = s'\).
		By the induction hypothesis, \(t \in \Det\), so \(t r_2 \in \Det\), as desired.

		\item Now suppose \(r_1 \perp_b r_2\) for some \(b \in \BExp\).
		If \(r_1 * r_2 \tr{\alpha p} \xi\) and \(r_1 * r_2 \tr{\alpha q} \xi'\) for some \(\alpha \le \bar b\), then \(r_2 \tr{\alpha p} \xi\) and \(r_2 \tr{\alpha q} \xi'\), because \(\bar b \cdot r_1 \bisim 0\).
		By the induction hypothesis, \(p = q\) and \(\xi = \xi' \in \checkmark + \Det\).
		Now suppose \(\alpha \le b\).
		Then \(r_1 \tr{\alpha p} s\) and \(r_1 \tr{\alpha q} s'\) as well as \(\xi = s(r_1*r_2)\) and \(\xi' = s'(r_1*r_2)\).
		By the induction hypothesis applied to \(r_1\), \(p = q\) and \(s = s' \in \Det\).
		Hence, \(s(r_1*r_2) = s'(r_1*r_2) \in \Det\).
	\end{itemize}
\end{proof}

\paragraph*{Property 2.} Now we turn our attention to the second property.

\begin{lemma}%
	\label{lem:deterministic_fundamental_theorem}
	Given \(r \in \Det\), \(\incl_{\langle r\rangle} \colon \langle r \rangle \to \SExp\) is a deterministic solution.
\end{lemma}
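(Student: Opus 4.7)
The plan is to prove the claim by structural induction on $r \in \Det$, showing that for every reachable state $s \in \langle r \rangle$ (which by \Cref{lem:deterministic_subcoalgebra} also lies in $\Det$), the fundamental-theorem equation
\[
s \equiv_*^{\det} \sum_{s \tr{\alpha p} \checkmark} \alpha p \;+\; \sum_{s \tr{\alpha p} t} \alpha p \cdot t
\]
can be derived by a proof whose intermediate expressions all stay inside $\Det$. The ordinary fundamental theorem \eqref{eq:fundamental theorem} gives us this identity up to $\equiv_*$, so the content of the lemma is entirely about witnessing every step with deterministic expressions.

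The base cases are immediate. For $r = 0$ both sides are $0$, and for $r = \alpha p$ the only outgoing transition is $\alpha p \tr{\alpha p} \checkmark$, so the right-hand side is literally $\alpha p$. A useful observation I would record up front is that whenever $s \in \Det$, the sum $\sum_{s \tr{\alpha p} t} \alpha p \cdot t$ is automatically deterministic: since $s$ is operationally deterministic, each atom $\alpha$ occurs in at most one summand, and two summands $\alpha p \cdot t$ and $\beta q \cdot u$ with $\alpha \neq \beta$ are separated by $\alpha$ (because $\alpha \cdot \alpha p \cdot t \equiv_* \alpha p \cdot t$ and $\bar\alpha \cdot \beta q \cdot u \equiv_* \beta q \cdot u$). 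Hence the right-hand sides we target always live in $\Det$.

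For $r = r_1 + r_2$ with $r_1 \perp_b r_2$, the outgoing transitions split cleanly by $b$: those with $\alpha \leq b$ come from $r_1$ and those with $\alpha \leq \bar b$ from $r_2$. Applying the inductive hypothesis to $r_1$ and $r_2$ and then combining the two (separated) sums yields the required deterministic proof. For $r = r_1 r_2$, I would unfold $r_1$ by induction, distribute $\cdot r_2$ through the sum using (L-Dist.), and observe that each factor $(\alpha p)\cdot r_2$ or $(\alpha p \cdot t)\cdot r_2$ is in $\Det$ and that distribution preserves determinism since the summands of $r_1$'s fundamental form are separated by atoms. The terminal summands $\alpha p$ (coming from $r_1 \tr{\alpha p}\checkmark$) produce $\alpha p \cdot r_2$, which after the reduction $\alpha p \tr{\alpha p}\checkmark$ matches transitions $r_1 r_2 \tr{\alpha p} r_2$.

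The main obstacle is the star case $r = r_1 * r_2$ with $r_1 \perp_b r_2$, because here the FP unfolding $r_1 * r_2 \equiv_* r_1 (r_1 * r_2) + r_2$ must be combined with two applications of the inductive hypothesis and with distributivity, and I have to check that \emph{every} intermediate expression remains in $\Det$. The plan is: first apply FP; then use the inductive hypotheses on $r_1$ and $r_2$ to replace them by their fundamental forms; then distribute $\cdot(r_1 * r_2)$ over the sum coming from $r_1$. Deterministic separation is maintained throughout because the $\alpha$'s appearing in $r_1$'s fundamental form all satisfy $\alpha \leq b$ while those in $r_2$'s form satisfy $\alpha \leq \bar b$, so the large sum naturally splits under $b$. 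The transitions $r_1 \tr{\alpha p}\checkmark$ contribute $\alpha p \cdot (r_1 * r_2)$, which matches the looping transitions $r_1 * r_2 \tr{\alpha p} r_1 * r_2$; the transitions $r_1 \tr{\alpha p} t$ contribute $\alpha p \cdot t \cdot (r_1 * r_2)$, matching $r_1 * r_2 \tr{\alpha p} t (r_1 * r_2)$; and the summands coming from $r_2$'s fundamental form supply the transitions inherited from $r_2$. Each intermediate expression is built from pieces already known to be in $\Det$ together with separated sums, so the derivation is deterministic. The bookkeeping in this case is the most delicate part of the proof and is where I expect to spend most of the effort.
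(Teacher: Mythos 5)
Your proposal is correct and follows essentially the same route as the paper: structural induction on $r \in \Det$, with the preliminary observation that the fundamental form $\sum_{r \tr{\alpha p} \checkmark} \alpha p + \sum_{r \tr{\alpha p} s} \alpha p\, s$ is itself deterministic (its summands are separated by atoms, and the two sums by $\bigvee\{\alpha \mid r \tr{\alpha p}\checkmark\}$), and the three inductive cases handled via separated rearrangement for $+$, distributivity for $\cdot$ (where the $\checkmark$-sum is empty), and the fixed-point unfolding followed by distribution for $*$, tracking separation markers throughout. The paper's proof carries out exactly this bookkeeping.
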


Recall that, given \(r,s \in \SExp\), we write \(r +^b s\) and \(r *^b s\) for the expressions \(r + s\) and \(r * s\) respectively, as well as the statement \(r \perp_b s\).
Note that we will refer to these superscripts as \emph{separation markers}.
Also note that if \(r,s \in \Det\), then \(r +^b s\) and \(r *^b s\) indicate that \(r + s\) and \(r * s\) are deterministic expressions.

\begin{proof}[Proof of \cref{lem:deterministic_fundamental_theorem}.]
	We are going to prove that
	\[
	r \equiv_*^{\det} \sum_{r\tr{\alpha p} \checkmark} \alpha p + \sum_{r\tr{\alpha p} s} \alpha ps
	\]
	by induction on \(r \in \Det\).
	We have yet to see why this equation is between two deterministic expressions, however:
	Notice that on the right-hand side of the equation above, \(\sum_{r\tr{\alpha p} \checkmark} \alpha p\) and \(\sum_{r\tr{\alpha p} s} \alpha ps\) are deterministic expressions because there is at most one transition of the form \(r \tr{\alpha p} \xi\) per \(\alpha \in \At\) by operational determinism of \(r\).
	Moreover, if \(b = \bigvee\{\alpha \mid r \tr{\alpha p} \checkmark\}\), then \(\sum_{r\tr{\alpha p} \checkmark} \alpha p \perp_b \sum_{r\tr{\alpha p} s} \alpha ps\).
	This shows that the right-hand side of the equation we are about to derive is deterministic.

	There are two base cases.
	\begin{itemize}
		\item Since \(0\) has no outgoing transitions,
		\[
		\sum_{0\tr{\alpha p} \checkmark} \alpha p + \sum_{0\tr{\alpha p} s} \alpha ps  = 0 + 0 \equiv_* 0
		\]
		where the first equality is literal syntactic equality and the second is the axiom \(x + x = x\).
		We have already seen that the first expression is deterministic, and the first equality is literal syntactic equality, so both are deterministic.

		\item Since \(\alpha p\) has only one outgoing transition,
		\[
			\sum_{\alpha p\tr{\beta q} \checkmark} \beta q + \sum_{\alpha p\tr{\beta q} s} \beta qs
			\equiv_* \alpha p + 0 \equiv_* \alpha p
		\]
		We have already seen that the first expression is deterministic.
		The second is deterministic because \(0,\alpha p \in \Det\) and \(\alpha p \perp_\alpha 0\).
	\end{itemize}
	For the inductive step, assume that for \(i=1,2\),
	\[
	r_i \equiv_*^{\det} \sum_{r_i\tr{\alpha p} \checkmark} \alpha p + \sum_{r_i\tr{\alpha p} s} \alpha ps
	\]
	\begin{itemize}
		\item Let \(r_1 \perp_c r_2\), and define \(b_i = \bigvee \{r_i \tr{\alpha p} \checkmark\}\) for \(i=1,2\).
		Then 
		\begin{align*}
			&r_1 +^c r_2\\
			&\equiv_*^{\det}
			\Big(\sum_{r_1 \tr{\alpha  p} \checkmark} \alpha p
			+^{b_1} \sum_{r_1 \tr{\alpha  p} s} \alpha p s\Big)
			+^c
			\Big(\sum_{r_2 \tr{\alpha  p} \checkmark} \alpha p
			+^{b_2} \sum_{r_2 \tr{\alpha  p} s} \alpha p s\Big)
			\tag{ind.~hyp.}\\
			&\equiv_*^{\det}
			\sum_{r_1 \tr{\alpha  p} \checkmark} \alpha p +^{b_1\wedge c} \Big(\sum_{r_1 \tr{\alpha  p} s} \alpha p s
			+^c
			\Big(\sum_{r_2 \tr{\alpha  p} \checkmark} \alpha p +^{b_2} \sum_{r_2 \tr{\alpha  p} s} \alpha p s\Big)\Big)
			\tag{assoc.}\\
			&\equiv_*^{\det}
			\sum_{r_1 \tr{\alpha  p} \checkmark} \alpha p
			+^{b_1\wedge c} \Big(\Big(\sum_{r_1 \tr{\alpha  p} s} \alpha p s
			+^c \sum_{r_2 \tr{\alpha  p} \checkmark} \alpha p \Big)
			+^{c\vee b_2} \sum_{r_2 \tr{\alpha  p} s} \alpha p s\Big)
			\tag{assoc.}\\
			&\equiv_*^{\det}
			\sum_{r_1 \tr{\alpha  p} \checkmark} \alpha p
			+^{b_1\wedge c} \Big(\Big(\sum_{r_2 \tr{\alpha  p} \checkmark} \alpha p
			+^{\bar c} \sum_{r_1 \tr{\alpha  p} s} \alpha p s \Big)
			+^{c\vee b_2} \sum_{r_2 \tr{\alpha  p} s} \alpha p s\Big)
			\tag{comm., \(r\perp_c s\) iff \(s \perp_{\bar c} r\)}\\
			&\equiv_*^{\det}
			\sum_{r_1 \tr{\alpha  p} \checkmark} \alpha p
			+^{b_1\wedge c} \Big(\sum_{r_2 \tr{\alpha  p} \checkmark} \alpha p
			+^{\bar c \wedge b_2} \Big(\sum_{r_1 \tr{\alpha  p} s} \alpha p s
			+^{c\vee b_2} \sum_{r_2 \tr{\alpha  p} s} \alpha p s\Big)\Big)
			\tag{assoc.}\\
			&\equiv_*^{\det}
			\Big(\sum_{r_1 \tr{\alpha  p} \checkmark} \alpha p
			+^{b_1\wedge c} \sum_{r_2 \tr{\alpha  p} \checkmark} \alpha p\Big)
			+^{b} \Big(\sum_{r_1 \tr{\alpha  p} s} \alpha p s
			+^{c\vee b_2} \sum_{r_2 \tr{\alpha  p} s} \alpha p s\Big)
			\tag{assoc.,\(\bullet\)}\\
			&\equiv_*^{\det} \sum_{r_1+ r_2 \tr{\alpha  p}\checkmark} \alpha p
			+^{b} \sum_{r_1 + r_2 \tr{\alpha  p} s} \alpha p s
		\end{align*}
		\(^\bullet\)In the last two steps, we define \(b = (b_1 \wedge c) \vee (\bar c \wedge b_2)\).
		The separation markers in the derivation above indicate how to construct each expression that appears as a deterministic expression.
		This establishes that there is a deterministic proof of \(r_1 + r_2 \equiv_* \sum_{r_1+ r_2 \tr{\alpha  p}\checkmark} \alpha p
+ \sum_{r_1 + r_2 \tr{\alpha  p} s} \alpha p s\).

		\item In the inductive case for sequential composition, we have
		\begin{align*}
			r_1r_2
			&\equiv_*^{\det}
			\Big(\sum_{r_1 \tr{\alpha  p} \checkmark} \alpha p
			+^{b_1} \sum_{r_1 \tr{\alpha  p} s} \alpha p s\Big)
			r_2 \tag{ind.~hyp.}\\
			&\equiv_*^{\det}
			\sum_{r_1 \tr{\alpha  p} \checkmark} \alpha p r_2
			+^{b_1} \sum_{r_1 \tr{\alpha  p} s} \alpha p sr_2
			\tag{dist.}\\
			&\equiv_*^{\det}
			0 +^{0}\Big(\sum_{r_1 \tr{\alpha  p} \checkmark} \alpha p r_2
			+^{b_1} \sum_{r_1 \tr{\alpha  p} s} \alpha p sr_2\Big)
			\tag{\(\neg(r_1r_2 \tr{\alpha  p} \checkmark)\)}\\
			&\equiv_*^{\det}
			\sum_{r_1r_2 \tr{\alpha  p}\checkmark} \alpha p
			+^{0} \sum_{r_1r_2 \tr{\alpha  p} s} \alpha p s
			\tag{rearr.~summands}
		\end{align*}

		\item In the inductive case for the star, let \(r_1 \perp_c r_2\) and compute
		\begin{align*}
			&r_1* r_2 \\
			&\equiv_*^{\det} r_1(r_1*r_2) +^c r_2 \tag{fixed-point}\\
			&\equiv_*^{\det}
			\Big(\sum_{r_1 \tr{\alpha  p} \checkmark} \alpha p
			+^{b_1} \sum_{r_1 \tr{\alpha  p} s} \alpha p s\Big) (r_1* r_2) +^c r_2
			\tag{ind.~hyp.}\\
			&\equiv_*^{\det}
			\Big(\sum_{r_1 \tr{\alpha  p} \checkmark} \alpha p(r_1* r_2)
			+^{b_1} \sum_{r_1 \tr{\alpha  p} s} \alpha p s(r_1* r_2)\Big)  +^c r_2
			\tag{dist.}\\
			&\equiv_*^{\det}
			\Big(\sum_{r_1 \tr{\alpha  p} \checkmark} \alpha p(r_1* r_2)
			+^{b_1} \sum_{r_1 \tr{\alpha  p} s} \alpha p s(r_1* r_2)\Big)
			\\&\hspace{8em}+^c \Big(\sum_{r_2 \tr{\alpha  p} \checkmark} \alpha p
			+^{b_2} \sum_{r_2 \tr{\alpha  p} s} \alpha p s\Big)
			\tag{ind.~hyp.} \\
			&\equiv_*^{\det}
			\sum_{r_2 \tr{\alpha  p} \checkmark} \alpha p
			+^{\bar c \wedge b_2}
			\Big(\sum_{r_1 \tr{\alpha  p} \checkmark} \alpha p(r_1* r_2)
			\\&\hspace{8em} +^{b_1\wedge c} \Big(\sum_{r_1 \tr{\alpha  p} s} \alpha p s(r_1* r_2)
			+^{c \wedge \bar{b_1}} \sum_{r_2 \tr{\alpha  p} s} \alpha p s\Big)\Big)
			\tag{assoc.}\\
			&\equiv_*^{\det} \sum_{r_1* r_2 \tr{\alpha  p}\checkmark} \alpha p
			+^{b} \sum_{r_1* r_2 \tr{\alpha  p} s} \alpha p s
		\end{align*}
		where \(b = \bar c \wedge b_2\).
	\end{itemize}
\end{proof}

\paragraph*{Property 3.} Now we turn our attention to the third property, and recall from \cref{fig:dependencies} that this will need an extra result (\cref{lem:deterministic_solution_properties}).

Fix a locally finite layering witness \((X, \eo, \bo)\) for a deterministic LTS \((X,\tau)\).
Write \(\varphi_X\) for the canonical solution to \((X,\tau)\) given \((X, \eo, \bo)\).
We need the following lemma.

\begin{lemma}\label{lem:deterministic_solution_properties}
	Let \(x,z \in X\).
	\begin{enumerate}
		\item \(\varphi_X(x)\) and \(t_X(x,z)\) are deterministic expressions.
		\item If \(x \diredge y\), then \[
			\varphi_X(y) \equiv_*^{\det} t_X(y,x)\varphi_X(x)
		\]
		\item If \(x \diredge y\) and \(\psi\) is any deterministic solution to \((X,\tau)\), then \[
			\psi(y) \equiv_*^{\det} t_X(y,x)\psi(x)
		\]
	\end{enumerate}
\end{lemma}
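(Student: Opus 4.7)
}
The three statements are tightly coupled: part (1) needs to be proved jointly (and simultaneously) with the recursive definitions of $\varphi_X$ and $t_X(-,-)$, and parts (2) and (3) then follow by inductions that mirror the recursive structure of $t_X$. My plan is therefore to run three nested inductions: an outer induction on $|x|_{bo}$ for $\varphi_X(x)$, and an inner induction on the lexicographic order $(|x|_{en},|y|_{bo})$ for $t_X(y,x)$. Throughout, I will need to systematically read off the separating tests $a,b,c,a',b',c'$ defined in \eqref{eq:abc} and \eqref{eq:a'b'c'}, and observe that by operational determinism of $x$ (and $y$), the disjunctions in the definitions are over \emph{pairwise disjoint} atoms.

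For part (1), the key observation is that since $x$ is operationally deterministic (\Cref{lem:determinism homom} and \Cref{lem:deterministic_subcoalgebra}), the sets $\{\alpha : x \etr{\alpha p} y\}$, $\{\alpha : x \tr{\alpha p} x\}$, and $\{\alpha : x \tr{\alpha p} \checkmark\}$ are disjoint by construction of the entry/body labelling and the well-layeredness conditions (no body loops, entry transitions are loop entries). Hence the tests $a,b,c$ (and their primed analogues for $y$) separate the relevant sub-expressions. By the recursive well-definedness argument following \Cref{def:deterministic_canonical_solution}, together with the inductive hypothesis that $\varphi_X(y)\in\Det$ when $|y|_{bo}<|x|_{bo}$ and $t_X(z,x)\in\Det$ when $(|x|_{en},|z|_{bo})$ is lower, each syntactic constructor yields a member of $\Det$ by clause (iv) of \Cref{def:deterministic sexp}.

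For part (2), I plan to proceed by induction on $(|x|_{en},|y|_{bo})$. Unfolding $\varphi_X(y)$ using its defining formula and then splitting the outgoing transitions of $y$ according to whether they (a) self-loop, (b) are entry transitions to some $z\neq y$, (c) body-transition back to $x$, (d) body-transition elsewhere with $x\diredge z$, or (e) terminate, I will rewrite the right-hand side by (\textsf{FP}) for the star in $t_X(y,x)$, apply the inductive hypothesis of part (2) to replace each $\varphi_X(z)$ appearing in case (d) with $t_X(z,x)\varphi_X(x)$, and then recognise the result as $t_X(y,x)\varphi_X(x)$ by distributivity and the defining equation of $t_X$. The deterministic-proof annotations transfer through each rewrite because part (1) guarantees every intermediate expression lies in $\Det$.

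For part (3), I will repeat the same calculation but for an arbitrary deterministic solution $\psi$ in place of the canonical solution. Since the defining equation of a deterministic solution is identical in shape to the equation satisfied by $\varphi_X$, all the rewrites used in the proof of (2) go through verbatim once we have (\textsf{RSP}) to close the star at the appropriate points (this is where the assumption that $\psi$ is a deterministic solution, and not merely a solution, is essential). The main obstacle I anticipate is in part (2): carefully managing the bookkeeping so that each sum arising from a transition classification lines up with the corresponding summand in the defining formula of $t_X(y,x)$, and verifying that every application of associativity/commutativity/distributivity can be performed as a deterministic proof. The separator-tracking technique already employed in \Cref{lem:deterministic_fundamental_theorem} should extend; the principal novelty here is the interleaving of the two recursions for $\varphi_X$ and $t_X$, which forces one to keep the induction measures aligned carefully.
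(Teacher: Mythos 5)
Your overall strategy is the paper's: part (1) by nested induction on $|x|_{bo}$ and $(|x|_{en},|y|_{bo})$ reading determinism off the separation markers, and parts (2)--(3) by unfolding, rewriting with the induction hypothesis, and refolding into the defining formula of $t_X$. But two load-bearing steps are missing. First, you list the case of terminating transitions of $y$ and never discharge it: the identities in (2) and (3) hold only because this case is \emph{empty}. Since $x \diredge y$, the goto-free condition (item 5 of \cref{def:well-layeredness}) gives $\neg(y \to \checkmark)$, so the $\checkmark$-summand in the exit part of $\varphi_X(y)$ (respectively, in the solution equation for $\psi(y)$) disappears. If $y$ could terminate, $\varphi_X(y)$ would contain a summand $\alpha p$ with no counterpart in $t_X(y,x)\varphi_X(x)$ and the claimed equivalence would be false; your plan nowhere invokes goto-freeness, so as written the rewriting in (2) and (3) cannot be completed.

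Second, part (3) is not a verbatim replay of part (2). In (2) the loop-body component of $\varphi_X(y)$ already coincides syntactically with that of $t_X(y,x)$ (both contain the terms $t_X(z,y)$), so only the exit summands $\varphi_X(z)$ with $y \btr{\alpha p} z$, $z \neq x$, need the induction hypothesis, and induction on $|y|_{bo}$ alone suffices. In (3), unfolding the solution equation for $\psi(y)$ instead yields $\alpha p\,\psi(z)$ for the entry transitions $y \etr{\alpha p} z$ with $z \neq y$, and these must be rewritten to $\alpha p\,t_X(z,y)\psi(y)$ by the induction hypothesis of (3) itself applied with $y$ in the role of $x$ --- legitimate precisely because $x \diredge y$ forces $|y|_{en} < |x|_{en}$, which is why the measure there must be the lexicographic pair. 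Only after that can the body be factored as $(\cdots)\psi(y)$ so that (\textsf{RSP}) introduces the star. Relatedly, (\textsf{RSP}) is not confined to (3): already in (2), and in its base case $t_X(y,x) = r * 0$, you need the derived sliding law $(e*f)g \equiv_*^{\det} e*(fg)$, whose deterministic proof uses (\textsf{FP}), distributivity and (\textsf{RSP}); unfolding the star of $t_X(y,x)$ with (\textsf{FP}) alone does not let you conclude equality with $\varphi_X(y)$.
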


\begin{proof}
	We begin by showing item 1, that \(\varphi_X(x)\) and \(t_X(x,y)\) are deterministic expressions for any \(x,z \in X\).
	This can be seen by induction on \(|x|_{bo}\) and \((|x|_{en}, |y|_{bo})\) in \(\N\) and the lexicographical ordering of \(\N\times\N\) respectively.
	Recall the definitions of \(a,b,c,a',b',c'\) from~\eqref{eq:abc} and~\eqref{eq:a'b'c'}.

	Let us start with \(\varphi_X(x)\).
	In the base case, \(|x|_{bo} = 0\), so \(x\) makes no outgoing \(\bo\) transitions.
	\begin{align*}
		\varphi_X(x)
		&= \Big(\sum_{x \etr{\alpha p} x} \alpha p +^b \sum_{x \etr{\alpha p} y} \alpha p~t_X(y, x)\Big) *^a \Big(\sum_{x \tr{\alpha p} \checkmark} \alpha p +^c 0\Big)
	\end{align*}
	This is deterministic, as can be seen from the separation markers \(a,b,c\), and because \(x\) is operationally deterministic.

	In the inductive step for \(\varphi_X(x)\), observe from the separation markers in \Cref{def:deterministic_canonical_solution} that \(\varphi_X(x)\) is constructed from sums and stars of separated one-free regular expressions.
	Therefore, it suffices to check that the summands are deterministic.
	By induction, the \(\varphi_X(y)\) that appears in this expression is deterministic, because \(|y|_{bo} < |x|_{bo}\).
	It now suffices to show that \(t_X(y, x)\) is deterministic.

	In the base case, because \(x \diredge y\), \(|x|_{en} = 1\) and \(|y|_{en} = |y|_{bo} = 0\), so \(y\) makes no outgoing \(\eo\) transitions or \(\bo\) transitions.
	This leaves us with the expression
	\[
		t_X(y, x) = \Big(\sum_{y \etr{\alpha p} y} \alpha p +^{b'} 0\Big) *^{a'} \Big(0 +^{c'} 0\Big)
	\]
	This expression is deterministic, as can be seen from the separation markers \(a',b',c'\) from~\eqref{eq:a'b'c'}, and because \(y\) is operationally deterministic.

	In the inductive case for \(t_X(y, x)\), again observe that the separation markers indicate how to construct \(t_X(y, x)\) from separated one-free regular expressions.
	It now suffices to see that the sub-expressions \(t_X(z,y),t_X(z,x)\) are deterministic.
	This follows from the induction hypothesis:
	\begin{itemize}
		\item Where \(t_X(z,y)\) appears in \(t_X(y,x)\), \(x \diredge y\).
		This means that \(|y|_{en} < |x|_{en}\) and therefore \((|y|_{en}, |z|_{bo}) < (|x|_{en}, |y|_{bo})\) in the lexicographical ordering.
		By the induction hypothesis, \(t_X(z, y)\) is deterministic.
		\item Where \(t_X(z,x)\) appears in \(t_X(y,x)\), \(y \bo z\).
		This means that \(|z|_{bo} < |y|_{bo}\), so that \((|x|_{en}, |z|_{bo}) < (|x|_{en}, |y|_{bo})\) in the lexicographical ordering.
		By the induction hypothesis, \(t_X(z,x)\) is deterministic.
	\end{itemize}
	Thus, \(t_X(y, x)\) is deterministic.

	We show item 2 by induction on \(|y|_{bo}\).
	Let
	\[
	r = \sum_{y \etr{\alpha p} y} \alpha p
	+^{b'} \sum_{\substack{y \etr{\alpha p} z \\ y \neq z}} \alpha p~t_X(z, y)
	\]
	(i.e., $r$ is the first part of $t_X(y, x)$ and $\varphi_X(y)$).
	Note that we have already seen that \(r\) is deterministic.

	In the base case, \(|y|_{bo} = 0\), so \(t_X(y,x) = r*0 = \varphi_X(y)\).
	Observe that the following derivation is a deterministic proof whenever \(e,f,g \in \Det\) and \(e \perp f\):
	\begin{equation}
		\label{eq:star assoc det}
		(e * f)g
		\equiv_*^{\det} (e(e * f) + f)g
		\equiv_*^{\det} e(e * f)g + fg
		\equiv_*^{\det} e * (fg)
	\end{equation}
	where the last equality is an instance of the recursive specification principle.
	Applying this to the situation at hand, we have \[
	t_X(y,x) \varphi_X(x)
    = (r*0)\varphi_X(x)
	\equiv_*^{\det} r * (0 \varphi_X(x))
	\equiv_*^{\det} r * 0
	= \varphi_X(y)
	\]
	This concludes the base case.

	Let \(d = \bigvee \{\alpha \mid y \tr{\alpha  p} \checkmark\}\).
	In the inductive step, we have
	\begin{align*}
		\varphi_X(y) \
		&\equiv_*^{\det} r*^{a'} \Big(\sum_{y \tr{\alpha  p} \checkmark} \alpha p +^{d} \sum_{y \btr{\alpha  p} z} \alpha p~\varphi_X(z) \Big)
		\tag{def.}\\
		&\equiv_*^{\det} r*^{a'} \Big(0 +^{0} \sum_{y \btr{\alpha  p} z} \alpha p~\varphi_X(z) \Big)
		\tag{\(\neg(y \to \checkmark)\)}\\
		&\equiv_*^{\det} r*^{a'} \Big(\sum_{y \btr{\alpha  p} z} \alpha p~\varphi_X(z) \Big)
		\tag{zero} \\
		&\equiv_*^{\det} r*^{a'} \Big(\sum_{y \btr{\alpha  p} x} \alpha p~\varphi_X(x) +^{c'} \sum_{\substack{y \btr{\alpha  p} z\\z \neq x}} \alpha p~\varphi_X(z) \Big)
		\tag{rearr.~summands} \\
		&\equiv_*^{\det} r*^{a'} \Big(\sum_{y \btr{\alpha  p} x} \alpha p~\varphi_X(x) +^{c'} \sum_{\substack{y \btr{\alpha  p} z\\z \neq x}} \alpha p~t_X(z,x)\varphi_X(x) \Big)
		\tag{ind.~hyp., \(|z|_{bo} < |y|_{bo}\)}\\
		&\equiv_*^{\det} r*^{a'} \Big(\sum_{y \btr{\alpha  p} x} \alpha p +^{c'} \sum_{\substack{y \btr{\alpha  p} z\\z \neq x}} \alpha p~t_X(z,x) \Big)\varphi_X(x)
		\tag{dist.} \\
		&\equiv_*^{\det} t_X(y,x)\varphi_X(x) \tag{def.~\(t_X(y,x)\), \eqref{eq:star assoc det}} 
	\end{align*}
	Indeed, by item 1 and as indicated by the separation markers above, each of the expressions in this derivation is deterministic.

	Finally, we prove item 3 by induction on \((|x|_{en},|y|_{bo})\) in the lexicographical ordering of \(\N\times\N\), assuming \(x \diredge y\).
	First, derive
	\begin{align*}
		\psi(y)
		&\equiv_*^{\det} 0 +^0 \sum_{y \tr{\alpha p} z} \alpha p ~\psi(z) \tag{$\psi$ is a solution, \(\neg(y \to \checkmark)\)} \\
	    & \equiv_*^{\det} \sum_{y \tr{\alpha p} z} \alpha p ~\psi(z) \\
		&\equiv_*^{\det} \sum_{y \etr{\alpha p} z} \alpha p ~\psi(z) +^{a'} \sum_{y \btr{\alpha p} z} \alpha p ~\psi(z)
		\tag{rearr.~summands} \\
		&\equiv_*^{\det} \Big(\sum_{y \etr{\alpha p} y} \alpha p ~\psi(y) +^{b'} \sum_{\substack{y \etr{\alpha p} z\\y \neq z}} \alpha p ~\psi(z)\Big) \\ & \hspace{8em}+^{a'} \Big(
			\sum_{y \btr{\alpha p} x} \alpha p ~\psi(z)
			+^{c'} \sum_{\substack{y \btr{\alpha p} z\\z \neq x}} \alpha p ~\psi(z)
		\Big)
		\tag{rearr.~summands}
	\end{align*}

	In the base case, \(|x|_{en} = 1\) and \(|y|_{bo}=0\), and we also know \(|y|_{en}=0\) since \(x \diredge y\).
	The above then becomes
	\begin{align*}
		\Big(\sum_{y \etr{\alpha p} y} \alpha p ~\psi(y) +^{b'} 0\Big) +^{a'} 0
		&\equiv_*^{\det} \Big(\sum_{y \etr{\alpha p} y} \alpha p ~\psi(y) +^{b'} 0~\psi(y)\Big) +^{a'} 0 \\
		&\equiv_*^{\det} \Big(\sum_{y \etr{\alpha p} y} \alpha p +^{b'} 0\Big)\psi(y) +^{a'} 0
	\end{align*}
	so by (\textsf{RSP}),
    \begin{align*}
		\psi(y)
		&\equiv_*^{\det} (\sum_{y \etr{\alpha p} y} \alpha p +^{b'} 0) *^{a'} 0 \\
		&\equiv_*^{\det} \Big((\sum_{y \etr{\alpha p} y} \alpha p +^{b'} 0) *^{a'} 0\Big) \psi(x) \\
		&\equiv_*^{\det} t_X(y,x) \psi(x)
	\end{align*}
	In the inductive step,
	\begin{align*}
		\psi(y)
		&\equiv_*^{\det} \Big(
				\sum_{y \etr{\alpha p} y} \alpha p ~\psi(y)
				+^{b'} \sum_{\substack{y \etr{\alpha p} z\\y \neq z}} \alpha p ~\psi(z)
			\Big) \\ & \hspace{8em}+^{a'} \Big(
				\sum_{y \btr{\alpha p} x} \alpha p ~\psi(z)
				+^{c'} \sum_{\substack{y \btr{\alpha p} z\\z \neq x}} \alpha p ~\psi(z)
			\Big)\\
		&\equiv_*^{\det} \Big(
				\sum_{y \etr{\alpha p} y} \alpha p ~\psi(y) +^{b'} \sum_{\substack{y \etr{\alpha p} z\\y \neq z}} \alpha p ~t_X(z,y)\psi(y)
			\Big) \\ & \hspace{8em}+^{a'} \Big(
				\sum_{y \btr{\alpha p} x} \alpha p~\psi(x)
				+^{c'} \sum_{\substack{y \btr{\alpha p} z\\z \neq x}} \alpha p ~t_X(z,x)\psi(x)
			\Big)
			\tag{ind.~hyp. with \(|y|_{en} < |x|_{en}\) and ind.~hyp with \(|z|_{bo}<|y|_{bo}\)}\\
		&\equiv_*^{\det} \Big(
				\sum_{y \etr{\alpha p} y} \alpha p
				+^{b'} \sum_{\substack{y \etr{\alpha p} z\\y \neq z}} \alpha p ~t_X(z,y)
			\Big)\psi(y) \\ & \hspace{8em}+^{a'} \Big(
				\sum_{y \btr{\alpha p} x} \alpha p~\psi(x)
				+^{c'} \sum_{\substack{y \btr{\alpha p} z\\z \neq x}} \alpha p ~t_X(z,x)\psi(x)
			\Big)
			\tag{dist.} \\
		&\equiv_*^{\det} \Big(
				\sum_{y \etr{\alpha p} y} \alpha p
				+^{b'} \sum_{\substack{y \etr{\alpha p} z\\y \neq z}} \alpha p ~t_X(z,y)
			\Big) \\ & \hspace{8em}*^{a'} \Big(
				\sum_{y \btr{\alpha p} x} \alpha p~\psi(x)
				+^{c'} \sum_{\substack{y \btr{\alpha p} z\\z \neq x}} \alpha p ~t_X(z,x)\psi(x)
			\Big)
			\tag{\textsf{RSP}} \\
		&\equiv_*^{\det} \Big(
				\sum_{y \etr{\alpha p} y} \alpha p
				+^{b'} \sum_{\substack{y \etr{\alpha p} z\\y \neq z}} \alpha p ~t_X(z,y)
			\Big) *^{a'}\Big(
				\sum_{\substack{y \btr{\alpha p} z\\z \neq x}} \alpha p ~t_X(z,x)
			\Big)\psi(x)
			\tag{assoc.}\\
		&\equiv_*^{\det} t_X(y,x)\psi(x) \tag{def.~of \(t_X(y, x)\), \eqref{eq:star assoc det}} 
	\end{align*}
\end{proof}

Now that we have proved \cref{lem:deterministic_solution_properties}, we can proceed with the proof of the following theorem which is the third property.

\begin{theorem}%
	\label{thm:deterministic_canonical_solution}
	The canonical solution \(\varphi_X\) is a also deterministic solution to \((X,\tau)\).
	That is, for any \(x \in X\),
	\[
	\varphi_X(x) \equiv_*^{\det} \sum_{x\tr{\alpha p} \checkmark} \alpha p + \sum_{x\tr{\alpha p} y} \alpha p~\varphi_X(y)
	\]
	Furthermore, for any deterministic solution \(\psi\) to \((X,\tau)\) and any \(x \in X\),
	\[
	\varphi_X(x) \equiv_*^{\det} \psi(x)
	\]
\end{theorem}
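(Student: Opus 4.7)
The plan is to prove the two claims in sequence, both by induction on $|x|_{bo}$, and both relying on \cref{lem:deterministic_solution_properties} to convert between values of the solution at different states.

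\textbf{Part 1 (existence).} Write $\varphi_X(x) = L \mathbin{*^a} R$, where
\[
L = \sum_{x \etr{\alpha p} x} \alpha p \; +^b \sum_{\substack{x \etr{\alpha p} y\\ x\ne y}} \alpha p\, t_X(y,x), \qquad
R = \sum_{x \tr{\alpha p}\checkmark} \alpha p \; +^c \sum_{\substack{x \btr{\alpha p} y\\ x\ne y}} \alpha p\, \varphi_X(y).
\]
By (\textsf{FP}) applied to the star, $\varphi_X(x) \equiv_*^{\det} L\varphi_X(x) + R$; the separation markers plus item~1 of \cref{lem:deterministic_solution_properties} ensure each intermediate expression is deterministic. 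Distributing and applying item~2 of \cref{lem:deterministic_solution_properties} in the form $t_X(y,x)\varphi_X(x) \equiv_*^{\det} \varphi_X(y)$ (valid because $x \diredge y$ whenever $x \etr{\alpha p} y$ with $x \ne y$, by full specification) turns $L\varphi_X(x)$ into $\sum_{x \etr{\alpha p} y} \alpha p\, \varphi_X(y)$. Merging with $R$ (the body-and-accept part, where by flatness and no-body-loops the constraint $x \ne y$ on body transitions is vacuous) yields the required
\[
\varphi_X(x) \equiv_*^{\det} \sum_{x \tr{\alpha p}\checkmark} \alpha p + \sum_{x \tr{\alpha p} y} \alpha p\, \varphi_X(y).
\]

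\textbf{Part 2 (uniqueness).} I would proceed by induction on $|x|_{bo}$. Assume $\psi$ is a deterministic solution and that $\varphi_X(z) \equiv_*^{\det} \psi(z)$ for every $z$ with $|z|_{bo} < |x|_{bo}$. Expand $\psi(x)$ via its solution property and split the transitions as acceptances, entry-loops $x \etr{\alpha p} x$, non-loop entries ($x \etr{\alpha p} y$, $y\ne x$), and body transitions $x \btr{\alpha p} y$. For each non-loop entry, invoke item~3 of \cref{lem:deterministic_solution_properties} to rewrite $\psi(y) \equiv_*^{\det} t_X(y,x)\psi(x)$; for each body transition we have $|y|_{bo} < |x|_{bo}$, so by the inductive hypothesis $\psi(y) \equiv_*^{\det} \varphi_X(y)$. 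Factoring out $\psi(x)$ from the regrouped entry-transition summands produces
\[
\psi(x) \equiv_*^{\det} L\,\psi(x) + R.
\]
Applying (\textsf{RSP}) (using the separation $L \perp_a R$ to keep the proof deterministic) gives $\psi(x) \equiv_*^{\det} L \mathbin{*^a} R = \varphi_X(x)$.

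\textbf{Anticipated difficulty.} The only real obstacle is verifying that every intermediate step remains inside $\Det$, so that the rewriting is genuinely a deterministic proof and (\textsf{RSP}) may be applied within $\equiv_*^{\det}$. This is a bookkeeping matter: at each split I need to witness an explicit Boolean separator between the two summands, and the separation markers $a$, $b$, $c$, $a'$, $b'$, $c'$ from \cref{def:deterministic_canonical_solution} together with operational determinism of each state (granted by determinism of $(X,\tau)$) are exactly what is needed, as they already served that purpose in the proof of item~1 of \cref{lem:deterministic_solution_properties}. Once this is checked, the algebra is just the standard ``unfolding and folding'' argument familiar from the completeness proof in~\cite{regexlics}.
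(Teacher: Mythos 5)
Your proposal is correct and follows essentially the same route as the paper: unfold the star via (\textsf{FP}), distribute, and use item~2 of \cref{lem:deterministic_solution_properties} for existence; expand $\psi(x)$, apply item~3 and the inductive hypothesis on $|x|_{bo}$, and close with (\textsf{RSP}) for uniqueness, tracking separation markers throughout to stay in $\Det$. The only cosmetic difference is that the paper's existence argument is a direct calculation needing no induction (as your Part~1 in fact also is, despite the framing sentence), and the paper applies (\textsf{RSP}) before substituting $\varphi_X(y)$ for $\psi(y)$ on body transitions rather than after, which is immaterial.
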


\begin{proof}
	Recall the definitions of \(a,b,c\) from~\eqref{eq:abc}.
	To see that the canonical solution to \((X,\tau)\) is a deterministic solution, we derive
	\begin{align*}
		&\varphi_{X} (x) \\
		&\equiv_*^{\det} \Big(
		\sum_{x \etr{\alpha p} x} \alpha p
		+^{b} \sum_{\substack{x \etr{\alpha p} y \\ x \neq y}} \alpha p~t_X(y, x)
		\Big)*^{a} \Big(
		\sum_{x \tr{\alpha p} \checkmark} \alpha p
		+^{c} \sum_{\substack{x \tr{\alpha p}_{\mathsf b} y \\ x \neq y}} \alpha p~\varphi_X(y)
		\Big)
		\tag{def.}\\
		&\equiv_*^{\det} \Big(
		\sum_{x \etr{\alpha p} x} \alpha p
		+^{b} \sum_{\substack{x \etr{\alpha p} y \\ x \neq y}} \alpha p~t_X(y, x)
		\Big)\varphi_X(x) \\&\hspace{8em}+^{a} \Big(
		\sum_{x \tr{\alpha p} \checkmark} \alpha p
		+^{c} \sum_{\substack{x \tr{\alpha p}_{\mathsf b} y \\ x \neq y}} \alpha p~\varphi_X(y)
		\Big) \\
		&\equiv_*^{\det} \Big(
		\sum_{x \etr{\alpha p} x} \alpha p~\varphi_X(x)
		+^{b} \sum_{\substack{x \etr{\alpha p} y \\ x \neq y}} \alpha p~t_X(y, x) \varphi_X(x)
		\Big) \\&\hspace{8em}+^{a} \Big(
		\sum_{x \tr{\alpha p} \checkmark} \alpha p
		+^{c} \sum_{\substack{x \tr{\alpha p}_{\mathsf b} y \\ x \neq y}} \alpha p~\varphi_X(y)
		\Big)
		\tag{dist.}\\
		&\equiv_*^{\det} \Big(
		\sum_{x \etr{\alpha p} x} \alpha p~\varphi_X(x)
		+^{b} \sum_{\substack{x \etr{\alpha p} y \\ x \neq y}} \alpha p~\varphi_X(y)
		\Big) \\&\hspace{8em}+^{a} \Big(
		\sum_{x \tr{\alpha p} \checkmark} \alpha p
		+^{c} \sum_{\substack{x \tr{\alpha p}_{\mathsf b} y \\ x \neq y}} \alpha p~\varphi_X(y)
		\Big) \tag{\cref{lem:deterministic_solution_properties} item 2}\\
		&\equiv_*^{\det} 	 \sum_{x \tr{\alpha p} \checkmark} \alpha p +^{c} \sum_{x \tr{\alpha p} y} \alpha p~\varphi_X(y) \tag{rearr.~summands}
	\end{align*}
	In the last step, we used the fact that \(\bar c \ge_{\textsf{BA}} a \ge_{\textsf{BA}} b\).

	Now let \(\psi\) be any deterministic solution to \((X,\tau)\).
	To see that \(\psi(x) \equiv_*^{\det} \varphi_X(x)\) for all \(x \in X\), we proceed by induction on \(|x|_{bo}\).
	In the inductive step:
	\begin{align*}
		&\psi(x)  \\
		&\equiv_*^{\det}  \Big(\sum_{x \etr{\alpha p} x} \alpha p~\psi(x) +^b \sum_{\substack{x\etr{\alpha p} y \\ y \neq x}} \alpha p~\psi(y) \Big) +^{a} \Big(\sum_{x \tr{\alpha p} \checkmark} \alpha p +^{c} \sum_{\substack{x \btr{\alpha p}y}} \alpha p ~\psi(y)\Big)
		\tag{def.} \\
		&\equiv_*^{\det} \Big(\sum_{x \etr{\alpha p} x} \alpha p~\psi(x) +^b \sum_{\substack{x\etr{\alpha p} y \\ y \neq x}} \alpha p~t_X(y,x)\psi(x) \Big) \\&\hspace{4em}+^{a} \Big(\sum_{x \tr{\alpha p} \checkmark} \alpha p +^{c} \sum_{\substack{x \btr{\alpha p}y}} \alpha p ~\psi(y)\Big)
		\tag{\Cref{lem:deterministic_solution_properties} item 3}\\
		&\equiv_*^{\det} \Big(\sum_{x \etr{\alpha p} x} \alpha p +^b \sum_{\substack{x\etr{\alpha p} y \\ y \neq x}} \alpha p~t_X(y,x) \Big)\psi(x) +^{a} \Big(\sum_{x \tr{\alpha p} \checkmark} \alpha p +^{c} \sum_{\substack{x \btr{\alpha p}y}} \alpha p ~\psi(y)\Big)
		\tag{dist.} \\
		&\equiv_*^{\det}  \Big(\sum_{x \etr{\alpha p} x} \alpha p +^b \sum_{\substack{x\etr{\alpha p} y \\ y \neq x}} \alpha p~t_X(y,x) \Big)*^{a} \Big(\sum_{x \tr{\alpha p} \checkmark} \alpha p +^{c} \sum_{\substack{x \btr{\alpha p}y}} \alpha p ~\psi(y)\Big)
		\tag{\textsf{RSP}} \\
		&\equiv_*^{\det}  \Big(\sum_{x \etr{\alpha p} x} \alpha p +^b \sum_{\substack{x\etr{\alpha p} y \\ y \neq x}} \alpha p~t_X(y,x) \Big)*^{a} \Big(\sum_{x \tr{\alpha p} \checkmark} \alpha p +^{c} \sum_{\substack{x \btr{\alpha p}y}} \alpha p ~\varphi_X(y)\Big) \tag{ind.~hyp., \(|y|_{bo} < |x|_{bo}\)}\\
		&= \varphi_X(x)
	\end{align*}
	The base case is similar, except that it skips the second to last equality above, because \(\sum_{x \btr{\alpha p} y}\) is an empty sum.
\end{proof}

\paragraph*{Property 4.} Finally, we prove the fourth property we needed in \cref{thm:sf_gkat_fix}.

\begin{lemma}%
	\label{lem:homoms_preserve_det_solutions}
	Let \(h\colon (X,\tau_X) \to (Y,\tau_Y)\) be a homomorphism between deterministic LTSs and let \(\psi\colon Y \to \Det\) be a deterministic solution to \((Y,\tau)\).
	Then \(\psi\circ h\) is a deterministic solution to \((X,\tau_X)\).
\end{lemma}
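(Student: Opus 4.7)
The plan is to unfold the definition of deterministic solution for $\psi \circ h$ at an arbitrary $x \in X$ and reduce the resulting equation to the assumption that $\psi$ is a deterministic solution at $h(x)$. Concretely, the target equation is
\[
\psi(h(x)) \equiv_*^{\det} \sum_{x \tr{\alpha p} \checkmark} \alpha p + \sum_{x \tr{\alpha p} y} \alpha p~\psi(h(y))
\]
for each $x \in X$, so the proof is local in $x$ and requires no auxiliary induction.

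First I will apply the hypothesis that $\psi$ is a deterministic solution to $(Y,\tau_Y)$, rewriting $\psi(h(x))$ as a sum indexed by the outgoing transitions of $h(x)$ in $Y$. Second, I will reindex that sum using the coalgebra homomorphism condition $\tau_Y \circ h = P(h) \circ \tau_X$, which yields a bijection between outgoing transitions of $x$ in $X$ and outgoing transitions of $h(x)$ in $Y$: a transition $x \tr{\alpha p} \checkmark$ corresponds to $h(x) \tr{\alpha p} \checkmark$, and a transition $x \tr{\alpha p} y$ corresponds to $h(x) \tr{\alpha p} h(y)$. Determinism of both $X$ and $Y$ is essential here: determinism of $X$ ensures each $\alpha p$ has at most one successor (so the correspondence is well-defined and injective), while determinism of $Y$ ensures that every outgoing transition of $h(x)$ arises as the $h$-image of a unique transition out of $x$ (so the correspondence is surjective). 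After this reindexing the two sums match term-by-term, and the lemma follows.

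The only point that needs care is checking that the argument stays within $\equiv_*^{\det}$, i.e., that every expression appearing in it lies in $\Det$. This is immediate from what is already in place: each $\psi(h(y))$ is in $\Det$ because $\psi \colon Y \to \Det$ by hypothesis, and the two sums have exactly the separated shape identified in \cref{lem:deterministic_fundamental_theorem}, the separation being given by the test $\bigvee\{\alpha \mid x \tr{\alpha p} \checkmark\}$ while determinism of $X$ guarantees that distinct summands carry distinct atoms. There is no real obstacle in this proof; the only subtlety is to state the bijection between $X$- and $Y$-transitions cleanly so that the reindexing step is manifestly a deterministic proof.
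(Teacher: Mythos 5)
Your proposal is correct and takes essentially the same route as the paper: both unfold the solution equation at \(h(x)\) and reindex the two sums via the correspondence between the outgoing transitions of \(x\) and of \(h(x)\) induced by the homomorphism (\(x \tr{\alpha p} \checkmark\) iff \(h(x) \tr{\alpha p} \checkmark\), and \(h(x) \tr{\alpha p} y\) iff \(x \tr{\alpha p} x'\) with \(h(x') = y\)). One small attribution nit: surjectivity of that correspondence is already given by the homomorphism condition rather than by determinism of \(Y\); determinism is what keeps each sum with at most one summand per atom and hence in the deterministic, separated shape.
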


\begin{proof}
	The key observations here are (1) \(x \tr{\alpha p} \checkmark\) iff \(h(x) \tr{\alpha p} \checkmark\), and (2) \(h(x) \tr{\alpha p} y\) iff there is an \(x' \in X\) such that \(h(x') = y\) and \(x \tr{\alpha p} x'\).
	Since (1) implies \(\sum_{h(x)\tr{\alpha p} \checkmark} \alpha p = \sum_{x\tr{\alpha p} \checkmark} \alpha p\), and since (2) implies \(\sum_{h(x)\tr{\alpha p} y} \alpha p~\psi(y) = \sum_{x\tr{\alpha p} x'} \alpha p~\psi(h(x'))\), we find that
	\begin{align*}
		\psi\circ h(x)
		&\equiv_*^{\det} \sum_{h(x)\tr{\alpha p} \checkmark} \alpha p + \sum_{h(x)\tr{\alpha p} y} \alpha p~\psi(y) \\
		&\equiv_*^{\det} \sum_{x\tr{\alpha p} \checkmark} \alpha p + \sum_{x\tr{\alpha p} x'} \alpha p~\psi\circ h(x')
	\end{align*}
	It follows that \(\psi\circ h\) is a deterministic solution to \((X, \tau)\).
\end{proof}


\section{Corrections Made to the Current Document}%
\label{app:changes}

We discovered that the original version of this paper left a gap in its main proof, the argument towards completeness of skip-free bisimulation \(\GKAT\) w.r.t.~bisimilarity.
Thankfully, we have since filled this gap.
A corrected completeness proof now appears in this updated version of the paper (\cref{app:proof of fix}), but we explain here the gap and how we proceeded to fix it. At the end of this appendix, we also list some other minor changes we took the opportunity to do which correct a few other small errors and typos in the paper.

\paragraph*{Filling the gap.}
The gap in the argument for \cref{thm:completeness I} (completeness of skip-free bisimulation \(\GKAT\)) could originally be found in an inductive step of the proof of \cref{lem:rtg is also good stuff} --- or rather, in the lack of a certain inductive step: \cref{lem:rtg is also good stuff} states that if \(r \equiv_* s\) for some \(r,s \in \Det\), then \(\rtg(r) \equiv_\dagger \rtg(s)\).
In the original paper, the proof of \cref{lem:rtg is also good stuff} proceeded by induction on the derivation of  \(r \equiv_* s\).
The inductive step covers the loop axiom
\[
	\infer[\textsf{RSP}]{t = rt + s}{t = r * s}
\]
of Grabmayer and Fokkinks's system and the inference rules of equational logic:
\[
	\infer[Con\(_\diamond\)]{r_1 = s_1 \quad r_2 = s_2}{r_1\mathbin\diamond r_2 = s_1 \mathbin\diamond s_2}
	\qquad
	\infer[Sym]{s = r}{r = s}
	\qquad
	\infer[Tra]{r = t \quad t = s}{r = s}
\]
where \(\diamond \in \{+, \cdot, *\}\).
The inductive step corresponding to \textsf{RSP} was considered in the original paper.
The inductive steps corresponding to \textsc{Con\(_\diamond\)} and \textsc{Sym} are easy.
The inductive step corresponding to \textsc{Tra} was fallaciously also thought to be easy.
As we discovered, it does not go through.

If \textsc{Tra} is the last step in the proof of \(r \equiv_* s\), the induction hypothesis states that if \(r \equiv_* t\) with \(r,t \in \Det\), then \(\rtg(r) \equiv_\dagger \rtg(t)\), and similarly for \(t\) and \(s\).
Being able to apply the induction hypothesis requires that we have \(t \in \Det\).
This may not be true.
For instance, take \(r = s = a0\) and \(t = a0 + a0\); then we may be in the (admittedly contrived) situation
\[
	\infer{
		\infer{\vdots}{a0 = a0 + a0}
		\qquad
		\infer{\vdots}{a0 + a0 = a0}
	}{
		a0 = a0
	}
\]
in which the induction hypothesis does not apply.
In the current edition of this paper, we circumvent this problem with \cref{thm:sf_gkat_fix}, which allows us to consider only \emph{deterministic proofs} of \(r \equiv_* s\).
If \(r \equiv_* s\) with \(r,s \in \Det\), a deterministic proof of \(r \equiv_* s\) is a derivation in which only deterministic one-free regular expressions appear.
\cref{thm:sf_gkat_fix} tells us that every provable equivalence \(r \equiv_* s\) with \(r,s \in \Det\) can be derived from a deterministic proof.
Thus, the proof of \cref{lem:rtg is also good stuff} can proceed by induction on the deterministic proof of \(r \equiv_* s\).
Now, in the case where \textsc{Tra} is the last step in the deterministic proof of \(r \equiv_* s\), the expression \(t\) is deterministic by assumption.
This allows us to apply the induction hypothesis and finish the completeness proof.

In order to prove \cref{thm:sf_gkat_fix}, we have had to delve into the details of Grabmayer and Fokkink's completeness proof for one-free regular expressions modulo bisimilarity~\cite{onefreeregexlics}.
The proof of \cref{thm:sf_gkat_fix} carefully goes through Grabmayer and Fokkink's proof and ensures that each step can be carried out for deterministic expressions without the use of nondeterministic ones.
The details can be found in \cref{app:proof of fix}, which is a completely new addition to this document, and which we are confident patches the hole in our original proof.

\paragraph*{Other small changes}
Aside from the change to the proof of \cref{thm:sf_gkat_fix} (\cref{app:proof of fix}), we also list below a few other changes we did to the original paper.
\begin{itemize}
	\item \cref{def:deterministic sexp} used to say that if $r_1, r_2 \in \Det$, then $b \cdot r_1 + \overline{b} \cdot r_2 \in \Det$ and $(b \cdot r_1) * (\overline{b} \cdot r_2) \in \Det$.
    This has been relaxed in our revision, to match the way deterministic one-free star expressions were already used in proofs.

	\item ``Lemma~\ref{lem:rtg is also good stuff}'' has been renamed to \cref{lem:rtg is also good stuff} to reflect the additional effort necessary to get to a proof.

	\item The list of useful equalities used throughout the appendix now appears in \cref{lemma:useful-equalities}, and has been expanded.

	\item Name tags have been added to the axioms in \cref{fig:gkat axioms} and the axiomatic derivations of skip-free GKAT equivalences, such as those from \cref{lemma:useful-equalities}.

	\item The proof of the intermediate property used in \cref{lem:arbitrary seperation} (i.e., that $\rtg(r) = b \rtg(r)$ when $r \equiv_* b \cdot r$) mistakenly assumed the premise was true on the \emph{operands} of the inductive cases.
	This proof has been corrected and clarified.
\end{itemize}

\fi%

{\small\medskip\noindent{\bf Open Access} This chapter is licensed under the terms of the Creative Commons Attribution 4.0 International License (\url{http://creativecommons.org/licenses/by/4.0/}), which permits use, sharing, adaptation, distribution and reproduction in any medium or format, as long as you give appropriate credit to the original author(s) and the source, provide a link to the Creative Commons license and indicate if changes were made.} 

{\small The images or other third party material in this chapter are included in the chapter's Creative Commons license, unless indicated otherwise in a credit line to the material. If material is not included in the chapter's Creative Commons license and your intended use is not permitted by statutory regulation or exceeds the permitted use, you will need to obtain permission directly from the copyright holder.}

\medskip\noindent\includegraphics{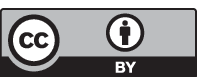} 

\end{document}